\newtheorem{theorem}{Theorem}[section]
\newtheorem*{theorem*}{Theorem}
\newtheorem{proposition}[theorem]{Proposition}
\newtheorem*{proposition*}{Proposition}
\newtheorem{lemma}[theorem]{Lemma}
\newtheorem*{lemma*}{Lemma}
\newtheorem{corollary}[theorem]{Corollary}
\newtheorem*{conjecture*}{Conjecture}
\newtheorem*{fact*}{Fact}
\newtheorem*{hypothesis*}{Hypothesis}
\newtheorem{conjecture}[theorem]{Conjecture}
\newtheorem{itheorem}[theorem]{(Informal) Theorem}
\newtheorem{claim}[theorem]{Claim}
\newtheorem*{claim*}{Claim}
\theoremstyle{definition}
\newtheorem{definition}[theorem]{Definition}
\theoremstyle{remark}
\newtheorem{remark}[theorem]{Remark}
\newtheorem*{remark*}{Remark}
\newtheorem*{observation*}{Observation}
\newcommand{\R}{\mathbb{R}}
\newcommand{\calM}{\mathcal{M}}
\newcommand{\calQ}{\mathcal{Q}}
\newcommand{\poly}{\mathrm{poly}}
\newcommand{\diag}{\mathrm{diag}}
\newcommand{\norm}[1]{\lVert #1 \rVert}
\newcommand{\Bignorm}[1]{\Big\lVert#1\Big\rVert}
\newcommand{\spnorm}[1]{\norm{#1}}
\newcommand{\gnorm}[1]{{\vert\kern-0.25ex\vert\kern-0.25ex\vert #1 
    \vert\kern-0.25ex\vert\kern-0.25ex\vert}}
\newcommand{\unorm}[1]{\gnorm{#1}}
\newcommand{\Bigspnorm}[1]{\Bignorm{#1}}
\newcommand{\iprod}[1]{\langle#1\rangle}
\newcommand{\Bigiprod}[1]{\Big\langle#1\Big\rangle}
\newcommand{\Esymb}{\mathbb{E}}
\newcommand{\Psymb}{\mathbb{P}}
\DeclareMathOperator*{\E}{\Esymb}
\DeclareMathOperator*{\ProbOp}{\Psymb}
\DeclareMathOperator*{\argmin}{\text{argmin}}
\renewcommand{\Pr}{\ProbOp}
\newcommand{\tr}{\text{tr}}
\newcommand{\eps}{\varepsilon}
\renewcommand{\epsilon}{\varepsilon}
\newcommand{\dx}{dx}
\newcommand{\Bad}{\textsc{Bad Input}~}
\newcommand{\Mean}{\textsc{Mean}}
\newcommand{\RMean}{\textsc{RobustMean}}
\newcommand{\wt}{\widetilde}
\newcommand{\CPone}{\textbf{CP1}~}
\newcommand{\CPtwo}{\textbf{CP2}~}
\newcommand{\CPthree}{\textbf{CP3}~}
\newcommand{\CPfour}{\textbf{CP4}~}
\newcommand{\bfone}{\mathbf{1}}
\newenvironment{proofof}[1]{\bigskip \noindent {\it Proof of #1.}\quad }
{\qed\par\vskip 4mm\par}
\newenvironment{proofsketchof}[1]{\bigskip \noindent {\it Proof Sketch of #1.}\quad }
{\qed\par\vskip 4mm\par}
\newcommand{\sset}[1]{\{#1\}}
\newcommand{\intronorm}[1]{\norm{#1}_{q \to q^*}}
\newif\ifnotes\notesfalse
\definecolor{mygrey}{gray}{0.50}
\newcommand{\notename}[2]{{\textcolor{blue}{\footnotesize{\bf (#1:} {#2}{\bf ) }}}}
\newcommand{\notename}[2]{{}}
\newcommand{\pnote}[1]{{\notename{Pranjal}{#1}}}
\newcommand{\anote}[1]{{\notename{Aravindan}{#1}}}
\newcommand{\xnote}[1]{{\notename{Xue}{#1}}}
\newcommand{\vnote}[1]{{\notename{Vaggos}{#1}}}
\title{Adversarially Robust Low Dimensional Representations}
\author{Pranjal Awasthi\\ \small{Google Research and Rutgers University}\\ \small{pranjalawasthi@google.com} \and Vaggos Chatziafratis\\ \small{Google Research}\\ \small{vaggos@google.com} \and Xue Chen\thanks{Part of this work was done while the author was in Northwestern University.}\\ \small{George Mason  University}\\ \small{xuechen@gmu.edu} \and Aravindan Vijayaraghavan\thanks{The last author is supported by the National Science Foundation (NSF) under Grant No.~CCF-1652491 and CCF-1637585. }
\\ \small{Northwestern University} \\ \small{aravindv@northwestern.edu}}
\date{}
\begin{document}

\maketitle
Many machine learning systems are vulnerable to small perturbations made to inputs either at test time or at training time. This has received much recent interest on the empirical front due to applications where reliability and security are critical. However, theoretical understanding of algorithms that are robust to adversarial perturbations is limited.

In this work we focus on Principal Component Analysis (PCA), a ubiquitous algorithmic primitive in machine learning. 
We formulate a natural robust variant of PCA where the goal is to find a low dimensional subspace to represent the given data with minimum projection error, that is in addition robust to small perturbations measured in $\ell_q$ norm (say $q=\infty$). Unlike PCA which is solvable in polynomial time, our formulation is computationally intractable to optimize as it captures a variant of the well-studied sparse PCA objective as a special case. We show the following results:

\vspace{4pt}
\noindent \textbullet~Polynomial time algorithm that is constant factor competitive in the worst-case with respect to the best subspace, in terms of the projection error and the robustness criterion. 

\vspace{4pt}
\noindent \textbullet~We show that our algorithmic techniques can also be made robust to adversarial training-time perturbations, in addition to yielding representations that are robust to adversarial perturbations at test time. Specifically, we design algorithms for a strong notion of training-time perturbations, where every point is adversarially perturbed up to a specified amount. 

\vspace{4pt}
\noindent \textbullet~We illustrate the broad applicability of our algorithmic techniques in addressing robustness to adversarial perturbations, both at training time and test time. In particular, our adversarially robust PCA primitive leads to computationally efficient and robust algorithms for both unsupervised and supervised learning problems such as  clustering and learning adversarially robust classifiers.

\thispagestyle{empty}

\newpage

\tableofcontents

\thispagestyle{empty}

\anote{Some TODOs:
11/4:
1. Move Xue's counterexample to the main body, and call it a proposition.
2. Check about the Vu and Lei conjecture (should we be careful about wording) if conjecture was about the more general case?
3. Do we motivate or say anything about sparse mean estimation literature in stats?
4. Mention other forms of PCA and the randomized sketching approaches, column selection, Clarkson-Woodruff 
Older TODO:
1. Clarify that our projection matrices are all orthogonal projections.}


\newpage

\setcounter{page}{1}




\section{Introduction} \label{sec:intro}
Reliability and trustworthiness of machine learning systems are key requirements for their secure adoption in day to day life. Many algorithms in machine learning are brittle to small perturbations made to the data points either at
test time or at training time. 
While the design of robust machine learning algorithms has seen exciting recent developments in both statistics and computer science~\citep{huber2011robust, diakonikolas2019survey}, our theoretical understanding of robustness to adversarial perturbations is limited. 
This lack of robustness to adversarial perturbations poses significant practical hurdles~\citep{szegedy2013intriguing, de2017understanding, de2018high}, and raises foundational questions of whether and how we can design basic machine learning primitives that are robust to adversarial perturbations.


%

In this work we study the above question in the context of {\em principal component analysis (PCA)}, 
that is the predominant tool for obtaining succinct data representations, and used as a preprocessing primitive in many machine learning pipelines. 
 Given data in a high-dimensional space $\R^n$ represented by the columns of a matrix $A$, the goal in PCA is to find a subspace of dimension at most $r \le n$ to represent the points, that minimizes the projection error (or reconstruction error) onto the subspace. This is formalized as follows where the matrix norm $\norm{\cdot}$ is either the Frobenius norm or the spectral norm:
\begin{equation} \label{eq:intro:PCA}
\min_{\Pi \in \mathcal{P}} \norm{\Pi^{\perp} A }^2= \min_{\Pi \in \mathcal{P}}\norm{A - \Pi A}^2, \text{ where } 
P=\{\text{orthogonal projections of rank} \le r \}.
\end{equation} 
The representation of each point $x \in \R^n$ corresponds to the projection $\Pi x$ onto the $r$-dimensional subspace given by $\Pi$ (one can also represent the point as an $r$-dimensional vector in terms of a basis for $\Pi$). 

We propose a robust variant of PCA that corresponds to learning representations that are robust to adversarial perturbations to the data.
We model an adversarial perturbation $x'$ of a point $x$ as one for which the $\ell_q$ norm of the difference is small, i.e., $\|x-x'\|_q \leq \delta$, for some fixed $\delta > 0$ and $q > 2$. It is instructive to keep in mind the case of $q=\infty$, that is of particular interest in emerging paradigms such as adversarial machine learning~\citep{szegedy2013intriguing, madry2017towards}. A low dimensional subspace with an associated projection matrix $\Pi$ is robust if  $\norm{\Pi x -\Pi x'}_2$ is small for any adversarial perturbation $x'$ of $x$. 

{\em What data representations are adversarially robust?} Given an $r$-dimensional subspace of $\mathbb{R}^n$ with projection matrix $\Pi$, the adversarial robustness of $\Pi$ to $\delta$-perturbations in the $\ell_q$ norm is precisely captured by 
\begin{align}
\label{eq:intro-adversarial-robustness}
    \sup_{x,x': \|x-x'\|_q \leq \delta} \|\Pi(x-x')\|_2 = \delta \|\Pi\|_{q \to 2} .
\end{align}
The quantity $\kappa=\norm{\Pi}_{q \to 2}$ characterizes the robustness of the projection $\Pi$ to perturbations in $\ell_q$ norm around {\em every} point $x \in \R^n$ in the following sense. The distance between the projections of $x$ and a $\delta$-perturbation $x'$ of $x$ (in $\ell_q$ norm) is upper bounded by $\kappa \delta$.   
On the other hand, around each point $x$ one can also realize a perturbation $x'=x+z$ with $\norm{z}_q \le \delta$ such that $\norm{\Pi x - \Pi x'}_2 = \kappa \delta$. We will call $\Pi$ a $(\kappa,q)$-robust rank-$r$ projection when $\Pi$ is an orthogonal projection matrix of rank at most $r$ with $\norm{\Pi}_{q \to 2} \le \kappa$; when the robustness parameter $\kappa$ and norm $q$ are understood, we will just call it a robust rank-$r$ projection. 
 
This leads to the following natural formulation. 
Given a  
 data matrix $A \in \R^{n \times m}$ composed of $m$ points in $\R^n$, a robustness parameter $\kappa \ge 1$ and  the norm $q \in [2, \infty]$, find a robust rank-$r$ projection with low error: 
\begin{align}\label{intro:obj}
    \min_{\Pi} &~~ \norm{\Pi^{\perp} A }^2 = \min_{\Pi} \norm{A - \Pi A}^2 \\
    \text{s.t.\ }& \Pi \text{ is an (orthogonal) projection matrix of rank at most } r, \text{ and } \norm{\Pi}_{q \to 2} \le \kappa \label{intro:operatornorm}.
\end{align}
One can also switch the objective and the constraint to consider the alternate formulation where we want to find a projection matrix with the minimum $\norm{\Pi}_{q \to 2}$ (i.e., the most robust projection), that achieves a prescribed projection error. %
We will be interested in two versions of the problem, depending on whether we measure the projection error in {\em Frobenius norm} or {\em spectral norm}. Recall that the top-$r$ terms of the Singular Value Decomposition of $A$ simultaneously solve both of these problems in polynomial time, when there is no additional robustness constraint, or when $q=2$ (since $\norm{\Pi}_{2 \to 2} = 1$ for any non-trivial projection $\Pi$). We also remark that just as for the PCA objective~\eqref{eq:intro:PCA}, the above objective \eqref{intro:obj} can be equivalently rephrased as finding the best approximation among low-rank matrices, but among those with a ``robust column space'' (see Claim~\ref{lem:lowrankobj}).
 


\paragraph{Training-time robustness.} 
Our formulation in~\eqref{intro:obj} finds robust representations assuming access to the uncorrupted training dataset denoted by the matrix $A$. However in practice, large scale datasets often contain various kinds of measurement errors~\citep{sloutsky2013robust}, or even data that is poisoned by adversarial perturbations. Hence, it is important to design algorithms that are robust 
to such training-time perturbations as well.

To capture training-time perturbations, we extend our formulation in \eqref{intro:obj} 
by assuming that we only have access to a corrupted dataset $\tilde{A}$, whose $i$th column $\tilde{A}_i$ is an adversarial perturbation of the corresponding column $A_i$ of the uncorrupted dataset $A$, i.e., $\norm{\tilde{A}_i - A_i}_q \le \delta$. 
Given as input $\tilde{A}$, our goal is to output a robust projection $\widehat{\Pi}$ that achieves {\em near optimal} error for the true dataset $A$, i.e., $\|{\widehat{\Pi}}^\perp A \|^2 \approx \min_\Pi \|\Pi^\perp A\|^2$.

We will show how to design algorithms for finding robust representations that are robust to adversarial perturbations at \emph{training-time}. In other words, we achieve robustness to both test-time and training-time perturbations simultaneously. 
As we will see in Section~\ref{sec:results-training-errors}, the resilience to adversarial perturbations in the training set will crucially depend on the the $q \to 2$ operator norm of the projection matrix associated with the minimizer of~\eqref{intro:obj}.


\paragraph{Problem motivation.} 

Studying robust variants of PCA can lead to new robust primitives for problems in data analysis and machine learning. 
(See Section~\ref{sec:results:applications} for specific examples.) 
Our work is also motivated by emerging paradigms such as {\em adversarial machine learning} and {\em low precision machine learning}. The recent phenomenon of {\em adversarial robustness} identified by \citet{szegedy2013intriguing} shows that learning algorithms even when trained on high quality datasets are susceptible to small adversarial perturbations at test time. Even though empirical approaches have been proposed~\citep{madry2017towards, zhang2019theoretically}\pnote{add more citations}
 for designing
 algorithms that are robust to such perturbations, the current theoretical understanding is limited. Moreover in low-precision machine learning, one can achieve substantial performance improvements by quantizing the data to a few most significant bits (e.g., 8-bit arithmetic); this quantization noise is naturally captured as a small perturbation (in $\ell_\infty$ norm) to each training data point~\citep{de2017understanding, de2018high}. 

\vspace{5pt}
\noindent \emph{Practical implications.} Surprisingly, our techniques for learning robust linear representations also lead to algorithms for making deep neural networks that are highly non-linear in nature, more robust to test-time perturbations.
In a very recent work, \citet{awasthi2020adversarial} directly build on the theoretical insights developed in this work to design a practical algorithm for making deep neural networks more robust to adversarial perturbations as compared to the state-of-the-art.

\paragraph{Connection to Sparse PCA and generalizations.} While our formulation in \eqref{intro:obj} that is motivated by robustness is new to the best of our knowledge, it has rich connections to (and implications for) well studied problems like the sparse PCA problem~\citep{ZHT06,johnstone2009consistency}. Consider the setting when the perturbations are measured in $\ell_\infty$ norm and rank $r=1$. The robustness constraint on the projection $\Pi=vv^\top$ imposes an upper bound of $\kappa$ on the ``analytic sparsity'' of $v \in \R^n$ (measured as the ratio of $\ell_1$ and $\ell_2$ norms). In the special case of $r=1$ the formulation is  
\begin{equation}\label{eq:obj_rank1} \min  \|A - vv^\top A\|^2_F = \tr(AA^\top) - \max v^\top AA^\top v \,\, \text{ subject to }  \|v\|_1 \leq \kappa,~\text{ and } \|v\|_2=1.
\end{equation}
The complementary objective (i.e., $\max v^\top A A^\top v$) is the $\ell_1$ version of the maximization {\sc sparse PCA} objective;\footnote{It is within a factor $2$ of the $\ell_0$ version where the constraint $\|v\|_1 \leq \kappa$ is replaced by $\|v\|_0 \leq \kappa^2$ (see Section 10.3.3 of \citet{Vershynin}).} both the $\ell_0$ and the $\ell_1$ versions are notoriously hard in the worst-case~\citep{chan2016approximability} (see also Theorem~\ref{thm:hardness_SSE} in Appendix~\ref{sec:computational-lower-bound}). 
For general $q \ge 2$, requiring robustness places a constraint on the dual $\ell_{q^*}$ norm of the direction $v$. 
Moreover for projection matrices of higher rank $r \ge 1$, $\norm{\Pi}_{q \to 2}$ is a basis-independent quantity that captures the maximum  $\ell_{q^*}$ norm over all directions (unit vectors in $\ell_2$) in the subspace given by $\Pi$ (see  Lemma~\ref{lem:robustproperties}).  
Hence robust projection matrices correspond to subspaces comprised of analytically sparse vectors measured in an appropriate norm e.g., $\ell_{1}$ norm, when $q=\infty$ (see also Claim~\ref{lem:basissparsity} for an approximate converse in terms of the sparsity of a basis for $\Pi$). \anote{Need to fill in this.} Appendix \ref{app:looklike} gives some examples of what robust projection matrices i.e., matrices with small $q \to 2$ operator norm, look like.

The range of values of the robustness parameter $\kappa$ is $1 \le \norm{\Pi}_{q \to 2} \le n^{1/2-1/q}$. For several real world datasets we expect $\kappa$ to be significantly smaller than the upper bound. As an example Figure~\ref{fig:cifar-10} shows that most of the signal in images from the CIFAR-10 dataset can be captured by a robust subspace with $\infty \to 2$ norm that is significantly smaller than $\sqrt{n}$. 
 The smaller the value of $\kappa$, the more robust the subspace is (it will be instructive to think of $\kappa \approx n^{\eps}$, for some small constant $\eps=0.01$). 
 


 \begin{figure}[htbp]
    \centering
    \begin{minipage}{0.3\textwidth}
        \centering
        \includegraphics[width=1\textwidth]{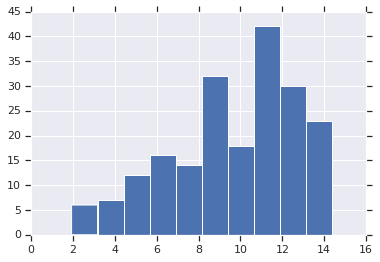} 
    \end{minipage}
    \begin{minipage}{0.3\textwidth}
        \centering
        \includegraphics[width=1\textwidth]{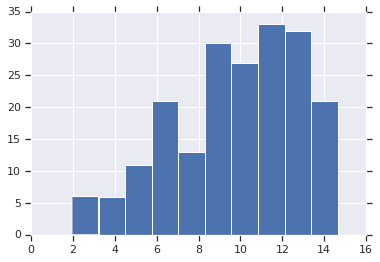} 
    \end{minipage}
        \begin{minipage}{0.3\textwidth}
        \centering
        \includegraphics[width=1\textwidth]{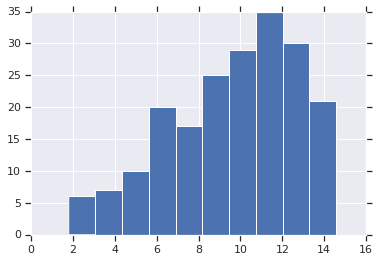} 
    \end{minipage}
     \caption{\small \label{fig:cifar-10} We study the CIFAR-10 image dataset~\citep{krizhevsky2009learning} transformed in the {\em discrete cosine}~(DCT) basis. We project each channel, of dimensionality $n=1024$, onto a robust $200$-dimensional subspace. The figure shows the histogram of the $\ell_1/\ell_2$ sparsity of the corresponding basis vectors. Each of the three projection matrices capture more than $99\%$ of the signal in the respective channel. Furthermore, for each projection matrix, the $\norm{\cdot}_{\infty \to 2}$ is in $[19,21]$, significantly smaller than $\sqrt{n} = 32$.}
\end{figure}

\section{Our Results} \label{sec:results}

While all the results that follow apply to $q \ge 2$, it will be useful to think of $q=\infty$ (perturbations measured in $\ell_\infty$ norm), and the robustness parameter $\kappa \ll n^{1/2}$, say $\kappa < n^{0.1}$.  
Intuitively, the larger the choice of $q$, the more unrestricted the adversarial perturbations can be (since $\norm{x}_p \le \norm{x}_q$ when $p \ge q$).\vnote{07/13: R2: complains about interpreting results when q close to 2: R2 says as projections all have identical 2->2 (spectral) norm, our new
model offers no robustness for q near 2. also, questions if robustness for low infinity->2 norm is
relevant, mentioning that there is disagreement about what
"robustness" means or how to model.} 

\subsection{Algorithmic Guarantees for Robust Low-Rank Projections}
\subsubsection{Approximation Algorithms for Adversarially Robust PCA}
\label{sec:results:worstcase}

We first consider the two variants of problem \eqref{intro:obj}, where the matrix norm $\gnorm{\cdot}$ represents either the Frobenius norm or spectral norm, in the worst-case setting. 
\begin{itheorem}\label{ithm:worstcase:frob}
There exist polynomial time algorithms that given $q \ge 2$, any $\gamma\in (0,1]$ and a data matrix $A$ 
with a $(\kappa,q)$-robust projection matrix $\Pi^*$ of rank at most $r$ satisfying $\gnorm{A-\Pi^* A}^2 \le \eps \gnorm{A}^2$ for some $\eps \in [0,1]$,
find a projection matrix $\widehat{\Pi}$ of rank at most $r$ s.t. 
\begin{align} \label{eq:ithm:worstcase:frob}
\norm{\widehat{\Pi}}_{q \to 2} &\le O(1/\sqrt{\gamma}) \cdot \kappa , \text{ and } \gnorm{A-\widehat{\Pi} A}^2 \le \big(\alpha+\gamma \big) \cdot \eps \gnorm{A}^2, 
\end{align}
where $\alpha=2$ for the Frobenius norm error objective and $\alpha=3$ for the spectral norm error objective. 
Moreover, for any $\gamma \in (0,1]$, there exist polynomial time algorithms that find an $r'\le r (1+O(\gamma^{-1}))$-dimensional projection $\widehat{\Pi}$ that gets a projection error of $(1+\gamma) \gnorm{A-\Pi^* A}^2$, and relaxes the robustness parameter by $O(1/\sqrt{\gamma})$ factor.
\end{itheorem}

In other words, our algorithms attain small constant factor bicriteria approximation to the adversarially robust PCA problem.
 The algorithms for both objectives -- Frobenius norm and spectral norm, use convex relaxations and similar ideas, yet the algorithms (and relaxations) are different, unlike the case for standard PCA. Please see Theorem~\ref{thm:worstcase:frob} (Frobenius norm objective) and Theorem~\ref{thm:worstcase:spectral} (spectral norm objective) for the formal statements.
Our algorithms take in as input a guess for the robustness parameter $\kappa$. Recall that $\kappa \in [1, n^{1/2 - 1/q}]$.\footnote{When $q=2$, the robustness constraint becomes trivial, and problem reduces to the standard PCA problem as discussed earlier. } Alternately, one can also input a guess for the optimal projection error (or the desired projection error), and minimize the robustness parameter $\kappa$ approximately. 

Observe that the approximation guarantee in Theorem~\ref{ithm:worstcase:frob} is a constant independent of the desired rank $r$. 
Even if we do not restrict the rank $r$  
(set $r=n$) our algorithm finds among all subspaces that are 
$O(\kappa)$-robust, the one with approximately optimal error. 
The constant factor loss in the robustness parameter depends on the value of $q \in [2,\infty)$. It is the largest for $q=\infty$ (where it is $\sqrt{\pi/2}$), and this is related to a variant of the Grothendieck problem~\citep{alon2004approximating,nesterov1998semidefinite}. This loss in the robustness parameter is unavoidable when $q>2$, due to the inapproximability  for certifying the $q \to 2$ norm, even for projection matrices~\citep{bhattiprolu2018inapproximability} (see Section~\ref{sec:operatornorms}).

Our result also has new implications for approximating the minimization objective for sparse PCA specified in \eqref{eq:obj_rank1}. Most existing theoretical guarantees for Sparse PCA have been established for average case models~\citep{BerthetRigollet}. There has also been work on studying the maximization version of the sparse PCA objective in worst case models~\citep{chan2016approximability}. To the best of our knowledge, we are not aware of any existing worst case guarantees for the minimization version as defined in \eqref{eq:obj_rank1}. Our results (applied with $r=1$) provides a small constant factor bicriteria approximation to problem \eqref{eq:obj_rank1}. 
This is in stark contrast 
to the approximability of the maximization version of the problem. Even when $r=1$ the best known polynomial time algorithm gives a $O(n^{1/3})$ factor approximation in the worst-case (for both the $\ell_1$ and $\ell_0$ versions); 
moreover no constant factor approximation is possible assuming the SSE conjecture~\citep{chan2016approximability}. (see its implication to computational hardness of our minimization version \eqref{intro:obj} in Appendix~\ref{sec:computational-lower-bound}).
 Furthermore, the minimization variant of the problem that we study (and our small approximation factors) 
 will be crucial in various downstream applications such as clustering. 
\anote{Maybe we could rephrase the last line? e.g., Furthermore, the applications to downstream learning problems like clustering that follow also illustrate the merits of the minimization variant that we study. 
 }

\subsubsection{Robustness to Adversarial Errors during Training}
\label{sec:results-training-errors}
We now discuss how 
to handle {\em data poisoning}, where points in the {\em training} data set $\wt{A}$ are adversarially perturbed. 
Recall that in the corruption model, {\em every} 
sample $A_i \in \R^{n}$ can potentially be adversarially perturbed up to a $\delta$ amount, as measured in $\ell_q$ norm for $q \ge 2$. 
So every column of $\wt{A}$ satisfies $\norm{\tilde{A}_i - A_i}_q \le \delta$; and we will refer to such an $\tilde{A}$ as a {\em $\delta$-corrupted instance} of $A$.  
While the input instance is $\wt{A}$, our goal now is to recover a robust low-rank projection for the uncorrupted matrix $A$. We will show that we can in fact output a robust low-rank projection $\widehat{\Pi}$ that is competitive with the best robust low-rank projection of $A$, even though $A$ is not known to us! 
We first state our result when the error is measured in Frobenius norm, and later describe the guarantees for the spectral norm variant.

\begin{itheorem}\label{ithm:robusttraining:frob}
Suppose $q \ge 2$ and $A \in \R^{n \times m}$ is the unknown uncorrupted data matrix, with a $(\kappa,q)$-robust projection matrix $\Pi^*$ of rank at most $r$ satisfying $\norm{A-\Pi^* A}_F^2 \le \eps \norm{A}_F^2$ for some $\eps \in [0,1]$. There exists a polynomial time algorithm that
given as input a $\delta$-corrupted instance $\tilde{A}$ of $A$ outputs a projection $\widehat{\Pi}$ of rank at most $r$ that is approximately optimal:
\begin{align} \label{eq:intro:robusttraining}
\forall \eta>0, ~\norm{\widehat{\Pi}}_{q \to 2} &\le O(\kappa) , \text{ and } \norm{A-\widehat{\Pi} A}_F^2  \le O(\eps+ \eta) \cdot \norm{A}_F^2 + O(\tfrac{1}{\eta}) \cdot \delta^2 \kappa^2 m.  
\end{align}

In particular this gives an $O(1)$ approximation when $\delta^2 < (\eps^2/\kappa^2)\cdot \tfrac{1}{m}\norm{A}_F^2 $. 

\end{itheorem}
To interpret the results 
let $q=\infty$ and consider an uncorrupted dataset $A$ where every column (sample) is a unit vector in $\R^n$, and let $\kappa=n^{0.1}$. 
The total corruption to each point is at most $o(1)$ in Euclidean norm when $\delta=o(n^{-1/2})$; in this case one would expect that standard PCA applied to $\tilde{A}$ may recover a good solution. The above Theorem~\ref{ithm:robusttraining:frob} on the other hand guarantees to find a good (robust) low-rank approximation for the unknown matrix $A$ even when $\delta = o(1/\kappa)= o(n^{-0.1})$. Note that in this setting {\em every} point can be completely overwhelmed by the adversarial noise (in Euclidean norm).
The algorithm first denoises the input by solving a convex minimization problem before applying the algorithm from Theorem~\ref{ithm:worstcase:frob}.
Furthermore, 
Proposition~\ref{prop:additiveerror:tight} shows that the additive factor of $O(m \delta^2 \kappa^2)$ is unavoidable for every $\kappa, \delta=O(1/\kappa)$. 
These results suggest that 
the robust projection structure (measured in $q \to 2$ operator norm) is {key} in understanding the resilience to small adversarial perturbations of every point during training, even without any test-time robustness considerations. Subsequent work by~\citet{ACV20coltsub} also characterize principal subspace recovery in an average-case setting in the presence of adversarial perturbations at training time using the $q \to 2$ norm robustness criterion (in an instance-optimal sense). 


Our guarantees for spectral norm error in the presence of training-time adversarial perturbations are somewhat similar to Theorem~\ref{sec:wc:frob}. However, there is a qualitative difference: given as input an adversarial $\delta$-perturbation $\tilde{A}$ of an uncorrupted matrix $A$ that has a good solution (i.e., $\spnorm{A-\Pi^* A} \le \eps \spnorm{A}$ for some small $\eps \in (0,1)$), we will either find a robust low-dimensional projection of the unknown dataset $A$, or we will certify that the dataset has been poisoned substantially (i.e., $\spnorm{\tilde{A} - A} > \eps \spnorm{A}$). In particular, the algorithm will {never} output a low-dimensional representation that is bad for the unknown data matrix $A$.  
Please see Theorem~\ref{thm:training:spectralnorm} for a formal statement. 
We remark that information-theoretically we can design an estimator (that is computationally inefficient) that achieves the stronger qualitative guarantees as in Theorem~\ref{ithm:robusttraining:frob}. Designing a computationally efficient algorithm to do the same is an open question that we describe in more detail in Section~\ref{sec:techniques:training}. 

\subsection{Applications to Learning Problems.}
\label{sec:results:applications}
The algorithmic results that we have described so far, may be used as a robust primitive in lieu of standard PCA.
These lead to efficient, adversarially robust algorithms for learning problems of different flavors, further validating our formulation. In particular, we demonstrate the versatility of our robust primitive via the following three applications across both unsupervised and supervised learning: 


 \vspace{5pt}
 \noindent \textbf{1. Clustering with training-time perturbations.} We study the classical unsupervised learning problem of $k$-means clustering. Let $A \in \R^{n \times m}$ denote $m$ data points with an unknown ground truth clustering into $k$ clusters. It is well known that if the ground truth clusters are well separated then the popular Lloyd's algorithm~\citep{lloyd1982least}, when properly initialized, recovers the ground truth clustering~\citep{kumar2010clustering, awasthi2012improved}. We extend this setting to consider a scenario where the input to the algorithm is the data matrix $\tilde{A}$ with each data point being adversarially corrupted up to a perturbation of a certain amount. Existing algorithms based on variants of the Lloyd's heuristic fail to handle large amounts of noise in this setting. This is due to the fact that these algorithms use PCA to initialize the cluster centers, and as we saw in previous sections, PCA is not robust to adversarial perturbations. 
 
 We instead design a robust variant of the Lloyd's heuristic that can handle a large amount of perturbation while successfully clustering the data according to the ground truth. In our algorithm, the adversarially robust PCA primitive plays a crucial role. We use the adversarially robust PCA primitive to obtain a good set of initial cluster centers. Additionally, during the iterative Lloyd's updates, we compute new cluster means via a new robust mean estimation procedure that we design in this work~(this is the special case of clustering with $k=1$). As a result we obtain a clustering algorithm that can handle adversarial perturbations to the training set, of magnitude up to $o(1/\kappa)$ where $\kappa$ is the robustness of the $k$-dimensional subspaces spanned by the cluster means. Hence our algorithm can handle significantly more noise when this subspace is robust, compared to standard approaches that break down unless the perturbation amount~(in $\ell_\infty$ norm, say) is of the order of $o(1/\sqrt{n})$. \anote{Added this line:} On the other hand, such a dependence on $\ell_1$ sparsity of the means is needed even in the case where $k=1$ (i.e., mean estimation in the presence of adversarial perturbations). 
 
 See Section~\ref{sec:applications} for details, including Theorem~\ref{thm:clustering-application-general} for the general case, and Theorem~\ref{thm:clustering-application-gaussians} for the specialization to clustering mixtures of Gaussians. 
    
    
 \vspace{5pt}
 \noindent \textbf{2. Learning intersection of halfspaces under training-time and test-time perturbations.} We consider the problem of learning an intersection of $k$ halfspaces over the Gaussian distribution on $\R^n$ in the presence of adversarial perturbations to the samples, both at testing-time and training-time. We will represent an intersection of halfspaces by a function $h:\R^n \to \sset{0,1}$ denoted by $h(x)= \prod_{i=1}^k \bfone(w_i^\top x \ge \theta_i)$, where $\forall i \in [k], ~ \norm{w_i}_2 =1$ and $\theta_i \in \R$ and where $\bfone(\cdot)$ denotes the indicator function. Let $\mathcal{H}_k$ represent the hypothesis class of all intersections of at most $k$ halfspaces. 
In the uncorrupted setting, the training points $x_1, \dots, x_m \in \R^n$ are drawn i.i.d. from a Gaussian distribution, and their corresponding labels $y_i=h^*(x_i)$ for some $h^* \in \mathcal{H}_k$. 
A series of well-known results~\citep{VempalaJACM,Vempala08,KOS08} shows that when we are given access to uncorrupted training samples in $\R^n$ drawn from a Gaussian distribution, one can learn an intersection of halfspaces in the PAC learning model,  in time $f(k)\cdot \poly(n)$. 
Crucially, these algorithms use PCA as a first step to reduce the learning problem to a low dimensional space. 
Our adversarially robust PCA primitive can be used to learn an intersection of $k=O(1)$ halfspaces even when there are adversarial perturbations {\em both} at {\em training-time} and {\em test-time}. 

What does a classifier $h$, say  
$h(x)= \bfone(w_1^\top x \ge 0) \cdot \bfone(w_2^\top x \ge 0)$, that is robust to adversarial $\delta$-perturbations at test-time look like? 
First observe that $\max\sset{\norm{w_1}_1, \norm{w_2}_1} \le  O(1/\delta)$ is necessary, otherwise there exists a $\delta$-adversarial perturbation $\tilde{x}$ with $\norm{\tilde{x}-x}_\infty \le \delta$ that $h$ misclassifies w.h.p! Moreover the subspace $\Pi^*$ spanned by $w_1, w_2$ is robust as measured in $\kappa=\norm{\Pi^*}_{\infty \to 2}$ (see Claim~\ref{claim:robustintersection} for a formal claim). 
For general $k$, we will assume that there exists a robust classifier $h^*(x)=\prod_{i=1}^k \bfone(w_i^\top x \ge \theta_i)$ for the data such that 
the projection matrix $\Pi^*$ onto the span of the normals $w_1, \dots, w_k$ satisfies $\norm{\Pi^*}_{q \to 2} \le \kappa$. 

We consider a natural model of training-time perturbations, where each training data-point is $\delta$-adversarially perturbed in $\ell_q$ norm ($q\ge 2$). Our robust algorithm follows the same general approach as in \citet{Vempala08}; however we use our primitive for adversarially robust PCA instead of standard PCA to bring down the dimension to $k$. 
This allows us to handle adversarial perturbations of magnitude $\delta=o(1/\kappa)$ (as opposed to existing approaches that need $\delta=o(1/\sqrt{n})$ for $q=\infty$), and output a robust classifier (intersection of $k$-halfspaces) that incurs an error of $o(1)$. 
Recall from the earlier discussion, that such a condition is necessary qualitatively: even a single half-space $\bfone(w_1^\top x \ge 0)$ is not robust when $\norm{w_1}_1 = \kappa$ and $\delta \gg 1/\kappa$. See Section~\ref{sec:intersection} for details.
  

 \vspace{5pt}
 \noindent \textbf{3. Trading off natural accuracy in classification for robustness to test-time perturbations.} Finally, in many scenarios it might be desirable to trade off natural accuracy for significant robustness to test-time perturbations. We demonstrate how our robust primitive can be used for this purpose. Specifically, we consider the {\em Gaussian data model}~\citep{Anderson_Multi_stat} that has been studied in recent works to understand adversarial robustness~\citep{tsipras2018robustness, schmidt2018adversarially}. In this model a labeled example $(x,y)$ is generated by first picking the label as $+1$ or $-1$ with equal probability. Then $x \in \R^n$ is drawn from either $\mathcal{N}(\mu_1, \Sigma)$ or $\mathcal{N}(\mu_2, \Sigma)$ depending on whether $y=-1$ or $y=+1$. We denote this model as $\mathcal{M}(\mu_1, \mu_2, \Sigma)$. 
    
     In the above model the (Bayes) optimal classifier is a linear classifier of the form $sgn(\iprod{w,x} + b)$ with weight vector $w = \Sigma^{-1}(\mu_1 - \mu_2)$. If the means are well separated then the above classifier has good accuracy but can be easily fooled during test-time via small perturbations. In other words, the robust accuracy of the above classifier is close to zero. In order to get a better trade-off of standard accuracy and robust accuracy, we could instead aim to look for robust subspaces where the variance is low and the means when projected are still separated by a non-trivial amount. We will show how our robust PCA primitive helps us achieve this and obtain a classifier with better robust accuracy. See  Section~\ref{sec:gaussian-model} for details.

\subsection{Proof Sketches and Technical Overview} \label{sec:techniques}

We 
give a flavor of the technical ideas involved in obtaining our main algorithmic results. 

\subsubsection{Constant Factor Approximation Algorithms}


Let us first consider the version of problem \eqref{intro:obj} of finding a robust rank-$r$ projection that has small error measured in Frobenius norm. A natural mathematical programming relaxation is the following:
\begin{align}
    &\min_X \norm{A}_F^2 - \iprod{AA^\top, X} \label{tech:sdp:obj}\\
    \text{subject to  }~~& \tr(X) \le r, ~~ 0 \preceq X \preceq I 
  ~\text{ and } \norm{X}_{q \to 2}\le \kappa \label{tech:sdp:norm}
\end{align}


This is a valid convex relaxation for the problem since the constraints are all satisfied by any rank-$r$ projection matrix that is robust i.e, $\norm{\Pi}_{q \to 2} \le \kappa$.

The first challenge however is that the operator norm constraint \eqref{tech:sdp:norm} is NP-hard to verify efficiently, even for the case of projection matrices. 
However, these operator norm $\norm{\cdot}_{q \to p}$ computation problems 
form a rich class of problems related to the Grothendieck problem~\citep{alon2004approximating, nesterov1998semidefinite}, and polynomial time $O(1)$ factor approximations are known for general $q \to 2$ norms with $q \ge 2$ (see Section~\ref{sec:operatornorms}). 


The bigger challenge is in producing a projection matrix from $X^*$ that simultaneously (a) achieves a good objective value, (b) has rank at most $r$, and (c) is $O(\kappa)$-robust i.e., has bounded $q \to 2$ norm. A natural approach for producing a good low-rank solution is to output a rank-$r$ projection matrix $\Pi_r$ that corresponds to the large singular values of $X^*$. However we have no control on the robustness of the subspace $\norm{\Pi_r}_{q \to 2}$. In fact, the algorithmic problem~\eqref{intro:obj} is challenging even when there is no rank constraint ($r=n$). The main issue is to relate the $q \to 2$ operator norm of the projection matrix we output to that of the relaxation solution $\norm{X^*}_{q \to 2}$ which is upper bounded by $\kappa$. 

Our crucial insight is that we can indeed design a rounding scheme that achieves all three goals if the norm in the constraint \eqref{tech:sdp:norm} is a {\em monotone norm}! 
$$\text{A matrix norm }\gnorm{\cdot} \text{ is monotone iff }~~ \forall A,B \succeq 0,~\text{ we have }~ \gnorm{A+B} \ge \gnorm{A}.$$

(See Definition~\ref{def:monotonicity} for details.) This monotonicity property allows us to truncate terms in the eigendecomposition of $X^*$ without any loss in robustness $\kappa$, and get fine control on the robustness $\kappa$ when we rescale different rank-$1$ terms appropriately.   
Unfortunately however, the $q \to 2$ operator norm is not monotone in general (see e.g., Claim~\ref{clm:counterexample_infinity_2},  Claim~\ref{claim:counterexample:monotone}).


Our next important observation is that 
we can replace the constraint \eqref{tech:sdp:norm} by a similar constraint in terms of the $q \to q^*$ norm. This is because for any matrix $B$, we have that $\norm{B}_{q \to 2}^2 = \norm{B^\top B}_{q \to q^*}$ where $\ell_{q^*}$ is the dual norm for $\ell_q$ and satisfies $\tfrac{1}{q^*}+\tfrac{1}{q}=1$. The main advantage of this reformulation is that the $q \to q^*$ operator norms {\em are} indeed {\em monotone}. 

\begin{claim}[Same as Claim~\ref{lem:monotone}]
For any $q \ge 1$, the operator norm  $\norm{\cdot}_{q \to q^*}$ is monotone.
\end{claim}

Moreover polynomial time $O(1)$-approximate separation oracles based on semidefinite programs exist for these norms when $q \ge 2$. This motivates convex programming relaxation {\CPone}and its equivalent but more elegant convex relaxation {\CPtwo}shown in Figure~\ref{intro: fig:relaxations:fr}. 

\begin{figure}[h]
\fbox{\begin{minipage}{.48\linewidth}
\CPone:
\begin{align*}
    \min_X &\norm{A}_F^2 - \iprod{AA^\top, X} \\
    \text{s.t. }& \tr(X) \le r, ~~ 0 \preceq X \preceq I \\
    & \max_{\substack{Y \in \calQ}}\iprod{X,Y}\le C_G \kappa^2 , \nonumber \text{ where}\\
    \calQ=& \sset{ Y \in \R^{n \times n}:~ Y \succeq 0, \sum_{i = 1}^n Y_{ii}^{q/2} \le 1 }  
\end{align*}
\end{minipage}}%
\fbox{\begin{minipage}{0.48 \linewidth}
\CPtwo:
\begin{align*}
    \min_{X \in \R^{n \times n}, d \in \R_{\ge 0}^n} &\norm{A}_F^2 - \iprod{AA^\top, X} \\
    \text{s.t. }& \tr(X) \le r, ~~ 0 \preceq X \preceq I \\
                & X \preceq \diag(d) \\
 \norm{d}_{q/(q-2)} := &\Big(\sum_{i=1}^n d_i^{q/(q-2)}\Big)^{(q-2)/q} \le C_G \kappa^{2}. 
\end{align*}
\end{minipage}}
\caption{Two equivalent tractable convex relaxations \CPone  and \CPtwo  for problem \eqref{int:obj}. 
See Lemma~\ref{lem:convexdual} for proof of equivalence using convex duality.} \label{intro: fig:relaxations:fr}
\end{figure}

\vspace{-10pt}

%
Let $X^*$ be the optimal solution to the convex program. We obtain the required robust low-rank projection matrix from $X^*$ by a simple rounding procedure that focuses on the large singular values of $X^*$. 
The monotonicity property of the norm leads to an elegant analysis to guarantee that the resulting low-rank projection is $O(\kappa)$-robust, while also achieving small error. 
%
%
We now sketch the proof of Theorem~\ref{ithm:worstcase:frob} with the Frobnenius norm objective. 
Similar ideas also work when the projection error is measured in terms of the spectral norm. However, the objective function is instead rephrased as $\min \spnorm{(A^\top (I-X) A}$ where $\spnorm{\cdot}$ is the spectral norm; the algorithm and analyis, while slightly different again leverage the monotonicity property of the $q \to q^*$ norms. 


\begin{proofsketchof}{Theorem~\ref{ithm:worstcase:frob}}
Assume $\norm{A}_F=1$ without loss of generality, and let $OPT=\eps \in [0,1]$. It is easy to see any feasible projection matrix $\Pi$ of rank $r$ satisfying $\norm{\Pi}_{q \to 2} \le \kappa$ forms feasible solutions to {\CPone}and {\CPtwo} (for an appropriate feasible $d$) with the correct objective value. Moreover the relaxations \CPone and \CPtwo can be solved in polynomial time to arbitrary accuracy using the Ellipsoid algorithm. See Claim~\ref{claim:feasibility} for details.

Set $\delta:=1/(1+\gamma)$. Let $\widehat{X}=\sum_i \lambda_i v_i v_i^\top$ and $S=\sset{i : \lambda_i \ge 1-\delta}$. Define for each $i \in [S]$, $\alpha_i:= \iprod{v_i v_i^\top, AA^\top}$. We form $T$ from $S$ by picking the $\min\sset{r,|S|}$ ones with the largest $\sset{\alpha_i}$ values. 
Let $\Pi_S=\sum_{i \in S} v_i v_i^\top$. Our projection matrix will be $\Pi_T =\sum_{i \in T} v_i v_i^\top$. 

We use monotonicity of $\intronorm{\cdot}$ to show the operator norm constraint is satisfied:
\begin{align*}
    \intronorm{\Pi_T} &= \intronorm{\sum_{i \in T} v_i v_i^\top} \le   
     \frac{1}{1-\delta}\intronorm{\sum_{\substack{i: \lambda_i > 1-\delta}} \lambda_i v_i v_i^\top } \le  \frac{\intronorm{\widehat{X}}}{1-\delta}     \le \frac{\alpha \kappa}{1-\delta} = \frac{\alpha(1+\gamma) \kappa}{\gamma}  
\end{align*}

\noindent Also, since $\Pi_S$ (and hence $\Pi_T$) projects onto the large eigenspace of $\widehat{X}$, we can prove that by truncating onto the large eigenvalues (see Lemma~\ref{lem:lowrank})
\begin{align}
\iprod{I-\Pi_S, AA^\top}&=\sum_{i \notin S} \iprod{v_i v_i^\top, AA^\top} \le \frac{\eps}{\delta}, \text{ and } \nonumber\\
\sum_{i \in S} \iprod{(1-\lambda_i) v_i v_i^\top, AA^\top} &\le \sum_i \iprod{(1-\lambda_i) v_i v_i^\top, AA^\top} \le 1 - \iprod{X,AA^\top} \le \eps. \nonumber
\end{align}
$$\text{Hence, }  \sum_{i \in S} \lambda_i \alpha_i =\sum_{i \in S} \lambda_i \iprod{v_i v_i^\top, AA^\top}  \ge 1- \eps\Big(1+\frac{1}{\delta}\Big)= 1-(2+\gamma)\eps,$$
for our choice of $\delta=1/(1+\gamma)$.
 By our greedy choice of $T$, we have $\sum_{i \in T} \alpha_i \ge \sum_{i \in S} \lambda_i \alpha_i$, as $\sum_{i \in S} \lambda_i \le  \min\sset{\tr(X),|S|} = |T|$, with each $\lambda_i \in [0,1]$. Thus $\norm{\Pi_T^{\perp} A}_F^2 \le (2+\gamma)\eps$. This completes the proof. For the bicriteria guarantee with rank $r/(1-\delta)$ we output $\Pi_S$. The objective and $\intronorm{\cdot}$ bounds follow using similar arguments.  


\end{proofsketchof}

\vspace{-20pt}


\subsubsection{Technical Overview for Training-Time Adversarial Perturbations}\label{sec:techniques:training}


Let $q=\infty$. Recall that our input instance $\tilde{A}$ is a $\delta$-corrupted instance obtained from $A$ by potentially corrupting every entry of it by a $\delta$ amount. 
Our goal is to output a robust low-rank projection matrix $\Pi$ of rank at most $r$ for the uncorrupted matrix $A$, that is not known to us. This question is interesting even from a purely statistical standpoint; but additionally, we would also like our algorithm to run in polynomial time.

Why should this be possible? 
Suppose the uncorrupted matrix $A$ has a robust low-rank projection $\Pi^*$ of small error i.e., $\norm{A-\Pi^* A}< \eps \norm{A}$ (where $\norm{A}$ is either the Frobenius norm or spectral norm). Also assume for just this discussion that the average column (Euclidean) length of $A$ is 1, $\kappa=n^{0.1}$ say and $\delta = o(n^{-0.1})$. For any $\kappa$-robust projection $\Pi$, $\Pi A_j \approx \Pi \tilde{A}_j$ for each data point $j \in [m]$ .  So one could apply the worst-case algorithm on the corrupted input $\tilde{A}$, and hope to  also get a robust projection of low-error for the unknown matrix $A$. 

However, there are two major challenges in implementing this strategy. {\em (1) Solution value of $\tilde{A}$:} the robust projection $\Pi^*$ may not achieve low error on $\tilde{A}$; in fact, $\tilde{A}$ may not have any good robust low-rank approximation --  in this case the algorithm output may be useless. This is because the entry-wise perturbations could make $A$ and $\tilde{A}$ far away in aggregate e.g., $\norm{A-\tilde{A}}_F$ could be $\delta \sqrt{n m} \gg \sqrt{m} \approx \norm{A}_F$. \\
\noindent {\em (2) Identifiability issue:} perhaps more importantly, even if the perturbation $\tilde{A}$ has a robust low-rank projection of small error, we need to argue that this subspace indeed attains small error on $A$! The second issue is crucial in resolving the purely information-theoretic aspect of the question; it involves ruling out the scenario where $\tilde{A}$ has good robust low-rank approximation that is very different from any robust low-rank approximation for $A$.


To address the second issue (identifiability), we prove that if the projection $\widehat{\Pi}$ gives a small error on $\tilde{A}$, it necessarily gives a low-error on $A$. Roughly speaking, if there are two data-matrices $A$ and $B$ with $\norm{A-B}_\infty \le \delta$, then for $\gamma \in (0,1)$
$$ \norm{A-\Pi_1 A} , \norm{B-\Pi_2 B} < \gamma \norm{A}~~~\implies ~~    \norm{A - \Pi_2 A} \le \gamma_1 \norm{A}+ \tfrac{1}{\gamma_2}\sqrt{m} \delta \kappa, \text{ (and similarly for } B),$$
where $\gamma_1=\gamma_1(\gamma), \gamma_2=\gamma_2(\gamma) \in (0,1)$.  
One can show that $\norm{\Pi_1 A- \Pi_1 B}$ and $\norm{\Pi_2 A - \Pi_2 B}$ are small since $\Pi_1, \Pi_2$ are robust (see Lemma~\ref{lem:robustproperties}); however this does not give a handle on $\norm{A-\Pi_2 A}$.  
Note that the above statement does not follow from an application of the triangle inequality since we do not have any prior control on $\Pi_1 - \Pi_2$. This statement is particularly tricky to show for the spectral norm.  
A natural approach is to argue that $\Pi_1 A$ and $\Pi_2 B$ are close by arguing about their actions on any unit vector. 
We use a somewhat indirect proof; we show that for every direction $v \in \mathbb{S}^{n-1}$, (1) the lengths $\norm{Av}_2$ and $\norm{Bv}_2$ are similar and (2) the difference in the lengths $|\norm{Av}_2 - \norm{Bv}_2|$ is (approximately) lower bounded by $\norm{(A - \Pi_2 A)v}_2$.  This will allow us to conclude that $\norm{A - \Pi_2 A}$ is small. 


To tackle the first issue (solution value), we first  preprocess (denoise) to find an alternate matrix $A'$ with a good solution value. 
Suppose we have an algorithm to find
\begin{equation} \label{eq:tech:training}
A' = \argmin_{\substack{B: \norm{B - \tilde{A}}_\infty \le \delta}} ~~~\min_{\substack{\Pi: \text{rank}(\Pi)=r,~ \norm{\Pi}_{\infty\to 2} \le \kappa}} \norm{B - \Pi B}^2.
\end{equation}
We know that the uncorrupted matrix $A$ is a feasible solution with good value. Hence the optimal solution $A'$ of \eqref{eq:tech:training} has an even better solution. 
Moreover $\norm{A-A'}_\infty \le 2\delta$. This reduces the first issue to a computational question of solving \eqref{eq:tech:training}. 
For Frobenius norm error, we can obtain a good $A'$ by instead solving a simple convex optimization problem. 

For the spectral norm problem, we do not know of an efficient algorithm for \eqref{eq:tech:training}. However by running our worst-case algorithm (for spectral norm error) on $\tilde{A}$, we will either find a good solution that also works for $A$, or we will certify that 
$\spnorm{A - \tilde{A}}$ is too large i.e., the data was poisoned significantly. 
Finally we remark that we get the stronger computationally efficient guarantee for the spectral norm error (as for the Frobenius norm error) if we can resolve the spectral norm variant of \eqref{eq:tech:training}, which is an open question.

\subsection{Related and Concurrent Work on Training-time corruptions.}\label{sec:comparison}

Subsequent work by \citet{ACV20coltsub}, studies training time robustness in an average-case setting namely, the spiked covariance model where the goal is to recover the top principle subspace of the data distribution. They extend the algorithms developed in this work~(Section~\ref{sec:worstcase}) to the average case setting, and in fact show that the $q\to 2$ operator norm of the principal subspace almost characterizes its robustness to adversarial perturbations at training time in the spiked covariance model. Very recently, \citet{kothari2020pca} studies the problem of recovering an $\ell_0$-sparse\footnote{Note that any $\ell_0$ sparse unit vector is also $\ell_1$ sparse; see Claim~\ref{claim:norms-holder}.} principal component where there are adversarial perturbations in $\ell_\infty$ norm to the training data points, again focusing on the spiked covariance model. In contrast, our work studies the worst case formulation of the problem.

\vspace{2pt}
\noindent\emph{Comparison to the Huber contamination model and the robust PCA problem.} There is a vast amount of literature in designing robust algorithms in a different model, the Huber's contamination model, where, unlike our setting, a small fraction of the data can be arbitrarily corrupted~\citep{huber2011robust, diakonikolas2018learning, lai2016agnostic, diakonikolas2019survey}.
Our notion of training-time adversarial perturbations is very different in flavor -- it involves bounded adversarial perturbations to potentially every training point. 
Another popular model is the robust PCA problem proposed in~\citet{candes2011robust}. It assumes that a given corrupted matrix $\tilde{A}$ is a sum of two matrices, the true matrix $A$ that is low-rank and a sparse corruption matrix $S$ with sparsity pattern being essentially random. The corruptions although sparse can be unbounded in magnitude. This setting is again fundamentally different from ours. Recovery in this model necessitates incoherence type structural assumptions that the principal components of $A$ are spread out, whereas in our setting {\em sparsity or localization} of the signal dictates the recovery error. 

Please see Section~\ref{sec:related-work} for more details, and comparison to other related work.

\section*{Acknowledge}
The authors would like to thank the anonymous reviewers for their valuable feedback. AV was supported by the National Science Foundation (NSF) under Grant No. CCF-1652491, CCF-1637585 and CCF 1934931.

\bibliographystyle{apalike} 
\bibliography{main}

\begin{thebibliography}{}

\bibitem[Achlioptas and McSherry, 2005]{achlioptas2005spectral}
Achlioptas, D. and McSherry, F. (2005).
\newblock On spectral learning of mixtures of distributions.
\newblock In {\em International Conference on Computational Learning Theory},
  pages 458--469. Springer.

\bibitem[Alon et~al., 2013]{AlonVempala}
Alon, N., Lee, T., Shraibman, A., and Vempala, S. (2013).
\newblock The approximate rank of a matrix and its algorithmic applications:
  Approximate rank.
\newblock In {\em Proceedings of the Forty-fifth Annual ACM Symposium on Theory
  of Computing}, STOC '13, pages 675--684, New York, NY, USA. ACM.

\bibitem[Alon and Naor, 2004]{alon2004approximating}
Alon, N. and Naor, A. (2004).
\newblock Approximating the cut-norm via grothendieck's inequality.
\newblock In {\em Proceedings of the thirty-sixth annual ACM symposium on
  Theory of computing}, pages 72--80. ACM.

\bibitem[Amini and Wainwright, 2009]{aminiwainwright}
Amini, A.~A. and Wainwright, M.~J. (2009).
\newblock High-dimensional analysis of semidefinite relaxations for sparse
  principal components.
\newblock {\em Ann. Statist.}, 37:2877--2921.

\bibitem[Anderson, 2003]{Anderson_Multi_stat}
Anderson, T.~W. (2003).
\newblock {\em An Introduction to Multivariate Statistical Analysis}.
\newblock Wiley, 3 edition.

\bibitem[Arora and Kannan, 2005]{arora2005learning}
Arora, S. and Kannan, R. (2005).
\newblock Learning mixtures of separated nonspherical gaussians.
\newblock {\em The Annals of Applied Probability}, 15(1A):69--92.

\bibitem[Awasthi et~al., 2020a]{ACV20coltsub}
Awasthi, P., Chen, X., and Vijayaraghavan, A. (2020a).
\newblock Estimating principal components under adversarial perturbations.
\newblock In {\em Conference on Learning Theory, {COLT}}, pages 1 -- 40.

\bibitem[Awasthi et~al., 2019]{awasthi2019robustness}
Awasthi, P., Dutta, A., and Vijayaraghavan, A. (2019).
\newblock On robustness to adversarial examples and polynomial optimization.
\newblock In {\em Advances in Neural Information Processing Systems}, pages
  13737--13747.

\bibitem[Awasthi et~al., 2020b]{awasthi2020adversarial}
Awasthi, P., Jain, H., Rawat, A.~S., and Vijayaraghavan, A. (2020b).
\newblock Adversarial robustness via robust low rank representations.
\newblock {\em Neural Information Processing Systems (NeurIPS)}.

\bibitem[Awasthi and Sheffet, 2012]{awasthi2012improved}
Awasthi, P. and Sheffet, O. (2012).
\newblock Improved spectral-norm bounds for clustering.
\newblock In {\em Approximation, Randomization, and Combinatorial Optimization.
  Algorithms and Techniques}, pages 37--49. Springer.

\bibitem[Balakrishnan et~al., 2017]{balakrishnan2017statistical}
Balakrishnan, S., Wainwright, M.~J., Yu, B., et~al. (2017).
\newblock Statistical guarantees for the em algorithm: From population to
  sample-based analysis.
\newblock {\em The Annals of Statistics}, 45(1):77--120.

\bibitem[Barak et~al., 2012]{barak2012hypercontractivity}
Barak, B., Brandao, F.~G., Harrow, A.~W., Kelner, J., Steurer, D., and Zhou, Y.
  (2012).
\newblock Hypercontractivity, sum-of-squares proofs, and their applications.
\newblock In {\em Proceedings of the forty-fourth annual ACM symposium on
  Theory of computing}, pages 307--326. ACM.

\bibitem[Belkin and Sinha, 2010]{belkin2010polynomial}
Belkin, M. and Sinha, K. (2010).
\newblock Polynomial learning of distribution families.
\newblock In {\em 2010 IEEE 51st Annual Symposium on Foundations of Computer
  Science}, pages 103--112. IEEE.

\bibitem[Ben-Tal et~al., 2009]{ben2009robust}
Ben-Tal, A., El~Ghaoui, L., and Nemirovski, A. (2009).
\newblock {\em Robust optimization}, volume~28.
\newblock Princeton University Press.

\bibitem[Berthet and Rigollet, 2013]{BerthetRigollet}
Berthet, Q. and Rigollet, P. (2013).
\newblock Complexity theoretic lower bounds for sparse principal component
  detection.
\newblock In {\em COLT}, pages 1046--1066.

\bibitem[Bhaskara and Vijayaraghavan, 2011]{bhaskara2011approximating}
Bhaskara, A. and Vijayaraghavan, A. (2011).
\newblock Approximating matrix p-norms.
\newblock In {\em Proceedings of the twenty-second annual ACM-SIAM symposium on
  Discrete Algorithms}, pages 497--511. Society for Industrial and Applied
  Mathematics.

\bibitem[Bhattiprolu et~al., 2018a]{bhattiprolu2018approximating}
Bhattiprolu, V., Ghosh, M., Guruswami, V., Lee, E., and Tulsiani, M. (2018a).
\newblock Approximating operator norms via generalized krivine rounding.
\newblock {\em arXiv preprint arXiv:1804.03644}.

\bibitem[Bhattiprolu et~al., 2018b]{bhattiprolu2018inapproximability}
Bhattiprolu, V., Ghosh, M., Guruswami, V., Lee, E., and Tulsiani, M. (2018b).
\newblock Inapproximability of matrix p to q norms.
\newblock {\em arXiv preprint arXiv:1802.07425}.

\bibitem[Blumer et~al., 1989]{VC_dimension}
Blumer, A., Ehrenfeucht, A., Haussler, D., and Warmuth, M.~K. (1989).
\newblock Learnability and the vapnik-chervonenkis dimension.
\newblock {\em J. ACM}, 36(4):929–965.

\bibitem[Boutsidis et~al., 2014]{boutsidis2014near}
Boutsidis, C., Drineas, P., and Magdon-Ismail, M. (2014).
\newblock Near-optimal column-based matrix reconstruction.
\newblock {\em SIAM Journal on Computing}, 43(2):687--717.

\bibitem[Boutsidis et~al., 2009]{boutsidis2009improved}
Boutsidis, C., Mahoney, M.~W., and Drineas, P. (2009).
\newblock An improved approximation algorithm for the column subset selection
  problem.
\newblock In {\em Proceedings of the twentieth annual ACM-SIAM symposium on
  Discrete algorithms}, pages 968--977. SIAM.

\bibitem[Boutsidis and Woodruff, 2017]{boutsidis2017optimal}
Boutsidis, C. and Woodruff, D.~P. (2017).
\newblock Optimal cur matrix decompositions.
\newblock {\em SIAM Journal on Computing}, 46(2):543--589.

\bibitem[Brubaker, 2009]{brubaker2009robust}
Brubaker, S.~C. (2009).
\newblock Robust pca and clustering in noisy mixtures.
\newblock In {\em Proceedings of the twentieth annual ACM-SIAM symposium on
  Discrete algorithms}, pages 1078--1087. SIAM.

\bibitem[Brubaker and Vempala, 2008]{brubaker2008isotropic}
Brubaker, S.~C. and Vempala, S.~S. (2008).
\newblock Isotropic pca and affine-invariant clustering.
\newblock In {\em Building Bridges}, pages 241--281. Springer.

\bibitem[Bubeck et~al., 2018a]{bubeck2018adversarial2}
Bubeck, S., Lee, Y.~T., Price, E., and Razenshteyn, I. (2018a).
\newblock Adversarial examples from cryptographic pseudo-random generators.
\newblock {\em arXiv preprint arXiv:1811.06418}.

\bibitem[Bubeck et~al., 2018b]{bubeck2018adversarial}
Bubeck, S., Price, E., and Razenshteyn, I. (2018b).
\newblock Adversarial examples from computational constraints.
\newblock {\em arXiv preprint arXiv:1805.10204}.

\bibitem[Cai et~al., 2013]{cai2013sparse}
Cai, T.~T., Ma, Z., Wu, Y., et~al. (2013).
\newblock Sparse pca: Optimal rates and adaptive estimation.
\newblock {\em The Annals of Statistics}, 41(6):3074--3110.

\bibitem[Cand{\`e}s et~al., 2011]{candes2011robust}
Cand{\`e}s, E.~J., Li, X., Ma, Y., and Wright, J. (2011).
\newblock Robust principal component analysis?
\newblock {\em Journal of the ACM (JACM)}, 58(3):11.

\bibitem[Chan et~al., 2016]{chan2016approximability}
Chan, S.~O., Papailliopoulos, D., and Rubinstein, A. (2016).
\newblock On the approximability of sparse pca.
\newblock In {\em Conference on Learning Theory}, pages 623--646.

\bibitem[Chandrasekaran et~al., 2011]{chandrasekaran2011rank}
Chandrasekaran, V., Sanghavi, S., Parrilo, P.~A., and Willsky, A.~S. (2011).
\newblock Rank-sparsity incoherence for matrix decomposition.
\newblock {\em SIAM Journal on Optimization}, 21(2):572--596.

\bibitem[Charikar et~al., 2017]{charikar2017learning}
Charikar, M., Steinhardt, J., and Valiant, G. (2017).
\newblock Learning from untrusted data.
\newblock In {\em Proceedings of the 49th Annual ACM SIGACT Symposium on Theory
  of Computing}, pages 47--60. ACM.

\bibitem[Chowdhury et~al., 2020]{woodruff2020spca}
Chowdhury, A., Drineas, P., Woodruff, D.~P., and Zhou, S. (2020).
\newblock Approximation algorithms for sparse principal component analysis.

\bibitem[Cullina et~al., 2018]{cullina2018pac}
Cullina, D., Bhagoji, A.~N., and Mittal, P. (2018).
\newblock Pac-learning in the presence of evasion adversaries.
\newblock {\em arXiv preprint arXiv:1806.01471}.

\bibitem[Dasgupta, 1999]{dasgupta1999learning}
Dasgupta, S. (1999).
\newblock Learning mixtures of gaussians.
\newblock In {\em 40th Annual Symposium on Foundations of Computer Science
  (Cat. No. 99CB37039)}, pages 634--644. IEEE.

\bibitem[De~La~Torre and Black, 2003]{de2003framework}
De~La~Torre, F. and Black, M.~J. (2003).
\newblock A framework for robust subspace learning.
\newblock {\em International Journal of Computer Vision}, 54(1-3):117--142.

\bibitem[De~Sa et~al., 2017]{de2017understanding}
De~Sa, C., Feldman, M., R{\'e}, C., and Olukotun, K. (2017).
\newblock Understanding and optimizing asynchronous low-precision stochastic
  gradient descent.
\newblock In {\em ACM SIGARCH Computer Architecture News}, volume~45, pages
  561--574. ACM.

\bibitem[De~Sa et~al., 2018]{de2018high}
De~Sa, C., Leszczynski, M., Zhang, J., Marzoev, A., Aberger, C.~R., Olukotun,
  K., and R{\'e}, C. (2018).
\newblock High-accuracy low-precision training.
\newblock {\em arXiv preprint arXiv:1803.03383}.

\bibitem[Degwekar et~al., 2019]{degwekar2019computational}
Degwekar, A., Nakkiran, P., and Vaikuntanathan, V. (2019).
\newblock Computational limitations in robust classification and win-win
  results.
\newblock In {\em Proceedings of the Thirty-Second Conference on Learning
  Theory}, pages 994--1028.

\bibitem[Deshpande and Rademacher, 2010]{deshpande2010efficient}
Deshpande, A. and Rademacher, L. (2010).
\newblock Efficient volume sampling for row/column subset selection.
\newblock In {\em 2010 IEEE 51st Annual Symposium on Foundations of Computer
  Science}, pages 329--338. IEEE.

\bibitem[Diakonikolas et~al., 2019]{diakonikolas2019robust}
Diakonikolas, I., Kamath, G., Kane, D., Li, J., Moitra, A., and Stewart, A.
  (2019).
\newblock Robust estimators in high-dimensions without the computational
  intractability.
\newblock {\em SIAM Journal on Computing}, 48(2):742--864.

\bibitem[Diakonikolas and Kane, 2019]{diakonikolas2019survey}
Diakonikolas, I. and Kane, D.~M. (2019).
\newblock Recent advances in algorithmic high-dimensional robust statistics.

\bibitem[Diakonikolas et~al., 2018a]{diakonikolas2018learning}
Diakonikolas, I., Kane, D.~M., and Stewart, A. (2018a).
\newblock Learning geometric concepts with nasty noise.
\newblock In {\em Proceedings of the 50th Annual ACM SIGACT Symposium on Theory
  of Computing}, pages 1061--1073. ACM.

\bibitem[Diakonikolas et~al., 2018b]{diakonikolas2018list}
Diakonikolas, I., Kane, D.~M., and Stewart, A. (2018b).
\newblock List-decodable robust mean estimation and learning mixtures of
  spherical gaussians.
\newblock In {\em Proceedings of the 50th Annual ACM SIGACT Symposium on Theory
  of Computing}, pages 1047--1060. ACM.

\bibitem[Dobriban et~al., 2020]{dobriban2020optimal}
Dobriban, E., Leeb, W., Singer, A., et~al. (2020).
\newblock Optimal prediction in the linearly transformed spiked model.
\newblock {\em The Annals of Statistics}, 48(1):491--513.

\bibitem[Drineas et~al., 2008]{drineas2008relative}
Drineas, P., Mahoney, M.~W., and Muthukrishnan, S. (2008).
\newblock Relative-error cur matrix decompositions.
\newblock {\em SIAM Journal on Matrix Analysis and Applications},
  30(2):844--881.

\bibitem[d’Orsi et~al., 2020]{kothari2020pca}
d’Orsi, T., Kothari, P.~K., Novikov, G., and Steurer, D. (2020).
\newblock Sparse {PCA}: algorithms, adversarial perturbations and certificates.
\newblock In {\em FOCS}.

\bibitem[Garg et~al., 2018]{garg2018spectral}
Garg, S., Sharan, V., Zhang, B., and Valiant, G. (2018).
\newblock A spectral view of adversarially robust features.
\newblock In {\em Advances in Neural Information Processing Systems}, pages
  10138--10148.

\bibitem[Grothendieck, 1952]{Gro56}
Grothendieck, A. (1952).
\newblock R{\'e}sum{\'e} des r{\'e}sultats essentiels dans la th{\'e}orie des
  produits tensoriels topologiques et des espaces nucl{\'e}aires.
\newblock In {\em Annales de l'institut Fourier}, volume~4, pages 73--112.

\bibitem[Hopkins and Li, 2018]{hopkinsli2018}
Hopkins, S.~B. and Li, J. (2018).
\newblock Mixture models, robustness, and sum of squares proofs.
\newblock In {\em Proceedings of the 50th Annual ACM SIGACT Symposium on Theory
  of Computing}, STOC 2018, pages 1021--1034, New York, NY, USA. ACM.

\bibitem[Huber, 2011]{huber2011robust}
Huber, P.~J. (2011).
\newblock {\em Robust statistics}.
\newblock Springer.

\bibitem[Johnstone et~al., 2001]{johnstone2001distribution}
Johnstone, I.~M. et~al. (2001).
\newblock On the distribution of the largest eigenvalue in principal components
  analysis.
\newblock {\em The Annals of statistics}, 29(2):295--327.

\bibitem[Johnstone and Lu, 2009]{johnstone2009consistency}
Johnstone, I.~M. and Lu, A.~Y. (2009).
\newblock On consistency and sparsity for principal components analysis in high
  dimensions.
\newblock {\em Journal of the American Statistical Association},
  104(486):682--693.

\bibitem[Kannan and Vempala, 2017]{kannan2017randomized}
Kannan, R. and Vempala, S. (2017).
\newblock Randomized algorithms in numerical linear algebra.
\newblock {\em Acta Numerica}, 26:95--135.

\bibitem[Kearns and Li, 1993]{kearns1993learning}
Kearns, M. and Li, M. (1993).
\newblock Learning in the presence of malicious errors.
\newblock {\em SIAM Journal on Computing}, 22(4):807--837.

\bibitem[Khim and Loh, 2018]{khim2018adversarial}
Khim, J. and Loh, P.-L. (2018).
\newblock Adversarial risk bounds for binary classification via function
  transformation.
\newblock {\em arXiv preprint arXiv:1810.09519}.

\bibitem[Klivans et~al., 2008]{KOS08}
Klivans, A.~R., O'Donnell, R., and Servedio, R.~A. (2008).
\newblock Learning geometric concepts via gaussian surface area.
\newblock In {\em 49th Annual {IEEE} Symposium on Foundations of Computer
  Science, {FOCS} 2008}, pages 541--550. {IEEE} Computer Society.

\bibitem[Kothari and Steinhardt, 2018]{kothari2017better}
Kothari, P.~K. and Steinhardt, J. (2018).
\newblock Better agnostic clustering via relaxed tensor norms.
\newblock {\em Proceedings of the 50th Annual ACM SIGACT Symposium on Theory of
  Computing}.

\bibitem[Kothari and Steurer, 2018]{kothari2017outlier}
Kothari, P.~K. and Steurer, D. (2018).
\newblock Outlier-robust moment-estimation via sum-of-squares.
\newblock {\em Proceedings of the 50th Annual ACM SIGACT Symposium on Theory of
  Computing (STOC)}.

\bibitem[Krizhevsky et~al., 2009]{krizhevsky2009learning}
Krizhevsky, A., Hinton, G., et~al. (2009).
\newblock Learning multiple layers of features from tiny images.

\bibitem[Kumar and Kannan, 2010]{kumar2010clustering}
Kumar, A. and Kannan, R. (2010).
\newblock Clustering with spectral norm and the k-means algorithm.
\newblock In {\em 2010 IEEE 51st Annual Symposium on Foundations of Computer
  Science}, pages 299--308. IEEE.

\bibitem[Lai et~al., 2016]{lai2016agnostic}
Lai, K.~A., Rao, A.~B., and Vempala, S. (2016).
\newblock Agnostic estimation of mean and covariance.
\newblock In {\em 2016 IEEE 57th Annual Symposium on Foundations of Computer
  Science (FOCS)}, pages 665--674. IEEE.

\bibitem[Lloyd, 1982]{lloyd1982least}
Lloyd, S. (1982).
\newblock Least squares quantization in pcm.
\newblock {\em IEEE transactions on information theory}, 28(2):129--137.

\bibitem[Ma et~al., 2013]{ma2013sparse}
Ma, Z. et~al. (2013).
\newblock Sparse principal component analysis and iterative thresholding.
\newblock {\em The Annals of Statistics}, 41(2):772--801.

\bibitem[Madry et~al., 2017]{madry2017towards}
Madry, A., Makelov, A., Schmidt, L., Tsipras, D., and Vladu, A. (2017).
\newblock Towards deep learning models resistant to adversarial attacks.
\newblock {\em arXiv preprint arXiv:1706.06083}.

\bibitem[Min et~al., 2020]{min2020curious}
Min, Y., Chen, L., and Karbasi, A. (2020).
\newblock The curious case of adversarially robust models: More data can help,
  double descend, or hurt generalization.
\newblock {\em arXiv preprint arXiv:2002.11080}.

\bibitem[Moitra and Valiant, 2010]{moitra2010settling}
Moitra, A. and Valiant, G. (2010).
\newblock Settling the polynomial learnability of mixtures of gaussians.
\newblock In {\em 2010 IEEE 51st Annual Symposium on Foundations of Computer
  Science}, pages 93--102. IEEE.

\bibitem[Montasser et~al., 2019]{montasser2019vc}
Montasser, O., Hanneke, S., and Srebro, N. (2019).
\newblock Vc classes are adversarially robustly learnable, but only improperly.
\newblock {\em arXiv preprint arXiv:1902.04217}.

\bibitem[Nadler et~al., 2008]{nadler2008finite}
Nadler, B. et~al. (2008).
\newblock Finite sample approximation results for principal component analysis:
  A matrix perturbation approach.
\newblock {\em The Annals of Statistics}, 36(6):2791--2817.

\bibitem[Nakkiran, 2019]{nakkiran2019adversarial}
Nakkiran, P. (2019).
\newblock Adversarial robustness may be at odds with simplicity.
\newblock {\em arXiv preprint arXiv:1901.00532}.

\bibitem[Nesterov, 1998]{nesterov1998semidefinite}
Nesterov, Y. (1998).
\newblock Semidefinite relaxation and nonconvex quadratic optimization.
\newblock {\em Optimization methods and software}, 9(1-3):141--160.

\bibitem[Paul, 2007]{paul2007asymptotics}
Paul, D. (2007).
\newblock Asymptotics of sample eigenstructure for a large dimensional spiked
  covariance model.
\newblock {\em Statistica Sinica}, pages 1617--1642.

\bibitem[Raghavendra and Steurer, 2010]{raghavendra2010graph}
Raghavendra, P. and Steurer, D. (2010).
\newblock Graph expansion and the unique games conjecture.
\newblock In {\em Proceedings of the forty-second ACM symposium on Theory of
  computing}, pages 755--764. ACM.

\bibitem[Regev and Vijayaraghavan, 2017]{regev2017learning}
Regev, O. and Vijayaraghavan, A. (2017).
\newblock On learning mixtures of well-separated gaussians.
\newblock In {\em 2017 IEEE 58th Annual Symposium on Foundations of Computer
  Science (FOCS)}, pages 85--96. IEEE.

\bibitem[Schmidt et~al., 2018]{schmidt2018adversarially}
Schmidt, L., Santurkar, S., Tsipras, D., Talwar, K., and Madry, A. (2018).
\newblock Adversarially robust generalization requires more data.
\newblock In {\em Advances in Neural Information Processing Systems}, pages
  5014--5026.

\bibitem[Shen et~al., 2013]{shen2013consistency}
Shen, D., Shen, H., and Marron, J.~S. (2013).
\newblock Consistency of sparse pca in high dimension, low sample size
  contexts.
\newblock {\em Journal of Multivariate Analysis}, 115:317--333.

\bibitem[Sloutsky et~al., 2013]{sloutsky2013robust}
Sloutsky, R., Jimenez, N., Swamidass, S.~J., and Naegle, K.~M. (2013).
\newblock Accounting for noise when clustering biological data.
\newblock {\em Briefings in bioinformatics}, 14(4):423--436.

\bibitem[Song et~al., 2017]{song2017low}
Song, Z., Woodruff, D.~P., and Zhong, P. (2017).
\newblock Low rank approximation with entrywise l 1-norm error.
\newblock In {\em Proceedings of the 49th Annual ACM SIGACT Symposium on Theory
  of Computing}, pages 688--701. ACM.

\bibitem[Sra, 2012]{sra2012fast}
Sra, S. (2012).
\newblock Fast projections onto mixed-norm balls with applications.
\newblock {\em Data Mining and Knowledge Discovery}, 25(2):358--377.

\bibitem[Steinberg, 2005]{steinberg2005computation}
Steinberg, D. (2005).
\newblock Computation of matrix norms with applications to robust optimization.
\newblock {\em Research thesis, Technion-Israel University of Technology}.

\bibitem[Szegedy et~al., 2013]{szegedy2013intriguing}
Szegedy, C., Zaremba, W., Sutskever, I., Bruna, J., Erhan, D., Goodfellow, I.,
  and Fergus, R. (2013).
\newblock Intriguing properties of neural networks.
\newblock {\em arXiv preprint arXiv:1312.6199}.

\bibitem[Tsipras et~al., 2018]{tsipras2018robustness}
Tsipras, D., Santurkar, S., Engstrom, L., Turner, A., and Madry, A. (2018).
\newblock Robustness may be at odds with accuracy.
\newblock {\em arXiv preprint arXiv:1805.12152}.

\bibitem[Valiant, 1985]{valiant1985learning}
Valiant, L.~G. (1985).
\newblock Learning disjunction of conjunctions.
\newblock In {\em IJCAI}, pages 560--566. Citeseer.

\bibitem[Vempala and Wang, 2004]{vempala2004spectral}
Vempala, S. and Wang, G. (2004).
\newblock A spectral algorithm for learning mixture models.
\newblock {\em Journal of Computer and System Sciences}, 68(4):841--860.

\bibitem[Vempala, 2010a]{Vempala08}
Vempala, S.~S. (2010a).
\newblock Learning convex concepts from gaussian distributions with {PCA}.
\newblock In {\em 51th Annual {IEEE} Symposium on Foundations of Computer
  Science, {FOCS} 2010}, pages 124--130. {IEEE} Computer Society.

\bibitem[Vempala, 2010b]{VempalaJACM}
Vempala, S.~S. (2010b).
\newblock A random-sampling-based algorithm for learning intersections of
  halfspaces.
\newblock {\em J. ACM}, 57(6).

\bibitem[Vershynin, 2018]{Vershynin}
Vershynin, R. (2018).
\newblock {\em High-Dimensional Probability}.
\newblock Cambridge University Press.

\bibitem[Vijayaraghavan et~al., 2017]{Duttaetal}
Vijayaraghavan, A., Dutta, A., and Wang, A. (2017).
\newblock Clustering stable instances of euclidean k-means.
\newblock In Guyon, I., Luxburg, U.~V., Bengio, S., Wallach, H., Fergus, R.,
  Vishwanathan, S., and Garnett, R., editors, {\em Advances in Neural
  Information Processing Systems 30}, pages 6500--6509. Curran Associates, Inc.

\bibitem[Vu and Lei, 2012]{VuLei12}
Vu, V. and Lei, J. (2012).
\newblock Squared-norm empirical process in banach space.
\newblock \url{https://arxiv.org/abs/1312.1005}.

\bibitem[Vu and Lei, 2013]{VuLei13}
Vu, V. and Lei, J. (2013).
\newblock Minimax sparse principal subspace estimation in high dimensions.
\newblock {\em In: Ann. Statist.}, pages 2905--2947.

\bibitem[Wang et~al., 2014]{Liuetal}
Wang, Z., Lu, H., and Liu, H. (2014).
\newblock Tighten after relax: Minimax-optimal sparse {PCA} in polynomial time.
\newblock In {\em Advances in Neural Information Processing Systems 27: Annual
  Conference on Neural Information Processing Systems 2014, December 8-13 2014,
  Montreal, Quebec, Canada}, pages 3383--3391.

\bibitem[Wolkowicz et~al., 2012]{wolkowicz2012handbook}
Wolkowicz, H., Saigal, R., and Vandenberghe, L. (2012).
\newblock {\em Handbook of semidefinite programming: theory, algorithms, and
  applications}, volume~27.
\newblock Springer Science \& Business Media.

\bibitem[Yin et~al., 2018]{yin2018rademacher}
Yin, D., Ramchandran, K., and Bartlett, P. (2018).
\newblock Rademacher complexity for adversarially robust generalization.
\newblock {\em arXiv preprint arXiv:1810.11914}.

\bibitem[Zhang et~al., 2019]{zhang2019theoretically}
Zhang, H., Yu, Y., Jiao, J., Xing, E.~P., Ghaoui, L.~E., and Jordan, M.~I.
  (2019).
\newblock Theoretically principled trade-off between robustness and accuracy.
\newblock {\em arXiv preprint arXiv:1901.08573}.

\bibitem[Zou et~al., 2006]{ZHT06}
Zou, H., Hastie, T., and Tibshirani, R. (2006).
\newblock Sparse principal component analysis.
\newblock {\em Journal of Computational and Graphical Statistics},
  15(2):265--286.

\end{thebibliography}

\newpage 
\appendix


\section{Notation and Preliminaries} \label{sec:prelims}
\anote{Make some of these lemmas into claims, since they're too simple for lemmas?}

\paragraph{Norms.}
For every $q \ge 1$ and $x \in \R^n$, we will use $\norm{x}_q$ to denote the $\ell_q$ norm of the vector $x$ i.e., $\norm{x}_q^q = \sum_{i \in [n]} |x_i|^q$. The dual norm of $\ell_q$ is $\ell_{q^*}$ where $1/q^* + 1/q =1$. We will heavily use Holder's inequality which states that
\begin{equation} \label{eq:holder}
    \text{(H\"older's inequality)}~~ |\iprod{u,v}| \le \norm{u}_{q} \cdot \norm{v}_{q^*} ~~ \forall u, v \in \R^n.
\end{equation}

When not specified, $\norm{x}$ will denote the Euclidean norm of $x$. 
Further $\mathbb{S}^{n-1}$ will represent the unit sphere for the Euclidean norm. For convenience, we will use $\norm{x}_0$ to denote the sparsity i.e., the size of the support of $x$ (note that $\ell_0$ is not a valid norm on vectors). 

\paragraph{Operator Norms of Matrices.} We will use the following matrix norms. For any $q, p \ge 1$ and any matrix $M \in \R^{n \times m}$, we will denote by  $\norm{M}_{q \to p} = \max_{y \in \R^m, \norm{y}_q \le 1} \norm{M y}_p$. By duality of vector norms, we have
$$\norm{M}_{q \to p}  = \max_{y \in \R^m, \norm{y}_q \le 1} \max_{z \in \R^n, \norm{z}_{p^*} \le 1} z^\top M y = \max_{z \in \R^n, \norm{z}_{p^*} \le 1} \max_{y \in \R^m, \norm{y}_q \le 1} y^\top M^\top z = \norm{M^\top}_{p^* \to q^*}.$$

When $p=q=2$, this corresponds to the spectral norm of the matrix $M$ i.e., the maximum singular value of $M$. When unspecified, we will use $\spnorm{M}$ to denote the spectral norm of $M$. 
 (Note that the above equalities from duality also show that the $\spnorm{A}=\spnorm{A^\top}$ i.e., the maximum right singular value is the same as the maximum left singular value).
We will also make use of the following claim relating the $q\to 2$ and $q\to q^*$ norms of a matrix.
 
 \begin{claim}\label{lem:extra}
For any projection matrix $\Pi$, and $q \geq 2$,  $\norm{\Pi}_{q \to q^*}=\norm{\Pi}_{q \to 2}^2$ (this is crucially an equality, and not just an inequality). More generally, for any matrix $B$, we have $\norm{B^\top B}_{q \to q^*}=\norm{B}^2_{q \to 2}$.
\end{claim}
\begin{proof}
Note that by duality of norms, we have for any matrix $B$ we have:
\begin{align*}
    \norm{B^\top B}_{q \to q^*}&=\max_{\norm{y}_q \le 1} \norm{B^\top B y}_{q^*}= \max_{\norm{y}_q \le 1, \norm{z}_q \le 1} z^\top B^\top B y = \max_{\norm{y}_q \le 1, \norm{z}_q \le 1} \iprod{B z , B y} \\
    &= \max_{\norm{y}_q \le 1} \iprod{B y , B y}
    = \max_{\norm{y}_q \le 1} \norm{B y}_2^2 = \norm{B}_{q \to 2}^2.
\end{align*}
For a projection matrix $\Pi$ we also have $\Pi=\Pi^\top$ and $\Pi^2=\Pi$. Hence the lemma follows. 
\end{proof}

\paragraph{Entry-wise Norms of Matrices.}

We will also consider various matrix norms obtained by considering a matrix $M \in \R^{m \times n}$ as a vector of size $m n$. In particular, for any $q \ge 1$ we will use $\norm{M}_q$ to denote the $\ell_q$ norm of the ``flattened'' vector corresponding to $M$ i.e., $\norm{M}_q^q = \sum_{i=1,j=1}^{m,n} |M(i,j)|^q$. The Frobenius norm  $\norm{M}_F = \norm{M}_2$. 
Moreover for matrices $A, B$, we use $\iprod{A,B} := \tr(A^\top B)$ to represent the trace inner product.

\paragraph{Monotonicity of Matrix Norms.}
The following property of certain matrix norms will be crucial in designing constant factor approximation algorithms for the low-rank approximations. 

\begin{definition}[Monotone matrix norm]\label{def:monotonicity}
A matrix norm $\gnorm{\cdot}$ is said to be monotone if and only if 
\begin{equation}\label{eq:monotonenorm}
\forall A, B \succeq 0, ~~    \gnorm{A+B} \ge \gnorm{A}.
\end{equation}
\end{definition}
Observe that it suffices to check the above condition for all rank-$1$ PSD matrices $B$ i.e., $B=vv^\top$ for $v \in \R^n$. 
It is well known that all unitarily invariant matrix norms\footnote{A matrix norm $\gnorm{\cdot}$ is unitarily invariant iff $\gnorm{A}=\gnorm{UAV}$ for all matrices $A$ and all unitary matrices $U,V$.} are monotone (this is because unitarily invariant norms are just norms on the singular values). On the other hand, many other matrix norms including other entry-wise norms $\norm{X}_q$ or general operator norms $\norm{X}_{q \to p}$ are not necessarily monotone (see Claim~\ref{claim:counterexample:monotone} and Claim~\ref{clm:counterexample_infinity_2} for some counterexamples).  Perhaps surprisingly, the $q \to q^{*}$ matrix operator norms are monotone (see Claim~\ref{lem:monotone} for a simple proof of this fact)!  

\paragraph{High probability bounds.} We will say that an event holds {\em with high probability (w.h.p.)} if the probability of failure on a given instance is less than {\em any} polynomial of the input parameters e.g., the dimension $n$, and the number of data points $m$. We remark that in all our settings, one can amplify the success probability to $1-\eta$ for any small $\eta>0$ by repeating the algorithm $\log(1/\eta)$ times. 

\subsection{Properties of Robust Projections.} \label{sec:properties}

Throughout the paper we will use the term projections and projection matrices to always refer to orthogonal projection matrices on to linear subspaces of $\R^n$. Next we list and prove some simple properties of subspaces with {\em robust} projection matrices i.e., subspaces with $\norm{\Pi}_{\infty \to 2}$ (or more generally $q \to 2$ norm for some $q \ge 2$) that is upper bounded.

For any $q^* \in [1,2]$, the ratio of the $\ell_{q^*}$ vs $\ell_2$ corresponds to an analytic notion of sparsity. The following claim gives an upper bound on the $\ell_{q^*}$ norm in terms of the sparsity. 

\begin{claim}[Analytic Sparsity] \label{claim:norms-holder}
Consider any vector $v \in \R^n$ of support size $k$. For any $q^* \in [1,2]$, we have
$$ \norm{v}_{q^*} \le k^{\tfrac{1}{q^*}-\tfrac{1}{2}} \norm{v}_2 .$$
In particular, $\norm{v}_1 \le \sqrt{k} \norm{v}_2$ for vectors with support size at most $k$.
\end{claim}
On the other hand, it is easy to see that the bound given here is tight for any vector that is equally spread out among its support of size $k$.  
\begin{proof}
Without loss of generality suppose $\norm{v}_2 =1$ (if $v=0$ it holds trivially).
Let $v$ have support $S$ of size $k$. Set $p:=2/q^*$, and let $u$ be the vector such that $u_i=|v_i|^{q*}$ for each $i \in [n]$. By Holder's inequality
$$    \norm{v}_{q^*}^{q^*} = \sum_{i \in S} 1 \cdot u_i \le  \norm{\mathbf{1}_S}_{p^*} \norm{u}_p \le k^{1/p^*} \Big(\sum_i |v_i|^{p q^*} )^{1/p} \le k^{1- q^*/2} \norm{v}_2^{2/p} = k^{1-q^*/2},$$
hence establishing the lemma.
\end{proof}

Recall that $\ell_{q^*}$ corresponds to the dual norm for $\ell_q$, and $q^* \in [1, 2]$ when $q \ge 2$. 
The following simple lemma proves two useful properties of robust subspaces i.e., subspaces having projection matrices with bounded $\infty \to 2$ norm (or more generally $q \to 2$ norm for $q >2$). The first property shows that any two vectors that are close in $\ell_\infty$ norm will have nearby projections onto any subspace that is robust.
The second property shows that a subspace is robust (i.e., has a robust projection matrix) exactly when every vector in the subspace is {\em analytically} sparse. 
\begin{lemma}\label{lem:robustproperties}[Properties of Robust Subspaces and Projections]
Consider any subspace of $\mathcal{V} \subseteq \R^n$ with projection matrix $\Pi \in \R^{n \times n}$ satisfying $\norm{\Pi}_{q \to 2} \le \kappa$. We have the following two properties:

\begin{enumerate}
    \item[I.] {\bf Closeness of projections of pertubations:} For any vector $v$ and its perturbation $\tilde{v}$
$$ \norm{v - \tilde{v}}_{q} \le \delta ~~\implies ~~ \norm{\Pi \tilde{v} - \Pi v}_2 \le \kappa \delta .$$
    
    \item[II.] {\bf Analytic sparsity:} For any $v \in \mathcal{V}$, we have $\norm{v}_{q^*} \le \kappa \norm{v}_2$, where $q^*=q/(q-1)$. Moreover, if every vector in $\mathcal{V}$ has $\norm{v}_{q^*}  \le \kappa \norm{v}_2$, then $\norm{\Pi}_{q \to 2} \le \kappa$. In particular  $\norm{\Pi}_{\infty \to 2} \le \kappa$ if and only if $\norm{v}_1 \le \kappa$ for all unit vectors $v \in \mathbb{S}^{n-1} \cap \mathcal{V}$. 
\end{enumerate}
\end{lemma}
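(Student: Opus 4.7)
The plan is to prove both parts by direct appeal to the definition of the operator norm together with standard duality ($\ell_q$--$\ell_{q^*}$ via H\"older) and the self-adjointness and idempotence of orthogonal projections ($\Pi^\top = \Pi$ and $\Pi v = v$ for $v \in \mathcal{V}$). I do not expect a serious technical obstacle; the only subtle point is in the converse direction of Part II, where the assumed analytic-sparsity bound on vectors inside $\mathcal{V}$ has to be transferred to an inequality about $\Pi u$ for arbitrary $u \in \mathbb{R}^n$.

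Part I is immediate: writing $w = \tilde v - v$, the definition of the $q \to 2$ operator norm gives
\[
\norm{\Pi \tilde v - \Pi v}_2 = \norm{\Pi w}_2 \le \norm{\Pi}_{q \to 2}\,\norm{w}_q \le \kappa \delta.
\]

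For the forward direction of Part II, fix $v \in \mathcal{V}$, so that $\Pi v = v$. By duality of $\ell_q$ and $\ell_{q^*}$,
\[
\norm{v}_{q^*} \;=\; \sup_{\norm{u}_q \le 1}\, \iprod{u, v} \;=\; \sup_{\norm{u}_q \le 1}\, \iprod{u, \Pi v} \;=\; \sup_{\norm{u}_q \le 1}\, \iprod{\Pi u, v},
\]
using $\Pi = \Pi^\top$. By Cauchy--Schwarz and the definition of $\norm{\Pi}_{q\to 2}$, each summand is at most $\norm{\Pi u}_2 \norm{v}_2 \le \kappa \norm{u}_q \norm{v}_2 \le \kappa \norm{v}_2$, giving the claim.

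For the converse of Part II, assume every $w \in \mathcal{V}$ satisfies $\norm{w}_{q^*} \le \kappa \norm{w}_2$. Take any $u \in \R^n$ and set $w := \Pi u \in \mathcal{V}$. Using $\Pi^2 = \Pi = \Pi^\top$ we have
\[
\norm{w}_2^2 \;=\; \iprod{\Pi u,\, w} \;=\; \iprod{u,\, \Pi w} \;=\; \iprod{u,\, w},
\]
since $w \in \mathcal{V}$ implies $\Pi w = w$. Applying H\"older's inequality~\eqref{eq:holder} and the hypothesis on $w$ yields
\[
\norm{w}_2^2 \;\le\; \norm{u}_q \norm{w}_{q^*} \;\le\; \kappa\, \norm{u}_q \norm{w}_2,
\]
hence $\norm{\Pi u}_2 = \norm{w}_2 \le \kappa \norm{u}_q$; taking the supremum over $\norm{u}_q \le 1$ gives $\norm{\Pi}_{q \to 2} \le \kappa$. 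The $q = \infty$ (and $q^* = 1$) special case then follows by specializing. This completes the proof outline.
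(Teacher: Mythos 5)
Your proof is correct, and Part I together with the forward direction of Part II follow essentially the paper's route: the paper proves both directions of Part II by invoking the operator-norm duality $\norm{\Pi}_{q \to 2} = \norm{\Pi^\top}_{2 \to q^*} = \norm{\Pi}_{2 \to q^*}$, and your forward argument is just this duality unpacked at the vector level via $\norm{v}_{q^*} = \sup_{\norm{u}_q \le 1}\iprod{\Pi u, v}$. Where you genuinely diverge is the converse of Part II. The paper handles it in one line by contraposition through the same duality (``if some unit $v \in \mathcal{V}$ has $\norm{v}_{q^*} > \kappa$ then $\norm{\Pi}_{2\to q^*} > \kappa$''), which as written is really the contrapositive of the forward implication; to extract the actual converse from the duality one additionally needs that for any $u$ the vector $\Pi u$ lies in $\mathcal{V}$ and that $\norm{\Pi u}_2 \le \norm{u}_2$, so that normalizing $\Pi u$ produces the offending unit vector. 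Your argument sidesteps this entirely: the identity $\norm{\Pi u}_2^2 = \iprod{u, \Pi u}$ (from $\Pi^2 = \Pi = \Pi^\top$) plus H\"older and the sparsity hypothesis applied to $w = \Pi u \in \mathcal{V}$ gives $\norm{\Pi u}_2 \le \kappa \norm{u}_q$ directly, with no appeal to matrix-norm duality or to the contraction property. This is a cleaner and more self-contained treatment of that direction (the only pedantic point you leave implicit is the trivial case $\norm{\Pi u}_2 = 0$, where the desired bound holds anyway); the paper's version buys brevity by reusing the duality identity it has already set up, which it also needs elsewhere (e.g., in Lemma~\ref{lem:frob:normconstr}).
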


\begin{proof}
We first show property (I). Let $u:=v - \tilde{v}$. Then
$$ \norm{\Pi \tilde{v} - \Pi v}_2 = \norm{\Pi u} \le \norm{\Pi}_{q \to 2} \norm{u}_q \le \kappa \delta.$$ 
To show property (II), note that by duality of matrix operator norms we have $\norm{\Pi}_{q \to 2} = \norm{\Pi^\top}_{2 \to q^* } = \norm{\Pi}_{2 \to q^*}$. 
 $$\text{Hence } ~~~\forall v \in \mathbb{S}^{n-1} \cap \mathcal{V}, ~~~ \norm{v}_{q^*} = \norm{\Pi v}_{q^*} \le \norm{\Pi}_{2 \to q^*} \norm{v}_2 \le \kappa.$$
 For the converse, if there exists $v \in \mathbb{S}^{n-1} \cap \mathcal{V}$ s.t. $\norm{v}_{q^*} > \kappa$, then by duality $\norm{\Pi}_{2 \to q^*}=\norm{\Pi}_{q \to 2}> \kappa$.

\end{proof}

Observe that the robustness condition on the subspace as captured by the $q \to 2$ operator norm bound of its projection matrix $\Pi$ is basis independent.The following simple claim gives a simple sufficient condition on the basis of the subspace that implies robustness of the subspace spanned by them. This relates our robustness of the subspace to alternate notions of sparsity of subspaces that have been studied in the literature on sparse PCA~\citep{VuLei12, VuLei13}. 

\begin{claim}\label{lem:basissparsity}
Given any orthonormal basis $v_1, v_2,\dots, v_r$ for a subspace $\mathcal{V}$ such that $\norm{v_i}_{q^*} \le \kappa$ for each $i \in [r]$, we have $\norm{\Pi}_{q \to 2} \le \sqrt{r} \kappa$. 
\end{claim}
\begin{proof}
Firstly, $\Pi= \sum_{i=1}^r v_i v_i^\top$, and $\norm{\Pi}_{q \to 2} = \norm{\Pi}_{2 \to q^*}$. We have
$$\norm{\Pi}_{q \to 2} = \max_{u: \norm{u}_q \le 1} \Bignorm{\sum_{i=1}^r \iprod{u,v_i} v_i}_2 = \max_u \sqrt{ \sum_{i=1}^r \iprod{u,v_i}^2} \le \sqrt{r} \cdot \max_{u: \norm{u}_q \le 1} \max_{v: \norm{v}_{q^*} \le \kappa} |\iprod{u,v}| \le \sqrt{r} \kappa.$$
\end{proof}

For a given matrix $B \in \R^{n \times m}$, let us denote by $\Pi(B)$ to the projection matrix onto the column space of $B$. The following lemma shows that the best low-rank $(\kappa,q)$-robust projection objective \eqref{intro:obj} also finds the low-rank approximation that has smallest error among ones with a $(\kappa,q)$ robust column space. 
\begin{claim}\label{lem:lowrankobj}
Let $\mathcal{P}_r$ be the set of all rank-$r$ projection matrices.
Given a data matrix $A \in \R^{n \times m}$ and a given parameter $\kappa \ge 1$, $q>0$, we have 
$$ \min_{\substack{\Pi \in \mathcal{P}_r \\ \norm{\Pi}_{q \to 2} \le \kappa}}  \norm{A-\Pi A} = \min_{\substack{B: \text{rank}(B) \le r, \\ \norm{\Pi(B)}_{q \to 2} \le \kappa}} \norm{A- B},$$
where $\norm{M}$ here stands for the spectral norm. The above statement is also true for the Frobenius norm. 
\end{claim}
\begin{proof}   
Let $B^*$ be the minimizer for the right minimization problem and let $\Pi_2=\Pi(B^*)$ be its projection matrix, and let $\Pi_1$ be the minimizer for the left optimization problem. 
It is easy to see that $\norm{A-\Pi_1 A} \ge \norm{A-B^*}$, since $\Pi_1 A$ is also a feasible choice for $B$ in the right minimization problem. 
\xnote{11/5: New argument.}
The other direction follows from the fact that $\|A - \Pi(B) A\| \le \|A - B\|$ holds for both Frobenius norm and the spectral norm (specifically, $\|A v - \Pi(B) A v\|_2 \le \|A v - Bv\|_2$ for any $v \in \mathbb{R}^{m}$).

\end{proof}

\anote{Should we include monotonicity here? Or should it go into the relevant section?}

\subsection{Approximation Algorithms for Operator Norms.}\label{sec:operatornorms}
\anote{I also changed the references to the approximations from Sec 4 to refer to Corollary~\ref{cor:norms-related} now. Please check in Sec 4 if it reads fine now. }
Here we briefly describe some known positive and negative results for approximating the $q\to p$ operator norm of a matrix (sometimes referred to as the $(\ell_q,\ell_p)$-Grothendieck problem). We will say that a randomized algorithm gives an {\em $\alpha$-factor approximation} for the $q \to p$ operator norm (for some $\alpha \ge 1$) iff for any input matrix $M$ the algorithm outputs with probability at least $(1-n^{-\omega(1)})$ a vector $\widehat{x} \ne 0$ such that $\norm{M\widehat{x}}_p/\norm{\widehat{x}}_q  \ge \tfrac{1}{\alpha} \cdot \norm{M}_{q \to p}$. 
The $\infty \to 1$ norm is the well-known Grothendieck's problem~\citep{Gro56} (that is related to the cut-norm of a matrix~\citep{alon2004approximating}  and has a rich history.

There is a lot of work on approximation algorithms and inapproximability results for computing these $q \to p$ norms~\citep{nesterov1998semidefinite, alon2004approximating, bhaskara2011approximating, barak2012hypercontractivity, bhattiprolu2018approximating, bhattiprolu2018inapproximability}. 
Regarding approximation algorithms, the works of ~\citep{nesterov1998semidefinite, wolkowicz2012handbook, steinberg2005computation} 
provides a $1/(\tfrac{2\sqrt3}{\pi}-\tfrac23)\approx 2.29$ approximation for when $1\le p\le 2 \le q\le \infty$, and for the special case $p=2$ or $q=2$ the factor becomes $\sqrt{\pi/2}\approx 1.25$. 
Recently, improved upper and (almost matching) lower bounds were proved for many settings of $q,p$ in~\citep{bhattiprolu2018approximating,bhattiprolu2018inapproximability}. Formally, we have the following guarantee where $\gamma^{q^*}_{q^*}$ is the ${q^*}$th moment of a standard normal random variable.

 \begin{theorem}[\citep{bhattiprolu2018approximating,bhattiprolu2018inapproximability, nesterov1998semidefinite,steinberg2005computation}]\label{cor:norms-related}
 For computing the $\infty \to 2$ norm, there is a randomized polynomial time algorithm that gives a $\sqrt{\pi/2}\approx 1.25$-approximation, and for the $q\to 2$ norm there is a randomized polynomial time algorithm that gives a $1/\gamma_{q^*}$-factor approximation. Furthermore, when the input matrices are positive semidefinite, the integrality gap of the aforementioned SDP is $\pi/2$ for the $\infty \to 1$ norm, and $1/\gamma_{q^*}^2$ for the $q \to q^*$ operator norm respectively. Using a generalization of random hyperplane rounding, this SDP yields approximation algorithms that succeed with high probability for any given instance.
 \end{theorem}

\section{Worst-case Approximation Guarantees}\label{sec:worstcase}
In this section, we show the approximation algorithm bounding the \emph{Frobenius} norm error in Section~\ref{sec:wc:frob} and the approximation algorithm bounding the \emph{spectral} norm error in Section~\ref{sec:wc:spectral} separately.

\subsection{Approximations in Frobenius Norm Error} \label{sec:wc:frob}

We will aim to obtain a bicriteria approximation for the robust low-rank approximation problem given in \eqref{intro:obj} for the Frobenius norm error. In the rest of the section, we will focus on the the formulation where the objective is to minimize projection error, subject to a specified robustness requirement. It is easy to see that one can switch the role of the objective and constraint here and obtain similar guarantees for minimizing the robustness parameter $\kappa$, subject to a bound on the projection error. In what follows $q \in [2, \infty]$.  
\begin{align}
    \min_{\Pi} &\norm{\Pi^{\perp} A }_F^2= \min_{\Pi} \norm{A}_F^2 - \iprod{AA^\top,\Pi} \label{int:obj}\\
    \text{s.t.\ }& \Pi \text{ is a projection matrix of rank } \le r, \text{ and } \norm{\Pi}_{q \to 2} \le \kappa \label{int:operatornorm}.
\end{align}

We prove the following theorem.


\begin{theorem}\label{thm:worstcase:frob}
Suppose the data matrix $A \in \R^{n \times m}$ has an (orthogonal) projection $\Pi^{*}$ of rank at most $r$ such that $\norm{\Pi^*}_{q \to 2} \le \kappa$ and the approximation error $OPT:=\norm{(I-\Pi^*) A}_F^2$.
There exists a polynomial time algorithm \anote{11/2: removed w.h.p} such that given any $\gamma>0$, it finds a projection matrix $\widehat{\Pi}$ of rank at most $r$ satisfying
\begin{align} \label{eq:worstcase:frob}
\norm{\widehat{\Pi}}_{q \to 2} &\le {\sqrt{C_G(q)(1+1/\gamma)}} \cdot \kappa , \text{ and } \norm{(I-\widehat{\Pi}) A}_F^2 \le (2+\gamma) OPT,  
\end{align}
where $C_G(q)>0$ is a constant that only depends on $q$ as given in Theorem~\ref{cor:norms-related} (for $q=\infty$ this value is at most $\pi/2$). 
Moreover, for any $\gamma>0$, there exists an algorithm that runs in polynomial time  and  finds an $r'\le r (1+1/\gamma)$-dimensional orthogonal projection $\widehat{\Pi}$ such that
\begin{align} \label{eq:worstcase:bicriteria}
\norm{\widehat{\Pi}}_{q \to 2} &\le {\sqrt{C_G(q)(1+1/\gamma)}}\cdot \kappa, \text{ and } \norm{(I-\widehat{\Pi}) A}_F^2 \le (1+\gamma) OPT.  
\end{align}
\end{theorem}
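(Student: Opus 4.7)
The plan is to relax the problem to a convex program, round its solution by thresholding eigenvalues, and use PSD-monotonicity of the $q\to q^*$ operator norm to control the robustness. The key transformation is that for any orthogonal projection $\Pi$ one has $\norm{\Pi}_{q\to 2}^2 = \norm{\Pi\Pi^\top}_{q\to q^*}=\norm{\Pi}_{q\to q^*}$, so the robustness constraint can equivalently be written as $\norm{\Pi}_{q\to q^*}\le \kappa^2$; crucially, $\norm{\cdot}_{q\to q^*}$ is monotone on PSD matrices (Lemma~\ref{lem:monotone}), whereas $\norm{\cdot}_{q\to 2}$ is not. I therefore consider the relaxation
\begin{equation*}
\min_X\, \norm{A}_F^2 - \iprod{AA^\top, X} \quad\text{s.t.}\quad 0 \preceq X \preceq I,\ \tr(X)\le r,\ \norm{X}_{q\to q^*}\le \kappa^2,
\end{equation*}
which is valid because $\Pi^*$ is feasible with value $OPT$. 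I would solve it by ellipsoid/cutting plane, using the $C_G(q)$-factor randomized approximation for $\norm{\cdot}_{q\to q^*}$ on PSDs from Theorem~\ref{cor:norms-related} as an approximate separation oracle. The outcome is a PSD $\widehat{X}$ with the box and trace constraints satisfied exactly, $\norm{\widehat{X}}_{q\to q^*}\le C_G(q)\kappa^2$, and SDP objective at most $OPT$.

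The rounding step thresholds the eigendecomposition $\widehat{X}=\sum_i\lambda_i v_iv_i^\top$ at level $\tau = \gamma/(1+\gamma)$. Setting $S=\{i:\lambda_i\ge \tau\}$, the projection $\Pi_S=\sum_{i\in S}v_iv_i^\top$ has rank $|S|\le r/\tau = r(1+1/\gamma)$, and the PSD-inclusion $\tau \Pi_S \preceq \sum_{i\in S}\lambda_i v_iv_i^\top \preceq \widehat{X}$, combined with monotonicity of $\norm{\cdot}_{q\to q^*}$, yields $\norm{\Pi_S}_{q\to 2} = \norm{\Pi_S}_{q\to q^*}^{1/2} \le \sqrt{C_G(q)(1+1/\gamma)}\cdot\kappa$. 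For the objective, completing the $v_i$'s to an orthonormal basis gives $\sum_i\norm{A^\top v_i}_2^2 = \norm{A}_F^2$, so $\|A\|_F^2 - \iprod{AA^\top, \widehat{X}}=\sum_i(1-\lambda_i)\norm{A^\top v_i}_2^2 \le OPT$. Since $\lambda_i<\tau<1$ for $i\notin S$ gives $\lambda_i\le \tfrac{\tau}{1-\tau}(1-\lambda_i)$, one obtains $\sum_{i\notin S}\lambda_i \norm{A^\top v_i}_2^2 \le \tfrac{\tau}{1-\tau} OPT$, and hence $\norm{(I-\Pi_S)A}_F^2 = \sum_{i\notin S}\norm{A^\top v_i}_2^2 \le \tfrac{1}{1-\tau}OPT = (1+\gamma)OPT$. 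This establishes \eqref{eq:worstcase:bicriteria}.

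For the rank-$r$ guarantee \eqref{eq:worstcase:frob}, the plan is to pass from $\Pi_S$ to a rank-$r$ subprojection $\Pi_T$ with $T\subseteq S$, $|T|\le r$. Since $\Pi_T\preceq \Pi_S$, PSD-monotonicity immediately preserves $\norm{\Pi_T}_{q\to 2}\le \sqrt{C_G(q)(1+1/\gamma)}\kappa$; I would choose $T$ to retain the $r$ directions with largest values of $\norm{A^\top v_i}_2^2$ (or $T=S$ if $|S|<r$) and then use an exchange/averaging argument. Writing $\mathrm{obj}(\Pi_T) - \mathrm{obj}(\widehat{X}) = -\sum_{i\in T}(1-\lambda_i)\norm{A^\top v_i}_2^2 + \sum_{i\in S\setminus T}\lambda_i\norm{A^\top v_i}_2^2 + \sum_{i\notin S}\lambda_i\norm{A^\top v_i}_2^2$ and exploiting the extremality of $T$ together with $\sum_{i\in S}\lambda_i\le r = |T|$ bounds the first two (cross) terms, while the last term is controlled as before, yielding the $(2+\gamma)\,OPT$ guarantee.

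The main obstacle is that, without the identity $\norm{\Pi}_{q\to 2}^2 = \norm{\Pi}_{q\to q^*}$ and the PSD-monotonicity of $\norm{\cdot}_{q\to q^*}$, any natural rounding (thresholding eigenvalues, selecting a sub-eigenspace) could potentially blow up the $q\to 2$ robustness of the fractional solution by an uncontrolled amount; indeed the $q\to 2$ operator norm is itself not PSD-monotone. A secondary issue is that $\norm{\cdot}_{q\to q^*}$ is NP-hard to compute exactly, which is circumvented by the constant-factor SDP-based oracle of Theorem~\ref{cor:norms-related}; once this oracle is in hand, the remaining objective bookkeeping via the telescoping identity for $(1-\lambda_i)$ is routine.
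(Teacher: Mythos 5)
Your proposal is correct and follows essentially the same route as the paper: the same convex relaxation with the $\norm{X}_{q\to q^*}\le\kappa^2$ reformulation, the same ellipsoid solve with the $C_G(q)$-approximate separation oracle, the same eigenvalue thresholding at $\gamma/(1+\gamma)$ with PSD-monotonicity of $\norm{\cdot}_{q\to q^*}$ (Lemma~\ref{lem:monotone}) controlling robustness, and the same greedy top-$r$ selection by $\norm{A^\top v_i}_2^2$ with the constraint $\sum_{i\in S}\lambda_i\le r$ yielding the $(2+\gamma)$ bound. The only difference is presentational: the paper factors the argument through a general theorem for arbitrary monotone, approximately separable norms (Theorem~\ref{prop:gen:fr}) and then specializes, whereas you argue directly for the $q\to q^*$ norm.
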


The theorem above will be established by proving a statement about the 
more general problem of finding a low-rank projection under {\em any} monotone norm constraint that can be approximately certified. While the $q \to 2$ norm is not monotone as discussed in Section~\ref{sec:prelims}, we will show that applying the more general guarantee on an appropriate monotone norm helps prove Theorem~\ref{thm:worstcase:frob} above. 
Let $\gnorm{\cdot}$ be a monotone matrix norm. Consider the following generalization of problem \eqref{eq:intro:PCA} that given a data matrix $A \in \R^{n \times m}$, and a parameter $\kappa \ge 1$, finds a projection
\begin{equation}\label{prob:gen:fr}
    \min_{\Pi}  \norm{A}_F^2 - \iprod{AA^\top,\Pi} 
    ~\text{s.t.}~ \Pi \text{ is a projection matrix of rank}\le r, \text{ and } \gnorm{\Pi} \le \kappa.
\end{equation}

\anote{11/2: removing randomization. Also changed name to certifiable instead of separable.}
\begin{definition}\label{assump:norm:optimization}[$\alpha$-approximately certifiable matrix norm]
A matrix norm $\gnorm{\cdot}$ over $\R^{n \times m}$ matrices is $\alpha$-approximately certifiable for some $\alpha \ge 1$ iff there exists an algorithm that runs in time $\poly(n,m)$, and when given a PSD matrix $B \in \R^{n \times n}$ and a parameter $\kappa$ as input 
will either certify that $\gnorm{B} \le \alpha \kappa$, or 
finds a $Z \in \R^{n \times n}$ such that (1) $\iprod{B,Z} > \kappa$, and (2) $\iprod{M,Z} \le \kappa$ for all $M$ s.t. $\gnorm{M} \le \kappa$.      
\end{definition}
As we will see later the operator norms that we will consider (e.g., $q \to 2$ norm and the $q \to q^*$ norm for $q \ge 2$) will be $O(1)$-approximately certifiable. 

The following general theorem gives an $O(1)$ bicriteria approximation for the problem assuming the monotone matrix norm $\gnorm{\cdot}$ is approximately certifiable. 

\begin{theorem}\label{prop:gen:fr}
Let $\gnorm{\cdot}$ be any matrix norm that is {\em monotone} and $\alpha$-approximately certifiable for some $\alpha \ge 1$.  Suppose we are given as input a data matrix $A \in \R^{n \times m}$ that has an (orthogonal) projection $\Pi^{*}$ of rank at most $r$ such that $\gnorm{\Pi^*} \le \kappa$ and the approximation error $OPT:=\norm{(I-\Pi^*) A}_F^2$.
There exists a polynomial time algorithm such that given every $\gamma \in (0,1)$, it finds an orthogonal projection matrix $\widehat{\Pi}$ of rank at most $r$ satisfying
\begin{align} \label{eq:gen:frob}
\gnorm{\widehat{\Pi}} &\le \alpha \big(1+\tfrac{1}{\gamma}\big) \cdot \kappa , \text{ and } \norm{(I-\widehat{\Pi}) A}_F^2 \le \Big(2+\gamma \Big) OPT.  
\end{align}
\pnote{This is confusing as earlier theorem is stated as $\alpha \sqrt{1+1/\gamma}$. Ideally, the earlier theorem should follows as a special case. Otherwise discuss what's happening: $q \to 2$ vs. $q \to q^*$?}\anote{Addressed this in the proof of Theorem~\ref{thm:worstcase:frob}.}

Moreover, for any $\gamma \in (0,1)$, there exists a polynomial time algorithm that finds an $r'\le (1+1/\gamma)r$-dimensional orthogonal  projection $\widehat{\Pi}$ such that
\begin{align} \label{eq:gen:bicriteria}
\gnorm{\widehat{\Pi}} &\le \alpha\big(1+\tfrac{1}{\gamma}\big)\cdot \kappa, \text{ and } \norm{(I-\widehat{\Pi}) A}_F^2 \le (1+\gamma) \cdot OPT.  
\end{align}
\end{theorem}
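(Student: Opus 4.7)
The plan is to solve a natural convex relaxation of \eqref{prob:gen:fr} and then round it via eigenvalue thresholding, with the analysis hinging on the monotonicity of $\gnorm{\cdot}$. The relaxation is
\[
\min ~ \norm{A}_F^2 - \iprod{AA^\top, X} \quad \text{s.t.}\quad 0 \preceq X \preceq I,~ \tr(X) \le r,~ \gnorm{X} \le \kappa,
\]
and it is valid because every feasible projection $\Pi$, including $\Pi^*$, satisfies all four constraints. The $\alpha$-approximate separability supplies an oracle for the norm constraint that either certifies $\gnorm{X} \le \alpha\kappa$ or returns a valid separating hyperplane for $\{M : \gnorm{M} \le \kappa\}$, while the other three constraints admit standard polynomial-time separation. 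Running the ellipsoid method therefore yields, w.h.p.\ in polynomial time, an $X^*$ satisfying $0 \preceq X^* \preceq I$, $\tr(X^*) \le r$, $\gnorm{X^*} \le \alpha\kappa$, and objective value $\norm{A}_F^2 - \iprod{AA^\top, X^*} \le OPT$ (since $\Pi^*$ is a feasible SDP solution).

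For the bicriteria bound \eqref{eq:gen:bicriteria}, the plan is to eigendecompose $X^* = \sum_i \lambda_i v_i v_i^\top$, fix the threshold $t := \gamma/(1+\gamma)$, and output the projection $P := \sum_{i:\lambda_i \ge t} v_i v_i^\top$, which has rank at most $\tr(X^*)/t \le (1+1/\gamma)r$. Because each retained eigenvalue is at least $t$, the PSD chain $tP \preceq \sum_{i:\lambda_i \ge t} \lambda_i v_i v_i^\top \preceq X^*$ combined with monotonicity gives $t\gnorm{P} \le \gnorm{X^*} \le \alpha\kappa$, so $\gnorm{P} \le \alpha(1+1/\gamma)\kappa$. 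For the objective, split $X^* = X_{\mathrm{high}} + X_{\mathrm{low}}$ along $\{i:\lambda_i \ge t\}$ versus $\{i:\lambda_i < t\}$; since $\lambda_i \le 1$ for all $i$, we have $X_{\mathrm{high}} \preceq P$, and $X_{\mathrm{low}} \preceq t(I-P)$ because its eigenvalues lie in $[0,t)$ and its range is orthogonal to that of $P$. Thus $\iprod{AA^\top, X^*} \le \iprod{AA^\top, P} + t\norm{(I-P)A}_F^2$, and rearranging with $\norm{A}_F^2 - \iprod{AA^\top, X^*} \le OPT$ yields $(1-t)\norm{(I-P)A}_F^2 \le OPT$, i.e.\ $\norm{(I-P)A}_F^2 \le (1+\gamma)\,OPT$.

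For the rank-$r$ guarantee \eqref{eq:gen:frob}, the plan is to shrink $P$ to a rank-$r$ sub-subspace via one additional SVD step: let $B_r$ be the best rank-$r$ approximation of $PA$, and let $\widehat{\Pi}$ be the orthogonal projection onto its column space. Since the column space of $B_r$ is contained in the range of $P$, we have $\widehat{\Pi} \preceq P$, and monotonicity preserves the norm bound $\gnorm{\widehat{\Pi}} \le \gnorm{P} \le \alpha(1+1/\gamma)\kappa$. For the error, $(I-P)$ and $(P-\widehat{\Pi})$ are orthogonal projections onto mutually orthogonal subspaces of $\R^n$, so the Pythagorean identity gives
\[
\norm{(I-\widehat{\Pi})A}_F^2 \;=\; \norm{(I-P)A}_F^2 \,+\, \norm{(P-\widehat{\Pi})A}_F^2 .
\]
The first summand is at most $(1+\gamma)\,OPT$ from the bicriteria step. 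The second summand is the best rank-$r$ approximation error of $PA$, which is upper bounded by comparing against the rank-$r$ competitor $PA_r$ (where $A_r$ is the best unconstrained rank-$r$ approximation of $A$): $\norm{(P-\widehat{\Pi})A}_F^2 \le \norm{P(A-A_r)}_F^2 \le \norm{A-A_r}_F^2 \le \norm{A-\Pi^*A}_F^2 = OPT$. Summing the two bounds yields $\norm{(I-\widehat{\Pi})A}_F^2 \le (2+\gamma)\,OPT$, as claimed.

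The main obstacle is controlling $\gnorm{\widehat{\Pi}}$ rather than the projection error: plain eigenvalue truncation already handles the rank reduction, but for norms that are not unitarily invariant it can inflate $\gnorm{\Pi}$ uncontrollably. Monotonicity is precisely what rescues the argument, letting us simultaneously discard the small-eigenvalue directions of $X^*$ and refine inside $P$ via SVD without paying anything in the norm. This is also the reason that the companion Theorem~\ref{thm:worstcase:frob} has to route through the monotone $q \to q^*$ operator norm (Lemma~\ref{lem:monotone}) rather than the non-monotone $q \to 2$ operator norm of interest, for which the analogous thresholding step would fail.
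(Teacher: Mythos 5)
Your proposal is correct, and its first half (the convex relaxation, the ellipsoid method with the $\alpha$-approximate separation oracle, and the eigenvalue threshold at $\gamma/(1+\gamma)$ for the bicriteria guarantee \eqref{eq:gen:bicriteria}) coincides with the paper's proof; your direct computation $(1-t)\norm{(I-P)A}_F^2 \le OPT$ is just a repackaging of Lemma~\ref{lem:lowrank}. Where you genuinely diverge is the rank-$r$ step \eqref{eq:gen:frob}. The paper stays inside the eigenbasis of the relaxation solution: among the large-eigenvalue eigenvectors (the set $S$) it greedily keeps the $r$ directions with the largest contributions $\alpha_i = \iprod{v_i v_i^\top, AA^\top}$, and the $(2+\gamma)$ factor comes from the counting argument $\sum_{i\in S}\lambda_i \le r$, $\lambda_i\in[0,1]$, which lets the greedy sum dominate $\sum_{i\in S}\lambda_i\alpha_i \ge 1-(2+\gamma)\eps$. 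You instead keep the whole thresholded projection $P$ and extract the top-$r$ left singular subspace of $PA$, then split the error via the Pythagorean identity and bound the second term by Eckart--Young against the competitor $PA_r$, giving $(1+\gamma)OPT + OPT$. Both roundings use monotonicity identically ($tP \preceq X^*$ for the threshold, $\widehat{\Pi} \preceq P$ for the refinement), and your observation that $\widehat{\Pi}\preceq P$ suffices is sound even though your output subspace is not spanned by eigenvectors of $X^*$. Your route is in fact closer in spirit to the paper's spectral-norm rounding (Theorem~\ref{thm:worstcase:spectral}), and it buys a more modular argument that dispenses with the greedy counting step at the cost of one extra SVD; the paper's version buys an output whose basis consists of eigenvectors of the relaxation solution. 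The guarantees are identical, so there is no gap.
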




We consider the following mathematical programming relaxation for the problem. In the alternate formulation where we minimize the robustness parameter subject to an upper bound on projection error, the roles of \eqref{gen:sdp:obj} and \eqref{gen:sdp:norm} below is switched. 
\begin{align}
    \min_X &\norm{A}_F^2 - \iprod{AA^\top, X} \label{gen:sdp:obj}\\
    \text{s.t.}& \tr(X) \le r \label{gen:sdp:tr}\\
                & 0 \preceq X \preceq I \label{gen:sdp:spectralnorm}\\
    & \gnorm{X}\le \kappa \label{gen:sdp:norm}
\end{align}

First we observe that this is a valid convex relaxation to the problem. In fact any feasible projection matrix $\Pi$ of rank at most $r$ for \eqref{prob:gen:fr} is a feasible solution to the above program \eqref{gen:sdp:obj}-\eqref{gen:sdp:norm} with the same value. The intended solution here is just $X=\Pi$. All the eigenvalues of $\Pi$ are $0$ or $1$, since $\Pi$ is a projection matrix; hence \eqref{gen:sdp:tr}, \eqref{gen:sdp:spectralnorm} are satisfied. Moreover \eqref{gen:sdp:norm} is satisfied just because of the same constraint as in \eqref{prob:gen:fr}. Finally, the objective value is preserved since 
$$\norm{A}_F^2 - \iprod{AA^\top, \Pi} = \norm{A}_F^2 - \tr(A A^\top \Pi) = \norm{A}_F^2 - \tr(\Pi A ( \Pi A)^\top) = \norm{A}_F^2 - \norm{\Pi A}_F^2.$$
In the above program, the objective \eqref{gen:sdp:obj} and constraints \eqref{gen:sdp:tr}-\eqref{gen:sdp:spectralnorm} define a semi-definite program (SDP). Moreover, for \eqref{gen:sdp:norm}, we see that for any $\lambda \in [0,1]$, by triangle inequality $\gnorm{\lambda X_1 + (1-\lambda) X_2 } \le \lambda \gnorm{X_1} + (1-\lambda) \gnorm{X_2}$. Hence the set of all $X$ that satisfies constraints \eqref{gen:sdp:tr} - \eqref{gen:sdp:norm} is convex. In general, constraint \eqref{gen:sdp:norm} may be NP-hard 
to verify for a given PSD matrix $X$. 
However, we can use the fact that $\gnorm{\cdot}$ is approximately certifiable to get a approximately feasible solution to the program in polynomial time, using the Ellipsoid method. 

The following lemma shows that by truncating a solution of the program \eqref{gen:sdp:obj}-\eqref{gen:sdp:norm} to just the terms corresponding to the large eigenvalues, we retain much of the objective. 

\begin{lemma}\label{lem:lowrank}
Let $\eps, \delta>0$, and $M \succeq 0$. Suppose $X$ satisfies the SDP constraints \eqref{gen:sdp:spectralnorm} and $\iprod{M,X} \ge (1-\eps) \tr(M)$. Suppose $P^{1-\delta}_X$ is the projection operator onto the subspace spanned by eigenvectors of $X$ with eigenvalues at least $(1-\delta)$. Then we have
\begin{equation}
    \iprod{I-P^{1-\delta}_X, M} \le \frac{\eps}{\delta} \cdot \tr(M).
\end{equation}
\end{lemma}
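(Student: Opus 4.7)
The plan is to expand both sides in the eigenbasis of $X$ and reduce the claim to a one-line inequality about the weights that $M$ places on the small-eigenvalue subspace.

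First I would write the eigendecomposition $X = \sum_{i=1}^{n} \lambda_i v_i v_i^\top$ with an orthonormal basis $\{v_i\}$. The SDP constraint $0 \preceq X \preceq I$ gives $\lambda_i \in [0,1]$ for every $i$, and (after padding the decomposition to a full orthonormal basis of $\R^n$) we have $\sum_i v_i v_i^\top = I$ and $P^{1-\delta}_X = \sum_{i \in S} v_i v_i^\top$, where $S := \{i : \lambda_i \geq \tau\}$ and $\bar S := [n]\setminus S$. Setting $\alpha_i := \iprod{M, v_i v_i^\top}$, each $\alpha_i \geq 0$ because $M \succeq 0$, and $\sum_i \alpha_i = \tr(M) = 1$.

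Next I would rewrite the two scalars appearing in the hypothesis and conclusion:
\[
\iprod{M, X} = \sum_{i} \lambda_i \alpha_i, \qquad \iprod{I - P^{1-\delta}_X, M} = \sum_{i \in \bar S} \alpha_i.
\]
The hypothesis $\iprod{M,X} \geq 1 - \eps$ then reads $\sum_i \lambda_i \alpha_i \geq 1 - \eps$. Splitting this sum across $S$ and $\bar S$ and using $\lambda_i \leq 1$ on $S$ and $\lambda_i < \tau$ on $\bar S$ gives
\[
1 - \eps \;\leq\; \sum_{i \in S}\alpha_i + \tau \sum_{i \in \bar S}\alpha_i \;=\; 1 - (1-\tau)\sum_{i \in \bar S}\alpha_i,
\]
where the equality uses $\sum_i \alpha_i = 1$. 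Rearranging yields $\sum_{i \in \bar S}\alpha_i \leq \eps/(1-\tau)$, which is exactly the claimed bound.

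I do not expect a real obstacle here: the only constraints from the SDP that are used are $0 \preceq X \preceq I$ (to control $\lambda_i$ pointwise) together with $M \succeq 0$ and $\tr(M) = 1$ (to make the $\alpha_i$ a probability distribution over the eigenbasis). Notably, the rank constraint $\tr(X) \leq r$ is not needed for this particular lemma; it will presumably be used elsewhere to bound the rank of $P^{1-\delta}_X$ (since $\tr(X) \geq \tau \cdot \mathrm{rank}(P^{1-\delta}_X)$ forces that rank to be at most $r/\tau$). The monotone-norm hypothesis is also irrelevant at this step and will only be invoked when one then argues that replacing $X$ by $P^{1-\delta}_X$ preserves the norm bound $\gnorm{\cdot} \leq \kappa$ up to a factor $1/\tau$.
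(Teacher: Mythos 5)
Your proof is correct and uses essentially the same argument as the paper: eigendecompose $X$, note that $v_i^\top M v_i$ forms a nonnegative weight distribution summing to $\tr(M)=1$, and compare $\iprod{M,X}$ against the split over eigenvalues above and below $\tau$ using $0 \preceq X \preceq I$. The paper routes the arithmetic slightly differently (it first derives $\iprod{I-P^{1-\delta}_X,M} \le \tfrac{\eps}{\delta-\eps}\iprod{P^{1-\delta}_X,M}$ and then normalizes), whereas your direct rearrangement reaches the same bound $\eps/(1-\tau)$ a bit more cleanly; your remarks that $\tr(X)\le r$ and monotonicity are not needed here are also accurate.
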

\begin{proof}
\anote{11/2: shortened/ simplified the proof. }
We can assume without loss of generality that $\tr(M)=1$, since $M$ can be scaled accordingly. 
Let $X=\sum_{i=1}^n \lambda_i v_i v_i^\top$ be the eigendecomposition of $X$ (note that $\lambda_i \ge 0$ since $X$ is p.s.d.), and let $S=\sset{i: \lambda_i \ge 1-\delta}$. We have 
\begin{align*}
     (1-\eps) \tr(M) &\le \iprod{M,X}= \sum_i \lambda_i v_i^\top M v_i \\
      \tr(M) &= \sum_i v_i^\top M v_i, \text{ since } \sset{v_i : i \in [n]} \text{ is an orthonormal basis}.
\end{align*}
By subtracting the two inequalities, we get
\begin{align*}
    \sum_{i=1}^n (1-\lambda_i) v_i^\top M v_i &\le \eps \cdot \tr(M) = \eps.\\
    \sum_{i\notin S} \delta v_i^\top M v_i & \le \sum_{i \notin S} (1-\lambda_i) v_i^\top M v_i + \sum_{i \in S} (1-\lambda_i) v_i^\top M v_i \le \eps 
    \end{align*}
from definition of  $S, M \succeq 0$, and  \eqref{gen:sdp:spectralnorm}. Hence 
\begin{align*}
    \iprod{I-P^{1-\delta}_X, M} &= \sum_{i\notin S} v_i^\top M v_i \le \frac{\eps}{\delta} = \frac{\eps}{\delta} \cdot  \tr(M),  \text{ as required.}
\end{align*}
\end{proof}

\begin{proofof}{Theorem~\ref{prop:gen:fr}}
We can scale the matrix $A$ appropriately so that we can assume $\norm{A}_F=1$ without loss of generality. 
Let $OPT=\eps $ for some $\eps \in [0,1]$. 
We will use the Ellipsoid algorithm to approximately solve the relaxation in \eqref{gen:sdp:obj}-\eqref{gen:sdp:norm}. As we have explained before, the feasible set is convex. We now show how to design an approximate hyperplane separation oracle for \eqref{gen:sdp:norm}; the rest of the constraints just correspond to a simple SDP. Since $\gnorm{\cdot}$ is $\alpha$-approximately certifiable, we have a polynomial time algorithm that given a matrix $\widehat{X} \succeq 0$ ,
either certifies that $\gnorm{\widehat{X}} \le \alpha \kappa$ (e.g., when the SDP value is at most $\alpha \kappa$), and otherwise produces a separating hyperplane of the form $\iprod{Z,X} \le \kappa$ that is not satisfied by $\widehat{X}$. 
We run the Ellipsoid algorithm to find a solution $\widehat{X}$ that satisfies $\gnorm{\widehat{X}} \le \alpha \kappa$, and has objective value that is arbitrarily close to $OPT$. 

Set $\delta:=1/(1+\gamma)$. Let $\widehat{X}=\sum_i \lambda_i v_i v_i^\top$ and let $S=\sset{i : \lambda_i \ge 1-\delta}$. Define for each $i \in [S]$, $\alpha_i:= \iprod{v_i v_i^\top, M}$, where $M=AA^\top$. We sort the elements of $S$ based on $\sset{\alpha_i}$, and pick greedily the first $\min\sset{r, |S|}$ of them to form $T \subseteq S$. Our projection matrix will be $\Pi_T =\sum_{i \in T} v_i v_i^\top$. 

We first argue that the operator norm constraint is approximately satisfied. By the monotonicity of the $\gnorm{\cdot}$ we have
\begin{align}
    \gnorm{\Pi_T} &= \gnorm{\sum_{i \in T} v_i v_i^\top} \le   
     \frac{1}{1-\delta}\gnorm{\sum_{\substack{i: \lambda_i > 1-\delta}} \lambda_i v_i v_i^\top } \le  \frac{\gnorm{\widehat{X}}}{1-\delta}     \le \frac{\alpha \kappa}{1-\delta} = \frac{\alpha(1+\gamma) }{\gamma} \cdot \kappa \label{eq:operatornorm:argument}
\end{align}

\noindent Also, from Lemma~\ref{lem:lowrank}, we have
\begin{align}
\sum_{i \notin S} \iprod{v_i v_i^\top, M}&=\iprod{I-P_{\widehat{X}}^{1-\delta}, M} \le \frac{\eps}{\delta}, \text{ and } \nonumber\\
\sum_{i \in S} \iprod{(1-\lambda_i) v_i v_i^\top, M} &\le \sum_i \iprod{(1-\lambda_i) v_i v_i^\top, M} \le 1 - \iprod{X,M} \le \eps. \nonumber\\ 
\text{Hence, }  \sum_{i \in S} \lambda_i \alpha_i &=\sum_{i \in S} \lambda_i \iprod{v_i v_i^\top, M}  \ge 1- \eps\Big(1+\frac{1}{\delta}\Big)= 1-(2+\gamma)\eps, \nonumber
\end{align}
for our choice of $\delta=1/(1+\gamma)$.
By our greedy choice of $T$, we have $\sum_{i \in T} \alpha_i \ge \sum_{i \in S} \lambda_i \alpha_i$, as $\sum_{i \in S} \lambda_i \le  \min\sset{\tr(X) ,|S|}=|T|$, with each $\lambda_i \in [0,1]$. Thus $\norm{\Pi_T^{\perp} A}_F^2 \le (2+\gamma)\eps$. 


The guarantee in \eqref{eq:worstcase:bicriteria} is obtained by returning the projection $\Pi_S=\sum_{i\in S} v_i v_i^\top$. Observe that $|S| \le r / (1-\delta)$ from \eqref{gen:sdp:tr}. The operator norm bounds follows using the same argument as \eqref{eq:operatornorm:argument} with $T=S$. Moreover, the objective value follows directly from Lemma~\ref{lem:lowrank}. 
\end{proofof}

\paragraph{Guarantees for the $q \to 2$ norm (Proof of Theorem~\ref{thm:worstcase:frob}).}

Our goal will be to apply Theorem~\ref{prop:gen:fr} to obtain our required guarantee. However the $q \to 2$ operator norm is not monotone when $q>2$; see Claim~\ref{clm:counterexample_infinity_2} for a counter-example. 
Our crucial insight is that we can instead use the $\norm{\cdot}_{q \to q^*}$ norm which we show 
indeed satisfies the monotonicity property (Definition~\ref{def:monotonicity}), so that we can apply Theorem~\ref{prop:gen:fr}.

\begin{claim}[Monotonicity of $q \to q^*$ operator norm]\label{lem:monotone}
For any $q \ge 1$, the operator norm  $\norm{\cdot}_{q \to q^*}$ is monotone.
\end{claim}
\begin{proof}
Let $B \in \R^{n \times n}$. It suffices to prove for any $B \succeq 0, v \in \R^n$, $\norm{B+vv^\top}_{q \to q^*} \ge \norm{B}_{q \to q^*}$. 
\begin{align*}
    \norm{B}_{q \to q^*}&= \max_{\norm{y}_q \le 1, \norm{z}_q \le 1} z^\top B y = \max_{\norm{y}_q \le 1, \norm{z}_q \le 1} \iprod{B^{1/2} z , B^{1/2} y} \\
    &= \max_{\norm{y}_q \le 1} \iprod{B^{1/2}y, B^{1/2} y}= \max_{\norm{y}_q \le 1} y^\top B y.
\end{align*}
In other words, the quadratic form is maximized by $y=z$. Moreover for every $y$,  
$$y^{T} (B+vv^\top) y = y^{T} B y + \iprod{y,v}^2 \ge y^\top B y.$$
Hence, $\norm{B+vv^\top}_{q \to q^*} \ge \norm{B}_{q \to q^*}$. 
    
\end{proof}

This gives the following mathematical programming relaxation for the problem, where the robustness constraint is captured by the $q \to q^*$ operator norm. 

\begin{align}
    \min_X &\norm{A}_F^2 - \iprod{AA^\top, X} \label{sdp:obj}\\
    \text{s.t. }& \tr(X) \le r 
         \text{ and }       0 \preceq X \preceq I \label{sdp:spectralnorm}\\
    & \norm{X}_{q \to q^*} \le \kappa^2 \label{sdp:norm}
\end{align}

In the above program, the objective \eqref{sdp:obj} and constraints \eqref{sdp:spectralnorm} define a semi-definite program (SDP). However, the $q \to q^*$ operator norm is NP-hard to compute. Instead we will use the standard SDP relaxation~\citep{nesterov1998semidefinite, steinberg2005computation} for certifying the $q \to q^*$ norm of any PSD matrix $X$, where $C_G=C_G(q)>0$ is the constant given by Theorem~\ref{cor:norms-related}: 
\begin{equation} \label{eq:relax:constraint:approx}
\norm{X}_{q \to q^*} \le     \max_{\substack{Y \in \R^{n \times n}: Y \succeq 0\\ \sum_i Y_{ii}^{q/2} \le 1}} \iprod{X,Y} \le  C_G \norm{X}_{q \to q^*},
\end{equation}
This shows that $q \to q^*$ norm is $\alpha=C_G$ approximately certifiable. At this point, we have all the ingredients to prove Theorem~\ref{thm:worstcase:frob} by using Theorem~\ref{prop:gen:fr} as a black-box. However, let us first consider the convex relaxation(s) suggested by this certificate, with a view towards designing a more efficient algorithm based on convex relaxations.

\paragraph{Convex relaxations for the Frobenius norm objective.}
The program \eqref{sdp:obj} along with the approximate certificate for $q \to q^*$ norm from \eqref{eq:relax:constraint:approx} leads to the following two tractable relaxations for our problem \eqref{int:obj} that are equivalent.  
\CPone involves a universal quantifier over $Y \in \calQ$. Fortunately, there exists an efficient hyperplane separation oracle for the constraint \eqref{eq:convert:constr} (which by itself is an SDP relaxation). This can be used with the Ellipsoid algorithm to solve the above relaxation in polynomial time. We now instead give an equivalent convex relaxation for \eqref{eq:relax:redef} that is much more efficient to solve. The main idea is to use Lagrangian duality to convert the universal quantifier in \eqref{eq:convert:constr} into an existential quantifier. In what follows for a vector $d \in \R^n$, we will use $\diag(d)$ to denote the diagonal matrix in $\R^{n \times n}$ defined by $d$. \anote{Move the last line to prelims?}

\begin{figure}[h]
\fbox{\begin{minipage}{.48\linewidth}
\CPone:
\begin{align}
    \min_X &\norm{A}_F^2 - \iprod{AA^\top, X} \label{eq:relax:redef}\\
    \text{s.t. }& \tr(X) \le r, ~~ 0 \preceq X \preceq I \label{eq:relax:stdconstr}\\
    & \max_{\substack{Y \in \calQ}}\iprod{X,Y}\le C_G \kappa^2 , \nonumber \text{ where}\\
    \calQ=& \sset{ Y \in \R^{n \times n}:~ Y \succeq 0, \sum_{i = 1}^n Y_{ii}^{q/2} \le 1 } \label{eq:convert:constr}
\end{align}
\end{minipage}}%
\fbox{\begin{minipage}{0.48 \linewidth}
\CPtwo:
\begin{align}
    \min_{X \in \R^{n \times n}, d \in \R_{\ge 0}^n} &\norm{A}_F^2 - \iprod{AA^\top, X} \label{eq:newsdp:fr}\\
    \text{s.t. }& \tr(X) \le r, ~~ 0 \preceq X \preceq I \label{eq:newsdp:stdconstr}\\
                & X \preceq \diag(d) \\
 \norm{d}_{q/(q-2)} := &\Big(\sum_{i=1}^n d_i^{q/(q-2)}\Big)^{(q-2)/q} \le C_G \kappa^{2}. \label{eq:newsdp:constr}
\end{align}
\end{minipage}}
\caption{Two tractable mathematical relaxations \CPone  and \CPtwo  for problem \eqref{int:obj} with Frobenius norm objective.}\label{fig:relaxations:fr}
\end{figure}

The equivalence of \CPone and \CPtwo follows immediately from the following lemma that uses Lagrangian duality. 
\begin{lemma}\label{lem:convexdual}
Consider the following two programs defined for $q \ge 2$ and input $C \in \R^{n \times n}$ with $C_{ii} \ge 0~ \forall i \in [n]$. 
\begin{align}
\textbf{primal:}~  \text{val}_P := \max_{Y \in \R^{n \times n}} \iprod{C,Y} \text{ s.t. } \sum_{i=1}^n Y_{ii}^{q/2} \le 1, ~~ Y \succeq 0.  \label{eq:convexprimal}\\ 
\textbf{dual:}~  \text{val}_D  :=\min_{d \in \R^{n}} \norm{d}_{q/(q-2)}= \Big(\sum_{i=1}^n d_i^{q/(q-2)} \Big)^{(q-2)/q} \text{ s.t. }  ~~ \diag(d) \succeq C, ~~ d \ge 0.  \label{eq:convexdual}
\end{align}
For any feasible solution $Y$ of the primal, and any feasible solution $d$ of the dual, we have $\iprod{C,Y} \le \norm{d}_{q/(q-2)}$ i.e., weak duality holds. Moreover, the optimum values of the primal and the dual relaxations are equal $\text{val}_P=\text{val}_D$ (strong duality). 
\end{lemma}

We remark that in the special case when $q=\infty$, the last constraint \eqref{eq:newsdp:constr} becomes the simple linear constraint $\sum_i d_i \le c_G \kappa^2$. 
The following claim shows that the above convex programs \CPone and \CPtwo are valid relaxations for \eqref{int:obj} and can be solved in polynomial time. 

\begin{claim}\label{claim:feasibility}
Any feasible projection matrix $\Pi$ of rank $r$ satisfying \eqref{int:operatornorm} forms feasible solutions to \CPone, \CPtwo (for an appropriate feasible $d$) and \eqref{sdp:obj} with the same objective value as \eqref{int:obj}. Moreover the relaxations \CPone and \CPtwo can be solved in polynomial time to arbitrary accuracy.  
\end{claim}
\begin{proof}
First we argue about the feasibility of \CPone and \eqref{sdp:obj}. The intended SDP solution here is just $X=\Pi$. All the eigenvalues of $\Pi$ are $0$ or $1$, since $\Pi$ is a projection matrix; hence \eqref{sdp:spectralnorm} and the corresponding constraints in the programs \eqref{eq:relax:stdconstr} and \eqref{eq:newsdp:stdconstr} are satisfied. To verify \eqref{sdp:norm}, note that from Claim~\ref{lem:extra}, $\norm{\Pi}_{q \to q^*}=\norm{\Pi}_{q \to 2}^2$. Moreover from \eqref{eq:relax:constraint:approx}, we also have that \eqref{eq:convert:constr} is satisfied by $X$.  Hence \CPone (and \eqref{sdp:obj}) are feasible. Moreover $X$ is also feasible for \CPtwo for an appropriate choice of $d$ since the constraints \eqref{eq:convert:constr} and \eqref{eq:newsdp:constr} are equivalent from Lemma~\ref{lem:convexdual}.

Finally, the objective value is preserved since 
$$\norm{A}_F^2 - \iprod{AA^\top, \Pi} = \norm{A}_F^2 - \tr(A A^\top \Pi) = \norm{A}_F^2 - \tr(\Pi A (A \Pi)^\top) = \norm{A}_F^2 - \norm{\Pi A}_F^2.$$

It is easy to check that both relaxations \CPone and \CPtwo are convex. We now argue about the computational tractability of \CPtwo.  
The relaxation is a semi-definite program (SDP) with an extra constraint \eqref{eq:newsdp:constr}. The constraint \eqref{eq:newsdp:constr} is convex when $q \ge 2$. Let $q'=q/(q-2)$; the dual norm for $\ell_{q'}$ is $\ell_{q/2}$. 
Moreover there is a simple separation oracle for this constraint since by duality
\begin{align*}
\norm{d}_{q/(q-2)} = \max_{y \in \R^n: \norm{y}_{q/2} \le 1} \iprod{y,d} = \Big\langle \frac{d^*}{\norm{d^*}_{q/2}},d \Big \rangle, \text{ where } d^*_i= \text{sign}(d_i) |d(i)|^{2/(q-2)}~~ \forall i \in [n].
\end{align*}
Hence by using the Ellipsoid algorithm, this problem can be solved in polynomial time. 
Finally \CPone can also be solved in polynomial time since \eqref{eq:convert:constr} is itself a semi-definite program: this can be solved in polynomial time (by Ellipsoid method for example) and can in turn be used as a hyperplane separation oracle for constraint \eqref{eq:convert:constr} and the outer relaxation. We remark that solving \CPone is less computationally efficient compared to \CPtwo due to the repeated use of the Ellipsoid method (to certify \eqref{eq:convert:constr}).    
\end{proof}





\anote{Not sure how modular this is. Can you check to ensure this is OK?}

\begin{proofof}{Theorem~\ref{thm:worstcase:frob}}
We will apply Theorem~\ref{prop:gen:fr} with $\gnorm{\cdot}:= \norm{\cdot}_{q \to q^*}$.  
From Lemma~\ref{lem:monotone}, we have that the $\norm{\cdot}_{q \to q^*}$ is monotone. 
We now show that $\norm{\cdot}_{q \to q^*}$ is $O(1)$-approximately certifiable. For this we use the constraint \eqref{eq:convert:constr} of \CPone or equivalently \eqref{eq:newsdp:constr} of \CPtwo. As shown in Claim~\ref{claim:feasibility} both of these constraints can be certified efficiently. By \eqref{eq:relax:constraint:approx}, this in turn shows that $\norm{\cdot}_{q \to q^*}$ is $\alpha=C_G(q) =O_q(1)$-approximately certifiable (in particular, $\alpha=\pi/2$ for $\infty \to 1$). Hence, applying Theorem~\ref{prop:gen:fr} we see that the algorithm outputs a projection matrix $\widehat{\Pi}$ with $\norm{\widehat{\Pi}}_{q \to 2}^2=\norm{\widehat{\Pi}}_{q \to q^*} \le C_G(q) (1+\tfrac{1}{\gamma})\kappa^2$ which obtains an objective value of $(2+\gamma) OPT$. This completes the proof of Theorem~\ref{thm:worstcase:frob}.   

We remark that applying Theorem~\ref{prop:gen:fr} as a black-box involves solving an SDP to $\alpha$-approximately certify the $\norm{\cdot}_{q \to q^*}$ -- this corresponds to solving the convex relaxation \CPone. However, using the equivalence between \CPone and \CPtwo (from Lemma~\ref{lem:convexdual}), we can also apply the same rounding algorithm from Theorem~\ref{prop:gen:fr} to the solution $X$ obtained from \CPtwo to obtain the same guarantees.    
\end{proofof}

We now complete the proof of Lemma~\ref{lem:convexdual} which establishes the equivalence of \CPone and \CPtwo. 

\begin{proofof}{Lemma~\ref{lem:convexdual}}
First we observe that the primal \eqref{eq:convexprimal} is a conic program (PSD cone) with an additional convex constraint. Consider the following Lagrangian dual of the conic program
\begin{align}\label{eq:lagrangian}
    \min_{\substack{\lambda \ge 0,U \succeq 0}} L(\lambda, U), &\text{ where } L(\lambda,U):= \lambda + \max_{Y \in \R^{n \times n}} \iprod{C,Y} - \lambda \sum_{i=1}^n Y_{ii}^{q/2} + \iprod{U,Y}.\\ 
    \text{Moreover by duality},&~~     \text{val}_P=\min_{\substack{\lambda \ge 0,U \succeq 0}} L(\lambda, U).  \label{eq:dualityclaim}
\end{align}
The second line follows from strong duality for the Lagrangian dual --  it is easy to see that Slater's condition holds for the primal (there exists a primal feasible solution $Y$ where all the constraints hold strictly).  
Note that $U \succeq 0$ (since the PSD cone is its own self dual). Note that for the optimal $U$,  we have $C_{ij}+U_{ij}=0$ for all $i \ne j \in [n]$ (otherwise $Y$ can be chosen to make the objective go to $\infty$). By introducing the variable $d \in \R^n$ with $d_i=C_{ii}+U_{ii}$ for all $i \in [n]$, we can rewrite \eqref{eq:lagrangian} as
\begin{align}\label{eq:lagrangian2}
    \min_{\substack{\lambda \ge 0, d\in \R^n \\ \diag(d) \succeq C}} L(\lambda, d), \text{ where } L(\lambda,d)= \lambda + \max_{Y \in \R^{n \times n}} \sum_{i=1}^n d_i Y_{ii} - \lambda Y_{ii}^{q/2}. 
\end{align}
Note that the inner maximization objective is concave in the variables $\sset{Y_{ii}}$; hence its maximum is obtained where the gradient vanishes i.e., 
\begin{align*}
    \forall i \in [n],~ d_i &= \lambda \Big(\frac{q}{2} \Big) Y_{ii}^{q/2-1} ~~\implies~ Y_{ii}= \Big(\frac{2 d_i}{q}\Big)^{q/(q-2)}.  
\end{align*}
Further $d_i \ge 0$ for all $i \in [n]$. 
Substituting in \eqref{eq:lagrangian2} and simplifying we get that \eqref{eq:lagrangian2} is equivalent to
\begin{align}\label{eq:lagrangian3}
    \min_{\substack{\lambda \ge 0, d\in \R_{\ge 0}^n \\ \diag(d) \succeq C}} L(\lambda, d), \text{ where } L(\lambda,d)= \lambda + \Big(\frac{q-2}{q}\Big) \Big(\frac{2}{q\lambda } \Big)^{2/(q-2)} \sum_{i=1}^n d_i^{q/(q-2)}. 
\end{align}
Again $L(\lambda, d)$ is convex is $\lambda$; hence its minimum is attained at the critical point (that is strictly positive) $$\lambda = \frac{2}{q} \Big( \sum_{i=1}^n d_i^{q/(q-2)} \Big)^{(q-2)/q}.$$
Substituting this value of $\lambda$ in \eqref{eq:lagrangian3} proves that \eqref{eq:lagrangian} is equivalent to the claimed dual formulation \eqref{eq:convexdual}. Finally from Lagrangian duality and \eqref{eq:dualityclaim} the lemma follows. 
\end{proofof}

\subsection{Approximations in the Spectral Norm} \label{sec:wc:spectral}

We now show how techniques similar to those in Section~\ref{sec:wc:spectral} can also be extended to robust low-rank approximations, when the error is measured in spectral norm as opposed to the Frobenius norm. In this section we will use $\norm{A}$ to denote the spectral norm of matrix $A$.   
For convenience of exposition, we will measure the projection error in \eqref{intro:obj} using the spectral norm $\norm{A-\Pi A}$ as opposed to the squared spectral norm. 

\begin{theorem}\label{thm:worstcase:spectral}
Suppose the data matrix $A \in \R^{n \times m}$ has an (orthogonal) projection $\Pi^{*}$ of rank at most $r$ such that $\norm{\Pi^*}_{q \to 2} \le \kappa$ and the approximation error $OPT:=\spnorm{(I-\Pi^*) A}$.
There exists a polynomial time algorithm such that given any $\gamma \in (0,1)$, it finds an (orthogonal) projection matrix $\widehat{\Pi}$ of rank at most $r$ satisfying
\begin{align} \label{eq:spectral:worstcase}
\norm{\widehat{\Pi}}_{q \to 2} &\le \sqrt{C_G(q)(1+2/\gamma)} \cdot \kappa , \text{ and } \spnorm{(I-\widehat{\Pi}) A} \le \sqrt{(3+\gamma)} \cdot OPT,  
\end{align}
where $C_G(q)>0$ is a constant that only depends on $q$ as given in Theorem~\ref{cor:norms-related}. For $q=\infty$ this value is known to be at most $\pi/2$. 

Moreover, for any $\gamma \in (0,1)$, there exists a polynomial time algorithm that finds an 
$r'\le (1+\tfrac{2}{\gamma})r$ dimensional
orthogonal projection $\widehat{\Pi}$ such that
\begin{align} \label{eq:spectral:bicriteria}
\norm{\widehat{\Pi}}_{q \to 2} &\le \sqrt{C_{G}(q)(1+2/\gamma)}\cdot \kappa, \text{ and } \spnorm{(I-\widehat{\Pi}) A} \le \sqrt{(1+\gamma)} \cdot OPT.  
\end{align}
\end{theorem}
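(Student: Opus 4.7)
The plan is to mirror the Frobenius-norm argument (Theorem~\ref{thm:worstcase:frob}), replacing the SDP objective with a spectral-norm analogue and then adding a rank-reduction step at the end. First, I would set up the convex relaxation
\[
\min_{X,\,t}~ t \quad \text{s.t.} \quad A^\top A - A^\top X A \preceq t I,~ \tr(X)\le r,~ 0 \preceq X \preceq I,~ \norm{X}_{q \to q^*} \le \kappa^2,
\]
which is a valid relaxation because $(I-\Pi^*)^2 = I - \Pi^*$ gives $\spnorm{A^\top(I-\Pi^*)A} = OPT^2$, so $X=\Pi^*$, $t=OPT^2$ is feasible (and $\norm{\Pi^*}_{q \to q^*} = \norm{\Pi^*}_{q \to 2}^2 \le \kappa^2$ by Lemma~\ref{lem:frob:normconstr}). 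The only non-SDP constraint is $\norm{X}_{q \to q^*} \le \kappa^2$, and the same ellipsoid scheme as in the proof of Theorem~\ref{thm:worstcase:frob} applies verbatim using the $C_G(q)$-approximate separation oracle from Theorem~\ref{cor:norms-related}. This yields w.h.p.\ in polynomial time a solution $\widehat{X}$ with $\norm{\widehat{X}}_{q \to q^*} \le C_G(q)\kappa^2$ and $\spnorm{A^\top(I-\widehat{X})A} \le OPT^2$.

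Next, I would eigendecompose $\widehat X = \sum_i \lambda_i v_i v_i^\top$, fix the threshold $\tau := \gamma/(2+\gamma)$, and take $S := \{i: \lambda_i \ge \tau\}$ with $\Pi_S := \sum_{i \in S} v_i v_i^\top$. Then $|S| \le \tr(\widehat X)/\tau \le (1+2/\gamma) r$, and by the monotonicity of $\norm{\cdot}_{q \to q^*}$ (Lemma~\ref{lem:monotone}), $\norm{\Pi_S}_{q \to q^*} \le \norm{\widehat X}_{q \to q^*}/\tau \le C_G(q)(1+2/\gamma)\kappa^2$. The key inequality is that for every $z = Aw$,
\[
z^\top \Pi_{\bar S} z = \sum_{i \notin S}(v_i^\top z)^2 \le \tfrac{1}{1-\tau}\sum_{i \notin S}(1-\lambda_i)(v_i^\top z)^2 \le \tfrac{1}{1-\tau}\, z^\top(I - \widehat{X}) z,
\]
since $1-\lambda_i \ge 1-\tau$ for $i \notin S$. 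Maximizing over unit $w \in \R^m$ yields $\spnorm{\Pi_{\bar S} A}^2 \le (1-\tau)^{-1}\spnorm{A^\top(I-\widehat X)A} \le (1+\gamma/2)OPT^2 \le (1+\gamma)OPT^2$, which gives the bicriteria guarantee \eqref{eq:spectral:bicriteria} upon returning $\widehat{\Pi} := \Pi_S$.

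For the rank-$r$ guarantee \eqref{eq:spectral:worstcase}, I would then let $P_V$ be the orthogonal projection onto the span of the top-$r$ left singular vectors of $\Pi_S A$, so $V \subseteq \mathrm{range}(\Pi_S)$ and $P_V \preceq \Pi_S$; monotonicity again gives $\norm{P_V}_{q \to q^*} \le \norm{\Pi_S}_{q \to q^*}$. Now $I - P_V = \Pi_{\bar S} + Q$ where $Q := \Pi_S - P_V$ is a projection with range orthogonal to $\mathrm{range}(\Pi_{\bar S})$, so for every unit $w$, $\norm{(I-P_V)Aw}_2^2 = \norm{\Pi_{\bar S}Aw}_2^2 + \norm{QAw}_2^2$ and therefore
\[
\spnorm{(I-P_V)A}^2 \le \spnorm{\Pi_{\bar S} A}^2 + \spnorm{QA}^2.
\]
The extra ingredient is to bound $\spnorm{QA}^2 = \sigma_{r+1}(\Pi_S A)^2$ using $\Pi^*$ as a witness: $\Pi_S \Pi^* A$ has rank $\le r$, so $\sigma_{r+1}(\Pi_S A) \le \spnorm{\Pi_S A - \Pi_S \Pi^* A} \le \spnorm{\Pi_S}\cdot\spnorm{A - \Pi^* A} \le OPT$. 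Combining with the previous paragraph, $\spnorm{(I-P_V) A}^2 \le (1+\gamma/2)OPT^2 + OPT^2 \le (3+\gamma)OPT^2$, as required.

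The main obstacle is the exact rank-reduction step: unlike the Frobenius objective, which decomposes additively across the eigendirections of $\widehat X$ and so admits the greedy $\min\{r,|S|\}$ selection from the proof of Theorem~\ref{thm:worstcase:frob}, the spectral objective is not additive. The remedy is a secondary SVD inside $\mathrm{range}(\Pi_S)$, and the clean cost accounting relies on the fact that $\Pi_{\bar S}$ and $\Pi_S - P_V$ are projections onto orthogonal subspaces, so the spectral norm decomposes Pythagorean-style. The only genuinely spectral-specific loss comes from bounding $\sigma_{r+1}(\Pi_S A) \le OPT$, which uses only the existence of $\Pi^*$ and hence respects the bicriteria structure of the theorem.
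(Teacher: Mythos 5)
Your proposal is correct and follows essentially the same route as the paper's proof: the same spectral-norm relaxation with the monotone $q \to q^*$ constraint, the same ellipsoid-with-approximate-separation solving step, the same eigenvalue truncation at threshold $\gamma/(2+\gamma)$ with monotonicity yielding the $\sqrt{C_G(q)(1+2/\gamma)}\,\kappa$ robustness bound, and the same bicriteria conclusion for $\Pi_S$. The only deviation is the bookkeeping in the final rank-reduction: the paper bounds the error of the best rank-$r$ part of $M' = A^\top \Pi_S A$ via two triangle inequalities using $A^\top \Pi^* A$ as the rank-$r$ witness (arriving at $(3+\gamma)\,OPT^2$), whereas you split $I - P_V = \Pi_{\bar S} + Q$ orthogonally and bound $\spnorm{QA} = \sigma_{r+1}(\Pi_S A) \le \spnorm{\Pi_S(A - \Pi^* A)} \le OPT$ by Eckart--Young with witness $\Pi_S \Pi^* A$ --- a slightly cleaner accounting that even gives the marginally better constant $2 + \gamma/2$ in place of $3+\gamma$.
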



\paragraph{Proof of Theorem~\ref{thm:worstcase:spectral}.}
\anote{11/3/2020: Needs to be modified to incorporate the new relaxation.}
We will use the following mathematical relaxation for the problem. 
 \begin{align}
 \min ~~&\lambda \label{eq:spectralsdp}\\
 \text{s.t. } A^\top (I-X) A &\preceq \lambda I \label{eq:spectralsdp:obj2}\\     
 \tr(X) \le r, ~~&\text{ and } 0 \preceq X \preceq I   \label{eq:spectralsdp:constr}\\
 \norm{X}_{q \to q^*} &\le \kappa^2 \label{eq:spectralsdp:norm}
 \end{align}
The last constraint \eqref{eq:spectralsdp:norm} is NP-hard to certify. So as in the previous section we will relax it and consider the following two convex relaxations \CPthree and \CPfour. 

\begin{figure}[h]
\fbox{\begin{minipage}{.48\linewidth}
\CPthree:
\begin{align}
\min ~&\lambda \label{cp3:obj1}\\
\text{s.t. } &A^\top (I-X) A \preceq \lambda I \label{cp3:obj2}\\     
&\tr(X) \le r, ~~\text{ and } 0 \preceq X \preceq I   \label{cp3:stdconstr}\\
     &\max_{\substack{Y \in \calQ}}\iprod{X,Y}  \le C_G \kappa^2 ,\nonumber \text{ where}\\
    \calQ &= \sset{ Y \in \R^{n \times n}:~ Y \succeq 0, \sum_{i} Y_{ii}^{q/2} \le 1 } \label{cp3:convert:constr} 
\end{align}
\end{minipage}}%
\fbox{\begin{minipage}{0.48 \linewidth}
\CPfour:
\begin{align}
\min ~&\lambda \label{cp4:obj1}\\
\text{s.t. } & A^\top (I-X) A \preceq \lambda I \label{cp4:obj2}\\     
& \tr(X) \le r, ~~ 0 \preceq X \preceq I   \label{cp4:stdconstr}\\
                & X \preceq \diag(d) \\
 \norm{d}_{q/(q-2)} &:=\Big(\sum_{i=1}^n d_i^{q/(q-2)}\Big)^{(q-2)/q} \le C_G \kappa^{2} \label{cp4:newsdp:constr}
\end{align}
\end{minipage}}
\caption{Two tractable relaxations \CPthree  and \CPfour for spectral norm objective.}\label{fig:relaxations:fr}
\end{figure}

The following claim (which is analogous to Claim~\ref{claim:feasibility}) shows that the above convex programs are valid relaxations and can be solved in polynomial time. 

\begin{claim}\label{claim:spectral:feasibility}
Any feasible projection matrix $\Pi$ of rank $r$ satisfying $\norm{\Pi}_{q \to 2} \le \kappa$ forms feasible solutions to \CPthree, \CPfour (for an appropriate feasible $d$) and \eqref{eq:spectralsdp} with objective value $\spnorm{A - \Pi A}^2$. Moreover the relaxations \CPthree and \CPfour can be solved in polynomial time to arbitrary accuracy.  
\end{claim}
\begin{proof}
The proof follows the same argument as Claim~\ref{claim:feasibility}, with a small modification to account for the different objective. 
We first argue that \CPthree and \CPfour are valid relaxations. Consider any projection matrix $\Pi$ of rank $r$ satisfying $\norm{\Pi}_{q \to 2} \le \kappa$. From Claim~\ref{claim:feasibility}, we have that constraints \eqref{cp3:stdconstr} and \eqref{cp3:convert:constr} of \CPthree and \eqref{cp4:stdconstr} and \eqref{cp4:newsdp:constr} of \CPfour are satisfied. To see that the objective value is preserved, note that for any projection matrix $\Pi$,
$$ \spnorm{A^\top (I-\Pi) A}= \spnorm{A^\top \Pi^{\perp} \Pi^{\perp} A} = \spnorm{\Pi^{\perp} A}^2 = \spnorm{A - \Pi A}^2,$$ 
 as required. This establishes the first part of the claim. 

We now show that \CPthree and \CPfour are polynomial time solvable up to accuracy $\eta>0$ in time polynomial in the input size and $\log(1/\eta)$. We use the Ellipsoid algorithm to solve both the relaxations. It is easy to verify that the feasible sets are convex. 
The argument in Claim~\ref{claim:feasibility} proves that the constraints \eqref{cp3:stdconstr}, \eqref{cp3:convert:constr}, \eqref{cp4:stdconstr} and \eqref{cp4:newsdp:constr} are all efficiently separation. We now argue about the objective i.e., the constraint \eqref{cp3:obj2} and \eqref{cp4:obj2}. 
Finally, given $\lambda, X$,  \eqref{cp3:obj2} and \eqref{cp4:obj2} can also be efficiently separated by computing the maximum eigenvalue of $A^\top (I-X) A$. Let $v \in \mathbb{S}^{n-1}$ be the corresponding eigenvector. If the constraint is violated, the hyperplane separator is of the form
$$ \iprod{vv^\top , A^\top A} - \iprod{vv^\top, A^\top X A} - \lambda \le 0, \text{ i.e., } \iprod{vv^\top , A^\top A} - \iprod{A vv^\top A^\top, X } - \lambda \le 0, $$
since $\tr(vv^\top A^\top X A)= \tr(A vv^\top A^\top X)$. This completes the proof. 
\end{proof}

The proof of Theorem~\ref{thm:worstcase:spectral} also crucially uses the monotonicity of $q \to q^*$ matrix operator norm, and also follows the same outline as the Frobenius norm objective. The primary difference arises in analyzing the objective.  

\begin{proofof}{Theorem~\ref{thm:worstcase:spectral}}
Let $OPT^2:=\eps^2 \spnorm{A}^2$ for some $\eps \in (0,1]$.
Set $\delta:= 2/(2+\gamma)$. Claim~\ref{claim:spectral:feasibility} shows that in polynomial time
we obtain a solution $X \succeq 0$ satisfying \eqref{eq:spectralsdp:constr}, \eqref{eq:spectralsdp:obj2} with $\lambda \le OPT$, and $\max_{Y \in \calQ} \iprod{Y,Z} \le C_G(q) \kappa^2$ . From \eqref{eq:relax:constraint:approx}, this implies $\norm{X}_{q \to q^*} \le C_G(q) \kappa^2$.

Let $X=\sum_i \lambda_i v_i v_i^\top$ and let $S=\sset{i : \lambda_i \ge 1-\delta}$. 
For the rest of the analysis we will assume without loss of generality that $\spnorm{A}=1$. 
We first show the guarantee in \eqref{eq:spectral:bicriteria}. The projection output is just $\Pi_S=\sum_{i\in S} v_i v_i^\top$. Observe that $|S| \le r / (1-\delta)$ from \eqref{eq:spectralsdp:constr}. Since the projector we output is just $\Pi_S$, each of its associated eigenvalues are at least $1-\delta$. Hence, the operator norm bounds follows using the monotonicity of the norm since $\Pi_S \preceq \tfrac{1}{1-\delta} X$. To verify the objective value we see that
\begin{align*}
    \Bigspnorm{\sum_{i \in [n]} (1-\lambda_i) A^\top v_i v_i^\top A } &=\Bigspnorm{A^\top (I-X) A} \le \eps^2 \spnorm{A}^2 = \eps^2 \\  
        \Bigspnorm{\sum_{i \notin S} \delta A^\top v_i v_i^\top A } &\le         \Bigspnorm{\sum_{i \notin S} (1-\lambda_i) A^\top v_i v_i^\top A } \le \Bigspnorm{\sum_{i \in [n]} (1-\lambda_i) A^\top v_i v_i^\top A } \le \eps^2 \\
        \text{Hence }  \Bigspnorm{A^\top \Pi_S^{\perp} A }  &\le \frac{\eps^2}{\delta} = \Big(1+\frac{\gamma}{2} \Big)\eps^2 ,
\end{align*}
as required. 
We now show the guarantee in \eqref{eq:spectral:worstcase} where we output a projection of rank at most $r$ (with no slack). Let $M':= \sum_{i \in S} A^\top v_i v_i^\top A$. Let $\Pi'$ be the projection matrix for the subspace corresponding to the best rank $r$ projection of $M'$. The algorithm outputs $\Pi'$. 

Note that $\Pi' \preceq \Pi_S$, hence by monotonicity, the $q \to q^*$ operator norm constraint is satisfied up to a $\alpha:=C_G$ factor. Also note that if $\Pi^*$ is the projection that gives the optimal solution to the problem,
$$\spnorm{M' - A^\top \Pi^* A} \le \spnorm{M' - A^\top A} + \spnorm{A^\top A - A^\top \Pi^* A} \le \eps^2 + \frac{\eps^2}{\delta}= \eps^2 (1+\tfrac{1}{\delta})= (2+\tfrac{\gamma}{2})\eps^2.$$
But $A^\top \Pi^* A$ is a valid approximation of $M'$ of rank at most $r$. Hence, we have that \begin{align*}
    \spnorm{M' - \Pi' M' \Pi'} &\le \eps^2 (1+\tfrac{1}{\delta}) \\
    \spnorm{A^\top A - \Pi' A^\top A \Pi'} &\le \spnorm{A^\top A - \Pi' \Pi_S A^\top A \Pi_S \Pi'} \le \spnorm{A^\top A - M'} + \spnorm{M'-\Pi' M' \Pi'}\\
    &\le \frac{2\eps^2}{\delta} + \eps^2 =  (1+\tfrac{2}{\delta}) \eps^2 = (3+\gamma) \eps^2,
\end{align*}
as required. 

\end{proofof}

As before the same ideas also give the following more general theorem for any monotone matrix norm $\gnorm{\cdot}$ that is approximately certifiable. 
\begin{theorem}\label{prop:gen:spectral}
Let $\gnorm{\cdot}$ be any matrix norm that is {\em monotone} and $\alpha$-approximately certifiable for some $\alpha \ge 1$.  
Suppose the data matrix $A \in \R^{n \times m}$ has a projection $\Pi^{*}$ of rank at most $r$ such that $\gnorm{\Pi^*} \le \kappa$ and the approximation error $OPT:=\spnorm{(I-\Pi^*) A}$.
There exists a polynomial time algorithm such that given any $\gamma \in (0,1)$, it finds an orthogonal projection $\widehat{\Pi}$ of dimension at most $r$ satisfying
\begin{align} 
\gnorm{\widehat{\Pi}} &\le \sqrt{\alpha (1+2/\gamma)} \cdot \kappa , \text{ and } \spnorm{(I-\widehat{\Pi}) A} \le \sqrt{(3+\gamma)} \cdot OPT.  
\end{align}

Moreover, for any $\gamma \in (0,1)$, there exists a polynomial time algorithm that finds an 
orthogonal projection $\widehat{\Pi}$ of rank $r'\le (1+\tfrac{2}{\gamma})r$ such that
\begin{align} 
\gnorm{\widehat{\Pi}} &\le \sqrt{\alpha(1+2/\gamma)}\cdot \kappa, \text{ and } \spnorm{(I-\widehat{\Pi}) A} \le \sqrt{(1+\gamma)} \cdot OPT.  
\end{align}
\end{theorem}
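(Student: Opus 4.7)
The plan is to lift the proof of Theorem~\ref{thm:worstcase:spectral} to this more abstract setting by substituting the general monotone norm $\gnorm{\cdot}$ wherever the specific norm $\norm{\cdot}_{q \to q^*}$ appeared. The two structural facts that drove the earlier proof---monotonicity of $\norm{\cdot}_{q \to q^*}$ (Lemma~\ref{lem:monotone}) and the existence of an approximate separation oracle (Theorem~\ref{cor:norms-related})---are now exactly what Definition~\ref{assump:norm:optimization} and the monotonicity hypothesis provide, so the proof should port over with only cosmetic changes, plus the observation that we no longer need the squaring trick that passed from $q \to 2$ to $q \to q^*$.

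First I write down the convex relaxation
\begin{align*}
\min ~\lambda \quad \text{s.t. } A^\top (I-X) A \preceq \lambda I, \quad \gnorm{X} \le \kappa, \quad \tr(X) \le r, \quad 0 \preceq X \preceq I.
\end{align*}
Any feasible rank-$r$ projection $\Pi^*$ with $\gnorm{\Pi^*} \le \kappa$ is a feasible SDP solution with objective $\spnorm{A^\top \Pi^{*\perp} A} = OPT^2$, so this is a valid relaxation. Then I solve it approximately by Ellipsoid, exactly as in Lemma~\ref{lem:spectral:sdpsolving}: the constraints $0 \preceq X \preceq I$ and $\tr(X) \le r$ are separated by eigenvalue computations; the constraint $A^\top(I-X)A \preceq \lambda I$ is separated by producing the top eigenvector $v$ and writing the hyperplane $\langle vv^\top, A^\top A \rangle - \langle A vv^\top A^\top, X\rangle \le \lambda$; and the constraint $\gnorm{X} \le \kappa$ is separated up to factor $\alpha$ by the hypothesized oracle. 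A union bound over the $\poly(n,m)$ Ellipsoid iterations yields, with high probability in polynomial time, a matrix $\widehat{X}$ satisfying all SDP constraints with $\gnorm{\widehat{X}} \le \alpha \kappa$ and objective arbitrarily close to $OPT^2$.

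The rounding is identical to the one used for Theorem~\ref{thm:worstcase:spectral}. With $\delta = 2/(2+\gamma)$ and $\widehat{X} = \sum_i \lambda_i v_i v_i^\top$, let $S = \{i : \lambda_i \ge 1-\delta\}$. For the bicriteria claim output $\Pi_S = \sum_{i \in S} v_i v_i^\top$: since $\Pi_S \preceq \tfrac{1}{1-\delta}\widehat{X}$ as PSD matrices, monotonicity of $\gnorm{\cdot}$ (applied to the difference, which is PSD) gives $\gnorm{\Pi_S} \le \gnorm{\widehat{X}}/(1-\delta)$, and $|S| \le r/(1-\delta) = (1+2/\gamma)r$ by the trace bound. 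The spectral-norm error is controlled by noting that each $i \notin S$ has $1-\lambda_i \ge \delta$, so $\spnorm{A^\top \Pi_S^\perp A} \le \spnorm{A^\top(I-\widehat{X})A}/\delta \le (1+\gamma/2)\, OPT^2$. For the rank-$r$ claim form $M' = \sum_{i \in S} A^\top v_i v_i^\top A$, let $\Pi'$ be the best rank-$r$ projection for $M'$ via SVD, and verify $\Pi' \preceq \Pi_S$ (hence monotonicity transfers the norm bound from $\Pi_S$ to $\Pi'$); the error bound then follows from $\spnorm{A^\top A - \Pi' A^\top A \Pi'} \le \spnorm{A^\top A - M'} + \spnorm{M' - \Pi' M' \Pi'}$, exactly as in the original proof.

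The only nontrivial step is verifying that each of the rounding reductions (truncating to $S$, and then SVD-truncating to rank $r$) preserves the norm guarantee, and this is precisely where monotonicity is essential: without \eqref{eq:monotonenorm} there is no reason that zeroing out small eigenvalues (which is what $\Pi_S = \tfrac{1}{1-\delta}\widehat{X} - \tfrac{1}{1-\delta}(\widehat{X} - (1-\delta)\Pi_S)$ amounts to, up to a PSD correction) should leave $\gnorm{\cdot}$ bounded. All other steps---existence of the oracle, standard SDP separation, the spectral-norm inequality for the error---are mechanical once the right relaxation is in hand.
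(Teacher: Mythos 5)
Your proposal is correct and is exactly the argument the paper intends: the paper omits the proof of this theorem, stating that the ideas are identical to Theorem~\ref{thm:worstcase:spectral}, and your write-up is precisely that proof ported to the abstract setting (monotonicity replacing Lemma~\ref{lem:monotone}, $\alpha$-approximate separability replacing Theorem~\ref{cor:norms-related}), with the same relaxation, Ellipsoid-based solving, and two-stage truncation rounding. One minor bookkeeping point: your rounding delivers $\gnorm{\widehat{\Pi}} \le \alpha(1+2/\gamma)\kappa$ (mirroring Theorem~\ref{prop:gen:fr}), whereas the stated bound $\sqrt{\alpha(1+2/\gamma)}\,\kappa$ reflects the instantiation where the relaxation constrains a squared quantity (as with $\norm{X}_{q\to q^*}\le\kappa^2$ in Theorem~\ref{thm:worstcase:spectral}); this is a discrepancy in how the general statement is phrased rather than a gap in your argument.
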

We omit the proof, since the ideas are identical to Theorem~\ref{thm:worstcase:spectral}.

\subsection{Recovering the Optimal Projection Matrix} \label{sec:recovery}
\anote{We can potentially remove this section. It may not be used anywhere?}
We now show that if the optimal robust low-rank projection has very small error compared to the $r$th smallest singular value of $A$, then we can in fact approximately recover the subspace itself up to small error measured in terms of the principal angles. For two subspaces with projection matrices $\Pi_1, \Pi_2$, the $\text{Sin}$ of the canonical angles matrix is given by $\Pi_1^{\perp} \Pi_2$.
These techniques will also be helpful for recovery in the Spiked Covariance model. 
The following simple corollary will work for both Frobenius norm error and spectral norm error. For this purpose, we will just use $\unorm{A}$ to denote the norm of $A$, where the unspecified matrix norm $\unorm{\cdot}$ is norm in which we are measuring the error --  either Frobenius norm or spectral norm. 

\begin{corollary}\label{corr:recovery:fr}
Suppose the data matrix $A \in \R^{n \times m}$ has an $r$-dimensional projection $\Pi^{*}$ such that $\norm{\Pi^*}_{q \to 2} \le \kappa$, the approximation error $OPT:=\unorm{(I-\Pi^*) A}^2 < \eps^2 \unorm{A}^2$ and $\sigma_r (\Pi^* A) \ge \theta$.
There exists a polynomial time algorithm that finds a projection $\widehat{\Pi}$ of rank at most $r$ such that 
\begin{equation}\label{eq:strongerrecovery}
    \unorm{\Pi^{\perp} \widehat{\Pi}} \le O(1+\alpha)\cdot \frac{\eps \unorm{A}}{\theta}.
\end{equation}
where the subspace corresponding to $\Pi$ is a subset of the subspace given by $\Pi^*$ and $\alpha$ is the approximation factor attained by the algorithm in Theorem~\ref{thm:worstcase:frob} (or Theorem~\ref{thm:worstcase:spectral}).
\end{corollary}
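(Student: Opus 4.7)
Plan. The strategy is to apply the appropriate worst-case algorithm --- Theorem~\ref{thm:worstcase:frob} in the Frobenius-norm case, Theorem~\ref{thm:worstcase:spectral} in the spectral-norm case --- to obtain a projection $\widehat{\Pi}$ of rank at most $r$ whose error $\unorm{(I-\widehat{\Pi})A}$ is within an $\alpha$-factor of $\sqrt{OPT}$, and then to convert this approximation guarantee into subspace closeness via a Wedin / Davis--Kahan sin-$\Theta$ perturbation inequality. The crucial point is that both $B:=\Pi^*A$ and $\widehat{B}:=\widehat{\Pi}A$ approximate $A$ well, hence are close to one another in the relevant norm, and the sin-$\Theta$ theorem converts this closeness of matrices into closeness of their top singular subspaces, which (after a small WLOG step) coincide with the ranges of $\Pi^*$ and $\widehat{\Pi}$ respectively.

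First I would reduce to the case where $\widehat{\Pi}$ is the orthogonal projection onto $\mathrm{col}(\widehat{\Pi}A)$: this sub-projection of the original $\widehat{\Pi}$ leaves $\widehat{\Pi}A$ unchanged, only decreases $\norm{\widehat{\Pi}}_{q \to 2}$ (since any sub-projection $\widehat{\Pi}'$ with range contained in $\mathrm{range}(\widehat{\Pi})$ satisfies $\widehat{\Pi}'=\widehat{\Pi}'\widehat{\Pi}$, so that $\norm{\widehat{\Pi}'y}_2 \le \norm{\widehat{\Pi}y}_2$ for every $y$), and preserves the projection error. After this reduction, $\mathrm{range}(\widehat{\Pi})=\mathrm{col}(\widehat{B})$; correspondingly, since $\sigma_r(B)\ge\theta>0$, $B$ has rank exactly $r$ with $\mathrm{col}(B)=\mathrm{range}(\Pi^*)$. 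The triangle inequality then gives
\begin{equation}
\unorm{\widehat{B}-B} \;\le\; \unorm{(I-\widehat{\Pi})A} + \unorm{(I-\Pi^*)A} \;\le\; (1+\alpha)\,\varepsilon\,\unorm{A}.
\end{equation}

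Next I would invoke the Wedin sin-$\Theta$ theorem comparing the top-$r$ left-singular subspaces of $B$ and $\widehat{B}$. Since $\widehat{B}$ has rank at most $r$, $\sigma_{r+1}(\widehat{B})=0$, so the relevant spectral gap is $\sigma_r(B) - \sigma_{r+1}(\widehat{B}) \ge \theta$. The sin-$\Theta$ bound, which holds for both Frobenius and spectral norms simultaneously, then yields
\begin{equation}
\unorm{\Pi^{*\perp}\widehat{\Pi}} \;\le\; \frac{C\,\unorm{\widehat{B}-B}}{\theta} \;\le\; \frac{C\,(1+\alpha)\,\varepsilon\,\unorm{A}}{\theta}
\end{equation}
for an absolute constant $C$ (the standard form of Wedin gives $C=\sqrt{2}$). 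Taking $\Pi:=\Pi^*$ --- whose range is trivially a subset of $\mathrm{range}(\Pi^*)$ --- finishes the proof.

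The main obstacle I anticipate is invoking a clean sin-$\Theta$ statement that works uniformly across both norm regimes without requiring a separate lower bound on $\sigma_r(\widehat{B})$, which might degenerate if $\widehat{B}$ happens to have rank strictly less than $r$. This is circumvented by routing the argument through $\sigma_{r+1}(\widehat{B})=0$ rather than $\sigma_r(\widehat{B})$, so the gap in Wedin is controlled entirely by the hypothesis $\sigma_r(B)\ge\theta$, independent of whether $\widehat{B}$ is full rank. A secondary bookkeeping subtlety is verifying that the WLOG reduction on $\widehat{\Pi}$ does not inflate the $q\to 2$ operator norm, which follows from the sub-projection identity $\widehat{\Pi}'=\widehat{\Pi}'\widehat{\Pi}$ noted above.
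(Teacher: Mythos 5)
Your proof is correct and takes essentially the same route as the paper's: bound $\unorm{\Pi^* A - \widehat{\Pi} A} \le (1+\alpha)\,\eps\,\unorm{A}$ by the triangle inequality together with the approximation guarantee of Theorem~\ref{thm:worstcase:frob} (resp.\ Theorem~\ref{thm:worstcase:spectral}), then apply the Davis--Kahan/Wedin $\sin\Theta$ theorem with gap $\theta$, which comes from $\sigma_r(\Pi^* A)\ge \theta$ and the fact that the discarded part of $\widehat{\Pi}A$ vanishes. Your preliminary reduction of $\widehat{\Pi}$ to the projection onto $\mathrm{col}(\widehat{\Pi}A)$ and the routing through $\sigma_{r+1}(\widehat{\Pi}A)=0$ are just careful bookkeeping of steps the paper leaves implicit.
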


Note that the above bound holds for the spectral norm error and the Frobenius norm error. \anote{Similar bound for spectral norm.}
\begin{proof}
The algorithm is exactly the same algorithm used in Theorem~\ref{thm:worstcase:frob}. Let $\Pi$ denote the best robust low-rank subspace for $A$. We will then use the Davis-Kahan $\sin \Theta$ theorem about perturbations of singular vectors to show that the subspaces given by $\Pi_1$ and $\Pi_2$ are close. Note that the Davis-Kahan theorem states that if $\Pi_i$ is the projection matrix onto eigenspaces of $A_i A_i^\top$ respectively ($i \in \sset{1,2}$) with the least singular values of $\Pi_1 A_1$ being at least $\delta>0$ more than the singular values of $\Pi_2^{\perp} A_2$, then for any unitarily invariant norm $\unorm{\cdot}$, 
$$ \unorm{\Pi_2^{\perp} \Pi_1} \le \frac{\unorm{A_1-A_2}}{ \delta }.$$

We would like to apply it with $A_2= \widehat{\Pi} A, A_1 = \Pi^* A$ and $\Pi_2=\widehat{\Pi}, \Pi_1 = \Pi^*$. We know that by the triangle inequality, for some constant $\alpha$ given by the approximation ratio in Theorem~\ref{thm:worstcase:frob} (or Theorem~\ref{thm:worstcase:spectral}), 
\begin{align}\label{eq:recovery:1}
    \unorm{\Pi^* A - \widehat{\Pi} A} &\le \unorm{\Pi^* A - A} + \unorm{A - \widehat{\Pi} A} \le \eps \unorm{A} + \alpha \eps \unorm{A} \le (\alpha+1) \eps \unorm{A}, 
\end{align}
where we used the fact that $\widehat{\Pi}$ gives an $\alpha$-factor approximation to the objective. Moreover, in our case $A_1=\Pi^* A$ is itself of rank-$r$ and $\Pi_2^{\perp} A_2 =0$. Under the stronger assumption in \eqref{eq:strongerrecovery}, we have $\sigma_r(\Pi^* A) \ge \theta$. Hence we see that \eqref{eq:strongerrecovery} holds since
\begin{align*}
    \unorm{\widehat{\Pi}^{\perp} \Pi^{*}} &\le \frac{\unorm{\Pi^* A - \widehat{\Pi} A}}{\theta} \le \frac{(1+\alpha) \eps}{\theta}.
\end{align*}

\end{proof}


\section{Data Poisoning and Robustness to Adversarial Perturbations at Training Time}
In this section, we consider training-time robustness, where the input matrix $\tilde{A}$ is an adversarial perturbation of $A$ and the goal is to recover the robust projection of $A$ rather than $\tilde{A}$. In Section~\ref{sec:training:frob}, We will study the approximation algorithms in Frobenius norm error. In Section~\ref{sec:training:spectral}, we will study its counterpart in spectral norm error. Finally we show a lower bound in Section~\ref{sec:app-lower-bound-additive-error}. \xnote{July/11: Add this short overview.}

\subsection{Training-Time Robustness: Approximations in Frobenius Norm Error}\label{sec:training:frob}

\anote{Need to read through this carefully, and fill in other explanations. Vaggos/Xue, can you look at this?}

\begin{theorem}\label{thm:robusttraining:frob}
Suppose $q \ge 2$ and $A \in \R^{n \times m}$ is the (unknown) uncorrupted data matrix, with an (orthogonal) projection matrix $\Pi^*$ of rank at most $r$ that is robust i.e., $\norm{\Pi^*}_{q \to 2}\le \kappa$ satisfying $\norm{A-\Pi^* A}_F^2 \le \eps \norm{A}_F^2$ for some $\eps \in [0,1]$. There exists a polynomial time algorithm that
given as input any (adversarially perturbed) data matrix $\tilde{A}$ s.t. for each column $j\in [m]$, $\norm{\tilde{A}_j - A_j}_q \le \delta$, outputs an orthogonal projection $\widehat{\Pi}$ of rank at most $r$ such that for any $\eta>0$
\begin{align} \label{eq:robusttraining}
\norm{\widehat{\Pi}}_{q \to 2} &\le O(\kappa) , \text{ and } \norm{A-\widehat{\Pi} A}_F^2  \le O(\eps+ \eta) \cdot \norm{A}_F^2 + O(\tfrac{1}{\eta}) \cdot \delta^2 \kappa^2 m.  
\end{align}
\end{theorem}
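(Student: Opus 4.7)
My plan is to reduce to the worst-case guarantee of Theorem~\ref{thm:worstcase:frob} via a preprocessing step. The key observation is that although $\tilde{A}$ itself may fail to admit any good robust low-rank approximation (its Frobenius norm distance to $A$ could be prohibitively large, on the order of $\delta\sqrt{m}\,n^{1/2-1/q}$), there is a nearby matrix $A'$ in the $\ell_q$-neighborhood of $\tilde{A}$ that does. I can compute such an $A'$ efficiently by solving a simple convex program, apply Theorem~\ref{thm:worstcase:frob} to $A'$, and then transfer the guarantee back to $A$ by using the column-wise robustness of both $\Pi^*$ and the returned $\widehat{\Pi}$ via Lemma~\ref{lem:robustproperties}.

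First I would define $A'$ as the matrix of minimum Frobenius norm in the feasible set $\{B \in \R^{n \times m} : \norm{B_j - \tilde{A}_j}_q \le \delta \text{ for all } j \in [m]\}$. This decomposes column-by-column into convex minimizations of $\norm{B_j}_2^2$ over an $\ell_q$-ball around $\tilde{A}_j$ (for $q=\infty$ it is just coordinate-wise soft-thresholding by $\delta$), and so is solvable in polynomial time. Since $A$ itself is feasible, $\norm{A'}_F \le \norm{A}_F$, and $\norm{A_j - A'_j}_q \le 2\delta$ for every column $j$.

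Next I would bound $OPT' := \norm{A' - \Pi^* A'}_F^2$. Writing $OPT' = \norm{A'}_F^2 - \norm{\Pi^* A'}_F^2 \le \norm{A}_F^2 - \norm{\Pi^* A'}_F^2$ and using Lemma~\ref{lem:robustproperties}(I) column-wise to get $\norm{\Pi^*(A - A')_j}_2 \le 2\kappa\delta$ (hence $\norm{\Pi^*(A - A')}_F \le 2\sqrt{m}\,\kappa\delta$), the triangle inequality yields $\norm{\Pi^* A'}_F \ge \sqrt{1-\eps}\,\norm{A}_F - 2\sqrt{m}\,\kappa\delta$. An AM--GM splitting of the resulting cross term gives $OPT' \le (\eps + \eta)\norm{A}_F^2 + O(m\kappa^2\delta^2/\eta)$ for any $\eta > 0$. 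Now I invoke Theorem~\ref{thm:worstcase:frob} on $A'$ with, say, $\gamma = 1$: since $\Pi^*$ is a feasible $(\kappa,q)$-robust rank-$r$ projection with error $OPT'$, the algorithm returns $\widehat{\Pi}$ with $\norm{\widehat{\Pi}}_{q \to 2} \le O(\kappa)$ and $\norm{A' - \widehat{\Pi} A'}_F^2 \le O(OPT')$.

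To transfer back to $A$, I write $\norm{A - \widehat{\Pi} A}_F^2 = \norm{A}_F^2 - \norm{\widehat{\Pi} A}_F^2$ and lower-bound $\norm{\widehat{\Pi} A}_F^2$ column-by-column. For each $j$, $\widehat{\Pi} A_j = \widehat{\Pi} A'_j + \widehat{\Pi}(A_j - A'_j)$ with $\norm{\widehat{\Pi}(A_j - A'_j)}_2 \le O(\kappa\delta)$ again by Lemma~\ref{lem:robustproperties}(I). Expanding $\norm{\widehat{\Pi} A_j}_2^2$ and using Cauchy--Schwarz with AM--GM gives $\norm{\widehat{\Pi} A}_F^2 \ge (1 - \eta)\norm{\widehat{\Pi} A'}_F^2 - O(m\kappa^2\delta^2/\eta)$. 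Combined with $\norm{\widehat{\Pi} A'}_F^2 \ge \norm{A'}_F^2 - O(OPT')$ and the analogous lower bound $\norm{A'}_F^2 \ge \norm{\Pi^* A'}_F^2 \ge (1 - \eps - \eta)\norm{A}_F^2 - O(m\kappa^2\delta^2/\eta)$ derived as in the previous step, one obtains $\norm{A - \widehat{\Pi} A}_F^2 \le O(\eps + \eta)\norm{A}_F^2 + O(m\kappa^2\delta^2/\eta)$ after absorbing constants and rescaling $\eta$. The main obstacle throughout is ensuring that no contribution from $A - A'$ ever enters the bound directly (which would blow up as $\sqrt{m}\,n^{1/2 - 1/q}\delta$); every such term must be funneled through either $\Pi^*$ or $\widehat{\Pi}$, whereupon $q \to 2$ robustness converts the $\ell_q$ column bound $\delta$ into an $\ell_2$ bound of $O(\kappa\delta)$. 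This is exactly the role of Lemma~\ref{lem:robustproperties}(I), and its simultaneous use on both the optimal and the returned projection is what makes the reduction from training-time perturbations to the worst-case primitive close the loop.
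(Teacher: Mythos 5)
Your proposal is correct and takes essentially the same route as the paper: the paper likewise computes $A'$ as the minimum-Frobenius-norm matrix in the column-wise $\ell_q$-neighborhood of $\tilde{A}$ (Lemma~\ref{lem:computenearbyA}), runs the worst-case algorithm of Theorem~\ref{thm:worstcase:frob} on $A'$, and transfers the guarantee between $A$ and $A'$ via Lemma~\ref{lem:closeprojections}, whose proof is exactly your column-wise robustness-plus-AM--GM estimate. The only cosmetic difference is that you re-derive that transfer bound inline from Lemma~\ref{lem:robustproperties}(I) instead of invoking it as a standalone lemma.
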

To get a multiplicative approximation we will set $\eta = O(\eps)$, and get an extra additive term of $\delta^2 \kappa^2 m / \eps$. 
Here think of $\delta^2 \kappa^2 \ll \tfrac{1}{m} \cdot \eps \norm{A}_F^2$. 
Further we remark that the above guarantees are optimal up to constant factors; in particular, the additive factor of $O(m \delta^2 \kappa^2)$ is unavoidable (see Proposition~\ref{prop:additiveerror:tight}).

The main challenge here is that while $A$ has a good low-rank projection (in fact a robust one), $\tilde{A}$ may be very far from a rank-$r$ matrix (let alone having a robust rank-$r$ approximation). Further, the best robust low-rank approximation of $\tilde{A}$ could be very different from the best robust low-rank projection of $A$. This is because the entry-wise perturbations of $\delta$ could be too large in aggregate; for instance, it could be the case that $\norm{\tilde{A}}_F^2 \gg \norm{A}_F^2$. 
Suppose $\Pi^*$ is the best robust low-rank projection of $A$. 
We will run the algorithm in the previous section not on the given matrix $\tilde{A}$, but on a suitably modified matrix $A'$. 

\begin{figure}[htbp]
\begin{center}
\fbox{\parbox{0.98\textwidth}{

{\bf Input:} $\tilde{A}$, the corrupted $n \times m$ data matrix, rank $r$, robustness parameter $\kappa \ge 1$ and norm $q \ge 2$. 
\begin{enumerate}
\item Compute $A'$ (using Lemma~\ref{lem:computenearbyA}) such that $$A'=\argmin_{\substack{B \in \R^{n \times m} \text{ s.t. }\\ \norm{B_j-\tilde{A}_j}_q \le \delta, \forall j \in [m]}} \norm{B}_F.$$ 
\item Run the algorithm from Theorem~\ref{thm:worstcase:frob} on $A'$, to obtain a rank-$r$ projection matrix $\widehat{\Pi}$.
\item Output $\widehat{\Pi}$.
\end{enumerate}
}}
\end{center}
\caption{\label{ALG:training:frob} Robust rank-$r$ approximations in Frobenius norm under adversarial perturbations during training.}
\end{figure}

\begin{lemma}\label{lem:computenearbyA}
There is a polynomial time algorithm that given any matrix $M \in \R^{n \times m}$, can find  
$$\Gamma_q(M)=\min_{\substack{B \in \R^{n \times m} \text{ s.t. }\\ \norm{B_j-M_j}_q \le \delta, \forall j \in [m]}} \norm{B}^2_F,$$
up to arbitrary accuracy. 
\end{lemma}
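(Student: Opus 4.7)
The plan is to exploit the fact that both the objective and the constraints decouple across columns. Writing $\norm{B}_F^2 = \sum_{j=1}^m \norm{B_j}_2^2$ and noting that the constraint $\norm{B_j-M_j}_q \le \delta$ only involves the $j$-th column, the problem splits into $m$ independent subproblems, one per column:
\begin{equation*}
\min_{b \in \R^n}\; \norm{b}_2^2 \quad \text{subject to} \quad \norm{b - M_j}_q \le \delta.
\end{equation*}
So it suffices to solve this single-column problem in polynomial time up to arbitrary accuracy, and concatenate the optimizers.

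Each subproblem is a convex program: the objective $\norm{b}_2^2$ is convex (in fact strongly convex) in $b$, and the feasible set is the $\ell_q$ ball of radius $\delta$ centered at $M_j$, which is convex for $q \ge 1$. Strong convexity guarantees a unique minimizer. For every $q \in [2,\infty]$ this convex program has an efficient separation oracle (one can compute the $\ell_q$ norm of $b-M_j$ exactly, and find a separating hyperplane via the subgradient of $\norm{\cdot}_q$ when the constraint is violated), so it can be solved up to additive error $\eta$ in time $\mathrm{poly}(n, \log(1/\eta))$ via the ellipsoid method or interior-point methods for $\ell_q$-constrained quadratic programs.

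For the specific case $q=\infty$ the solution is closed form and immediate: the constraint $\norm{b - M_j}_\infty \le \delta$ decouples coordinate-wise as $|b_i - (M_j)_i| \le \delta$, and the coordinate-wise minimizer of $b_i^2$ subject to this box constraint is the soft-thresholding
\begin{equation*}
b_i^* \;=\; \mathrm{sign}\bigl((M_j)_i\bigr)\cdot \max\bigl(|(M_j)_i|-\delta,\; 0\bigr).
\end{equation*}
For $q=2$ the subproblem is also closed form: $b^* = M_j\cdot \max(1 - \delta/\norm{M_j}_2,\, 0)$. For general $q \in (2,\infty)$ one falls back on convex programming. There is no real obstacle here; the only mild care needed is to ensure that the $\poly(n,\log(1/\eta))$ convergence of the convex solver transfers to a $\poly(n,m,\log(1/\eta))$ guarantee for the whole problem, which is immediate by running the solver on each of the $m$ columns. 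Concatenating the per-column optimizers yields $A'$ with the required Frobenius norm up to arbitrary accuracy.
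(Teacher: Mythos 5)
Your proposal is correct and follows essentially the same route as the paper: decompose the Frobenius objective column-by-column, observe each single-column subproblem is a convex quadratic over an $\ell_q$ ball with an efficient separation oracle (subgradient/duality of $\norm{\cdot}_q$), solve by ellipsoid, and use the closed-form soft-thresholding when $q=\infty$. The extra observations (strong convexity, the $q=2$ closed form) are fine but not needed.
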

\begin{proof}
First we note that since $\norm{B}_F^2 = \sum_j \norm{B_j}_2^2$, the optimization problem is separable across each of the $m$ samples i.e., 
$$ \min_{\substack{B \in \R^{n \times m} \text{ s.t. } \\ \norm{B_j-M_j}_q \le \delta, \forall j \in [m]}} \norm{B}^2_F  = \sum_{j \in [m]} \min_{\substack{B_j \in \R^{n} \text{ s.t. } \\\norm{B_j-M_j}_q \le \delta}} \norm{B_j}_2^2 
$$
We now describe how to solve each of the $m$ subinstances corresponding to the column $j \in [m]$, which for a given $b \in \R^n$ is of the form $$\min_{\substack{z \in \R^{n} \\ \norm{z}_q \le \delta}} \norm{b - z}_2^2.$$
Note that the least-squares objective $\norm{b-z}_2^2$ is convex. Moreover the constraint $\norm{z}_q \le \delta$ is also convex; further there is a simple separation oracle for this constraint since by duality
$$\norm{z}_q = \max_{y \in \R^n: \norm{y}_{q^*} \le 1} \iprod{y,z} = \Big\langle \frac{z^*}{\norm{z^*}_{q^*}},z \Big \rangle, \text{ where } z^*_i= \text{sign}(z_i) |z(i)|^{q-1}~~ \forall i \in [n].$$
Hence by using the Ellipsoid algorithm, this problem can be solved in polynomial time. \footnote{In fact Projected Gradient Descent Algorithm can also be used here; see \citet{sra2012fast}.} 
\end{proof}

Note that when $q=\infty$,  it is easy to find the matrix $A'$, by just setting  $$A'_{ij}= \text{sign}(M_{ij}) \cdot \max\sset{ 0, |M_{ij}|-\delta }, ~~\forall i,j \in [n].$$

We will argue that $\Pi^*$ also gives a good low-rank approximation to $A'$. This crucially uses the fact that $\Pi^{*}$ has bounded $\ell_q \to 2$ norm, which implies the following useful lemma. 

\begin{lemma}\label{lem:closeprojections}
Suppose $A, B \in \R^{n \times m}$ are two matrices such that for each column $j \in [m]$, $\norm{A_j-B_j}_q \le \delta$, and let $\Pi$ be any rank-$r$ projection matrix such that $\norm{\Pi}_{q \to 2} \le \kappa$. Then for any $\eta \in (0,1)$,   
$$(1-\eta) \norm{\Pi A}_F^2 -  (\tfrac{1}{\eta}-1) \delta^2 \kappa^2 m \le \norm{\Pi B}_F^2 \le (1+\eta) \norm{\Pi A}_F^2 +  (\tfrac{1}{\eta}+1) \delta^2 \kappa^2 m. $$
\end{lemma}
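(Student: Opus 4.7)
The plan is to reduce the claim to a per-column statement and then apply a standard Young's-type inequality. Since the Frobenius norm squared decomposes as $\norm{\Pi B}_F^2 = \sum_{j \in [m]} \norm{\Pi B_j}_2^2$ and similarly for $A$, it suffices to prove, for each fixed column $j$ and any $\eta \in (0,1)$, the two-sided bound
\begin{align*}
(1-\eta)\norm{\Pi A_j}_2^2 - (\tfrac{1}{\eta}-1)\delta^2 \kappa^2 \;\le\; \norm{\Pi B_j}_2^2 \;\le\; (1+\eta)\norm{\Pi A_j}_2^2 + (\tfrac{1}{\eta}+1)\delta^2 \kappa^2,
\end{align*}
and then sum over $j \in [m]$.

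To prove the per-column bound, I would write $\Pi B_j = \Pi A_j + \Pi(B_j - A_j)$ and expand the squared norm:
\begin{align*}
\norm{\Pi B_j}_2^2 = \norm{\Pi A_j}_2^2 + 2\iprod{\Pi A_j, \Pi(B_j - A_j)} + \norm{\Pi(B_j - A_j)}_2^2.
\end{align*}
The key quantitative input is the robustness property of $\Pi$: by Lemma~\ref{lem:robustproperties}(I) (closeness of projections of perturbations), since $\norm{B_j - A_j}_q \le \delta$ and $\norm{\Pi}_{q \to 2} \le \kappa$, we have $\norm{\Pi(B_j - A_j)}_2 \le \kappa \delta$. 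The cross term is then controlled by Young's inequality in the form $2ab \le \eta a^2 + \tfrac{1}{\eta} b^2$ with $a = \norm{\Pi A_j}_2$ and $b = \norm{\Pi(B_j - A_j)}_2$, giving $|2\iprod{\Pi A_j, \Pi(B_j - A_j)}| \le \eta \norm{\Pi A_j}_2^2 + \tfrac{1}{\eta}\kappa^2\delta^2$. Combining this with the bound $\norm{\Pi(B_j - A_j)}_2^2 \le \kappa^2 \delta^2$ yields both inequalities above: for the upper bound the error terms add up to $(1 + \tfrac{1}{\eta})\kappa^2\delta^2$, and for the lower bound the quadratic term is dropped while the cross term contributes $-\eta \norm{\Pi A_j}_2^2 - \tfrac{1}{\eta}\kappa^2 \delta^2$, netting $-(\tfrac{1}{\eta}-1)\kappa^2\delta^2$ after accounting for signs.

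There is no significant obstacle here; the only subtlety is making sure the signs work out in the lower bound. In particular, the lower bound obtained directly from Young's inequality is $\norm{\Pi B_j}_2^2 \ge (1-\eta)\norm{\Pi A_j}_2^2 + (1 - \tfrac{1}{\eta})\norm{\Pi(B_j-A_j)}_2^2$, and since $1 - \tfrac{1}{\eta} < 0$ for $\eta \in (0,1)$ one must upper bound $\norm{\Pi(B_j-A_j)}_2^2$ by $\kappa^2\delta^2$ (rather than by zero) to get the stated constant $-(\tfrac{1}{\eta}-1)\kappa^2\delta^2$. Summing the per-column inequalities over $j \in [m]$ then gives the claimed bound.
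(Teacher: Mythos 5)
Your proof is correct and takes essentially the same route as the paper: both decompose the Frobenius norm column-wise, bound $\norm{\Pi(A_j-B_j)}_2 \le \kappa\delta$ via the $q\to 2$ operator norm, and absorb the cross term with the Young/AM--GM inequality $2ab \le \eta a^2 + \tfrac{1}{\eta}b^2$. Your handling of the sign in the lower bound (upper-bounding $\norm{\Pi(B_j-A_j)}_2^2$ by $\kappa^2\delta^2$ since its coefficient $1-\tfrac{1}{\eta}$ is negative) is exactly the point the paper's argument also relies on.
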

\begin{proof}
For each $j\in [m]$, let $A_j,B_j$ be the $j$th columns of $A$ and $B$ respectively. Then $\norm{\Pi (A_j - B_j)}_2 \le \norm{\Pi}_{q \to 2} \norm{A_j - B_j}_q \le \delta \kappa$. Using this along with the triangle inequality we get,  
\begin{align*}
    \norm{\Pi B}_F^2 & = \sum_{j=1}^m \norm{\Pi (B_j-A_j) + \Pi A_j}_2^2 \ge \sum_{j=1}^m ( \norm{\Pi A_j}_2 - \delta \kappa )^2 \\
    &\ge \sum_{j=1}^m   \norm{\Pi A_j}_2^2  - 2 \Big(\frac{\delta \kappa}{\sqrt{\eta}} \Big) (\sqrt{\eta} \norm{\Pi A_j}_2) +(\delta \kappa)^2  \nonumber\\
    &\ge (1-\eta) \norm{\Pi A}_F^2 - (\tfrac{1}{\eta}-1) \delta^2 \kappa^2 m, \text{ for any } \eta \in (0,1). 
\end{align*}
This proves the first inequality. A similar argument also shows the other inequality. 
\end{proof}

We now prove that Algorithm~\ref{ALG:training:frob} finds an approximately optimal robust low-rank projection for unknown, uncorrupted data matrix $A$. 

\begin{proofof}{Theorem~\ref{thm:robusttraining:frob}}
The first step of the algorithm finds the matrix $A'$ given by
 $$ A' = \argmin_{\substack{B \in \R^{n \times m} \text{ s.t. }\\ \norm{B_j-\tilde{A}_j}_q \le \delta, \forall j \in [m]}} \norm{B}^2_F. $$
Note that $\norm{A'}_F \le \norm{A}_F$ since $A$ is also a feasible solution for the above minimization. Moreover since $\norm{A_j-A'_j}_q \le 2\delta$ for each $j \in [m]$, we get from Lemma~\ref{lem:closeprojections}, 
\begin{align}
    \norm{\Pi^* A'}_F^2 
    &\ge (1-\eta) \norm{\Pi^* A}_F^2 - 4(\tfrac{1}{\eta}-1) \delta^2 \kappa^2 m, ~\text{ for any } \eta \in (0,1). \label{eq:train:approx}
\end{align}

 Now we run the algorithm from the previous section (Theorem~\ref{thm:worstcase:frob}) on $A'$. From Theorem~\ref{thm:worstcase:frob} (with $\delta=1/2$ say), we find a rank-$r$ projection matrix $\Pi$ with $\norm{\Pi}_{\infty \to 2} \le O(\kappa)$ such that
 \begin{align*}
     \norm{A' - \Pi A'}_F^2 &\le 3 \Big(\norm{A'}_F^2 - (1-\eta) \norm{\Pi^* A}_F^2 + 4(\tfrac{1}{\eta}-1) \delta^2 \kappa^2 m \Big)\\
     &\le 3 \Big( \norm{A- \Pi^* A}_F^2 \Big) + 3\eta \norm{\Pi^* A}_F^2+ 12 (\tfrac{1}{\eta}-1) \delta^2 \kappa^2 m \\
     &\le 3(\eps+\eta) \norm{A}_F^2+ 12 (\tfrac{1}{\eta}-1) \delta^2 \kappa^2 m.
 \end{align*} 
 However we know that $\norm{A'}_F^2 \ge \norm{\Pi^* A'}_F^2$. Hence
 \begin{align*}
     \norm{\Pi A'}_F^2 &\ge \norm{A'}_F^2 - \norm{A'-\Pi A'}_F^2 \ge \norm{\Pi^* A'}_F^2 - \norm{A'-\Pi A'}_F^2 \\
     &\ge (1-\eta) \norm{\Pi^* A}_F^2 - 3 (\eps + \eta) \norm{A}_F^2 - 16(\tfrac{1}{\eta} -1 ) \delta^2 \kappa^2 m   \\
    \text{ Hence, }\norm{A-\Pi A}_F^2 &= \norm{A}_F^2 - \norm{\Pi A}_F^2 \le_{\substack{\text{Lemma} ~\ref{lem:closeprojections}}} \norm{A}_F^2 - (1-\eta) \norm{\Pi A'}_F^2 +  (\tfrac{1}{\eta} + 1) \delta^2 \kappa^2 m\\
   & \le \norm{A}_F^2 - (1-\eta)^2 \norm{\Pi^* A}_F^2 + 3 (\eps+\eta) (1-\eta) \norm{A}_F^2 \\
   &~~~+ m\delta^2 \kappa^2 \Big(1+\tfrac{1}{\eta} + 16 (1-\eta) (\tfrac{1}{\eta} -1) \Big)\\
   &\le \norm{A - \Pi^* A}_F^2 + (3\eps+5\eta) \norm{A}_F^2  + \Big(1+\tfrac{17}{\eta}\Big) \delta^2 \kappa^2 m \\
   &\le  O(\eta) \norm{A}_F^2 + O(\tfrac{1}{\eta}) \delta^2 \kappa^2 m, 
 \end{align*}
 for any $\eta \ge 4\eps$.
 \end{proofof}
 
 \subsection{Training-Time Robustness: Approximations in Spectral Norm Error} \label{sec:training:spectral}
  
  

 

 We now show guarantees for low-rank approximations in spectral norm error that are similar to Theorem~\ref{thm:robusttraining:frob}. However, there is a qualitative difference: we will either find a robust low-dimensional projection of the unknown dataset $A$, or we will certify that the dataset has been poisoned substantially. In particular, the algorithm will {\em never} output a low-dimensional representation that is bad for the unknown data matrix $A$. We will later see how these guarantees also imply training-time robustness for downstream unsupervised learning applications like spectral clustering, robust mean estimation and learning mixture models. In what follows $\spnorm{\cdot}$ will refer to the spectral norm. 

 \begin{theorem}\label{thm:training:spectralnorm}
Suppose $q \ge 2$ and $A \in \R^{n \times m}$ is the (unknown) uncorrupted data matrix, and let $\Pi^*$ have the smallest spectral norm error $\spnorm{A-\Pi A}$ among (orthogonal) projections of rank at most $r$ that are robust i.e., $\norm{\Pi}_{q \to 2}\le \kappa$. There exists a polynomial time algorithm (Alg.~\ref{ALG:training:spectral}) that given as input any (adversarially perturbed) data matrix $\tilde{A}$ s.t. for each column $j\in [m]$, $\norm{\tilde{A}_j - A_j}_q \le \delta$ and a parameter $\tau>0$, outputs either a projection matrix $\widehat{\Pi}$ of rank at most $r$ or outputs \Bad s.t. 
\begin{enumerate}
    \item[(I)] if the algorithm outputs a projection $\widehat{\Pi}$ of rank at most $r$, then it is a near-optimal robust low-rank approximation for the unknown matrix $A$ i.e., for some small universal constant $c\ge 1$, 
    \begin{equation} \label{eq:training:spectralnorm}
\forall \eta>0, ~~\norm{\widehat{\Pi}}_{q \to 2} \le c_q \kappa , \text{ and } \norm{(I-\widehat{\Pi}) A}  \le O\Big(1+\tfrac{1}{\eta}\Big) \Big( \tau+\spnorm{A-\Pi^* A}+\sqrt{m} \delta \kappa \Big) + \sqrt{2\eta}\norm{A}.  \end{equation}
\item[(II)] if the algorithm outputs \Bad, then either the data was poisoned i.e., $\norm{A-\tilde{A}} > \tau$, or there is no good robust spectral norm approximation for $A$ i.e., $\norm{A - \Pi A} > \tau$ for all rank-$r$ projection matrices $\Pi$ s.t. $\norm{\Pi}_{q \to 2} \le \kappa$.   
\end{enumerate}
In particular, if we are promised that $A$ has a good robust projection $\Pi^*$ of value $\spnorm{A-\Pi^* A} \le \eps \spnorm{A}$, then the algorithm either finds an approximately optimal robust projection $\widehat{\Pi}$ of rank at most $r$ for $A$ with 
\begin{equation}\label{eq:spectral:betterguarantee}
\norm{\widehat{\Pi}}_{q \to 2} \le c_q\kappa, \text{ and } \forall \eta>0,~~\norm{(I-\widehat{\Pi}) A}  \le O\Big(1+\tfrac{1}{\eta}\Big) \Big( \spnorm{A-\Pi^* A}+\sqrt{m} \delta \kappa) + \sqrt{2\eta}\norm{A},
\end{equation}
or certifies that the data has been poisoned i.e., $\norm{\tilde{A}-A} > \eps \norm{A}$.
 \end{theorem}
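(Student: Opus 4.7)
The plan is to reduce Theorem~\ref{thm:training:spectralnorm} to the worst-case guarantee of Theorem~\ref{thm:worstcase:spectral} applied directly to the corrupted matrix $\tilde{A}$, coupled with a new identifiability lemma that transfers a spectral-norm approximation guarantee from $\tilde{A}$ back to the unknown matrix $A$. Concretely, Algorithm~\ref{ALG:training:spectral} will run the worst-case algorithm of Theorem~\ref{thm:worstcase:spectral} on $\tilde{A}$ to obtain a candidate $(c_q\kappa,q)$-robust rank-$\le r$ projection $\widehat{\Pi}$, compute the observed error $\widehat{E} := \spnorm{\tilde{A} - \widehat{\Pi}\tilde{A}}$, and output $\widehat{\Pi}$ when $\widehat{E}$ lies below a threshold of the form $c'(\tau + \sqrt{m}\kappa\delta)$ (with $c'$ a fixed constant chosen slightly above the approximation factor of Theorem~\ref{thm:worstcase:spectral}), and output \Bad\ otherwise.

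Part (II) is the easier direction, obtained by contrapositive. Suppose $\spnorm{A - \tilde{A}} \le \tau$ and some $(\kappa,q)$-robust rank-$\le r$ projection $\Pi$ satisfies $\spnorm{A - \Pi A} \le \tau$. Since each column of $A - \tilde{A}$ has $\ell_q$-norm at most $\delta$, Lemma~\ref{lem:robustproperties} yields $\spnorm{\Pi(A - \tilde{A})} \le \norm{\Pi(A - \tilde{A})}_F \le \sqrt{m}\kappa\delta$. The triangle inequality then gives $\spnorm{\tilde{A} - \Pi \tilde{A}} \le 2\tau + \sqrt{m}\kappa\delta$, so $\Pi$ is a feasible candidate of small error for the worst-case problem on $\tilde{A}$. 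By Theorem~\ref{thm:worstcase:spectral}, the algorithm must therefore return a $\widehat{\Pi}$ with $\widehat{E}$ below the chosen threshold, contradicting the assumption that the algorithm outputs \Bad.

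Part (I) is where the main work lies: when $\widehat{\Pi}$ is $(c_q\kappa,q)$-robust with small error $\widehat{E}$ on $\tilde{A}$, we must control $\spnorm{A - \widehat{\Pi}A}$. The naive approach via the triangle inequality fails: $\spnorm{A - \tilde{A}}$ can grow like $\sqrt{nm}\,\delta$ in aggregate, and $(I - \widehat{\Pi})$ is not itself a robust projection, so there is no way to absorb the column-wise perturbation through $\widehat{\Pi}^{\perp}$. The crux is the following identifiability lemma: if $\Pi_1$ is $(\kappa_1,q)$-robust of rank $\le r$ with $\spnorm{(I - \Pi_1)A} \le E_1$ and $\Pi_2$ is $(\kappa_2,q)$-robust of rank $\le r$ with $\spnorm{(I - \Pi_2)\tilde{A}} \le E_2$, then
$$\spnorm{(I - \Pi_2)A}^2 \;\le\; C\,\spnorm{A}\cdot\bigl(E_1 + E_2 + (\kappa_1 + \kappa_2)\sqrt{m}\delta\bigr) + E_2^2.$$
Instantiated with $\Pi_1 = \Pi^*$ (so $E_1 = \spnorm{A - \Pi^*A}$) and $\Pi_2 = \widehat{\Pi}$ (so $E_2 = \widehat{E} = O(\tau + \sqrt{m}\kappa\delta)$), a weighted AM-GM of the form $\sqrt{\spnorm{A}\cdot b} \le \eta'\spnorm{A} + b/(4\eta')$ converts this into the advertised bound $\sqrt{2\eta}\norm{A} + O(1 + 1/\eta)(\tau + \spnorm{A - \Pi^*A} + \sqrt{m}\kappa\delta)$.

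The proof of the identifiability lemma is the indirect argument sketched in Section~\ref{sec:techniques:training} and is the main obstacle. I will pick $v \in \mathbb{S}^{m-1}$ to be the top right singular vector of $(I - \Pi_2)A$, so $\spnorm{(I - \Pi_2)A} = \norm{(I - \Pi_2)Av}_2$. Robustness of each $\Pi_i$ gives $\spnorm{\Pi_i(A - \tilde{A})} \le \kappa_i\sqrt{m}\delta$ (as in Part (II)), hence $\bigl|\norm{\Pi_i Av}_2 - \norm{\Pi_i \tilde{A}v}_2\bigr| \le \kappa_i\sqrt{m}\delta$ for $i \in \{1,2\}$. Combining this with $\norm{Av}_2 \le \norm{\Pi_1 Av}_2 + E_1$ and the symmetric chain through $\Pi_2$ and $\tilde{A}$ yields the key near-equality $\bigl|\norm{Av}_2 - \norm{\tilde{A}v}_2\bigr| \le E_1 + E_2 + (\kappa_1 + \kappa_2)\sqrt{m}\delta$. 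The Pythagorean identity $\norm{(I - \Pi_2)Av}_2^2 = \norm{Av}_2^2 - \norm{\Pi_2 Av}_2^2$ together with the lower bound $\norm{\Pi_2 Av}_2^2 \ge \norm{\tilde{A}v}_2^2 - E_2^2 - O(\kappa_2\sqrt{m}\delta\,\spnorm{A})$ gives $\norm{(I - \Pi_2)Av}_2^2 \le \norm{Av}_2^2 - \norm{\tilde{A}v}_2^2 + E_2^2 + O(\kappa_2\sqrt{m}\delta\,\spnorm{A})$, and factoring $\norm{Av}_2^2 - \norm{\tilde{A}v}_2^2$ as $(\norm{Av}_2 - \norm{\tilde{A}v}_2)(\norm{Av}_2 + \norm{\tilde{A}v}_2)$ together with the sum bound $\norm{Av}_2 + \norm{\tilde{A}v}_2 \le 2\spnorm{A}$ (up to lower-order terms absorbed into $b$) completes the lemma. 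The promise variant~\eqref{eq:spectral:betterguarantee} is then immediate: setting $\tau = \eps\norm{A}$, either the algorithm returns $\widehat{\Pi}$ satisfying Part (I), or else it outputs \Bad\ and by Part (II) certifies $\spnorm{A - \tilde{A}} > \eps\norm{A}$, since the assumed $\Pi^*$ rules out the second alternative of Part (II).
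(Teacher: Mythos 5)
Your proposal follows essentially the same route as the paper's own proof: run the worst-case spectral-norm algorithm of Theorem~\ref{thm:worstcase:spectral} on $\tilde{A}$, accept or reject based on the observed error on $\tilde{A}$, and transfer the guarantee back to the unknown $A$ via an identifiability lemma whose proof (per-direction Pythagorean decompositions plus the robustness bound $\norm{\Pi_i(A-\tilde{A})}\le \kappa_i\sqrt{m}\delta$, establishing $\norm{Av}_2\approx\norm{\tilde{A}v}_2$ and that this gap controls $\norm{\Pi_2^{\perp}Av}_2$) is the same argument as the paper's Lemma~\ref{lem:training:spectralnorm}, merely packaged as a product bound followed by AM-GM instead of the paper's $\eta$-weighted inequalities. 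The only substantive deviation is your acceptance threshold $c'(\tau+\sqrt{m}\kappa\delta)$ in place of the paper's threshold $\tau$; this is harmless for part (I), since the extra terms are absorbed into the $O(1+\tfrac{1}{\eta})(\cdot)$ bound, and it actually lets the contrapositive argument for part (II) go through cleanly despite the $\sqrt{3+\gamma}$ approximation factor of the worst-case algorithm.
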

\anote{Can $\eta$ be set to $\tau/\norm{A}$ by default?}
Our algorithm just runs the worst-case approximation algorithm from Theorem~\ref{thm:worstcase:spectral} on $\tilde{A}$ to find a projection $\widehat{\Pi}$. If the error is less than $\tau$, it outputs $\widehat{\Pi}$; else it certifies that the data is corrupt.

The main feature of the above algorithm is that it is {\em always correct}. The algorithm certifies that the input is {\sc Bad} only when the data has been poisoned i.e., $\tilde{A}$ is substantially far from $A$, or $A$ did not have a good robust low-rank approximation to begin with. More crucially, when it does output a projection matrix $\widehat{\Pi}$, it is guaranteed to be a valid robust projection\footnote{ In particular it rules out the scenario where the algorithm finds a solution that it thinks is good (on $\tilde{A}$), but is in fact bad for the unknown, uncorrupted matrix $A$.} for the unknown matrix $A$. 
We remark that the additive error term of $\Omega(\delta \kappa \sqrt{m})$ is unavoidable here information-theoretically; see Proposition~\ref{prop:additiveerror:tight} for an example. 

\anote{should the threshold be based on $\norm{\widehat{A}}$ and not $\norm{A}$?. Mention what to set $\tau$ for the moreover part?} 
\anote{Maybe ditch figure because it's too simple?}
\begin{figure}[htbp]
\begin{center}
\fbox{\parbox{0.98\textwidth}{

{\bf Input:} $\tilde{A}$, the corrupted $n \times m$ data matrix, tolerance parameter $\tau>0$, rank $r$, robustness parameter $\kappa \ge 1$ and norm $q \ge 2$. 
\begin{enumerate}
\item Run the algorithm from Theorem~\ref{thm:worstcase:spectral} on $\tilde{A}$, to obtain a rank-$r$ projection matrix $\widehat{\Pi}$.
\item If the robust low-rank approximation error on $\tilde{A}$, $\spnorm{\tilde{A} - \widehat{\Pi} \tilde{A}} \le \tau$, output $\widehat{\Pi}$.
\item Otherwise output \Bad.
\end{enumerate}
}}
\end{center}
\caption{\label{ALG:training:spectral} Robust rank-$r$ approximations in Spectral norm error under adversarial perturbations in training.}
\end{figure}

 The following is the key lemma that argues that if the projection $\widehat{\Pi}$ gives a small error on $\tilde{A}$, it necessarily gives a low-error on $A$. 
 \begin{lemma}\label{lem:training:spectralnorm}
 Let $\delta \in \R_+$ and $A,B \in \R^{n \times m}$ such that $\norm{A-B}_q \le \delta$. Let $\Pi_1, \Pi_2$ be projection matrices such that $\norm{\Pi_1}_{q \to 2}, \norm{\Pi_2}_{q \to 2} \le \kappa$, and = $\norm{A-\Pi_1 A} \le \eps_1$ and $\norm{B - \Pi_2 B} \le \eps_2$. Then we have that 
for any $\eta \in (0,1)$,
 \begin{align}
     \norm{A - \Pi_2 A} &\le O\Big(1+\tfrac{1}{\eta}\Big) \Big( \eps_1+\eps_2+\sqrt{m} \delta \kappa \Big) + \sqrt{2\eta}\norm{A}, \\
    \text{and } \norm{B - \Pi_1 B} &\le O\Big(1+\tfrac{1}{\eta}\Big) \Big( \eps_1+\eps_2+\sqrt{m} \delta \kappa \Big) + \sqrt{2\eta}\norm{B}.
\end{align}
 \end{lemma}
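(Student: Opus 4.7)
The plan is to bound $\|\Pi_2^\perp A v\|_2$ uniformly over unit $v \in \R^m$ via the orthogonality identity $\|\Pi_2^\perp A v\|_2^2 = \|A v\|_2^2 - \|\Pi_2 A v\|_2^2$. The key enabling observation is that, although the full perturbation $\|A-B\|$ can be arbitrarily large, the robust projection property forces $\|\Pi_i (A-B)\|$ to be small: since $\|(A-B)_j\|_q \le \delta$ for every column, Lemma~\ref{lem:robustproperties} gives $\|\Pi_i (A-B)_j\|_2 \le \kappa \delta$, and summing over $j$ yields $\|\Pi_i(A-B)\|_F^2 \le m \kappa^2 \delta^2$, hence $\|\Pi_i(A-B)\| \le \sqrt{m}\,\kappa \delta$ for $i \in \{1,2\}$.

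The main step is to show, for every unit $v \in \R^m$, that $\|\Pi_2 A v\|_2 \ge \|A v\|_2 - C$ with $C := \eps_1 + \eps_2 + 2\sqrt{m}\,\kappa\delta$. I would chain two comparisons through $\|Bv\|_2$. First, using $\|A - \Pi_1 A\| \le \eps_1$ and the bound on $\|\Pi_1(A-B)\|$,
\[
\|Av\|_2 \le \|\Pi_1 A v\|_2 + \eps_1 \le \|\Pi_1 B v\|_2 + \eps_1 + \sqrt{m}\,\kappa\delta \le \|Bv\|_2 + \eps_1 + \sqrt{m}\,\kappa\delta.
\]
Second, using $\|B - \Pi_2 B\| \le \eps_2$ and the bound on $\|\Pi_2(A-B)\|$,
\[
\|\Pi_2 A v\|_2 \ge \|\Pi_2 B v\|_2 - \sqrt{m}\,\kappa\delta \ge \|Bv\|_2 - \eps_2 - \sqrt{m}\,\kappa\delta.
\]
Subtracting eliminates $\|Bv\|_2$ and delivers the desired one-sided comparison.

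Having secured $\|Av\|_2 - \|\Pi_2 A v\|_2 \le C$, I would factor the orthogonality identity as
\[
\|\Pi_2^\perp A v\|_2^2 = \bigl(\|Av\|_2 - \|\Pi_2 A v\|_2\bigr)\bigl(\|Av\|_2 + \|\Pi_2 A v\|_2\bigr) \le C \cdot 2\|Av\|_2 \le 2C \|A\|,
\]
and take the supremum over $v$ to obtain $\|\Pi_2^\perp A\| \le \sqrt{2C\|A\|}$. A final AM-GM split $\sqrt{2C\|A\|} \le \sqrt{2\eta}\,\|A\| + C/\sqrt{2\eta}$, combined with the elementary inequality $1/\sqrt{2\eta} \le 1 + 1/\eta$, produces the advertised bound. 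The symmetric statement for $\|B - \Pi_1 B\|$ follows by swapping $A \leftrightarrow B$, $\Pi_1 \leftrightarrow \Pi_2$, and $\eps_1 \leftrightarrow \eps_2$ throughout.

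The main obstacle, as flagged in Section~\ref{sec:techniques:training}, is that any direct triangle-inequality route fails: splitting $\|\Pi_2^\perp A v\|_2 \le \|\Pi_2^\perp B v\|_2 + \|\Pi_2^\perp (A-B) v\|_2$ stalls on $\|\Pi_2^\perp (A-B)\|$, which we have no control over (it can be as large as $\|A-B\|$ itself, and the robustness guarantee only constrains $\|\Pi_i (A-B)\|$, not $\|\Pi_i^\perp (A-B)\|$). Circumventing this requires the indirect squared-length route above, which converts the orthogonal residual into a difference $\|Av\|_2 - \|\Pi_2 A v\|_2$ that, crucially, is expressible only through projected differences of $A$ and $B$ and hence falls within the scope of the robustness bound.
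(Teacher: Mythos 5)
Your proof is correct and follows essentially the same route as the paper's: you exploit that only the projected differences are controlled, via the column-wise bound $\|\Pi_i(A-B)\| \le \sqrt{m}\,\kappa\delta$, chain comparisons of $\|Av\|_2$, $\|Bv\|_2$, $\|\Pi_1 Bv\|_2$, $\|\Pi_2 Av\|_2$ through these bounds, and finish with the Pythagorean decomposition of $\|\Pi_2^\perp A v\|_2^2$. The only difference is in the final algebra: where the paper squares the two comparisons and combines them using $(x-c)^2 \ge (1-\eta)x^2 - (1+\tfrac{1}{\eta})c^2$, you keep the comparisons linear, factor the difference of squares, and apply AM--GM, which is a slightly cleaner way to arrive at the same bound.
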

\begin{proof}
The projection matrices $\Pi_1, \Pi_2$ are both robust. For $\ell \in \sset{1,2}$
\begin{align}
    \norm{\Pi_\ell A - \Pi_\ell B}^2 &\le \norm{\Pi_\ell (A-B)}_F^2 = \sum_{j \in [m]} \norm{\Pi_\ell (A_j - B_j)}_2^2 \le m \kappa^2 \delta^2 \nonumber \\
   \text{Hence  } \Big| \norm{\Pi_\ell A} - \norm{\Pi_\ell B} \Big| &\le \sqrt{m} \kappa \delta. \label{eq:closeproj} 
\end{align}

Let $\gamma := \sqrt{m} \delta \kappa$.
We also know that $\norm{A - \Pi_1 A} \le \eps_1$. 
\begin{align}
&    \norm{ A - \Pi_1 B} \le \norm{A - \Pi_1 A} + \norm{\Pi_1 A - \Pi_1 B} \le \eps_1+ \gamma  \nonumber\\
    \text{Hence } \forall v \in \mathbb{S}^{n-1}, ~& \norm{A v - \Pi_1 B v}_2 \le \eps_1 + \gamma, 
    ~~\text{and similarly }  \norm{B v - \Pi_2 A v}_2 \le \eps_2 + \gamma. \label{eq:spectralrobust:1}
\end{align}

But $Bv= \Pi_1 Bv + \Pi_1^{\perp} Bv$. We have for any $\eta \in (0,1)$
\begin{align*}
    \norm{Bv}_2^2 &= \norm{\Pi_1 Bv }_2^2+ \norm{\Pi_1^{\perp} Bv}_2^2 \ge (\norm{Av}_2 - \eps_1 -\gamma)^2 + \norm{\Pi_1^{\perp} Bv}_2^2 \\
    &\ge (1-\eta)\norm{Av}_2^2 - (\eps_1 +\gamma)^2(1+\tfrac{1}{\eta}) + \norm{\Pi_1^{\perp} Bv}_2^2\\
    \text{Similarly, } \norm{Av}_2^2 &\ge (1-\eta)\norm{Bv}_2^2 - (\eps_2 +\gamma)^2(1+\tfrac{1}{\eta}) + \norm{\Pi_2^{\perp} Av}_2^2
\end{align*} 
Combining the two, we get that
\begin{align*}
\norm{Bv}_2^2 &\ge (1-\eta)^2 \norm{Bv}_2^2 - (1+\tfrac{1}{\eta})\Big((\eps_1 +\gamma)^2 + (\eps_1+\gamma)^2\Big) \\
&\qquad + (1-\eta) \norm{\Pi_2^{\perp} Av}_2^2 + \norm{\Pi_1^{\perp} Bv}_2^2 \\
(1-\eta) \norm{\Pi_2^{\perp} Av}_2^2 + \norm{\Pi_1^{\perp} Bv}_2^2 &\le (2\eta-\eta^2) \norm{Bv}_2^2 + (1+\tfrac{1}{\eta})\Big((\eps_1 +\gamma)^2 + (\eps_1+\gamma)^2\Big) \\
\forall v \in \mathbb{S}^{n-1}, ~~ \norm{\Pi_1^{\perp} Bv}_2^2 &\le 2\eta \norm{Bv}_2^2 + (1+\tfrac{1}{\eta})\Big((\eps_1 +\gamma)^2 + (\eps_1+\gamma)^2\Big).\\
\text{Hence, } \norm{B - \Pi_1 B}^2 &\le 2\eta \norm{B}^2 + (1+\tfrac{1}{\eta})(\eps_1 +2\gamma+\eps_2)^2,
\end{align*}
as required. A similar statement also follows for $A$ using a symmetric proof. 
\end{proof}

\begin{proof}[Proof of Theorem~\ref{thm:training:spectralnorm}]
Firstly the algorithm from Theorem~\ref{thm:worstcase:spectral} runs on $\tilde{A}$ and produced a robust projection matrix $\widehat{\Pi}$.
\pnote{11/27 Aravindan: check this.}
The proof consists of two parts. We first argue that if the algorithm outputs any robust rank-$r$ projection matrix, then it has to be robust for $A$. 
Any such $\widehat{\Pi}$ satisfies $\spnorm{\tilde{A}-\widehat{\Pi}\tilde{A}} \le \tau$. Applying Lemma~\ref{lem:training:spectralnorm} with $\eps_2=\tau$ ($B=\tilde{A}$) and $\eps_1=\spnorm{A-\Pi^* A}$, we have
$$ \spnorm{A - \widehat{\Pi} A} \le O\Big(1+\tfrac{1}{\eta}\Big) \Big( \tau+\spnorm{A-\Pi^* A}+\sqrt{m} \delta \kappa \Big) + \sqrt{2\eta}\norm{A}.$$

On the other hand, if the input $\tilde{A}$ is not ``{\sc Bad}'' i.e., (a) for the unknown matrix $A$,  $\spnorm{A-\Pi^* A} \le \tau$, and (b) $\spnorm{A-\tilde{A}} \le \tau$, we now show that the algorithm outputs a good solution for $A$. In this case we have that $\spnorm{\tilde{A} - \Pi^* A} \le 2\tau$; hence, 
$$\spnorm{\tilde{A}- \Pi^* \tilde{A}} \le \spnorm{\tilde{A} - \Pi^*A} + \spnorm{\Pi^*\tilde{A}- \Pi^* A} \le 2 \tau + \sqrt{\sum_{j \in [m]} \norm{\Pi^* A_j - \Pi^* \tilde{A}_j}_2^2} \le 2 \tau + \sqrt{m} \kappa \delta.$$
Hence, by Lemma~\ref{lem:training:spectralnorm} applied with $\eps_1=\tau$ and $\eps_2=$ ($B=\tilde{A}$), we have that   
$$ \spnorm{A - \widehat{\Pi} A} \le O\Big(1+\tfrac{1}{\eta}\Big) \Big( \tau+\sqrt{m} \delta \kappa \Big) + \sqrt{2\eta}\norm{A}.$$
This proves the theorem. The moreover part follows by setting $\tau:=\eps \spnorm{A}$.

\end{proof}

In fact, Lemma~\ref{lem:training:spectralnorm} implies a stronger information-theoretic statement about finding a robust low-rank approximation of the unknown, uncorrupted matrix $A$ with low spectral norm (just like Theorem~\ref{thm:robusttraining:frob} for Frobenius norm error). In fact we get a polynomial time algorithm assuming access to a polynomial time algorithm approximation algorithm for solving the following problem: given a matrix $\tilde{A} \in \R^{n \times m}$, find\footnote{This problem is reminiscent of the concept of $\epsilon$-rank~\citep{AlonVempala}, that corresponds to the smallest rank attainable by changing every entry of the given matrix by at most $\delta$.}
\begin{equation}
    \min_{\substack{B: \norm{B_j - \tilde{A_j}}_q \le \delta} ~ \forall j \in [m]} ~~~\min_{\substack{\Pi: \text{rank}(\Pi)=r,~ \norm{\Pi}_{q \to 2} \le \kappa}} \norm{B - \Pi B}^2,\label{eq:problem:spectral}
\end{equation}
where $\norm{\cdot}$ stands for the spectral norm.
\begin{proposition}\label{prop:training:spectral:stat}
Suppose $q \ge 2$ and $A \in \R^{n \times m}$ is the (unknown) uncorrupted data matrix, and let $\Pi^*$ have the smallest spectral norm error $\spnorm{A-\Pi A}$ among rank-$r$ projections that are robust i.e., $\norm{\Pi}_{q \to 2}\le \kappa$. Suppose further that there is an efficient algorithm for finding an $\alpha$-factor approximation algorithm for \eqref{eq:problem:spectral}.
Then there exists an algorithm that runs in polynomial time, and
given as input any (adversarially perturbed) data matrix $\tilde{A}$ s.t. for each column $j\in [m]$, $\norm{\tilde{A}_j - A_j}_q \le \delta$ and a parameter $\tau>0$, outputs a robust projection matrix $\widehat{\Pi}$ of rank at most $r$ that is near optimal in approximation error for the unknown matrix $A$ i.e., for some small universal constant $c\ge 1$, \begin{equation} \label{eq:it:training:sp}
\forall \eta>0, ~~\norm{(I-\widehat{\Pi}) A}  \le O\Big(1+\tfrac{1}{\eta}\Big) \Big( \alpha \spnorm{A-\Pi^* A}+\sqrt{m} \delta \kappa \big) + \sqrt{2\eta}\norm{A}.  \end{equation}
Moreover, the above bound is achieved information-theoretically by an algorithm (that potentially does not have polynomial running time), by using an inefficient algorithm for problem~\eqref{eq:problem:spectral}.
\end{proposition}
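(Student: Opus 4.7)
The plan is to reduce to Lemma~\ref{lem:training:spectralnorm} by first using the assumed (inefficient or efficient) algorithm for problem~\eqref{eq:problem:spectral} to produce a denoised pair $(B^*, \Pi')$ and then transferring the bound from $B^*$ back to the unknown $A$. First I would observe that the unknown matrix $A$ is feasible for the outer minimization in \eqref{eq:problem:spectral}: indeed, each column satisfies $\norm{A_j - \tilde{A}_j}_q \le \delta$ by hypothesis. Combined with the feasibility of $\Pi^*$ for the inner minimization (it has rank at most $r$ and $\norm{\Pi^*}_{q\to 2}\le \kappa$), this shows that the joint optimum $OPT$ of \eqref{eq:problem:spectral} satisfies $OPT \le \spnorm{A-\Pi^* A}^2$. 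Running the assumed $\alpha$-factor approximation algorithm therefore produces $B^* \in \R^{n\times m}$ with $\norm{B^*_j - \tilde{A}_j}_q\le \delta$ for all $j$, together with a projection $\Pi'$ of rank at most $r$ with $\norm{\Pi'}_{q\to 2}\le \kappa$, satisfying
\[
\spnorm{B^* - \Pi' B^*} \;\le\; \sqrt{\alpha}\,\spnorm{A-\Pi^* A}.
\]

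Next I would output $\widehat{\Pi} := \Pi'$ and bound $\spnorm{A - \widehat{\Pi}A}$ via Lemma~\ref{lem:training:spectralnorm}. By the triangle inequality in $\ell_q$, each column satisfies $\norm{A_j - B^*_j}_q \le 2\delta$. Applying Lemma~\ref{lem:training:spectralnorm} with the pair $(A, B^*)$, the projections $\Pi_1:=\Pi^*$ and $\Pi_2:=\Pi'$, the respective errors $\eps_1 := \spnorm{A-\Pi^* A}$ and $\eps_2 := \sqrt{\alpha}\,\spnorm{A-\Pi^* A}$, and perturbation parameter $2\delta$ in place of $\delta$, yields for every $\eta\in(0,1)$
\[
\spnorm{A - \widehat{\Pi} A} \;\le\; O\!\left(1+\tfrac{1}{\eta}\right)\!\Big(\eps_1 + \eps_2 + 2\sqrt{m}\,\delta \kappa\Big) + \sqrt{2\eta}\,\norm{A}
\;\le\; O\!\left(1+\tfrac{1}{\eta}\right)\!\Big(\alpha\spnorm{A-\Pi^* A} + \sqrt{m}\,\delta \kappa\Big) + \sqrt{2\eta}\,\norm{A},
\]
which is exactly the bound \eqref{eq:it:training:sp}. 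The robustness bound $\norm{\widehat{\Pi}}_{q\to 2}\le \kappa$ is immediate from the inner constraint in \eqref{eq:problem:spectral}. The information-theoretic version is obtained by replacing the approximation oracle by an exhaustive (exponential-time) search over discretizations of the feasible set in \eqref{eq:problem:spectral}, driving $\alpha \to 1$; feasibility of $A$ together with Lemma~\ref{lem:training:spectralnorm} then gives the same bound.

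The only genuine work is the transfer step, and that is already encapsulated in Lemma~\ref{lem:training:spectralnorm}: the subtle point there (which would otherwise be the main obstacle) is that $\Pi'$ is only guaranteed to be a good projector for $B^*$, not for $A$, and $B^*$ and $A$ can differ in spectral norm by as much as $\sqrt{mn}\,\delta$ in the worst case even though they are columnwise $2\delta$-close in $\ell_q$. The robustness $\norm{\Pi'}_{q\to 2}\le \kappa$ is what prevents this aggregate discrepancy from blowing up: projecting onto a robust subspace contracts columnwise $\ell_q$-perturbations to Euclidean perturbations of norm $\kappa\delta$, so the spectral-norm gap between $\Pi' A$ and $\Pi' B^*$ is at most $\sqrt{m}\,\kappa\delta$, which is precisely the additive term appearing in the final bound. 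Thus no new ideas beyond Lemma~\ref{lem:training:spectralnorm} are required, and the proposition reduces to a clean application of that lemma to the output of the approximation algorithm.
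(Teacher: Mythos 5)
Your proposal is correct and follows essentially the same route as the paper: use feasibility of $A$ (with $\Pi^*$) to bound the optimum of \eqref{eq:problem:spectral}, obtain from the $\alpha$-approximate solver a matrix within columnwise $\ell_q$-distance $\delta$ of $\tilde{A}$ (hence $2\delta$ of $A$) admitting a robust rank-$r$ projection of error $O(\alpha)\spnorm{A-\Pi^* A}$, and then transfer to $A$ via Lemma~\ref{lem:training:spectralnorm}. The only cosmetic difference is that you take $\widehat{\Pi}$ directly from the oracle's output, whereas the paper extracts it by running Theorem~\ref{thm:worstcase:spectral} on the returned matrix $A'$ (yielding $\norm{\widehat{\Pi}}_{q\to 2}\le O(\kappa)$), which does not change the argument.
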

We remark that the main difference between the above proposition and Theorem~\ref{thm:training:spectralnorm} is that Proposition~\ref{prop:training:spectral:stat} will always output a good robust projection for $A$ (just like Theorem~\ref{thm:robusttraining:frob} for Frobenius norm error), but the algorithm is not computationally efficient unless \eqref{eq:problem:spectral} can be solved efficiently.
\begin{proof}
Given $\tilde{A}$, the algorithm first runs the $\alpha$-factor approximation algorithm for solving \eqref{eq:problem:spectral} on $\tilde{A}$. The uncorrupted matrix $A$ is itself a feasible solution; hence the solution output by the algorithm $A'$ has a robust low-rank approximation of error $O(\alpha)\norm{A - \Pi^* A}$. Such a robust low-rank projection $\widehat{\Pi}$ for $A'$ i.e., a projection for rank at most $r$ with $\norm{\widehat{\Pi}}_{q \to 2} \le O(\kappa)$ and $\norm{A' - \widehat{\Pi} A'} \le O(\alpha) \norm{A- \Pi A}$ can be found by running Theorem~\ref{thm:worstcase:spectral} on $A'$. Moreover $A'$ and $A$ are valid $2\delta$ adversarial perturbations of each other. Now applying Lemma~\ref{lem:training:spectralnorm} with $A, \Pi^*$ and $A', \widehat{\Pi}$ completes the proof. 
\end{proof}

\subsection{Lower Bound for the Additive Error in Training with Adversarial Perturbations}
\label{sec:app-lower-bound-additive-error}
We now show that the additive terms of $\Omega(m\delta^2 \kappa^2)$ in Theorem~\ref{thm:robusttraining:frob} is unavoidable.

\begin{proposition}\label{prop:additiveerror:tight}
For any data matrix $A$ with the following two properties:
\begin{enumerate}
    \item Each column $\|A_j\|_2 \in [1/10,10]$,
    \item There exists $\Pi^*$ of rank $1$ and $\|\Pi^*\|_{\infty \to 2} \ge \kappa$ (which is at least 2) satisfying $\Pi^* A=A$,
\end{enumerate}
there exists $\delta_0$ (depending on $A$) such that for any $\delta \le \delta_0$ \anote{1/2/2020: unclear what this means. do we mean $\delta<c_0$ for some small enough absolute constant $c_0>0$? }, there exist $A'$ as a $\delta$-perturbation of $A$ (i.e., $\|A-A'\|_{\infty}\le \delta$) and a projection matrix $\Pi'$ of rank 1 satisfying
\begin{enumerate}
    \item $\Pi'$ is robust  $\|\Pi'\|_{\infty \to 2} \le \|\Pi^*\|_{\infty \to 2}$. \anote{1/2/2020: Edited it from robust than $\Pi$. Why is it relevant?} 
    \item We still have $\Pi' A'=A'$ but $\|A - \Pi' A\|_F = \Omega(\delta \kappa \sqrt{m})$. Since $A - A'$ is of rank $2$, this also implies a similar lower bound for the spectral norm.
\end{enumerate}
\end{proposition}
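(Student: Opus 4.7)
Let $v^* \in \R^n$ be the unit vector for which $\Pi^* = v^* (v^*)^\top$; by Lemma~\ref{lem:robustproperties}, $\kappa' := \|v^*\|_1 = \|\Pi^*\|_{\infty \to 2} \ge \kappa \ge 2$. Since $\Pi^* A = A$, every column satisfies $A_j = \alpha_j v^*$ with $|\alpha_j| = \|A_j\|_2 \in [1/10, 10]$. Write $v^*_i = \sigma_i p_i$ on $S := \mathrm{supp}(v^*)$ with $p_i > 0$ and $\sigma_i \in \{\pm 1\}$; Cauchy--Schwarz yields $s := |S| \ge (\kappa')^2 \ge \kappa^2$, while $\|v^*\|_\infty \le \|v^*\|_2 = 1 \le \kappa/2$. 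The plan is to build a nearby unit vector $v'$ with $\|v'\|_1 < \kappa'$ by an \emph{anti-symmetric} perturbation of $v^*$ along $S$: the perturbation will have $\ell_\infty$-size $O(\delta)$ per coordinate but total $\ell_2$-length $\Omega(\delta\sqrt{s}) = \Omega(\delta\kappa)$, which is exactly what forces the projection-error lower bound on the unperturbed $A$.

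\textbf{Construction.} First I partition $S$ into equal-size subsets $S_+ \sqcup S_-$ (discarding at most one index if $s$ is odd) by sorting $p_i$ in non-increasing order and alternately assigning them to $S_\pm$. A telescoping argument gives $|\Delta| \le \max_i p_i \le 1 \le \kappa/2$, where $\Delta := \sum_{S_+} p_i - \sum_{S_-} p_i$, and after possibly swapping labels I may assume $\Delta \ge 0$. Next I set $a := c\delta$ for a small constant $c > 0$ to be chosen, with $\delta$ small enough that $a < \min_{i \in S} p_i$, and define
\[
\tilde v_i := \sigma_i(p_i + a)\text{ for } i \in S_+,\quad \tilde v_i := \sigma_i(p_i - a)\text{ for } i \in S_-,\quad \tilde v_i := v^*_i \text{ otherwise.}
\]
A direct calculation then gives $\|\tilde v\|_1 = \kappa'$, $\|\tilde v\|_2^2 = 1 + 2a\Delta + s'a^2$ with $s' := |S_+|+|S_-| \ge s - 1$, and $\langle \tilde v, v^*\rangle = 1 + a\Delta$. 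Finally I set $v' := \tilde v/\|\tilde v\|_2$, $\Pi' := v'(v')^\top$, and $A'_j := \alpha_j v'$.

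\textbf{Verification.} There are three things to check. \emph{Robustness:} $\|\tilde v\|_2 > 1$ (since $a > 0$ and $\Delta \ge 0$), so Lemma~\ref{lem:robustproperties} gives $\|\Pi'\|_{\infty \to 2} = \|v'\|_1 = \kappa'/\|\tilde v\|_2 < \kappa' = \|\Pi^*\|_{\infty\to 2}$, and $\Pi' A' = A'$ by construction. \emph{Closeness:} $\|\tilde v - v^*\|_\infty = a$, and $\bigl|\|\tilde v\|_2 - 1\bigr| = O(a + sa^2) = O(a)$ once $\delta s$ is bounded by a constant (which is permitted by the ``$\delta$ small enough'' clause), so $\|v' - v^*\|_\infty = O(a)$; picking $c = 1/20$ gives $\|A - A'\|_\infty \le 10\,\|v' - v^*\|_\infty \le \delta$. \emph{Projection error:}
\[
1 - \langle v', v^*\rangle^2 = \frac{\|\tilde v\|_2^2 - \langle \tilde v, v^*\rangle^2}{\|\tilde v\|_2^2} = \frac{a^2(s' - \Delta^2)}{\|\tilde v\|_2^2} = \Omega(a^2 \kappa^2),
\]
since $s' - \Delta^2 \ge \kappa^2 - 2 \ge \kappa^2/2$ (using $\kappa \ge 2$) and $\|\tilde v\|_2 = 1 + o(1)$. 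Summing over columns,
\[
\|A - \Pi' A\|_F^2 = \sum_{j=1}^m \alpha_j^2 \bigl(1 - \langle v', v^*\rangle^2\bigr) \ge \tfrac{m}{100} \cdot \Omega(\delta^2 \kappa^2) = \Omega(m\delta^2 \kappa^2),
\]
yielding $\|A - \Pi' A\|_F = \Omega(\delta\kappa\sqrt{m})$. Since $A$ and $\Pi' A$ are both rank-$1$, the matrix $A - \Pi' A$ has rank at most $2$, so $\|A - \Pi' A\| \ge \|A - \Pi' A\|_F/\sqrt{2}$ gives the matching spectral-norm lower bound.

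\textbf{Where the main difficulty lies.} The delicate case is when $v^*$ is nearly uniform on its support (so $s \approx (\kappa')^2$): a naive multiplicative shrinkage of all its entries preserves the $\ell_1/\ell_2$ ratio exactly and therefore cannot strictly lower $\|\Pi\|_{\infty \to 2}$ at all. This forces the additive anti-symmetric perturbation on a balanced bipartition above, and the non-obvious estimate $|\Delta| \le \max_i p_i \le 1 \ll \sqrt{s}$ (made possible by $\kappa \ge 2$) is precisely what keeps $s' - \Delta^2 = \Omega(\kappa^2)$ and delivers the full $\Omega(\delta\kappa\sqrt{m})$ bound instead of merely $\Omega(\delta\sqrt{m})$.
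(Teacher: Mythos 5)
Your proof is correct and follows essentially the same construction as the paper's: a sign-preserving $\pm\Theta(\delta)$ perturbation on the support of the defining unit vector that keeps the $\ell_1$ norm fixed while strictly increasing the $\ell_2$ norm, followed by normalization; the differences are cosmetic (your alternating bipartition and the identity $1-\langle v',v^*\rangle^2 = a^2(s'-\Delta^2)/\|\tilde v\|_2^2$ replace the paper's sorted top-half/bottom-half split and its coordinate-wise lower bound on the entries of $v^*-\Pi' v^*$). One small remark: the paper takes the columns of $A'$ proportional to the \emph{unnormalized} perturbed vector, which makes $\|A-A'\|_\infty\le 10\delta$ immediate and avoids your extra restriction $\delta s=O(1)$ (and note your constant $c=1/20$ needs to be a bit smaller, e.g.\ $c=1/40$, to land at $\|A-A'\|_\infty\le\delta$); both points are harmless under the ``$\delta$ small enough'' clause.
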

When $A$ is a $k$-sparse flat matrix where every entry is either $0$ or $\Theta(1/\sqrt{k})$,  $\delta_0$ is as large as $\Theta(1/\sqrt{k})$.

\begin{proof}
Let $v$ be the unit eigenvector of $\Pi^*$ such that $\|v\|_1 \ge \kappa$. Without loss of generality, we assume $|v_1| \ge |v_2| \cdots \ge |v_n|$ and $\ell=\lfloor \textrm{supp}(v)/2 \rfloor$. Notice that $\textrm{supp}(v) \ge 2\ell$ by this definition such that $v_{2\ell} \neq 0$. At the same time, because of the Cauchy-Schwartz inequality, we have $\|v\|_1^2 \le \textrm{supp}(v) \cdot \|v\|_2^2$ so that $\ell \ge \kappa^2/2-1$. 

Then we set $\delta_0=|v_{2\ell}|$ and consider any $\delta$ less than it. We perturb $v$ to another ``sparser'' vector $u$ whose coordinate-wise absolute values are given by
\[
\bigg( |v_1| + \delta, \ldots, |v_{\ell}| + \delta , |v_{\ell+1}| - \delta,\ldots, |v_{2\ell}| - \delta,  |v_{2\ell+1}|,\ldots,|v_n| \bigg).
\] However, since $v_i$ may be negative or positive, we define $u$ according to the $\textrm{sign}$ function:
\[u=\bigg( v_1 + \textrm{sign}(v_1) \cdot \delta, \ldots,v_{\ell} + \textrm{sign}(v_{\ell}) \cdot \delta , v_{\ell+1} - \textrm{sign}(v_{\ell+1}) \cdot \delta,\ldots, v_{2\ell} - \textrm{sign}(v_{2\ell}) \cdot \delta,  v_{2\ell+1},\ldots,v_n \bigg).
\]
We have $
\|u\|_1 = \sum_{i=1}^{\ell} |v_i| + \delta + \sum_{i=\ell+1}^{2\ell} |v_i| - \delta + \sum_{i > 2 \ell} v_i=\|v\|_1$ and
\begin{align*}
\|u\|^2_2 & =(|v_1|+\delta)^2 + \cdots + (|v_{\ell}|+\delta)^2 + (|v_{\ell+1}|-\delta)^2 + \cdots + (|v_{2\ell}| - \delta)^2 + v_{2 \ell+1}^2 + \cdots + v_n^2\\
& =\sum_i v_i^2 + 2(\sum_{i=1}^{\ell} |v_i| - \sum_{i=\ell+1}^{2\ell} |v_i|)\delta + 2\ell \delta^2.    
\end{align*}
Since $|v_1| \ge |v_2| \ge \cdots \ge |v_n|$, this is at least $\sum_{i} v_i^2 + 2 \ell \cdot \delta^2 > \|v\|_2^2$.
So let $\overline{u}=u/\|u\|_2$ with unit $\ell_2$ norm and $\Pi'=\overline{u} \cdot  \overline{u}^{\top}$. So $\|\Pi'\|_{\infty \to 2}=\|\overline{u}\|_1 < \|v\|_1 = \|\Pi^*\|_{\infty \to 2}$.

Next we consider $A$. Since $\Pi^* A = A$, we assume $A=[c_1 \cdot v , c_2 \cdot v, \ldots, c_m \cdot v]$ with coefficient $|c_i| \le [1/10,10]$. We set $A'=[c_1 \cdot u,\ldots,c_m \cdot u]$ such that $\|A-A'\|_{\infty} \le 10 \delta$ and $\Pi' A'=A'$.

Finally we lower bound $\|A-\Pi'A\|_F^2$. Notice that 
\[
\langle u, v \rangle = \sum_{i=1}^{\ell} (v_i^2 + |v_i| \delta ) + \sum_{i=\ell+1}^{2\ell} (v_i^2 - |v_i| \delta) + \sum_{i>2\ell} v_i^2 = 1 + (\sum_{i=1}^{\ell} |v_i| - \sum_{i=\ell+1}^{2\ell} |v_i|)\delta.
\]
Thus $\Pi' v = \langle u, v \rangle u / \|u\|_2^2 = \alpha u$ for $\alpha=\frac{1 + (\sum_{i=1}^{\ell} |v_i| - \sum_{i=\ell+1}^{2\ell} |v_i|)\delta}{1 + (\sum_{i=1}^{\ell} |v_i| - \sum_{i=\ell+1}^{2\ell} |v_i|)\delta + 2\ell \delta^2}<1$. So we lower bound the distance between $v - \Pi' v$ by counting the $\ell$ entries from $v_{\ell+1}$ to $v_{2\ell}$:
\[
\sum_{i=\ell+1}^{2 \ell} [v_i - \alpha(v_i - \textrm{sign}(v_i) \delta) ]^2 = \sum_{i=\ell+1}^{2 \ell} [(1-\alpha) |v_i| + \alpha \delta]^2.
\]
Since $\delta \le |v_{2 \ell}|$, each term in the summation is at least $\delta^2$. So $\|A-\Pi'A\|_F =\sqrt{c_1^2 + \cdots + c_m^2} \cdot \delta \cdot \sqrt{\ell} = \Omega(\delta \kappa \sqrt{m})$.
\end{proof}


\section{Robustness to Adversarial Perturbations in Clustering}\label{sec:applications}

\anote{Commented out a whole iffalse portion till line 310. Was very confusing to tell where the section begins :). }

As a concrete application of our guarantees for resilience to adversarial perturbations, we study the problem of clustering under adversarial perturbations. Here our goal is to approximately recover the clusters of a well defined ground truth clustering, as well as good approximations to the cluster centers.
Our main result is to apply the guarantee from Theorem~\ref{thm:training:spectralnorm} to show how to perform clustering of a well-clustered instance when every data point in the instance could be corrupted. Our guarantees will apply to clustering a mixture of well separated Gaussians and more general data distributions. In particular, we will show that a robust modification of the popular Lloyd's algorithm~\citep{lloyd1982least} (also known as the $k$-means algorithm) can be used to perform clustering in our model, thereby providing further evidence towards the widespread applicability of the algorithm. 
 Existing guarantees for using Lloyd's algorithm~\citep{kumar2010clustering, awasthi2012improved} for clustering a mixture of Gaussians and general datasets assume that every pair of means $\mu_i, \mu_j$ are separated by $\sim \sigma \sqrt{k}$, where $\sigma$ is the maximum variance of the dataset around the mean and $k$ is the number of clusters~(see (\ref{eq:spectral-stability}) for the formal condition). 
 In the presence of adversarial perturbations of magnitude $\delta$, even if the optimal clustering (according to unperturbed ground truth) of the perturbed data is provided to us, the best we can hope for is to estimate the cluster means up to an error that goes to zero with $\delta$.

\subsection{Overview of clustering results}


Let $A \in \R^{n \times m}$ be clustered into $k$ clusters of equal sizes with 
means $\mu_1, \mu_2, \dots, \mu_k$. 
Furthermore, let $C \in \R^{n \times m}$ be the matrix of corresponding centers for each column of $A$ and let $\sigma$ be such that $\|A-C\| \leq \sigma \sqrt{m}$. 
Then $A$ satisfies $c$-spectral stability if for each pair of optimal clusters, say, cluster $r$ and $s$ with means $\mu_r$ and $\mu_s$, any point in cluster $r$, when projected onto the line joining $\mu_r$ and $\mu_s$ is closer to $\mu_r$ than $\mu_s$ by an additive amount of $\Delta_{r,s} := c\alpha k \cdot \sigma$. 
Here $\alpha$ is a quantity that captures the signal-to-noise ratio and the relative perturbation magnitude.\footnote{We show that unlike standard clustering, the dependence on $\alpha$ is unavoidable with corruptions even for $k=1$ (mean estimation).}\xnote{4/15: add a period here. Feel strange to put the footnote before the period.} 
When $A$ is a set of $m = \poly(n,k)$ points\anote{4/11: change $1/w_{min}$ to $k$} drawn i.i.d. from a mixture of Gaussians with the variance of each Gaussian being bounded by $\sigma^2$, and with uniform mixture weight $1/k$ each, the separation condition becomes
  $  \Delta_{r,s} = c \alpha k \cdot \text{polylog}(nk) \cdot \sigma$.
Below we denote $\kappa$ to be the robustness, as measured in $\|\cdot \|_{q \to 2}$, of the subspace spanned by the true means $\{\mu_1, \mu_2, \dots, \mu_k\}$. 
\begin{itheorem}
\label{ithm:clustering-application-general}[Robust Clustering]
Fix $q \geq 2$, and let $c_q$ be a constant that depends on $q$. Let $A \in \R^{n \times m}$ satisfy $c$-spectral stability, for $c > 200c_q$. Then given as input a $\delta$-corrupted instance $\tilde{A}$ of $A$, 
there is a Lloyd's style algorithm that either certifies that the dataset is poisoned, i.e, $\|A - \tilde{A}\| = \Omega(\sigma \sqrt{m})$, or recovers each mean $\mu_r$ up to error $O(\alpha \sqrt{k} \sigma )$. 
Using the computed centers to cluster $\tilde{A}$, we obtain a clustering of $\tilde{A}$ such that the corresponding induced clustering on $A$ that misclassifies $O(1/k)$- fraction of the points.
    
In the special case of a mixture of Gaussians with equal mixing weights we recover the means upto error $\tilde{O}(\alpha \sigma)$,
where we hide a $\text{polylog}(m,n)$ factor in the $\tilde{O}$ notation. This implies $O(1/k^2)$-fraction clustering error.
\end{itheorem}

See Theorem~\ref{thm:clustering-application-general} and Theorem~\ref{thm:clustering-application-gaussians} for formal statements that also handle more general cluster sizes and mixing weights. 
Finally, as in Section~\ref{sec:results-training-errors} we can also prove that there is an algorithm (though computationally inefficient) that 
can cluster well-clustered instances up to the claimed error above, without the need for certification. Whether this can be achieved in polynomial time is an open question.

Our analysis proceeds in three stages: a) an initialization stage, b) a center improvement stage, and c) analyzing the robust Lloyd's updates. Each stage poses unique challenges arising from working with $\tilde{A}$ where each data point is potentially corrupted. The standard way to initialize Lloyd's algorithm via PCA\footnote{This initialization is needed for theoretical bounds. In practice, the initial centers are chosen as random data points.} can be arbitrarily bad when every data point is corrupted. Using our algorithm for spectral norm error from Section~\ref{sec:results-training-errors} we instead project the data onto a robust $k$-dimensional subspace $\Pi$ for $\tilde{A}$ with small error, or certify that the dataset has been poisoned. This then lets us compute initial centers that are $O(\alpha k \sigma)$-close to the true means.

In the second stage we improve the initial center estimates by a $\sqrt{k}$ factor. Our main technical contribution here is to establish a stronger version of the statements that appear in~\citet{kumar2010clustering, awasthi2012improved}~(see Lemma~\ref{lem:cluster-purity}). This lemma simultaneously helps us argue about the clustering error, and also the variance of each current cluster around its mean, a quantity crucial to bound in order to analyze the iterative updates later. The first two stages together help us establish the guarantee for general well-clustered instances.

To establish the stronger guarantee for mixtures of Gaussians we first analyze the ``ideal'' iterative updates, as if we had access to the uncorrupted data. This largely follows the analysis in~\citet{kumar2010clustering} and helps us argue that if the current center estimates are $\beta \alpha \sqrt{k} \sigma$ close to the corresponding means (where $\beta < 1$), then in the next step the ideal updates give estimates that are $\frac{\beta}{4} \alpha \sqrt{k} \sigma$ close. A key technical step is to show that when performing ideal updates, the variance of the formed clusters around their means is bounded even though the clusters themselves are impure! Using the bounded variance property, we next analyze the actual updates and use a specialized robust mean estimation procedure~(Lemma~\ref{thm:robust-mean}) to 
get an estimate that is
within $\tilde{O}(\alpha \sigma) + \frac{\beta}{2} \alpha \sqrt{k} \sigma$ of the true mean $\mu_r$. Hence, the updates will keep improving until the unavoidable error of $\tilde{O}(\alpha \sigma)$.


\paragraph{Guarantees for mean estimation.} For purpose of recovering the means we will crucially rely on a subroutine to robustly estimate the mean of a cluster. We sketch below this procedure and state the associated guarantee.

\begin{figure}[htbp]
\begin{center}
\fbox{\parbox{0.98\textwidth}{
{\bf \RMean($\tilde{A}, \kappa, \sigma^2$)}

{\bf Input:} The corrupted data matrix $\tilde{A} \in \R^{n \times m}$  with columns $\tilde{A}_i$ for $i \in [m]$, and the upper bound $\kappa$ on the robustness of the subspace $\|\Pi^*\|_{q \to 2}$.
\begin{enumerate}
\item Run the algorithm from Figure~\ref{ALG:training:spectral} with $\tau = 2\sigma \sqrt{m}, r=1$, and $\kappa$.
\item If the algorithm outputs \Bad then terminate and certify that the data has been poisoned. Otherwise let $\Pi$ be the $1$-dimensional subspace output by the algorithm. Return $\hat{\mu} = \Mean(\Pi \tilde{A})$.
\end{enumerate}
}}
\end{center}
\caption{\label{ALG:Robust-Mean} Robust Mean Estimation.}
\end{figure}
\begin{lemma}
\label{thm:robust-mean}
Let $A$ be an $n \times m$ matrix representing $m$ data points in $n$ dimensions and let $\mu$ be a vector such that $\norm{\Mean(A)-\mu}_2 \leq \eta$. Let $C$ be the $n \times m$ matrix with each column being $\mu$. Let $\Pi^* = \mu \mu^\top/\|\mu\|^2$ be the one dimensional subspace denoting the projection onto $\mu$ and assume that $\|\Pi^*\|_{q \to 2} \leq \kappa$, for some $q \geq 2$. Let $\tilde{A}$ be the given input such that for every column $j \in [m]$ we have $\|A_j-\tilde{A}_j\|_q \leq \delta$. Furthermore, let $\sigma^2 >0$ be a given upper bound on the variance of the data around $\mu$, i.e., $\|A-C\| \leq \sigma \sqrt{m}$. Then
the algorithm from Figure~\ref{ALG:Robust-Mean} when run on $\tilde{A}$, runs in polynomial time, and either certifies that the data has been poisoned, i.e., $\|\tilde{A} - A\| = \Omega(\sigma \sqrt{m})$, or outputs an estimate $\hat{\mu}$ of the true mean $\mu$ such that
\begin{align*}
    \|\hat{\mu} - {\mu}\|_2 &\leq \eta + O(c_q)(1+\frac{\kappa \delta}{\sigma}) \max \Big(\sigma, \sqrt{\sigma \|\mu\|} \Big),
\end{align*}
where $c_q$ is a constant that depends on $q$. In particular, the above implies a relative error guarantee of
\begin{align*}
    \frac{\|\hat{\mu} - {\mu}\|_2}{\|\mu\|} &\leq \eta + O(c_q)(1+\frac{\kappa \delta}{\sigma}) \max \Big(\frac{\sigma}{\|\mu\|}, \sqrt{\frac{\sigma}{\|\mu\|}} \Big).
\end{align*}
\end{lemma}
We provide the proof of the above lemma in Appendix~\ref{app:robust-mean}. In the appendix we also provide a matching lower bound stating that in general the above bound on estimation error cannot be improved. See also subsequent work \citep{ACV20coltsub} for other algorithms for robust mean estimation, without the need for certification. However, the associated guarantees in \citet{ACV20coltsub} are incomparable and typically have a multiplicative dependence on $\eta$ which is not desirable for the clustering application. In the above lemma we make use of the fact that the data has a small projection onto $\Pi$ to get a stronger additive guarantee, or certify that the data has been poisoned.

\paragraph{Guarantees for $k$-means clustering.}
 
From the above discussion, in the context of clustering, even if one is given the original optimal clustering of the given perturbed dataset, we must incur a loss of $\Omega(\sigma \cdot \max(1, \sqrt{\|\mu\|/\sigma}))$ in simply estimating the true mean $\mu$ of a cluster. This suggests a separation condition of the type $\sim \alpha \sigma \sqrt{k}$, where $\alpha$ depends on $(1+{\frac{\mu_{\max}}{\sigma}})$ and the guarantee to aim for is to estimate means upto error $O(\alpha \sigma)$ error. Here $\mu_{\max}$ is the maximum $\ell_2$ norm of the any of the $k$ mean vectors. In this section we will show that a modified Lloyd's combined with our certification procedure can indeed achieve this guarantee or certify that the dataset has been poisoned.

More formally, we will assume that there is a set of $m$ points in $\mathbb{R}^n$ with ground truth clustering $C^*_1, C^*_2, \dots, C^*_k$, and means $\mu_r = \Mean(C^*_r)$ for $r \in [k]$ and $\mu_{\max} = \max_r \|\mu_r\|$. Let $A$ be the $n \times m$ data matrix and $C$ be the matrix of corresponding centers. We will assume that we have an upper bound $\sigma^2$ on the maximum variance of the data points around their mean, i.e. $\|A-C\|^2 \leq \sigma^2 m$ and define $\alpha = (1+\frac{\kappa \delta}{\sigma})(1+{\frac{\mu_{\max}}{\sigma}})^{2/3}$. We will enforce the spectral stability condition studied in~\citet{kumar2010clustering} on our clustering instance. This condition implies that for each pair of clusters $C^*_r,C^*_s$ with means $\mu_r, \mu_s$ and each point $x \in C^*_r$, $\bar{x}$ is closer to $\mu_r$ than to $\mu_s$ by a margin $\Delta_{r,s}$. Here $\bar{x}$ is the projection of $x$ onto the line joining $\mu_r$ and $\mu_s$. For a constant $c > 0$, the $c$-spectral stability condition requires that for each $r \neq s$,
\begin{align}
    \label{eq:spectral-stability}
    \Delta_{r,s} \geq c \alpha \sigma \sqrt{k} { }\Big(\frac{\sqrt{m}}{\sqrt{|C^*_r|}}+\frac{\sqrt{m}}{\sqrt{|C^*_s|}}\Big)
\end{align}
Notice that the above also implies that every pair of means are separated i.e.,
$$
\|\mu_r - \mu_s\| \geq c \alpha \sigma \sqrt{k} { }\Big(\frac{\sqrt{m}}{\sqrt{|C^*_r|}}+\frac{\sqrt{m}}{\sqrt{|C^*_s|}}\Big).
$$
It is worth mentioning that in the typical analysis of Lloyd's algorithm~\citep{kumar2010clustering, awasthi2012improved} the dependence on $\alpha$ in the separation condition is not needed. However, as discussed before, in our noise model, some dependence on $\alpha$ is unavoidable to get a meaningful clustering guarantee. 

\noindent \textbf{Assumptions I:} Fix $q \geq 2$. We will assume that we are given access to $\tilde{A}$ such that for every $j \in [m]$, $\|A_j-\tilde{A}_j\|_q \leq \delta$. Furthermore, define $\kappa$ to be the robustness, of the subspace spanned by the means $\mu_1, \dots, \mu_k$. Formally, let $\Pi_C$ be the projection matrix for the orthogonal projection onto the space of the means. Then $\kappa$ is such that $\|\Pi_C\|_{q \to 2} \leq \kappa$.
Under Assumptions I, we prove the following theorem that applies to any stable dataset as defined in (\ref{eq:spectral-stability}).
\begin{theorem}
\label{thm:clustering-application-general}
Fix $q \geq 2$, and let $c_q$ be a constant that depends on $q$. Let $A$ be a dataset that satisfies $c$-spectral stability for $c > 200c_q$. Under Assumptions I, there is a Lloyd's style algorithm that takes $\tilde{A}$ as input,  runs in polynomial time, and either certifies that the dataset is poisoned, i.e, $\|A - \tilde{A}\| = \Omega(\sigma \sqrt{m})$, or outputs a clustering $\hat{C}_1, \hat{C_2}, \dots, \hat{C_k}$ and means $\hat{\mu_1}, \hat{\mu_2}, \dots, \hat{\mu_k}$ such that
\begin{align*}
    \sum_{r=1}^k |C^*_r \triangle \hat{C}_{\pi(r)}| \leq O \Big(\frac{c^2_q m}{k \alpha^2 c^2} \Big)\\
    \|\mu_r - \hat{\mu}_{\pi(r)}\| \leq c_q \alpha \frac{\sigma \sqrt{m}}{\sqrt{|C^*_r|}}.
\end{align*}
for an appropriately chosen bijection $\pi: [k] \to [k]$. 
\end{theorem}
While the above theorem works for any data set that satisfies spectral stability, notice that it leads to a sub optimal mean estimation error of $O(\alpha \sigma \sqrt{m}/\sqrt{|C^*_r|})$ for each cluster $r$. For example, when the clusters are balanced, this will lead to a guarantee of $O(\alpha \sigma) \sqrt{k}$. Next, we show that for data sets that additionally satisfy Gaussian type concentration, we can indeed get $O(\alpha \sigma)$ estimation error even when each data point is corrupted. 

\noindent \textbf{Assumptions II:} Let $A$ be a given dataset with optimal clustering $C^*_1, C^*_2, \dots, C^*_k$. We will assume that we are given $\tilde{A}$ that satisfies Assumptions I.
Furthermore, we will assume that $\|C^*_r\| \geq n^3$ for each $r \in [k]$ and that for any subset $S \subset C^*_r$ of points such that $|S| > n \log n$, we have that $\|A_S - C_S\| \leq \sigma \sqrt{|S|}\cdot \poly\log(m,n)$. Here $A_S, C_S$ are the matrices $A$ and $C$ restricted to the columns of the points in $S$. Additionally, we require a pointwise guarantee that for each $r \in [k]$, and $A_i \in C^*_r$, $\|A_i - \mu_r\|^2 \leq 2\sigma^2 n \cdot \poly\log(m,n)$.
It is easy to see that $m \geq \poly(m, 1/w_{\min})$ samples generated from a mixture of Gaussians with maximum variance $\sigma^2$ and minimum mixture weight $w_{\min}$ will, with high probability, satisfy the above assumptions.   
Under Assumptions II, we prove the following theorem that applies to any stable dataset as defined in (\ref{eq:spectral-stability}).
\begin{theorem}
\label{thm:clustering-application-special}
Fix $q \geq 2$, and let $c_q$ be a constant that depends on $q$. Let $A$ be a dataset that satisfies $c$-spectral stability for $c > 200c_q$. Under Assumptions II, there is a Lloyd's style algorithm that takes $\tilde{A}$ as input, runs in polynomial time, and either certifies that the dataset is poisoned, i.e, $\|A - \tilde{A}\| = \Omega(\sigma \sqrt{m})$, or outputs a clustering $\hat{C}_1, \hat{C_2}, \dots, \hat{C_k}$ and means $\hat{\mu_1}, \hat{\mu_2}, \dots, \hat{\mu_k}$ such that
\begin{align*}
    \sum_{r=1}^k |C^*_r \triangle \hat{C}_{\pi(r)}| \leq O \Big(\frac{c^2_q m}{k^2 \alpha^2  c^2}\Big)\\
    \|\mu_r - \hat{\mu}_{\pi(r)}\| \leq \tilde{O}(\alpha \sigma).
\end{align*}
for an appropriately chosen bijection $\pi: [k] \to [k]$, where we hide a polylogarithmic~(in $n,m$)  factor in the $\tilde{O}$ notation. 
\end{theorem}
As a corollary we get the following statement about robustly clustering a mixture of Gaussians.
\pnote{change $\|\mu_{\max}\|$ to $\mu_{\max}$ everywhere.}
\pnote{Done.}
\begin{theorem}
\label{thm:clustering-application-gaussians}
Fix $q \geq 2$, and let $c_q$ be a constant that depends on $q$. Define $\mathcal{M}$ to be a distribution that is a mixture of $k$ Gaussians, i.e., $\mathcal{M} := \sum_{r=1}^k w_r \mathcal{N}(\mu_r, \Sigma_r)$. Furthermore, let $\Sigma_r \preceq \sigma^2 I$ and define $w_{\min} = \min_r w_r$ and $\mu_{\max} = \max_r \|\mu_r\|$, $\alpha = (1+\frac{\kappa \delta}{\sigma})(1+\frac{\mu_{\max}}{\sigma})^{2/3}$. Let $A$ be a set $\poly(n, 1/w_{\min})$ samples generated i.i.d. from the mixture. If the mixture if well separated, i.e, $\|\mu_r - \mu_s\| \geq c \alpha \sigma \sqrt{k} \cdot \poly\log(n/w_{min})/\sqrt{w_{\min}}$ for $c > 200 c_q$, and the means span a $\kappa$ robust subspace, then given access to $\tilde{A}$ such that $\|\tilde{A}_j - A_j\|_q \leq \delta$,
there is a Lloyd's style algorithm that, runs in polynomial time, and either certifies that the data is poisoned, i.e., $\|A - \tilde{A}\| \geq \Omega(\sigma \sqrt{m})$, or outputs a clustering $\hat{C}_1, \hat{C_2}, \dots, \hat{C_k}$ and means $\hat{\mu_1}, \hat{\mu_2}, \dots, \hat{\mu_k}$ such that
\begin{align*}
    \sum_{r=1}^k |C^*_r \triangle \hat{C}_{\pi(r)}| \leq O \Big(\frac{c^2_q m}{k^2 \alpha^2 c^2}\Big)\\
    \|\mu_r - \hat{\mu}_{\pi(r)}\| \leq \tilde{O}(\alpha \sigma).
\end{align*}
for an appropriately chosen bijection $\pi: [k] \to [k]$. 
\end{theorem}

\noindent \textbf{Computing Good Initial Centers.} The first step in establishing the above theorems is to compute centers/means that are close to the true ones. A common approach for this step is to use PCA to project the data onto the top-$k$ subspace of the input data matrix, and run any constant factor approximation algorithm for $k$-means clustering~\citep{kumar2010clustering}. However this can be arbitrarily bad if the data is corrupted as in our model. We instead show that by projecting the data onto a robust subspace as output by our guarantee from Theorem~\ref{thm:training:spectralnorm} and then using a $k$-means approximation algorithm, we do indeed recover good centers. Our initialization algorithm is shown in Figure~\ref{ALG:Initialization}. We next provide proofs for our claims. 
\begin{figure}[htbp]
\begin{center}
\fbox{\parbox{0.98\textwidth}{
{\bf Input:} The corrupted data matrix $\tilde{A} \in \R^{n \times m}$  with columns $\tilde{A}_i$ for $i \in [m]$, upper bound $\kappa$ on the robustness of the subspace $\|\Pi_C\|_{q \to 2}$, upper bound $\sigma^2$ on the data variance $\|A-C\|^2/m$.
\begin{enumerate}
\item Run the algorithm from Figure~\ref{ALG:training:spectral} with $\tau = 2\sigma \sqrt{m}, r=k$, and $\kappa$.
\item If the algorithm outputs \Bad then certify that the data has been poisoned and terminate.
\item If the algorithm outputs a projection matrix $\Pi$ then project $\tilde{A}$ onto $\Pi$. Denote $\hat{A} = \Pi \tilde{A}$ as the projected matrix.
\item Run a $10$-approximation algorithm for $k$-means clustering on $\hat{A}$, and obtain $k$ centers $\nu_1, \nu_2, \ldots, \nu_k$.
\item Output $\nu_1, \nu_2, \dots, \nu_k$.
\end{enumerate}
}}
\end{center}
\caption{\label{ALG:Initialization} Computing initial center estimates.}
\end{figure}

\anote{Should we write down the robust mean estimation part separately?}
\pnote{Done.}
\begin{theorem}
\label{thm:lloyds-initialization-guarantee}
Assume that the clustering instance $A$ is $c$-stable for $c > 200 c_q$. If Assumptions I hold, then the algorithm in Figure~\ref{ALG:Initialization} runs in polynomial time, and either certifies that the data has been poisoned, i.e., $\|A-\tilde{A}\| > 2 \sigma \sqrt{m}$, or the algorithm outputs centers $\nu_1, \nu_2, \dots, \nu_k$ such that for all $r \in [k]$,
\begin{align*}
    \|\mu_r - \nu_{\pi(r)}\| \leq 30 c_q \alpha \sqrt{k}\frac{\sigma \sqrt{m}}{\sqrt{|C^*_r|}}.
\end{align*}
for an appropriately chosen bijection $\pi$.
\end{theorem}
\begin{proof}
The proof will follow the general outline of Lemma 5.1 of~\citet{kumar2010clustering}, except that we need to argue following two stronger conditions. Firstly, we need to establish that $\tilde{A}$ projected on to $\Pi$ has cost comparable to that of $k \sigma^2 m$. This will ensure that the approximation algorithm will output a clustering of low cost. Secondly, we also simultaneously need to establish that $\tilde{A}$ when projected on to $\Pi$ has low cost clustering when true means $C$ are used to cluster it. 
Together with the fact that $\tilde{A}$ and $A$ are pointwise close in the projected space, we can then claim that missing out on a good approximation for even a single cluster center of $C^*$ will incur a cost of $\Omega(k \sigma^2 m)$, thereby contradicting the approximation guarantee of the $k$-means algorithm used in step 2. 

Establishing that $\Pi \tilde{A}$ has low cost with respect to $C$ boils down to showing that $\Pi$ is good for $A$ given that it is good for $\tilde{A}$, a perturbation of $A$. This statement, established in Theorem~\ref{thm:training:spectralnorm} is the key in analyzing the initialization phase, and is a generalization of Lemma~\ref{lem:spectral-norm-closeness-of-subspaces} to higher dimensional subspaces. 
Let's first establish that $\tilde{A}$ projected on to $\Pi$ has a low cost clustering. 
We have
\begin{align}
    \|\Pi \tilde{A} - \Pi C\|_F &\leq \sqrt{3k}\|\Pi \tilde{A} - \Pi C\| \,\, (\text{since both $\Pi \tilde{A}$ and $\Pi C$ have rank at most $k$}) \nonumber \\
    &\leq \sqrt{3k}\Big(\|\Pi (\tilde{A}-A)\| + \|\Pi (A-C)\| \Big)\nonumber \\
    &\leq \sqrt{3k}(c_q \delta \kappa \sqrt{m} + \|A-C\|) \nonumber \\
    &\leq 2\sqrt{3}c_q \sqrt{k}\sigma \sqrt{m} \leq c_q(1+\frac{\kappa \delta}{\sigma}) \sqrt{12k} \sigma \sqrt{m}. \label{eq:initialization-proof-part-1}
\end{align}
Here the third inequality follows from the fact that for any $n \times m$ matrix $M$, $\|M\| \leq \ell_{\max} \sqrt{m}$, where $\ell_{\max}$ is the maximum $\ell_2$ norm of a column of $M$. Furthermore, from the robustness of $\Pi$ we know that for any $j \in [m]$,
$$
\|\Pi(\tilde{A}_j-A_j)\| \leq c_q \kappa \delta. 
$$
Next, let's establish that $\|A-\Pi A\|$ is small. By triangle inequality we know that
\begin{align*}
    \|A- \Pi_C A\| &\leq \|A-C\| + \|C-\Pi_C A\|\\
    &= \|A-C\| + \|\Pi_C(C-A)\|\\
    &\leq 2\|A-C\| \leq 2\sigma \sqrt{m}.
\end{align*}
Furthermore, from the guarantee of Theorem~\ref{thm:training:spectralnorm} we have that for any $\eta \in (0,1)$,
\begin{align*}
    \|A-\Pi A\| &\leq O(1+\frac{1}{\eta}) \Big(2\sigma \sqrt{m} + \|A-\Pi_C A\| + \kappa \delta \sqrt{m} \Big) + \sqrt{2\eta}\|A\|\\
    &\leq O(1+\frac{1}{\eta}) \Big(4\sigma + \kappa \delta \Big)\sqrt{m} + \sqrt{2\eta}\|A\|.
\end{align*}
Setting $\eta = (5\sigma \sqrt{m}/\|A\|)^{2/3}$ we get that
\begin{align}
    \|A - \Pi A\| &\leq 4 c_q(1+\frac{\kappa \delta}{\sigma}) \sigma\Big(1+ \Big(\frac{\|A\|}{\sigma \sqrt{m}} \Big)^{2/3} \Big)\sqrt{m} \nonumber \\
    &\leq {4 c_q \alpha \sigma \sqrt{m}} \label{eq:initialization-proof-part-2}.
\end{align}
The last inequality above follows from the fact that 
\begin{align*}
    \|A\| &\leq \|A-C\| + \|C\|\\
    &\leq \sigma \sqrt{m} + \mu_{\max} \sqrt{m}.
\end{align*}
Next notice that
\begin{align}
    \|\Pi \tilde{A}-C\|_F &\leq \sqrt{3k}\|\Pi \tilde{A}-C\| \,\, (\text{since both $\Pi \tilde{A}$ and $C$ have rank at most $k$}) \nonumber\\
    &\leq \sqrt{3k}\Big(\|\Pi(\tilde{A}-A)\| + \|\Pi A-A\| + \|A-C\|)\nonumber \\
    &\leq \sqrt{3k}(c_q \delta \kappa \sqrt{m} + 5{c_q \alpha \sigma \sqrt{m}} + \sigma \sqrt{m})\nonumber \\
    &\leq 6\sqrt{3k}c_q \alpha \sigma \sqrt{m} \label{eq:bound-tildeA-pi-from-C}.
\end{align}
Now we are ready to establish the claim of the theorem. From~(\ref{eq:initialization-proof-part-1}) we get that the centers $\nu_1, \nu_2, \dots, \nu_k$ will have $k$-means cost at most $120k (1+\frac{\kappa \delta}{\sigma})^2 c^2_q \sigma^2 m$ on $\tilde{A}$. Furthermore, suppose that there exists a center $\mu_r$ such that every $\nu_s$ is far from it. For any point $A_i$, let $\nu_{c(i)}$ be the center in the set $\{\nu_1, \nu_2, \dots, \nu_k\}$ that is closest to the projection of $\tilde{A}_i$ on to $\Pi$. Then we have that
\begin{align}
120kc^2_q (1+\frac{\kappa \delta}{\sigma})^2 \sigma^2 m \geq   \sum_{A_i \in C_r} \|\Pi \tilde{A}_i - \nu_{c(i)}\|^2 &= \sum_{A_i \in C_r} \|\Pi \tilde{A}_i-\mu_r +\mu_r - \nu_{c(i)}\|^2 \nonumber\\
    &\geq \frac 1 2 |C_r|\|\mu_r - \nu_{c(i)}\|^2 - \sum_{A_i \in C_r}\|\Pi \tilde{A}_i-\mu_r\|^2 \nonumber\\
    &\geq \frac 1 2 |C_r|\|\mu_r - \nu_{c(i)}\|^2 - \|\Pi \tilde{A}-C\|^2_F \nonumber\\
    &\geq 450k \alpha^2 c^2_q \sigma^2 m - \|\Pi \tilde{A}-C\|^2_F \nonumber\\
    &> 120 k c^2_q \alpha^2 \sigma^2 m.
    \label{eq:initialization-part-3}
\end{align}
Noticing that $\alpha \geq  (1+\frac{\kappa \delta}{\sigma})$, we get a contradiction to the fact that $\mu_r$ is far from every $\nu_s$. This combined with the fact that the clustering instance is $c$-stable for $c>200c_q$ implies that one can find a bijection $\pi:[k] \mapsto [k]$ between $\{\mu_1, \dots, \mu_k\}$ and $\{\nu_1, \dots, \nu_k\}$ such that each $\mu_i$ is close to a unique $\nu_{\pi(i)}$.
\end{proof}
\subsection{Analyzing Lloyd's Updates}
Next we will use the obtained initial centers and run the robust Lloyd's algorithm starting with these centers as shown in Figure~\ref{ALG:Lloyds}. Our goal in this section is to analyze the updates and establish Theorem~\ref{thm:clustering-application-general} and Theorem~\ref{thm:clustering-application-special}. 

\begin{figure}[htbp]
\begin{center}
\fbox{\parbox{0.98\textwidth}{
{\bf Input:} The corrupted data matrix $\tilde{A} \in \R^{n \times m}$  with columns $\tilde{A}_i$ for $i \in [m]$, upper bound $\kappa$ on the robustness of the subspace $\|\Pi_C\|_{q \to 2}$, upper bound $\sigma^2$ on the data variance $\|A-C\|^2/m$.
\begin{enumerate}
\item Let $\Pi$ be the robust $k$ dimensional projection matrix as computed by the initialization algorithm in Figure~\ref{ALG:Initialization}. 
\item 
Define initial center estimates $\nu^{(0)}_1, \nu^{(0)}_2, \dots, \nu^{(0)}_k$ to be the centers output by the algorithm in Figure~\ref{ALG:Initialization}.
\item For each $r \in [k]$, define $S_r = \{\Pi\tilde{A}_i: \|\Pi \tilde{A}_i - \nu_r\| \leq \frac{\|\Pi \tilde{A}_i - \nu_s\|}{3}, \forall s \neq r\}$. Update $\nu^{(1)}_r = \Mean(S_r)$.
\item For $t=2, 3, \dots$ do:
\begin{enumerate}
\item For each $r \in [k]$, compute $S_r = \{\tilde{A}_i: \|\Pi \tilde{A}_i - \nu^{(t-1)}_r\| \leq \|\Pi \tilde{A}_i - \nu^{(t-1)}_s\|, \forall s \neq r \}$.
\item For each $r \in [k]$, update $\nu^{(t)}_r = \RMean(S_r, \kappa, 4\sigma)$ \small{//If {\RMean} outputs \Bad then certify that data is poisoned.} 
\end{enumerate}
\end{enumerate}
}}
\end{center}
\caption{\label{ALG:Lloyds} Iterative Updates of the Lloyd's Algorithm.}
\end{figure}

\noindent \textbf{Overview of Analysis and Challenges.} 
Our analysis of the modified Lloyd's updates proceeds in two stages: a) a center improvement step, and b) analyzing robust Lloyd's updates. In (a), we first improve the initial center estimates obtained form the initialization phase to get estimates $\nu^{(1)}_1, \dots, \nu^{(1)}_k$ such that each $\nu^{(1)}_r$ is $\sim \Delta_r/2$-close to the corresponding $\mu_r$, where $\Delta_r = 40c_q \alpha \sigma \sqrt{m}/\sqrt{|C^*_r|}$. In other words, we get a factor $\sqrt{k}$ improvement over the initial estimates. This is sketched in step $3$ of the algorithm in Figure~\ref{ALG:Lloyds}. 
First, we motivate the need for this intermediate step, since it is not necessary in the analysis of Lloyd's algorithm for uncorrupted data. 

Just as in standard analysis of Lloyd's updates, we would like to argue that if we have non-trivial estimates of the centers, as obtained from the initialization stage, forming clusters using these points and moving to the means of these clusters will improve the center estimates. To argue this we will crucially rely on the fact that when projected onto $\Pi$, $A$ and $\tilde{A}$ are close pointwise. Hence, we can come up with a charging argument to assign mistakes made by the current centers on the uncorrupted points to the mistakes made by the centers on the corrupted points. We can then bound the number of such mistakes by observing that on $\Pi \tilde{A}$, the true means have a small $k$-means cost. This forces us to work in the projected space $\Pi$, but as a result inherently limits the accuracy to which we can obtain center estimates. Notice that if the initialization algorithm outputs $\Pi$, then $\Pi$ is guaranteed to be good overall for $\tilde{A}$, in the sense that $\|\tilde{A} - \Pi \tilde{A}\| = O(\sigma \sqrt{m})$. However, $\Pi$ has no per cluster guarantee, and in general $\|\tilde{A}_r - \Pi \tilde{A}_r\|$ when restricted to a cluster $C^*_r$ could be as large as $\sigma \sqrt{m}/\sqrt{|C^*_r|}$. Hence, to achieve our goal of estimating the centers upto $\tilde{O}(\alpha \sigma)$ accuracy, we also need to work outside of the projection $\Pi$ at the same time. Due to these conflicting demands, notice that the Lloyd's updates we analyze in step $4$ of the algorithm in Figure~\ref{ALG:Lloyds} perform clustering using current centers in the projected space, but perform robust mean estimation on the original input data. 

Furthermore, from our guarantee on robust mean estimation in Theorem~\ref{thm:robust-mean}, we know that in the {\RMean} step of the algorithm the centers will be accurate upto $\sim \alpha \sigma_S$, where $\sigma_S$ is the standard deviation of the uncorrupted data points in $S_r$ around the uncorrupted mean of $S_r$. As a result we need a stronger argument that not only shows that we have low clustering error given the current estimates, but also helps us argue about the variance of the formed clusters $S_r$ at each step. Such an argument~(Lemma~\ref{lem:cluster-purity}) is a main technical contribution in the analysis. 

Unfortunately, the argument~(Lemma~\ref{lem:cluster-purity}) only kicks in when we have much better center estimates than the one provided by the initialization stage, thereby requiring an additional center improvement stage.
To argue about the center improvement stage, we use a trick from~\citet{awasthi2012improved} and form sets $S_r$ that correspond to points in $\Pi \tilde{A}$ that are significantly close to one of the centers $\nu^{(0)}_r$ than any other center $\nu^{(0)}_s$. Notice that these sets do not form a partitioning of the data. We then argue that any mistake made by this assignment must have also been made if one had used the true centers $\mu_1, \dots, \mu_k$, to cluster $\Pi \tilde{A}$. Using the fact that the true means have small $k$-means cost on $\Pi \tilde{A}$ we can bound the number of such mistakes and hence get sets $S_r$ that have low error, thereby helping us show that the means of these sets will be much closer to the true centers. This is established in Theorem~\ref{thm:center-closeness}. The above arguments help us establish Theorem~\ref{thm:clustering-application-general}. 
We next state the key technical lemma for our analysis.
\begin{lemma}
\label{lem:cluster-purity}
Let $\Pi$ be the robust subspace computed in step (1) of the algorithm in Figure~\ref{ALG:Lloyds}. For each cluster $C^*_r$ in the optimal clustering of $A$, define $\Delta_r = 40c_q \alpha \sigma \sqrt{m}/\sqrt{|C^*_r|}$. Suppose we have center estimates $\{\nu_1, \nu_2, \dots, \nu_k\}$ such that $\|\nu_r - \mu_r\| \leq \beta \Delta_r$, $\forall r \in [k]$, and some $\beta < 1$. 
When using $\nu_i$s to cluster $\Pi \tilde{A}$, define $T_{r,s}$ to be the set of points that are misclassified, w.r.t. the induced clustering on $A$, i.e., $T_{s \to r} = \{i: A_i \in C^*_r \text{ and } \|\Pi \tilde{A}_i - \nu_s\| \leq \|\Pi \tilde{A}_i - \nu_r\| \}$. There exists a constant $c_1 > 0$ depending on $q$ such that if the clustering instance is $c$-stable for $c > 200c_q$ then we have that $|T_{s \to r}| \leq \frac{c_1 \beta^2 \sigma^2 m}{k c^2 \|\mu_r - \mu_s\|^2}$. 
\end{lemma}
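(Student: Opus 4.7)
The plan is to fix indices $r \neq s \in [k]$ and analyze the projection of misclassified points onto the direction $u := (\mu_s - \mu_r)/\|\mu_s - \mu_r\|$. The argument follows the Kumar--Kannan / Awasthi--Sheffet style spectral analysis of Lloyd's updates, adapted to handle the $\ell_q$-perturbed matrix $\tilde{A}$ through the robust subspace $\Pi$.

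First, I translate the misclassification condition $\|\Pi\tilde{A}_i - \nu_s\| \leq \|\Pi\tilde{A}_i - \nu_r\|$ into a lower bound on the signed coordinate $\iprod{\Pi\tilde{A}_i - \mu_r, u}$. Writing $\nu_\ell = \mu_\ell + \epsilon_\ell$ with $\|\epsilon_\ell\| \leq \beta\Delta_\ell$ and rearranging the bisector inequality $\iprod{\Pi\tilde{A}_i - (\nu_r+\nu_s)/2,\, \nu_s-\nu_r}\geq 0$, a direct perturbation calculation yields
\[
\iprod{\Pi\tilde{A}_i - \mu_r,\, u} \;\geq\; \tfrac12\|\mu_s - \mu_r\| \;-\; O\bigl(\beta(\Delta_r + \Delta_s)\bigr).
\]
Second, I invoke the $c$-spectral stability of $A$ at the true membership $A_i \in C^*_r$, which gives $\iprod{A_i - \mu_r, u} \leq \tfrac12(\|\mu_s - \mu_r\| - \Delta_{r,s})$, so
\[
\iprod{\Pi\tilde{A}_i - A_i,\, u} \;\geq\; \tfrac{\Delta_{r,s}}{2} \;-\; O\bigl(\beta(\Delta_r + \Delta_s)\bigr).
\]
Since $c > 200 c_q$ and $\Delta_{r,s} \geq \Omega(\sqrt{k}(\Delta_r + \Delta_s))$ by the stability hypothesis, the correction is dominated by the leading term whenever $\beta < 1$, and pointwise $\iprod{\Pi\tilde{A}_i - A_i, u} = \Omega(\|\mu_s - \mu_r\|)$.

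Third, I aggregate over $i \in T_{s\to r}$ via the spectral-norm bound
\[
\|A - \Pi\tilde{A}\| \;\leq\; \|A - C\| \;+\; \|A - \Pi A\| \;+\; \|\Pi(\tilde{A} - A)\| \;\leq\; O(\alpha\sigma\sqrt{m}),
\]
assembled from $\|A - C\| \leq \sigma\sqrt{m}$, the bound $\|A-\Pi A\|\leq 4c_q\alpha\sigma\sqrt{m}$ from~\eqref{eq:initialization-proof-part-2}, and the robustness estimate $\|\Pi(\tilde{A} - A)\| \leq c_q\kappa\delta\sqrt{m}$. Letting $v \in \{0,1\}^m$ be the indicator of $T_{s\to r}$ and using $\sum_{i \in T_{s\to r}} \iprod{\Pi\tilde{A}_i - A_i, u}^2 = \|u^\top(\Pi\tilde{A} - A)\mathrm{diag}(v)\|_2^2 \leq \|A - \Pi\tilde{A}\|^2$, combined with the pointwise lower bound, yields an upper bound on $|T_{s\to r}|$; converting between $\Delta_{r,s}^2$ and $k c^2\|\mu_s - \mu_r\|^2$ via the stability definition gives the claimed form.

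The main obstacle is isolating the $\beta^2$ factor in the numerator: the three steps above as stated only produce $|T_{s\to r}| = O(\alpha^2\sigma^2 m/\|\mu_s-\mu_r\|^2)$, independent of $\beta$. To extract the extra $\beta^2$, I would decompose $\iprod{\Pi\tilde A_i - A_i, u}$ into a ``geometric'' piece of size $\Omega(\|\mu_s-\mu_r\|)$ coming purely from the misclassification (i.e., from the $\frac12\|\mu_s-\mu_r\|$ vs.\ $\frac12(\|\mu_s-\mu_r\|-\Delta_{r,s})$ gap) and a ``slack'' piece of size $O(\beta(\Delta_r+\Delta_s))$ arising from the center error; then rather than squaring the geometric term, I would use Cauchy--Schwarz on the slack term, exploiting the fact that the squared-sum of $\|\Pi\tilde A_i - \mu_r\|$ over $T_{s\to r}$ is itself bounded by $\|\Pi\tilde{A} - C\|_F^2 \leq O(k c_q^2 \alpha^2 \sigma^2 m)$ from~\eqref{eq:bound-tildeA-pi-from-C}, and that $\Delta_+/\Delta_{r,s} = O(1/(c\sqrt{k}))$. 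Balancing these two contributions extracts an extra factor of $\beta^2/(k c^2)$ after normalizing by $\|\mu_s-\mu_r\|^2$. I expect this careful bookkeeping, which leverages the slack in $c > 200 c_q$ and the assumption $\beta < 1$, to be the technically intricate step.
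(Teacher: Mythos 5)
There is a genuine gap, and it sits exactly where you flag it: your three-step argument cannot produce the $\beta^2$ factor, and the sketched repair does not supply the mechanism that actually yields it. Two concrete problems with the first part. (i) The linearization is not valid as stated: translating the bisector condition $\|\Pi\tilde A_i-\nu_s\|\le\|\Pi\tilde A_i-\nu_r\|$ into the fixed direction $u=(\mu_s-\mu_r)/\|\mu_s-\mu_r\|$ incurs an error that is not $O(\beta(\Delta_r+\Delta_s))$ but rather $O\bigl(\beta(\Delta_r+\Delta_s)\,\|\Pi\tilde A_i-\mu_r\|/\|\mu_r-\mu_s\|\bigr)$, i.e.\ it grows with the distance of the point from the centers; and your claimed pointwise conclusion $\iprod{\Pi\tilde A_i-A_i,u}=\Omega(\|\mu_s-\mu_r\|)$ is off, since stability only gives $\Omega(\Delta_{r,s})$, which can be much smaller than $\|\mu_r-\mu_s\|$ (so even the $\beta$-free bound you get is $O(\alpha^2\sigma^2 m/\Delta_{r,s}^2)$, not $O(\alpha^2\sigma^2 m/\|\mu_r-\mu_s\|^2)$). (ii) More importantly, the distance-dependent error you dropped is precisely the source of the $\beta^2$. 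The paper keeps the misclassification inequality in squared form: with $v=\Pi_W\tilde A_i$ ($\Pi_W$ the projection onto $\mathrm{span}\{\mu_r,\mu_s,\nu_r,\nu_s\}$), the center-error bounds give $\|v-\mu_s\|^2-\|v-\mu_r\|^2\le\beta(\Delta_r+\Delta_s)\|v-\nu_r\|$, while stability, transferred from $A_i$ to $\tilde A_i$ along the line through $\mu_r$ and $\mu_s$ at cost $O(\kappa\delta)\ll\Delta_{r,s}$ (using that $\mu_r,\mu_s$ are $\kappa$-sparse in $\ell_{q^*}$), forces the same quantity to be at least $\Delta_{r,s}\|\mu_r-\mu_s\|/4$. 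Comparing the two yields the per-point bound $\|v-\nu_r\|\gtrsim c\sqrt{k}\,\|\mu_r-\mu_s\|/(c_q\beta)$: every misclassified point must lie at distance from $\nu_r$ scaling like $1/\beta$, and it is this $1/\beta$ that, after squaring and summing over $T_{s\to r}$, becomes the $\beta^2$ in the numerator.

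Your proposed fix (Cauchy--Schwarz on the slack term balanced against $\|\Pi\tilde A-C\|_F^2\lesssim k c_q^2\alpha^2\sigma^2 m$) does not obviously recover this: the slack enters your per-point inequality only linearly in $\beta$, and the Frobenius bound carries an extra factor of $k$ that would destroy the $1/k$ in the target bound. The paper avoids that loss with a second device you would also need: the aggregate $\sum_{i\in T_{s\to r}}\|\Pi_W\tilde A_i-\nu_r\|^2$ is controlled by $\|\Pi_W\Pi^\top(\Pi\tilde A-\Pi C)\|_F^2$, which, because $\Pi_W$ has rank at most $4$, is at most $16\|\Pi\tilde A-\Pi C\|^2=O\bigl(c_q^2(1+\kappa\delta/\sigma)^2\sigma^2 m\bigr)$ in spectral norm, with no $k$; one also needs $\|\Pi\mu_r-\mu_r\|\le O(\Delta_r)$ to pass between $\mu_r$ and its projection. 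So the route you chose (project onto a single fixed direction, aggregate once with a spectral-norm bound) is workable only up to the $\beta$-free estimate you computed; to reach the stated lemma you need the quadratic, distance-weighted form of the misclassification inequality together with the low-rank spectral aggregation, neither of which appears in the proposal.
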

\begin{proof}
Fix $s \neq r$ and let $W$ be the subspace spanned by $\{\mu_r, \mu_s, \nu_r, \nu_s\}$ with $\Pi_W$ being the projection matrix for the orthogonal projection on to the subspace. Define $\bar{A}_i$ to be the projection of $A_i$ onto the line joining $\mu_r$ and $\mu_s$. Since $W$ contains $\mu_r, \mu_s$, this is also the same as the projection of $\Pi_W A_i$ on to the line joining $\mu_r$ and $\mu_s$. Similarly, define $\bar{\tilde{A}}_i$ to be the projection of $\tilde{A}_i$ on to the line joining $\mu_r$ and $\mu_s$, and again this is the same as the projection of $\Pi_W \tilde{A}_i$ on to the line joining $\mu_r$ and $\mu_s$. We will crucially make use of the fact that
\begin{align}
\label{eq:proof-cluster-purity-fact1}
    \|\bar{\tilde{A}}_i - \mu_s\|-\|\bar{\tilde{A}}_i - \mu_r\| &\geq \Delta_{r,s} - O(\kappa \delta) \geq \Delta_{r,s}/2.
\end{align}
The above holds since from $c$-stability we know that $\|{\bar{A}}_i - \mu_s\|-\|{\bar{A}}_i - \mu_r\| \geq \Delta_{r,s}$. Furthermore, since $\|\tilde{A}_i - {A}_i\|_q \leq \delta$ and each of $\mu_r, \mu_s$ is $\kappa$-sparse in $\ell_{q^*}$ norm, we have that $\|\bar{\tilde{A}}_i - \bar{A}_i\| \leq O(\kappa \delta)$. Here $q^*$ is such that $1/q + 1/q^*=1$.
Next, let $v = \Pi_W \tilde{A}_i $. Then we have that
\begin{align}
    \|v - \mu_s\|^2 - \|v-\mu_r\|^2 &= \|\bar{\tilde{A}}_i - \mu_s\|^2 - \|\bar{\tilde{A}}_i-\mu_r\|^2 \nonumber \\
    &\geq \frac{\Delta_{r,s} \|\mu_r - \mu_s\|}{4} \,\, (\text{using the fact that $\bar{\tilde{A}}_i$ lies on the line joining $\mu_r$ and $\mu_s$}) \label{eq:proof-cluster-purity-fact2}.
\end{align}
By triangle inequality we also have that,
\begin{align}
    \|v - \mu_s\|^2 - \|v-\mu_r\|^2 
    &\leq (\|v - \nu_s\| + \beta \Delta_s)^2 - (\|v - \nu_r\| - \beta \Delta_r)^2 \nonumber \\
    &\leq (\|v - \nu_r\| + \beta \Delta_s)^2 - (\|v - \nu_r\| - \beta \Delta_r)^2 \nonumber \\
    &\leq \beta(\Delta_s + \Delta_r)\|v-\nu_r\| \label{eq:proof-cluster-purity-fact3}.
\end{align}
Here the first inequality uses the fact that $\nu_r,\nu_s$ are close to $\mu_r, \mu_s$ respectively and the second inequality uses the fact that $\tilde{A}_i$ is closer to $\nu_s$ than to $\nu_r$, the same holds true for $\tilde{A}_i$ projected on to any subspace that contains $\nu_r$ and $\nu_s$. From (\ref{eq:proof-cluster-purity-fact2}) and (\ref{eq:proof-cluster-purity-fact3}), and substituting the bound for $\Delta_{r,s}$ we get that $\|v - \nu_r\| \geq \frac{c\sqrt{k} \|\mu_r - \mu_s\|}{10c_q \beta}$, which in turn implies that $\|v - \mu_r\| \geq c\sqrt{k} \frac{\|\mu_r-\mu_s\|}{8c_q \beta}$. Hence we get that
\begin{align}
\label{eq:proof-cluster-purity-fact4}
\sum_{i \in T_{s \to r}} \| \Pi_W \tilde{A}_i - \mu_r\|^2 \geq |T_{s \to r}| \frac{c^2 k \|\mu_r - \mu_s\|^2}{64 c^2_q \beta^2}.
\end{align}
Combining with the fact that $\|\mu_r - \nu_r\| \leq \beta \Delta_r$ we get that
\begin{align}
\label{eq:proof-cluster-purity-fact5}
\sum_{i \in T_{s \to r}} \| \Pi_W \tilde{A}_i - \nu_r\|^2 \geq |T_{s \to r}| \frac{c^2 k \|\mu_r - \mu_s\|^2}{128 c^2_q \beta^2}.
\end{align}
On the other hand we also have that 
\begin{align}
\label{eq:proof-cluster-purity-fact6}
\sum_{i \in T_{s \to r}} \| \Pi_W \tilde{A}_i - \nu_r\|^2    &\leq \sum_{i \in T_{s \to r}} 2\|\Pi_W \tilde{A}_i - \mu_r\|^2 + 2|T_{s \to r}| \|\mu_r - \nu_r\|^2 \,\, (\text{by triangle inequality})\nonumber \\
&=\sum_{i \in T_{s \to r}} 2\|\Pi_W \tilde{A}_i - \Pi_W \mu_r\|^2 + 2|T_{s \to r}| \|\mu_r - \nu_r\|^2 \,\, (\text{since $\mu_r$ lies in $\Pi_W$})\nonumber \\
&\leq \sum_{i \in T_{s \to r}} 2\|\Pi_W \tilde{A}_i - \Pi_W \mu_r\|^2 + 2|T_{s \to r}| \beta^2 \Delta^2_r \nonumber \\
&\leq \sum_{i \in T_{s \to r}} 4\|\Pi_W \tilde{A}_i - \Pi_W \Pi \mu_r\|^2 + \sum_{i \in T_{s \to r}} 4\|\Pi_W(\Pi \mu_r - \mu_r)\|^2 + 2|T_{s \to r}| \beta^2 \Delta^2_r \,\, \nonumber\\
&\leq \sum_{i \in T_{s \to r}} 4\|\Pi_W \tilde{A}_i - \Pi_W \Pi \mu_r\|^2 + 4|T_{s \to r}|\|\Pi \mu_r - \mu_r\|^2 + 2|T_{s \to r}| \beta^2 \Delta^2_r, 
\end{align}
where the last but one line also uses triangle inequality. 
Next notice that
\begin{align*}
\|\Pi \mu_r - \mu_r\| &= \frac{1}{|C^*_r|} \|\sum_{A_i \in C_r} (\Pi A_i - A_i)\| = \frac{1}{|C^*_r|} \|\mathbf{1}^{\top}(\Pi A - A)\| \nonumber \\
&\leq \frac{1}{\sqrt{|C^*_r|}}\|\Pi A - A\| \leq \frac{4c_q \alpha \sigma \sqrt{m}}{\sqrt{|C^*_r|}} = \frac{2}{5}\Delta_r.
\end{align*}
Substituting into (\ref{eq:proof-cluster-purity-fact6}) we get that
\begin{align}
\sum_{i \in T_{s \to r}} \| \Pi_W \tilde{A}_i - \nu_r\|^2 &\leq \sum_{i \in T_{s \to r}} 4\|\Pi_W \tilde{A}_i - \Pi_W \Pi \mu_r\|^2 + |T_{s \to r}|(\frac{16}{5}+2 \beta^2) \Delta^2_r \nonumber\\
&= \sum_{i \in T_{s \to r}} 4\|\Pi_W \Pi^\top \Pi \tilde{A}_i- \Pi_W \Pi^\top \Pi^2 \mu_r\|^2 + |T_{s \to r}|(\frac{16}{5}+2 \beta^2) \Delta^2_r \nonumber \\
&= \sum_{i \in T_{s \to r}} 4\|\Pi_W \Pi^\top(\Pi \tilde{A}_i - \Pi \mu_r)\|^2 + |T_{s \to r}|(\frac{16}{5}+2 \beta^2) \Delta^2_r \nonumber \\
&\leq 4\|\Pi_W \Pi^{\top}(\Pi \tilde{A}-\Pi C)\|^2_F + |T_{s \to r}|(\frac{16}{5}+2 \beta^2) \Delta^2_r \nonumber \\
&\leq 16\|\Pi\tilde{A}- \Pi C\|^2 + |T_{s \to r}|(\frac{16}{5}+2 \beta^2) \Delta^2_r \nonumber
\end{align}
since $\Pi_W \Pi^{\top}(\Pi \tilde{A}- \Pi C)$ has rank at most $4$. Hence
\begin{align}
\sum_{i \in T_{s \to r}} \| \Pi_W \tilde{A}_i - \nu_r\|^2 &\leq 32 c^2_q (1+\frac{\kappa \delta}{\sigma})^2 \sigma^2 m + |T_{s \to r}|(\frac{16}{5}+2 \beta^2) \Delta^2_r.
\label{eq:proof-cluster-purity-fact7}
\end{align}
The last inequality uses the fact that
\begin{align*}
    \|\Pi \tilde{A} - \Pi C\| &\leq \|\Pi(\tilde{A}-A)\| + \|\Pi A - \Pi C\|
    \leq \kappa \delta \sqrt{m} +  \|A-C\|
    \leq \kappa \delta \sqrt{m} +  \sigma \sqrt{m}.
\end{align*}
Combining, (\ref{eq:proof-cluster-purity-fact5}) and (\ref{eq:proof-cluster-purity-fact7}) we get the desired claim. 
\end{proof}

In order to apply Lemma~\ref{lem:cluster-purity} iteratively we need initial centers such that $\|\mu_r - \nu_r\| \leq \beta \Delta_r$, for $\beta \leq \frac{1}{4}$. However, notice that the the initialization procedure of Figure~\ref{ALG:Initialization} only guarantees $\beta \leq 30c_q \sqrt{k}$. We next argue that step (3) of the algorithm in Figure~\ref{ALG:Lloyds} provides center estimates that are much closer to the true means, thereby allowing us to analyze the iterative Lloyd's updates in step (4) of the algorithm. 
\begin{theorem}
\label{thm:center-closeness}
If the clustering instance $A$ is $c$-stable as defined in Theorem~\ref{thm:lloyds-initialization-guarantee}, then given $\tilde{A}$ as input, steps $1$-$3$ of the Algorithm in Figure~\ref{ALG:Lloyds}, run in polynomial time, and output centers $\nu^{(1)}_1, \dots, \nu^{(1)}_k$ such that 
\begin{align*}
\forall r \in [k],~~    \|\mu_r - \nu^{(1)}_{\sigma(r)}\| \leq \beta \Delta_r,
\end{align*}
for an appropriately chosen bijection $\sigma$. Here $\Delta_r = 40c_q \alpha \sigma \sqrt{m}/\sqrt{|C^*_r|}$ and $\beta < 1$. 
\end{theorem}
\begin{proof}
The proof strategy closely follows the one in~\citet{awasthi2012improved} and consists of three main steps. We first define clusters $T_r$ for $r \in [k]$ such that $T_r$ consists of points $\Pi \tilde{A}_j$ for $A_j \in C^*_r$. In other words, $\{T_1, T_2, \dots, T_k\}$ is the clustering induced on the data set $\Pi \tilde{A}$ by the optimal clustering $\{C^*_1, C^*_2, \dots, C^*_k\}$. We first argue that $S_r$ is pure w.r.t. $T_r$ i.e., at most $O(\frac{1}{c^2}|C^*_r|)$ points of $T_r$ do not belong to $S_r$ and in total at most $O(\frac{1}{k}|C^*_r|)$ points from $T_s$, for $s \neq r$, end up belonging to $S_r$. 
Next use the fact that any points that belongs to $|S_r \cap T_s|$ for $s \neq r$, will also be misclassified when using centers $\Pi \tilde{\mu}_1, \dots, \Pi \tilde{\mu}_k$ instead of centers $\nu^{(0)}_1, \dots, \nu^{(0)}_k$. Here $\tilde{\mu}_r = \Mean(T_r)$.
Now each projected center $\Pi \tilde{\mu}_r$ is much closer to the corresponding true center $\mu_r$. To see this notice that
\begin{align*}
    \|\Pi \tilde{\mu}_r - \mu_r\| &= \frac{1}{|C^*_r|} \|\sum_{\Pi \tilde{A}_i \in T_r} (\Pi \tilde{A}_i - A_i)\|\\
    &\leq \frac{1}{|C^*_r|} \|\sum_{\Pi \tilde{A}_i \in T_r} (\Pi (\tilde{A}_i - A_i))\| + \frac{1}{|C^*_r|} \|\sum_{\Pi \tilde{A}_i \in T_r} (\Pi {A}_i - A_i)\|\\
    &\leq c_q \kappa \delta + \frac{1}{|C^*_r|}\|\mathbf{1}^{\top}(\Pi A-A)\| 
    \leq c_q \kappa \delta + \frac{1}{\sqrt{|C^*_r|}}\|\Pi A - A\|\\
    &\leq c_q \kappa \delta + \frac{4c_q \alpha \sigma\sqrt{m}}{\sqrt{|C^*_r|}}
    \leq \frac{\Delta_r}{9}.
\end{align*}
With the above idea, arguing that $|T_s \cap S_r|$ is small and $T_r$ has large overlap with $S_r$ follows verbatim from Lemmas 4.2 and 4.3 of~\citet{awasthi2012improved} by substituting $\Pi \tilde{A}_i$ instead of $\hat{A}_i$ in the proofs. In the final step we use the following standard fact stated in Lemma~\ref{lem:mean-closeness-AS} below and adapted from its original version in~\citet{awasthi2012improved,kumar2010clustering}. From the guarantees on $|T_s \cap S_r|$ and $|T_r \cap S_r|$ we can set $\rho_{out} = \frac{1}{8}$ and $\rho_{in} = c/10k$ to get that
\begin{align*}
    \|\Mean(S_r) - \Mean(\hat{C}_r)\| &\leq 2\frac{\sigma \sqrt{m}}{\sqrt{|C^*_r|}}.
\end{align*}
Furthermore we also have that
\begin{align*}
    \|\Mean(\hat{C}_r) - \mu_r\| &= \frac{1}{|C^*_r|} \|\sum_{\Pi {A}_i \in C^*_r} (\Pi {A}_i - A_i)\|\\
    &= \frac{1}{|C^*_r|} \|\mathbf{1}^{\top}(\Pi A-A)\|
    \leq \frac{1}{\sqrt{|C^*_r|}}\|\Pi A - A\|\\
    &\leq \frac{4c_q \alpha \sigma\sqrt{m}}{\sqrt{|C^*_r|}}
    \leq \frac{\Delta_r}{10}.
\end{align*}
Combining the above two we get that
\begin{align*}
    \|\mu_r - \nu^{(1)}_{r}\| \leq O(\beta \Delta_r),
\end{align*}
for $\beta < 1$.
\end{proof}
\begin{lemma}[Fact 1.3 from~\citet{awasthi2012improved}]
\label{lem:mean-closeness-AS}
Fix a target cluster $C^*_r$ and let $\hat{C}_r$ be the projection of points in $C^*_r$ onto $\Pi$. Let $S_r$ be a set of points created by removing $\rho_{out} |C^*_r|$ points
from $\hat{C}_r$ and adding $\rho_{in} |C^*_s|$ points from each cluster $\hat{C}_s$ for $s \neq r$, s.t. every added point $x$ satisfies $\|x - \Pi \mu_s\| \geq \frac{2}{3}\|x-\Pi \mu_r\|$. If $\rho_{out} < 1/4$ and $\rho_{in} := \sum_{s \neq r} \rho_{in} <1/4$ then we have that
\begin{align}
    \label{eq:as12-mean-closeness}
    \|\Mean(S_r) - \Mean(\hat{C}_r)\| \leq 2\Big(\sqrt{\frac{\rho_{out}}{|C^*_r|}} + \frac{3\sqrt{k}}{2}\sqrt{\frac{\rho_{in}}{|C^*_r|}} \Big)\sigma \sqrt{m}
\end{align}
\end{lemma}
\begin{proof}[Proof of Theorem~\ref{thm:clustering-application-general}]
The theorem follows from using steps $1$-$3$ of the algorithm in Figure~\ref{ALG:Lloyds} and from the guarantees in Lemma~\ref{lem:cluster-purity} and Lemma~\ref{thm:center-closeness}. 
\end{proof}
\noindent \textbf{Achieving $\tilde{O}(\alpha \sigma)$ Guarantee for Mean Estimation.}
\begin{proof}[Proof of Theorem~\ref{thm:clustering-application-special}]
Notice that Theorem~\ref{thm:center-closeness} gives us centers $\nu_1, \dots, \nu_k$ that are $\beta \Delta_r$ close to the corresponding true centers $\mu_1, \dots, \mu_k$. We start with these centers and perform Lloyd's updates as shown in step $4$ of the algorithm in Figure~\ref{ALG:Lloyds}. Next suppose that at iteration $t$ we have centers $\nu^{(t)}_1, \dots, \nu^{(t)}_k$ such that $\|\nu^{(t)}_r - \mu_r\| \leq \beta \Delta_r$ for $r \in [k]$. We will argue that using $\nu^{(t)}_1, \dots, \nu^{(t)}_k$ to form clusters $S_1, S_2, \dots, S_r$ and computing new means by calling the {\RMean} procedure on the sets $S_r$, either leads to a certification that the dataset is poisoned or leads to new centers estimates $\nu^{(t+1)}_1, \dots, \nu^{(t+1)}_k$ that satisfy $\|\nu^{(t+1)}_r - \mu_r\| \leq \Big(\frac{\beta}{2} \Delta_r + \tilde{O}(\alpha \sigma)\Big)$. Hence the estimates will improve until the unavoidable error of $\tilde{O}(\alpha \sigma)$. We will prove the claim in two steps. First we analyze the ``ideal'' updates. For each $S_r$ define $S^*_r$ to the set $S_r$ with corrupted points replace by the original points, i.e., $S^*_r = \{A_i: \tilde{A}_i \in S_r\}$. We next show that the mean of $S^*_r$ is close to $\mu_r$ upto $\frac{\beta}{2} \Delta_r$ error. As in Lemma~\ref{lem:cluster-purity} define $T_{r \to r} = S^*_r \cap C^*_r$ and for $s \neq r$, define $T_{r \to s} = S^*_r \cap C^*_s$. Then we have by triangle inequality that,
\begin{align}
\label{eq:clustering-gaussian-mean-estimation-1}
    \|\Mean(S^*_r) - \mu_r\| &= \Bignorm{\frac{|T_{r \to r}|}{|S^*_r|} (\Mean(T_{r \to r})-\mu_r)  + \sum_{s \neq r} \frac{|T_{r \to s}|}{|S^*_r|} (\Mean(T_{r \to s})-\mu_r )}\nonumber \\
    &\leq  \frac{|T_{r \to r}|}{|S^*_r|}\Bignorm{\Mean(T_{r \to r})-\mu_r} + \sum_{s \neq r} \frac{|T_{r \to s}|}{|S^*_r|}\Bignorm{\Mean(T_{r \to s})-\mu_r}\nonumber \\
    &\leq  \frac{|T_{r \to r}|}{|S^*_r|}\Bignorm{\Mean(T_{r \to r})-\mu_r} + \sum_{s \neq r} \frac{|T_{r \to s}|}{|S^*_r|}\Bignorm{\Mean(T_{r \to s})-\mu_s} \\
    & ~~~+ \sum_{s \neq r} \frac{|T_{r \to s}|}{|S^*_r|}\Bignorm{\mu_r-\mu_s}
\end{align}
Next we notice that
\begin{align*}
    \Bignorm{\Mean(T_{r \to r})-\mu_r} &= \frac{|C^*_r \setminus T_{r \to r}|}{|T_{r \to r}|} \Bignorm{\sum_{A_i \in C^*_r \setminus T_{r \to r}} (A_i - \mu_r)}\\
    &\leq \frac{\sqrt{|C^*_r \setminus T_{r \to r}|}}{{|T_{r \to r}|}} \sigma \sqrt{m}\\
    &= \frac{\sqrt{|C^*_r|- |T_{r \to r}|}}{{|T_{r \to r}|}}  \sigma \sqrt{m}.\\
\end{align*}
The first inequality above follows from the fact that
\begin{align*}
\Bignorm{\sum_{A_i \in C^*_r \setminus T_{r \to r}} (A_i - \mu_r)} &= \|\mathbf{1}^{\top}_S (A-C)\| \,\, (\text{$\mathbf{1}_S$ is the indicator vector for points in $C^*_r \setminus T_{r \to r}$})\\
&\leq \frac{\sigma \sqrt{m}}{\sqrt{|C^*_r \setminus T_{r \to r}|}}.
\end{align*}
Next, by Assumptions II regarding large subsets of optimal clusters we have that
for sets $T_{r \to s}$ either $|T_{r \to s}| \leq n \log n$ or 
\begin{align*}
        \Bignorm{\Mean(T_{r \to s})-\mu_s} &= \frac{1}{|T_{r \to s}|} \Bignorm{\sum_{A_i \in T_{r \to s}} (A_i - \mu_s)}\\
    &\leq \frac{1}{\sqrt{|T_{r \to s}|}} \sigma \poly \log(m,n).
\end{align*}
Furthermore, we also have the pointwise guarantee that for every $A_i \in T_{r \to s}$, $\|A_i - \mu_s\| \leq 2\sigma \sqrt{n} \cdot \poly \log(m,n)$.
Hence we get that
\begin{align*}
    \frac{|T_{r \to s}|}{|S^*_r|}\Bignorm{\Mean(T_{r \to s})-\mu_r} &\leq \max \Big(\frac{\sqrt{|T_{r \to s}|}}{|S^*_r|}\sigma \poly \log(m,n) , \frac{\sigma \poly \log(m,n)}{n} \Big).
\end{align*}
Substituting back into (\ref{eq:clustering-gaussian-mean-estimation-1}) we get that
\begin{align*}
\|\Mean(S^*_r) - \mu_r\| &\leq \frac{\sqrt{|C^*_r|-|T_{r,r}|}}{|S^*_r|} \sigma \sqrt{m} + \sum_{s \neq r}\frac{\sqrt{|T_{r \to s}|}}{|S^*_r|}\sigma \text{polylog}(m,n) + \sum_{s \neq r} \frac{|T_{r \to s}|}{|S^*_r|}\Bignorm{\mu_r-\mu_s} + \sigma\\
&\leq \frac{\sqrt{|S_r|-|T_{r,r}|}}{|S^*_r|} \sigma \sqrt{m} + \frac{\sqrt{|S_r \triangle C^*_r|}}{|S^*_r|} \sigma \sqrt{m}  \\
& \qquad \quad +  \sum_{s \neq r}\frac{\sqrt{|T_{r \to s}|}}{|S^*_r|}\sigma \text{polylog}(m,n)+ \sum_{s \neq r} \frac{|T_{r \to s}|}{|S^*_r|}\Bignorm{\mu_r-\mu_s} + \sigma\\
&= \frac{\sigma \sqrt{m}\sqrt{\sum_{s \neq r}|T_{r \to s}|}}{|S^*_r|}  + \frac{\sigma \sqrt{m}\sqrt{\sum_{s \neq r} |T_{r \to s}|  + |T_{s \to r}|}}{|S^*_r|}  \\
&\qquad\qquad +  \sum_{s \neq r}\frac{\sigma \text{polylog}(m,n)\sqrt{|T_{r \to s}|}}{|S^*_r|}+ \sum_{s \neq r} \frac{|T_{r \to s}|}{|S^*_r|}\Bignorm{\mu_r-\mu_s} + \sigma\\
&\leq 4 \sigma \sqrt{m} \sum_{s \neq r} \frac{\sqrt{|T_{r \to s}|}}{|S^*_r|} + 4\sigma \sqrt{m} \sum_{s \neq r} \frac{\sqrt{|T_{s \to r}|}}{|S^*_r|} \\
&\qquad \quad \quad +\sum_{s \neq r} \frac{|T_{r \to s}|}{|S^*_r|}\Bignorm{\mu_r-\mu_s} + \sigma
\end{align*}
Noticing that $|S^*_r| > |C^*_r|/2$ we get that
\begin{align*}
    \|\Mean(S^*_r) - \mu_r\| &\leq 4 \sigma \sqrt{m} \sum_{s \neq r} \frac{\sqrt{|T_{r \to s}|}}{|C^*_r|} + 4\sigma \sqrt{m} \sum_{s \neq r} \frac{\sqrt{|T_{s \to r}|}}{|C^*_r|} + \sum_{s \neq r} \frac{2|T_{r \to s}|}{|C^*_r|}\Bignorm{\mu_r-\mu_s} + \sigma
\end{align*}
Substituting the bound on $T_{r \to s}$ from Lemma~\ref{lem:cluster-purity} we get that
\begin{align*}
    \|\Mean(S^*_r) - \mu_r\| &\leq \frac{8 c_1 \sigma \sqrt{m}}{|C^*_r|} \sum_{s \neq r} \frac{\beta \sigma \sqrt{m}}{c \sqrt{k} \|\mu_r - \mu_s\|} + \sum_{s \neq r} \frac{2|T_{r \to s}|}{|C^*_r|}\Bignorm{\mu_r-\mu_s} + \sigma
\end{align*}
where $c_1$ is an absolute constant depending on $q$. Substituting the lower bound on $\|\mu_r - \mu_s\|$ and using the definition of $\Delta_r$ we get that
\begin{align*}
    \|\Mean(S^*_r) - \mu_r\| &\leq \frac{\beta \Delta_r}{4}\sum_{s \neq r} \frac{1}{c^2 k}+ \sum_{s \neq r} \frac{2|T_{r \to s}|}{|C^*_r|}\Bignorm{\mu_r-\mu_s} + \sigma\\
    &\leq \frac{\beta \Delta_r}{4} + \sum_{s \neq r} \frac{2|T_{r \to s}|}{|C^*_r|}\Bignorm{\mu_r-\mu_s} + \sigma.
\end{align*}
To bound the second term, we again substitute the guarantee on $|T_{r \to s}|$ from Lemma~\ref{lem:cluster-purity} and get that
\begin{align*}
    \sum_{s \neq r} \frac{2|T_{r \to s}|}{|C^*_r|}\Bignorm{\mu_r-\mu_s} &\leq \sum_{s \neq r} \frac{2c_1  \beta^2 \sigma^2 m}{c k |C^*_r| \|\mu_r - \mu_s\|}\\
    &\leq \sum_{s \neq r} \frac{2 c_1  \beta^2 \sigma \sqrt{m} \min(\sqrt{|C^*_r|}, \sqrt{|C^*_s|})}{\alpha c^2 k\sqrt{k}  |C^*_r|}\\
    &\leq \frac{\beta^2 \Delta_r}{\alpha} \sum_{s \neq r} \frac{1}{c^2 k \sqrt{k}}
    \leq \frac{\beta^2 \Delta_r}{\alpha c^2 \sqrt{k}}.
\end{align*}
Combining the above we get that
\begin{align}
\label{eq:clustering-gaussian-mean-estimation-6}
    \|\Mean(S^*_r) - \mu_r\| &\leq \frac{\beta \Delta_r}{3}.
\end{align}
Next we analyze the true updates that correspond to running the {\RMean} procedure on the set $S_r$. Notice from the guarantee of Theorem~\ref{thm:robust-mean}, when run on $S_r$, that either the algorithm will certify that the dataset if poisoned or will output an approximation to \Mean($S^*_r$) upto a factor of $O(\alpha \sigma_{S^*_r})$ where $\sigma_{S^*_r}$ is the variance of the set $S^*_r$ around $\mu_r$. We next bound this value.
\begin{align}
\label{eq:clustering-gaussian-mean-estimation-2}
    \sigma^2_{S^*_r} &= \max_{v: \|v\|=1} \frac{1}{|S^*_r|} \sum_{A_i \in S^*_r} \Big( (A_i - \mu_r)\cdot v \Big)^2 \nonumber \\
    &\leq \max_{v: \|v\|=1} \frac{1}{|S^*_r|} \sum_{A_i \in T_{r \to r}} \Big( (A_i - \mu_r)\cdot v \Big)^2 + \sum_{s \neq r} \max_{v: \|v\|=1} \frac{1}{|S^*_r|} \sum_{A_i \in T_{r \to s}} \Big( (A_i - \mu_r)\cdot v \Big)^2
\end{align}
Since $|T_{r \to r}| \geq \frac{7}{8}|C^*_r| \geq n^2$, from Assumptions II regarding large subsets of clusters we can bound the first term by 
\begin{align}
\label{eq:clustering-gaussian-mean-estimation-3}
    \max_{v: \|v\|=1} \frac{1}{|S^*_r|} \sum_{A_i \in S^*_r} \Big( (A_i - \mu_r)\cdot v \Big)^2 &\leq O(\sigma^2 \poly \log(m,n)).
\end{align}
To bound the second term we have by triangle inequality that
\begin{align*}
    \max_{v: \|v\|=1} \frac{1}{|S^*_r|} \sum_{A_i \in T_{r \to s}} \Big( (A_i - \mu_r)\cdot v \Big)^2 &\leq \max_{v: \|v\|=1} \frac{1}{|S^*_r|} \sum_{A_i \in T_{r \to s}} \Big( (A_i - \mu_s)\cdot v \Big)^2 + \frac{|T_{r \to s}|}{|S^*_r|}\|\mu_r - \mu_s\|^2.
\end{align*}
Here again the first term is either small due to $|T_{r \to s}|$ being small or is bounded due to Assumptions II about variance of large subsets. In particular, we have that
\begin{align}
\label{eq:clustering-gaussian-mean-estimation-4}
    \max_{v: \|v\|=1} \frac{1}{|S^*_r|} \sum_{A_i \in T_{r \to s}} \Big( (A_i - \mu_s)\cdot v \Big)^2 &\leq \max \Big(\frac{2 \sigma^2 n \log n}{|S^*_r|}, 2\sigma^2 \poly \log(m,n) \frac{|T_{r \to s}|}{|S^*_r|} \Big). 
\end{align}
Finally, using the bound on $|T_{r \to s}|$ from Lemma~\ref{lem:cluster-purity} we have that
\begin{align}
    \label{eq:clustering-gaussian-mean-estimation-5}
    \frac{|T_{r \to s}|}{|S^*_r|}\|\mu_r - \mu_s\|^2 &\leq \frac{2 \beta^2 \Delta^2_r}{\alpha^2 c  k}.
\end{align}
Combining (\ref{eq:clustering-gaussian-mean-estimation-3}), (\ref{eq:clustering-gaussian-mean-estimation-4}), and (\ref{eq:clustering-gaussian-mean-estimation-5}) we get that
\begin{align*}
        \sigma^2_{S^*_r} &= \max_{v: \|v\|=1} \frac{1}{|S^*_r|} \sum_{A_i \in S^*_r} \Big( (A_i - \mu_r)\cdot v \Big)^2\\
        &\leq O(\sigma^2 \poly \log(m,n)) + 2\frac{\beta^2 \Delta^2_r}{\alpha^2 c}.
\end{align*}
Hence, at each step the {\RMean} procedure will either certify that the dataset is poisoned or will find estimates $\nu^{(t+1)}_1, \dots, \nu^{(t+1)}_k$ such that 
\begin{align*}
    \|\nu^{(t+1)}_r - \Mean(S^*_r)\| &\leq \tilde{O}(\alpha \sigma) + \frac{\beta \Delta_r}{4}.
\end{align*}
Combining with (\ref{eq:clustering-gaussian-mean-estimation-6}) we get that at iteration $t+1$
\begin{align*}
    \|\nu^{(t+1)}_r - \mu_r\| &\leq \tilde{O}(\alpha \sigma) + \frac{\beta \Delta_r}{2}.
\end{align*}
Hence, the updates will keep improving until the unavoidable error of $\tilde{O}(\alpha \sigma)$.
\end{proof}

\paragraph{Information Theoretic Upper Bounds (Computationally Inefficient Algorithms).}
Finally, we would like to mention that using Proposition~\ref{prop:training:spectral:stat}, via an (inefficient) algorithm we can get the same guarantees as in this section on clustering without the need for certification. In other words, if exponential time is allowed, then there exist algorithms for robust mean estimation and robust clustering that, given any $\delta$-corrupted instance of the problem, will {\em always} output solutions achieving the error guarantees in Theorem~\ref{thm:robust-mean}, Theorem~\ref{thm:clustering-application-general}, Theorem~\ref{thm:clustering-application-special} and Theorem~\ref{thm:clustering-application-gaussians} from this section. In order to achieve this, we simply use the (inefficient) robust mean estimation procedure from the guarantee of Theorem~\ref{thm:robust-mean-stat} when performing the modified Lloyd's updates and we use the guarantee of Proposition~\ref{prop:training:spectral:stat} to always compute good initial centers without the need for certification.

\section{Learning Intersection of Halfspaces} \label{sec:intersection}


We next demonstrate the applicability of our primitives in supervised learning as well. We will consider the problem of learning an intersection of $k$ halfspaces over the Gaussian distribution on $\R^n$ in the presence of adversarial perturbations to the samples, both at testing-time and training-time. We will represent an intersection of halfspaces by a Boolean function $h:\R^n \to \sset{0,1}$ denoted by $h(x)= \prod_{i=1}^k \bfone(w_i^\top x \ge \theta_i)$, where $\forall i \in [k], ~ \norm{w_i}_2 =1$ and $\theta_i \in \R$ and where $\bfone(\cdot)$ denotes the indicator function. Let $\mathcal{H}_k$ represent the hypothesis class of all intersections of at most $k$ halfspaces. We will also refer to `1' as the positive label, and `0' as the negative label. 

In the uncorrupted setting, the training points $x_1, \dots, x_m \in \R^n$ are drawn i.i.d. from a Gaussian distribution, and their corresponding labels $y_i=h^*(x_i)$ for some $h^* \in \mathcal{H}_k$ (this corresponds to the realizable setting). The special case of $k=1$ corresponds to standard linear classification. A series of well-known results~\citep{VempalaJACM,Vempala08,KOS08} starting with \citet{VempalaJACM} shows that when we are given access to uncorrupted training samples in $\R^n$ drawn from a Gaussian distribution \pnote{grammatical error in this line}, one can PAC-learn\pnote{Perhaps say learn in the probably approximately correct (PAC) model} an intersection of half-spaces in time $f(k)\cdot \poly(n)$, where $f(k)$ has a super-polynomial dependence on $k$. Our algorithmic techniques will be used to learn an intersection of $k=O(1)$ half-spaces even when there are adversarial perturbations {\em both} at {\em training-time} and {\em test-time}. For simplicity we will focus on the case when the uncorrupted points are drawn from a spherical Gaussian $N(0, \sigma^2 I)$. We believe that the same ideas should also extend to general Gaussians, and other convex geometrical concepts as in \citet{Vempala08}.    

Consider a classifier $h \in \mathcal{H}_2$ that is adversarially robust i.e.,  suppose $h(x)= \bfone(w_1^\top x \ge 0) \cdot \bfone(w_2^\top x \ge 0)$ is robust to adversarial $\delta$-perturbations at test-time measured in $\ell_q$ norm, and let $b:= \norm{w_1 - w_2}_2 \in (0,2)$. 
It is easy to show that $\max\sset{\norm{w_1}_{q^*}, \norm{w_2}_{q^*}} \le  O(\sigma)/\delta$, otherwise for most positive examples there exists $\delta$-adversarial perturbation that $h$ misclassifies w.h.p! Moreover for such an adversarially robust classifier, we can assume that the subspace $\Pi^*$ spanned by $w_1, w_2$ satisfies $\kappa:=\norm{\Pi^*}_{q \to 2} \le O(\sigma/(\delta b))$ (see Claim~\ref{claim:robustintersection}). For general $k$, if the labels are generated by an intersection of $k$-halfspaces represented by $h^{*}(x):=\prod_{i=1}^k \bfone(w_i^\top x \ge \theta_i)$ with $\norm{w_i}_2=1~ \forall i \in [k]$, we assume that the projection matrix $\Pi^*$ onto the span of the normals $w_1, \dots, w_k$ satisfies $\norm{\Pi^*}_{q \to 2} \le \kappa$.


We consider the following natural model, where each of the samples can be corrupted adversarially up to $\delta$ measured in $\ell_q$ norm for $q \ge 2$:
\begin{itemize}
    \item Samples $x_1, x_2, \dots, x_m \in \R^n$ are drawn i.i.d from $N(0, \sigma^2 I)$. The labels $y_1=h^*(x_1), \dots, y_m=h^*(x_m)$. 
    \item For each $j \in [m]$, an adversary corrupts (corruptions could be dependent) the points to produce $\tilde{x}_1, \dots, \tilde{x}_m \in \R^n$ such that $\forall j \in [m],~ \norm{\tilde{x}_j - x_j}_q \le \delta$. 
    \item The input consists of $\sset{(\tilde{x}_1, y_1), (\tilde{x}_2, y_2), \dots, (\tilde{x}_m, y_m)}$.
\end{itemize}

The goal is to find an intersection of $k$ half-spaces that achieves low-error and is adversarially robust to $\delta$-perturbations at test-time (this is sometimes referred to as robust accuracy). 
Now we state our main result in this section.
\newcommand{\bone}{\mathbbm{1}}

\begin{theorem}\label{thm:intersect2}
Suppose $\kappa>0, q\ge 2, \delta>0$, and $k \le n^{1/2}$. Let $h^*= \prod_{i=1}^k \bfone(w_i^\top x \ge \theta_i)$ with the normal vectors $w_1,\ldots,w_k$ spanning a $(\kappa,q)$-robust subspace. For convenience, let $\eps=O\big(k^{4/3} \cdot (\kappa \delta \sqrt{k}/ \sigma  + \kappa^2 \delta^2/\sigma^2)^{1/3} \big)$ denote the desired learning error rate. Suppose we are given $m=\poly(n,1/\eps)$ samples $\sset{(\wt{x}_i,y_i): i \in [m]}$  where $\wt{x}_i$ is a $\delta$-perturbation (under $\ell_q$ norm) of $x_i \sim N(0,\sigma^2 I)$ and $y_i=h^*(x_i)$. There exists an algorithm that runs in time $\poly(n) \cdot (\frac{k}{\eps})^{{O}(k^2)}$ to output $\wt{h}=\prod_{i=1}^k \bfone\big( (w'_i)^{\top} x \ge \theta' \big)$ such that with probability 0.9, $$\Pr_{x \sim N(0,\sigma^2 I)}[\wt{h}(x) = h^*(x)] \ge 1 - \eps, \text{ and } \Pr_{x \sim N(0,\sigma^2 I)}[\forall z \text{ s.t. } \norm{z}_q \le \delta,~ \wt{h}(x+z) = h^*(x)] \ge 
1-2\eps.$$
\anote{Reworded last part of theorem.}
\end{theorem}
The above algorithm runs in polynomial time and returns an intersection of $k$ half-spaces that achieves error $\eps=o(1)$ as long as $\kappa \delta = o(\sigma)$. For example, when $\kappa \approx n^{0.1}$ this allows us to tolerate $\delta=1/\kappa = o(n^{-0.1})$ as opposed to a tolerance of $\delta=o(n^{-1/2})$ for the naive approach. Recall from the earlier discussion, that such a condition is necessary qualitatively: even a single half-space $\bfone(w_1^\top x \ge 0)$ is not robust when $\norm{w_1}_{q^*} = \kappa$ and $\kappa \delta \gg \sigma$.

\paragraph{Notation.} We will use the following notation specific to this problem. Let $X \in \R^{n \times m}$ be the uncorrupted points, and $\tilde{X} \in \R^{n \times m}$ be the points obtained after adversarial perturbations. In particular, let $m_{+}$ denote the number of positive labels and $X_+, \tilde{X}_+ \in \R^{n \times m_+}$ correspond to the positive examples. In what follows $\bone=(1,1,\dots)$ will represent the all-ones vector of appropriate dimension. 
Let $B=\tilde{X}_+ - \tfrac{1}{m_+}\tilde{X}_+\bone \bone^\top$ be the centered input matrix corresponding to the (corrupted) positive examples. Hence, we can construct the covariance matrices, {\em uncorrupted} and {\em corrupted} by $M_+= \E\Big[ (x-\mu_+) (x-\mu_+)^\top ~|~ h^*(x)=+1 \Big]$, and $\tilde{M}_+= \tfrac{1}{m_+}BB^\top$. 

We will assume without loss of generality that $m_+ \ge (\kappa \delta/\sigma) \cdot m$. Otherwise, we can output the trivial hypothesis $x_1>0 \wedge (-x_1> 0)$ that achieves an accuracy of $1-O(\kappa \delta/\sigma)$ with high probability. 

Finally, we will say that an intersection of halfspaces $h$ is in a subspace $S \subset \R^n$ iff $h$ can be represented as $h(x)=\prod_{i=1}^k \bfone(w_i^\top x \ge \theta_i)$, where $w_1, \dots, w_k \in S$. 


\paragraph{Algorithm description and overview.} The algorithm (Algorithm~\ref{ALG:intersection}) follows the same general approach as~\citet{Vempala08}. The main idea in~\citet{Vempala08} is to consider the co-variance matrix of just the positive examples $X_+$. With infinite samples, the (population) variance of $X_+$ in all the directions orthogonal to the span of $w_1,\ldots,w_k$ in $h^*$ is $\sigma^2$. On the other hand, the variance along directions in $\textrm{span}\{w_1,\ldots,w_k\}$ is less than $\sigma^2$ because any thresholding (or any convex restriction) can only make the variance smaller; quantitative bounds on the gap are given in Lemma~\ref{lem:Vempala}! Suppose the data is uncorrupted i.e., we are given $X$, we can just find the eigenspace corresponding to the $k$ smallest eigenvalues of the covariance matrix corresponding to $X_+$, and learn the hypothesis in the $k$-dimensional subspace.

\begin{lemma}[Lemma 4.8 in \citet{Vempala08}]\label{lem:Vempala}
Let $g$ be the standard Gaussian density function in $\R^n$ and $f: \R^n \to \R_+$ be any logconcave function. Define the function $h$ to the density $h(x) = f(x) g(x)/\beta$ where $\beta=\int_{\R^n} f(x) g(x) \dx$. Then for any unit vector $u \in \R^n$,
$$\mathrm{Var}_h(u^\top x) \le 1 - \frac{e^{-b^2}}{2\pi} ,$$
where the support of $f$ along $u$ is $[a_0, a_1]$ and $b=\min\sset{|a_0|, |a_1|}$. In the particular the above statement also holds when $f$ corresponds to the indicator function over any convex set.
\end{lemma}

\begin{figure}[htb]
\begin{center}
\fbox{\parbox{0.98\textwidth}{

{\bf Input:} Samples $\tilde{X} \in \R^{n \times m}$ with labels $y_1, \dots, y_m \in \sset{0, 1}$, $\sigma$, robustness parameter $\kappa \ge 1$, and the perturbation parameters $\delta$ and $q$. Set $\tau=k/(k+1)$. 
\begin{enumerate}
\item Split the samples into two parts of $T_1, T_2$ where $|T_2|=\poly(k,\sigma/\kappa \delta)$. 
\item Let $\tilde{X}_+$ be the positive examples in $T_1$ and let $m_+$ be the number of positive examples. Set $B=\tilde{X}_+ - \tfrac{1}{m_+} \tilde{X}_+ \bone \bone^\top$. 
\item If $m_+ < (\kappa \delta/\sigma) m$, output the trivial hypothesis $h(x)=(x_1 >0) \wedge (-x_1 >0)$.  
\item Else solve the convex program \eqref{least:sdp:obj} on input $B$, with parameters $\kappa, q$ to get a PSD matrix $Y$. 
\item Let $t$ be the number of eigenvalues of $Y$ that are at least $\tau$. Let $\widehat{\Pi}$ be the orthogonal projection given by the top $\min\sset{t,k}$ and $S_1'$ be this subspace.
\item We run a net argument on $S_1'$ to find a hypothesis $h:$
\begin{enumerate}
    \item Project the samples in $T_2$ onto the subspace $S_1'$ to get samples $\sset{(x'_j,y_j):j \in T_2}$ where $x'_j = \widehat{\Pi} \cdot \tilde{x}_j$. 
    \item Set $\eps=0.01 \cdot \max\big\{(\kappa \delta/\sigma)^3, (k+1) \eta \big\}$ where $\eta$ is the same parameter in Lemma~\ref{lem:intersect:projection}.
    \item Let $\mathcal{W}$ be an $\eps$-net of unit vectors in $S_1': \forall v \in S_1', \exists u \in W$ s.t. $\|v-u\|_2 \le \eps$. Let $T$ be an $\eps \sigma$-net of thresholds in $[-5\sigma \cdot \log 1/\eps, 5 \sigma \cdot \log 1/\eps]$.  
    \item Output any $h=\prod_{i=1}^k \bfone(w_i^{\top} x \ge \theta_i)$ with each $w_i \in \mathcal{W}$ and $\theta_i \in T$ satisfying $$\sum_{j \in T_2} \bfone\bigg(h(x'_j)=y_j\bigg) \ge |T_2| \cdot (1 - C \cdot k^{4/3} \eta^{1/3}).$$
\end{enumerate}
\end{enumerate}
}}
\end{center}
\caption{\label{ALG:intersection} Learning a robust intersection of halfspaces with training corruptions.}
\end{figure}

We are given a matrix $\tilde{X}$ that corresponds to a $\delta$-perturbation of $X$ (a training time perturbation). 
We will use a convex-programming approach as in Section~\ref{sec:worstcase}, but to find the robust analog of a {\em least} singular subspace for the covariance matrix corresponding to $\tilde{X}_+$ i.e., our goal is to find an (orthogonal) projection matrix $\Pi$ of rank $r$ that is $(\kappa,q)$ robust i.e., $\norm{\Pi}_{q \to 2} \le \kappa$ and that minimizes $\norm{\Pi B}_F^2$.  
%

We consider the following mathematical programming relaxation for the problem.
\begin{align}
    \min_Y & \iprod{BB^\top, Y} \label{least:sdp:obj}\\
    \text{s.t. \ }& \tr(Y) = r \label{least:sdp:tr}\\
                & 0 \preceq Y \preceq I \label{least:sdp:spectralnorm}\\
    & \norm{Y}_{q \to q^*}\le \kappa^2 \label{least:sdp:norm}
\end{align}

Note that the above program is a relaxation where for any $(\kappa,q)$-robust (orthogonal) projection matrix $\Pi$  of rank $r$, $Y=\Pi$ is a feasible solution. Moreover as in Theorem~\ref{thm:worstcase:frob}, we can use the Ellipsoid algorithm along with a $O(1)$ factor separation oracle for the constraint \eqref{least:sdp:norm}, to find in polynomial time $\widehat{Y}$ such that $\norm{\widehat{Y}}_{q \to q^*} \le c_q \kappa^2$ (for some absolute constant $c_q>0$ and the objective value attained by $\widehat{Y}$ is at most the optimum solution value of the above program (up to arbitrarily small accuracy).

The following lemma shows that we will recover a $(O(\kappa), q)$-robust projection matrix that captures all the directions where $M_+$ takes value significantly smaller than $\sigma^2$. 

\begin{lemma}\label{lem:intersect:projection}
There exists a constant $c_1>0$ such that the projection matrix $\widehat{\Pi}$ output after step 5 of the algorithm satisfies $\norm{\widehat{\Pi}}_{q \to 2} \le c_1 \kappa$ and for any $\lambda \in (0,1)$
\begin{align*}
\forall v \in \mathbf{S}^{n-1} &\text{ s.t. } \Pi^*v =v \text{ and } v^\top M_+ v < \sigma^2(1- \lambda), ~\text{ we have } \norm{\widehat{\Pi} v}_2^2 \ge 1 - \frac{2(r+1)\eta}{\lambda}\\ &~\text{ where } \eta = 2 \sqrt{k} \cdot \frac{\kappa \delta }{\sigma}  + \frac{\kappa^2 \delta^2}{\sigma^2} + O\Big( \frac{n \log n}{\sqrt{m_+}} +  \frac{\kappa \delta}{\sigma} \cdot \frac{ \sqrt{n}\log n}{\sqrt{m_+}} \Big).    \end{align*}
\end{lemma}

The error $\eta$ in Theorem~\ref{thm:intersect2} inherits the same $\eta$ from the above lemma where we simplify the last two terms to $\kappa^2\delta^2/\sigma^2$ by assuming $m_{+} \ge (\kappa \delta/\sigma) m$ and $m=\poly(n,k,\sigma/\kappa\delta)$ is sufficiently large. We defer the proof of Lemma~\ref{lem:intersect:projection} to Section~\ref{sec:proof_intersect:proj} and finish the proof of Theorem~\ref{thm:intersect2} in the rest of this section. 
\xnote{I change the organization a little bit.}


\anote{Moved definition about halfspace being in subspace to Notation portion.}
Lemma~\ref{lem:Vempala} implies the following claim (see the proof of Theorem 1.3 in \citet{Vempala08}).

\begin{claim}\label{clm:vempala_gap_eig}
Let $\sset{\lambda_i: i \in [n] }$ and $\sset{v_i : i \in [n]}$ be the eigenvalues and eigenvectors of $M_+$. Given any $\gamma \in (0,1)$, we set a subspace $S_1:=\textrm{span}\{v_i| \lambda_i < (1 - \gamma) \sigma^2 \}$. Then there exists $h_1 \in \mathcal{H}_k$ in the subspace $S_1$, that agrees with $h^*$ with probability at least $1-O(\gamma \cdot k)$ over the uncorrupted samples:
\[
\Pr_{x \sim  N(0,\sigma^2 I)}[h^*(x)=h_1(x)] > 1-O(\gamma \cdot k). 
\]
\end{claim}

\noindent We use the following claim along with Lemma~\ref{lem:intersect:projection} to show that the algorithm (up to step 5) recovers a subspace $S_1'$ very close to $S_1$.

\begin{claim}\label{claim:close_projection_intersection}
Given $\gamma<1/2$ and a subspace $S_1$ of dimension $k$ with projection $\Pi^*$, let $S'_1$ be a subspace whose projection matrix $\wt{\Pi}$ satisfies the following property: for every unit  vector $v$ in $S_1$, $\| \wt{\Pi} v \|_2 \ge 1 - \gamma$ . Then for any $h \in \mathcal{H}_k$  in subspace $S_1$, there exists another $h' \in \mathcal{H}_k$ in subspace $S_1'$ (given by a natural projection onto $S'_1$) that agrees with $h$ with probability at least $1-O(k \cdot \sqrt{\gamma \log \frac{1}{\gamma}})$ on an uncorrupted sample.
\end{claim}

\begin{proof}
For an intersection $h(x)=\prod_{i=1}^{k} \bfone(w_i^\top x \ge \theta_i)$ in $S_1$, we have $\Pi^* w_i=w_i$. Without loss of generality, we assume $\|w_i\|_2=1$. Let $w'_i=\wt{\Pi} w_i$ for each $i$ and $h'(x)=\prod_{i=1}^{k} \bfone( (w'_i)^\top x \ge \theta_i )$ be the projection of $h(x)$ into $S_1'$. So
\[
\Pr[h(x) \neq h'(x)] \le \sum_{i=1}^k \Pr\bigg[ \bfone\big( w_i^{\top} x \ge \theta_i \big) \neq \bfone\big( (w'_i)^{\top} x \ge \theta_i \big) \bigg].
\]
Next we bound each probability. Notice that $(w'_i)^\top x$ is a random variable drawn from $\sigma \cdot N(0,  \|w'_i\|_2^2)$ and $w_i^\top x$ is drawn from $(w'_i)^\top x + \sigma \cdot N(0,\|w_i\|_2^2 - \|w'_i\|_2^2)$. Note that $\norm{w'_i}_2^2 \ge (1-\gamma)^2$ and $\norm{w_i}_2^2=1$.
So for $c:= 6\sigma \sqrt{\gamma \log 1/\gamma}$
\begin{align*}
\Pr\bigg[ \bfone\big( w_i^{\top} x \ge \theta_i \big) \neq \bfone\big( (w'_i)^{\top} x \ge \theta_i \big)  \bigg] &\le \Pr_{g \sim \sigma \cdot  N(0,\|w'_i\|_2^2)}\bigg[ |g-\theta_i| \le c \bigg] + \Pr_{\substack{g' \sim \\\sigma \cdot N(0, \|w_i\|_2^2-\|w'_i\|_2^2)}}\bigg[ |g'| \ge c \bigg]\\
&\le O(\sqrt{\gamma \log 1/\gamma}).
\end{align*}
We get the last inequality by using Gaussian anticoncentration for the first term, and standard Gaussian tail bounds for the second term: 
since $\|w_i\|_2^2 - \|w'_i\|_2^2 \le 2\gamma$ from the guarantee of $S_1'$, the second term is bounded by $O(\sqrt{\gamma \log (1/\gamma)})$. Finally note that one can always rescale $\sset{w'_i}$ (along with the thresholds) to be unit vectors without changing $h'$.
\end{proof}

\noindent Next we use the VC dimension to bound the empirical risk error.
\begin{claim}\label{clm:VC_dimension_k_halfspaces}
Let $S_1'$ be a subspace of dimension $\ell$ with projection matrix $\wt{\Pi}$ and consider a fixed $h^* \in \mathcal{H}_k$ in subspace $S_1'$. Then $m=O(k \cdot l \log k/\eps)^2$ random Gaussian points $x_1,\ldots,x_m$ satisfy that with probability 0.99, any $h$, an intersection of $k$ halfspaces in $S_1'$, will have
\[
\Pr_{x \sim N(0, \sigma^2 I)}\big[ h(\wt{\Pi} x) = h^*(\wt{\Pi} x) \big] = \frac{1}{m} \sum_{i=1}^{m} \bfone\big( h(\wt{\Pi} x) = h^*(\wt{\Pi} x) \big) \pm \eps. \]
\end{claim}
\begin{proof}
We first bound the VC dimension of intersections of at most $k$ halfspaces by $2k(\ell+1) \log 5k$. The VC dimension of halfspaces in $S_1'$ is $\ell+1$. Then the intersections have VC dimension at most $2k(\ell+1) \cdot \log (5k)$. So by the learnability of VC dimension~\citep{VC_dimension}, the empirical error is $\eps$ for any $h$ with probability at least $0.99$.
\end{proof}

\noindent Finally we finish the proof of Theorem~\ref{thm:intersect2} assuming Lemma~\ref{lem:intersect:projection} and the above claims.  

\begin{proof}{of Theorem~\ref{thm:intersect2}}
  Set $\eta:=O\big( \sqrt{k} \cdot \frac{\kappa \delta }{\sigma} + \frac{\kappa^2 \delta^2}{\sigma^2} \big)$ and $\gamma:= ((k+1) \eta)^{1/3}$. Let $S^*$ be the subspace spanned by $w_1, \dots, w_k$; by assumption $\Pi^*$ is its projector. 
 Let $S_1$ be the subspace spanned by the eigenvectors of $M_{+}$ whose corresponding eigenvalues are at most $\sigma^2 (1-\gamma)$; note that the dimension of $S_1$ is at most $k$ and $S_1 \subset S^*$. Let $\bar{h}_1 \in \mathcal{H}_k$ be the classifier in the subspace $S_1$ given by Claim~\ref{clm:vempala_gap_eig} that approximates $h^*$ up to error $O(k \cdot \gamma)$. 
 
 Then we apply Lemma~\ref{lem:intersect:projection} to obtain $S_1'$ and $\wt{\Pi}$ such that $\|\wt{\Pi} v\|_2^2 \ge 1 - 2(k+1)\eta/\gamma$ for any unit vector $v$ in $S_1$, since for our choice of $\eta$, it holds that $\eta \ge 2 \kappa \delta \sqrt{k}/\sigma + \kappa^2 \delta^2/\sigma^2 +  O(\frac{n \log n}{\sqrt{m_{+}}}+\kappa \delta/\sigma \cdot \frac{\ \sqrt{n} \log n}{\sqrt{m_{+}}})$ and $m_+ \ge (\kappa \delta/\sigma) m$.
 For convenience, let $h^*_1$ be the projection of $\bar{h}_1$ in $S_1'$ from Claim~\ref{claim:close_projection_intersection}. From the guarantees in Claim~\ref{clm:vempala_gap_eig} and~\ref{claim:close_projection_intersection}, we have 
\begin{equation}\label{eq:inter_h_1_*}
\Pr_{x \sim N(0,\sigma^2 I)}[h^*_1(x)=h^*(x)] \ge 1 - O(\gamma k) - O\Big(k \cdot \sqrt{ \tfrac{(k+1)\eta}{\gamma} \cdot \log \tfrac{\gamma}{(k+1)\eta } } \Big).
\end{equation} 

In the rest of this proof, we will focus on learning $h^*_1$, or a good proxy for it, using the second part of samples in $T_2$. For convenience, we use $m_2=|T_2|$ and $\big( \wt{x}_i,h(x_i) \big)_{i \in [m_2]}$ to denote the input. Since $h^*$ and $h_1^*$ are fixed and $m_2> (k^2 \log k/\eps)^2$, we have with probability at least $0.99$,
\begin{equation}\label{eq:inter_emp_h_*}
\frac{1}{m_2} \sum_{i=1}^{m_2} \bfone\big( h_1^*(x_i) \neq h^*(x_i) \big) \le O\Big(\gamma k +k \cdot \sqrt{ \tfrac{(k+1)\eta}{\gamma} \cdot \log (\tfrac{\gamma}{(k+1)\eta } )} \Big).
\end{equation}

Recall that $\mathcal{W}$ is an $\eps$-net of unit vectors in $S_1': \forall v \in S_1', \exists u \in W$ s.t. $\|v-u\|_2 \le \eps$. Let $T$ be an $(\eps\sigma)$-net of thresholds in $[-5\sigma\log 1/\eps, 5\sigma\log 1/\eps]$. So $\mathcal{W} \times T$ is a net for halfspaces in $S_1'$: for any halfspace $\bfone(w^\top x \ge \theta)$ with $w \in S_1'$, exist $w' \in \mathcal{W}$ and $\theta' \in T$ such that $\Pr[\bfone\big( w^\top x \ge \theta \big) = \bfone\big( (w')^\top x \ge \theta'\big)] \ge 1-O(\sqrt{\eps \log 1/\eps)})$. Similarly, $(\mathcal{W} \times T)^{\otimes k}$ gives a net for intersections of $k$ halfspaces in subspace $S_1'$: for any intersection $h$ of $k$ halfspaces, exist $w_1,\ldots,w_k \in \mathcal{W}$ and $\theta_1,\ldots,\theta_k \in T$ such that 
$$\Pr_{x \sim N(0,\sigma^2 I)}[h(x)=\prod_{j=1}^k \bfone(w_j^\top x \ge \theta_j)] \ge 1 - O(k \sqrt{\eps \log 1/\eps}).$$

Next we consider the empirical estimation over $m_2$ Gaussian points for $m_2>(k^2 \log k/\gamma)^2$.  By Claim~\ref{clm:VC_dimension_k_halfspaces}, we have 
\begin{equation}\label{eq:emp_before_perturbation}
\Pr_{x \sim N(0,\sigma^2 I)} \big[ h_1^*(x) = h(x) \big] = \frac{1}{m_2} \sum_{i=1}^{m_2} \bfone\bigg( h_1^*(x_i)=h(x_i) \bigg) \pm \gamma. 
\end{equation}

Now we consider the empirical estimation after \emph{perturbations}. For adversarial perturbations $\wt{x}_1,\ldots,\wt{x}_m$, we use $\|\wt{\Pi}\|_{q \to 2} = O(\kappa)$ to bound $\|\wt{\Pi}(x_i - \wt{x}_i) \|_2 \le O(\kappa \delta)$. Let us fix  $h \in \mathcal{H}_k$ in subspace $S_1'$; this could be $h_1^*$ or any classifier in the net $(\mathcal{W} \times T)^{\otimes k}$. For a random $x_i \sim N(0,I)$, its adversarial perturbation $\wt{x}_i$ changes its label in $h$ only if $w_i^{\top} \cdot (\wt{\Pi} x_i) \ge \theta_i$ and $w_i^{\top} \cdot (\wt{\Pi} \wt{x}_i) < \theta_i$ for some $i \in [k]$. Since their difference $w^{\top} \cdot (\wt{\Pi} X_i - \wt{\Pi} \wt{X}_i)$ is always bounded by $\|w\|_2 \cdot \|\wt{\Pi} X_i - \wt{\Pi} \wt{X}_i\|_2 \le \kappa \delta$ in absolute value. So $\wt{x}_i$ changes its label in $h$ with probability at most 
$$
k \cdot \Pr_{x_i} \bigg[ \big| w^\top (\wt{\Pi} x_i) \big| \le \kappa \delta \bigg] \le k \cdot \frac{2 \kappa \delta }{\sigma \cdot \sqrt{2 \pi}}.
$$ 
For $m_2$ random points $x_1,\ldots,x_{m_2}$, we have by standard concentration bounds,
$$
\Pr\left [\sum_{i=1}
^{m_2} \bfone\big( h(\wt{\Pi} \cdot x_i) \neq h(\wt{\Pi} \cdot \wt{x}_i) \big) \ge k \cdot \frac{2 \kappa \delta }{\sigma \sqrt{2 \pi}} \cdot {m_2} + 5 \sqrt{{m_2} \cdot k \log |\mathcal{W} \times T|} \right] \le \frac{1}{200 \cdot |\mathcal{W} \times T|^k }.
$$
For convenience, let $err$ denote the normalized error $k \cdot \frac{2 \kappa \delta }{\sigma \sqrt{2 \pi}} + 5 \sqrt{\frac{k \log |\mathcal{W} \times T|}{m_2}}$, which is $O(k \cdot \kappa \delta/\sigma + k^2 \log \frac{1}{\eps\sigma}/\sqrt{m_2})$. We apply a union bound over the net of classifiers to claim that with probability at least $0.99$,
\begin{equation}\label{eq:after_perturbation}
\frac{1}{m_2} \sum_{i=1}^{m_2} \bfone\big( h(\wt{\Pi} \cdot x_i) \neq h(\wt{\Pi} \cdot \wt{x}_i) \big) \le err \text{ for any } h \in \bigg\{ (\mathcal{W} \times T)^{\otimes k}, h_1^* \bigg\}.
\end{equation} 
Plug this into Equation~\eqref{eq:emp_before_perturbation}, we have
\begin{equation}\label{eq:intersection_emp_after_pert}
\Pr_{x}\big[ h_1^*(x)=h(x)\big] = \frac{1}{m_2} \sum_{i=1}^{m_2} \bfone\big( h_1^*(\wt{\Pi} \cdot \wt{x}_i)=h(\wt{\Pi} \cdot \wt{x}_i)\big) \pm \gamma \pm 2err.
\end{equation}

We are ready to show the correctness of Algorithm~\ref{ALG:intersection}. It will output a solution because there exists $h \in (\mathcal{W} \times T)^{\otimes k}$ very close to $h_1^*$: \begin{align*}
\frac{1}{m_2} \sum_{i=1}^{m_2} \bfone\big( h_1^*(\wt{\Pi} \wt{x}_i)=h(\wt{\Pi} \wt{x}_i)\big) & \ge \Pr_x[h_1^*(x)=h(x)] - \gamma - 2 err \tag{Equation~\eqref{eq:intersection_emp_after_pert}}\\
& \ge 1 - O(k \sqrt{\eps \log 1/\eps}) - \gamma - 2 err. \tag{from the property of the net}
\end{align*} At the same time, by equation~\eqref{eq:inter_emp_h_*}, we also have
\[
\frac{1}{m_2} \sum_{i=1}^{m_2} \bfone\big( h_1^*(\wt{\Pi} \wt{x}_i)=h^*(x_i)\big) \ge \frac{1}{m_2} \sum_{i=1}^{m_2} \bfone\big( h_1^*(\wt{\Pi} x_i)=h^*(x_i)\big) - err \ge 1 - O(\gamma k + k \sqrt{(k+1)\eta/\gamma}) - err.
\]
Thus with probability 0.9, we have
\[
\frac{1}{m_2} \sum_{i=1}^{m_2} \bfone\big( h(\wt{\Pi} \wt{x}_i)=h^*(x_i) \big)\ge 1 - O(\gamma k + k \sqrt{(k+1)\eta/\gamma} + k \sqrt{\eps \log 1/\eps} + err).
\]
We set the parameters and simplify the error: Let $\gamma=(k+1)^{1/3}\eta^{1/3}$ such that $\gamma = \sqrt{(k+1)\eta/\gamma}$ and $\eps=0.01 \cdot \max( \gamma^3,\kappa^3 \delta^3/\sigma^3)$ s.t. $\sqrt{\eps \log 1/\eps} \le \max\{\gamma,\kappa\delta\}$. Since $k<n^{1/2}$, the error becomes \xnote{Is this a over-simplification?}
$$
1 - O(k \gamma + err) = 1 - O\Big(k^{4/3} \cdot \eta^{1/3} + k \cdot \kappa \delta/\sigma + k^2 \log (\tfrac{1}{\eps})/\sqrt{m_2} \Big) = 1 - O(k^{4/3} \cdot \eta^{1/3}).
$$

We finish the proof by showing any $h$ in the net satisfying $\frac{1}{m_2} \sum_{i=1}^{m_2} \bfone\big( h(\wt{\Pi} \wt{x}_i)=h^*(x_i) \big) \ge 1 - c \cdot k^{4/3} \cdot \eta^{1/3}$ is close to $h^*$: $\Pr_{X \sim N(0,I)}[h(X)=h^*(X)] \ge 1 - O(k^{4/3} \cdot \eta^{1/3})$. By equation~\eqref{eq:after_perturbation}, we rewrite the guarantee of $h$ into
$$
\frac{1}{m_2} \sum_{i=1}^{m_2} \bfone\big( h(\wt{\Pi} x_i)=h^*(x_i) \big) \ge 1 - c k^{4/3} \cdot \eta^{1/3} - err.
$$
Furthermore, we use \eqref{eq:inter_emp_h_*} to rewrite it as
$$
\frac{1}{m_2} \sum_{i=1}^{m_2} \bfone\big( h(\wt{\Pi} x_i)=h_1^*(x_i) \big) \ge 1 - c k^{4/3} \cdot \eta^{1/3} - err -O(\gamma k + k \cdot \sqrt{(k+1)\eta/\gamma \cdot \log \gamma/(k+1)\eta }).
$$

By Equation~\eqref{eq:emp_before_perturbation}, we have
$$
\Pr[h(X)=h_1^*(X)] \ge 1 - c \cdot k^{4/3} \cdot \eta^{1/3} - err - O(\gamma k + k \cdot \sqrt{(k+1)\eta/\gamma \cdot \log \gamma/(k+1)\eta }) - \gamma.
$$
We combine it with \eqref{eq:inter_h_1_*} using triangle inequality to obtain
$$
\Pr[h(X)=h^*(X)] \ge 1 - c \cdot k^{4/3} \cdot \eta^{1/3} - err - \gamma - O(\gamma k + k \cdot \sqrt{(k+1)\eta/\gamma \cdot \log \gamma/(k+1)\eta })=1-O(k^{4/3} \cdot \eta^{1/3}).
$$

This establishes the required (natural) accuracy bound on the classifier $\wt{h}$. We now show the classifier $\wt{h}$ that we output is adversarially robust to test-time perturbations. Since $\|\wt{\Pi}\|_{\infty \to 2}= O(\kappa)$, we know the subspace spanned by the normals of halfspaces in $\wt{h}$ also has $\|\cdot\|_{\infty \to 2} \le \|\wt{\Pi}\|_{\infty \to 2}=  O(\kappa)$. Since each $w_i$ is in the subspace with $\|w_i\|_{q^*}=O(\kappa)$, we know $w_i \cdot (x-\wt{x})=O(\kappa \delta)$ for any $\delta$ perturbation in $\ell_q$ norm. Using the Gaussian anti-concentration again, we bound the probability of a flip in the label from $x$ to $\wt{x}$ for all valid $\delta$-perturbations $\wt{x}$ by $\Pr[\wt{h}(x) \neq \wt{h}(\wt{x})] \le O(k \kappa\delta/\sigma)$. By triangle inequality, this implies that the required robust error is at most $\eps+ O(k \kappa \delta/\sigma) \le 2\eps$.
\end{proof}

\subsection{Proof of Lemma~\ref{lem:intersect:projection}}\label{sec:proof_intersect:proj}
We first prove the following lemma which shows that that the robustness constraint ensures that the value on $M_+$ and $\tilde{M}_+$ are close.
\begin{lemma}\label{lem:error:intersection}
For any projection matrix $Y$ of rank $r$ with $\norm{Y}_{q \to q^*} \le \kappa^2$, we have with probability $0.99$
\begin{equation}\label{eq:error:intersection}
\Big| \iprod{Y, M_+ - \tilde{M}_+} \Big| \le  2 \kappa \delta \sqrt{r} \sigma  + \kappa^2 \delta^2 + \gamma(m_+), \text{ where } \gamma(m_+) = O\Big( \frac{\sigma^2 n \log n}{\sqrt{m_+}} +   \frac{\sigma \kappa \delta \sqrt{n}\log n}{\sqrt{m_+}} \Big)
\end{equation}
goes to $0$ as the number of positive samples $m_+ \to \infty$. 
\end{lemma}

\begin{proof}
For convenience, let $\mu_+ = \E[x ~|~ h^*(x)=+]$, $\tilde{C}_+ = \tilde{X}_+ \bone \bone^\top$, $C_+= X \bone \bone^\top$. 
Let $Z_+^\top = \tilde{X}_+ -X_+ - (\tilde{C}_+ - C_+) $ (and similarly for $Z$). 
Consider the following shifted covariance matrices for the positive examples for the corrupted 
\begin{align}
    \tilde{M}_+&= \frac{1}{m_+} BB^\top= \frac{1}{m_+} \Big( (\tilde{X}_+-\tilde{C}_+) (\tilde{X}_+ - \tilde{C}_+)^\top \Big)\\
    &= \frac{1}{m_+} \Big( (X_+ - C_+) (X_+ - C_+)^\top + Z_+ Z_+^\top+  (X_+ - C_+) Z_+^\top + Z_+ (X_+ - C_+)^\top\Big) \nonumber,\\
    &= M_+ + E + \frac{1}{m_+} \Big( Z_+ Z_+^\top+  (X_+ - C_+) Z_+^\top + Z_+ (X_+ - C_+)^\top\Big), \\
\text{where } E &= \frac{1}{m_+} (X_+ - C_+) (X_+ - C_+)^\top- \E\Big[ (x-\mu_+) (x-\mu_+)^\top ~\big|~ h^*(x)=+ \Big]
\end{align}
represents the sampling error in the covariance matrix among positive examples, even in the absence of any corruptions. Moreover the samples (columns) of $X_+ - C_+$ are distributed according to a restriction of a spherical Gaussian onto a convex set. From Lemma~\ref{lem:Vempala}, it follows that the (population) variance of $X_+$ in every direction is at most $\sigma^2$. Hence, the operator norm
\begin{align*}
    \norm{\tfrac{1}{m_+}(X_+ - C_+) (X_+ - C_+)^\top} &\le \norm{M_+}+\Bignorm{\tfrac{1}{m_+}(X_+ - C_+) (X_+ - C_+)^\top - M_+ }  \\
    &\le \sigma^2 + O\Big(  \frac{\sigma^2 \sqrt{n} \log n}{\sqrt{m_+}} \Big).
\end{align*}
Recall that $\norm{Y^{1/2}}_{q \to 2} = \sqrt{\norm{Y}_{q \to q^*}}\le \kappa$. 
Hence, for any PSD matrix $Y$ satisfying $\tr(Y)=r, 0 \preceq Y \preceq I$ and $\norm{Y}_{q \to q^*}\le \kappa^2$,
\begin{align*}
    \Big| \iprod{Y, \tilde{M}_+- M_+} \Big| &\le \Bigiprod{Y, \tfrac{1}{m_+}Z_+ Z_+^\top} + |\iprod{Y,E}| + 2\Bigiprod{Y, \tfrac{1}{m_+}(X_+ - C_+) Z_+^\top} \Big|\\
        &\le \tr(Y) \norm{E}+ \frac{1}{m_+}\norm{Y^{1/2} Z_+}_F^2+   \frac{2}{m_+} \iprod{Y^{1/2} Z_+, Y^{1/2} (X_+ - C_+)} \\
        &\le r \norm{E} + \frac{1}{m_+} \underbrace{\norm{Y^{1/2} Z_+}_F^2}_{\text{ using } \norm{Y^{1/2}}_{q \to 2} \le \kappa} + 2 \frac{1}{\sqrt{m_+}} \norm{Y^{1/2} Z_+}_F \cdot \frac{1}{\sqrt{m_+}}\underbrace{\norm{Y^{1/2} (X_+ - C_+)}_F}_{\le \norm{Y^{1/2}}_F \cdot \norm{(X_+ - C_+)}}\\ 
        &\le O\Big(  \frac{r \sigma^2 n \log n}{\sqrt{m_+}} +  \frac{\sigma \kappa \delta \sqrt{n}\log n}{\sqrt{m_+}} \Big)+ 2 \kappa \delta \cdot \sqrt{r} \sigma + \kappa^2 \delta^2 ,
\end{align*}
with probability $0.99$. 
\end{proof}

We now prove Lemma~\ref{lem:intersect:projection}.
\begin{proof}[Proof of Lemma~\ref{lem:intersect:projection}]
Let $\eta:=2 \kappa \delta \sqrt{r} \sigma  + \kappa^2 \delta^2 + \gamma(m_+)$ be the error in Lemma~\ref{lem:error:intersection}. Also let $\Pi^*M_+ \Pi^* = \Sigma'$; by Lemma~\ref{lem:Vempala} we have $\Sigma' \preceq \sigma^2 I$. The optimal objective value of the relaxation \eqref{least:sdp:obj}
\begin{align*}
    \iprod{\Pi^*, \tilde{M}_+} & \ge \iprod{Y, \tilde{M}_+}   \\
    \iprod{\Pi^*, \Sigma'} = \iprod{\Pi^*, M_+} & \ge \iprod{Y, M_+} - 2\eta = \iprod{Y, \sigma^2 I - \sigma^2 \Pi^*+ \Sigma'} - 2\eta \\
    &= r -\iprod{Y, \Pi^* (\sigma^2 I-\Sigma') \Pi^*} - 2\eta,~~\text{ using Lemma~\ref{lem:error:intersection}}.\\
    \text{Hence } ~\iprod{Y, \Pi^* (\sigma^2 I-\Sigma') \Pi^*} & \ge  \iprod{\Pi^*, \Pi^*(\sigma^2 I- \Sigma') \Pi^*} - 2\eta \\
     \text{i.e.,  }~\iprod{\Pi^*(I-Y) \Pi^*, \Pi^* (\sigma^2 I-\Sigma') \Pi^*} & \le 2 \eta. 
\end{align*}
We have a unit vector $v$ in the subspace given by $\Pi^*$ with $v^\top (\sigma^2 I-\Sigma') v \ge \sigma^2 \lambda$. 
Moreover $0 \preceq Y \preceq I, \Sigma' \preceq \sigma^2 I$.
\begin{align}
    (v^\top (I-Y) v ) (v^\top (\sigma^2 I-\Sigma') v) &\le \iprod{\Pi^*(I-Y) \Pi^*, \Pi^* (\sigma^2 I-\Sigma') \Pi^*} \le 2\eta \nonumber\\
    (1 - v^\top Y v) \cdot \sigma^2 \lambda &\le 2\eta. ~~ \text{Hence } ~ v^\top Y v \ge 1- \frac{2\eta}{\lambda\sigma^2 }. \label{eq:Yv}
\end{align}

Let $Y=\sum_{i=1}^n \lambda_i u_i u_i^\top$ be the eigendecomposition of $Y$ with eigenvalues $\lambda_1 \ge \lambda_2 \ge \dots \ge 0$, and $\lambda_1 \le 1$. As in the algorithm let $t$ be the number of eigenvalues that are larger than $r/(r+1)$. Since $\tr(Y)=r$, we have $t \le r$. Note that $\widehat{\Pi}=\sum_{i=1}^t u_i u_i^\top$. By monotonicity of matrix norm $\norm{\widehat{\Pi^*}}_{q \to q^*} \le c_q (1+1/r) \kappa^2$. Finally,
\begin{align*}
1-\frac{2\eta}{\sigma^2 \lambda}  \le v^\top Y v & = \sum_{i=1}^n \lambda_i \iprod{u_i, v}^2 \le \sum_{i=1}^t \iprod{u_i, v}^2 + \lambda_{t+1} \sum_{i=t+1}^n \iprod{u_i, v}^2 \le 1-\norm{\Pi^{\perp} v}_2^2 + \frac{r}{r+1} \norm{\Pi^{\perp} v}_2^2 \\
\norm{\Pi^{\perp} v}_2^2 & \le \frac{2 (r+1) \eta}{\sigma^2 \lambda}, ~\text{ hence proving the lemma}.
\end{align*}

\end{proof}

\subsection{Properties of test-time robust classifiers}

The following claim shows that for any test-time robust classifier given by the intersection of $k=2$ halfspaces, the normals of the halfspaces are sparse. Moreover the subspace spanned by the normals is robust. We remark that a simple statement holds for general $k$ with a dependence on the least (non-trivial) singular value of the matrix given by normals.    
\begin{claim}\label{claim:robustintersection}
Let $h:\R^n \to \{0,1\}$ represent a classifier given by $h(x)= \bfone(w_1^\top x \ge 0) \cdot 1 (w_2^\top x\ge 0)$, where $\norm{w_1}_2 = \norm{w_2}_2 =1$ and $\norm{w_1 -w_2}_2 = \gamma \in (0,2)$. Suppose for $x \sim N(0,\sigma^2 I)$ with $h(x)=1$ \big(note that this happens with probability at least $\Omega(1-\gamma/2)$\big), we have with probability at least $2/3$ that
$$ \forall x'\in \R^n \text{ s.t. } \norm{x - x'}_q \le \delta, ~~ h(\tilde{x})= h(x).$$
Then there exists some universal constant $c>0$ such that we have $\max\sset{\norm{w_1}_{q^*} , \norm{w_2}_{q^*}} \le c \sigma/ \delta$. Moreover if $\Pi^*$ is the projection matrix onto the span of $w_1, w_2$, we have that $\norm{\Pi^*}_{q \to 2} \le (c/\gamma) \cdot  \sigma/ \delta $. 
\end{claim}
\begin{proof}
For $x \sim N(0, \sigma^2 I)$ (even conditioned on $h(x)=1$), we have that with probability at least $0.9$, $|w_1^\top x|, |w_2^\top x| \le O(\sigma)$. Let $\kappa' = \norm{w_1}_{q^*} \ge \norm{w_2}_{q^*}$ without loss of generality. Hence by norm duality, there exists a $z$ with $\norm{z}_q =\delta$ such that $w_1^\top z = \kappa \delta$. Thus if $\kappa \delta > c\sigma$ (for a large enough $c>0$), we have $w_1^\top (x-z) <0$. Hence the adversarial perturbation $\tilde{x}=x-z$ misclassifies the point. 

Let $\Pi^*$ be the projection matrix onto the span of $w_1, w_2$. Let $u$ be any vector in the subspace given by $\Pi^*$. It is easy to see that since $\norm{w_1 - w_2}_2 \ge \gamma$, $u$ can be expressed as a linear combination $u = \alpha_1 w_1 + \alpha_2 w_2$ where $\alpha_1^2 + \alpha_2^2 = O(1/\gamma^2)$.  This is because the minimum singular value of the matrix with columns $w_1, w_2$ is at least $\Omega(\gamma)$. Hence $\norm{u}_{q^*} \le |\alpha_1| \norm{w_1}_{q^*}+ |\alpha_2| \norm{w_2}_{q^*} \le O(\kappa/ \gamma)$. This proves the lemma.   

\end{proof}

\section{Trading off Natural Accuracy for Adversarial Robustness in Classification via Robust Projections}
\label{sec:gaussian-model}
\anote{Changed section heading from "Adversarial Robust Classification via Robust Projections"}
\anote{Organized the section into subsections, paragraphs to give more structure }
\anote{I think the intro paragraph can provide more motivation/context? Bring in sentences from the next page on this.}
\xnote{Oct 31st: Move two sentence from the intro in Section 2 to here.}
In many other natural scenarios it might be desirable to trade off natural accuracy for significant robustness to test-time perturbations. In this section we demonstrate how our techniques can be used for this purpose. 

We study the simple binary classification setting under a natural Gaussian model~\citep{Anderson_Multi_stat} for data generation. that was studied in recent works~\citep{tsipras2018robustness, schmidt2018adversarially}. In this model positive examples are drawn from a Gaussian distribution with mean $\mu_1$ and covariance matrix $\Sigma$, whereas the negative examples are drawn from a Gaussian with mean $\mu_2$ and the same covariance $\Sigma$. It is easy to see that if the means are well separated, e.g., if $\|\mu_1 - \mu_2\|_2 \geq \Omega(\sqrt{\log 1/\epsilon}) \sqrt{\norm{\Sigma}}$,\anote{did you want the more general condition?} then the Bayes optimal classifier will have error at most $\epsilon$. The works of~\citet{tsipras2018robustness, schmidt2018adversarially} used this simple model to study adversarial robustness and demonstrate that in many settings there is a natural tradeoff between the error and the robust error of any classifier (at test time); in particular there are settings where no robust classifier can achieve high natural accuracy. We continue this line of investigation and demonstrate that in many natural settings there do exists adversarially robust classifiers that also have small (natural) error. Furthermore, our algorithmic techniques can be used to learn such classifiers whereas standard approaches will fail. \anote{Moved it here from later. Feel free to move it back.} 

\anote{Edited the paragraph that follows.}
To demonstrate this, we will consider settings where the means $\mu_1, \mu_2$ remain reasonable well separated when projected onto a robust subspace $\Pi^*$, albeit by a smaller amount. Furthermore, the component of vector $\mu_1-\mu_2$ orthogonal to $\Pi^*$ will be spread out, as measured by the analytic sparsity $\ell_{q^*}/\ell_2$ (where $q=\infty$, this is $\ell_1/\ell_2$ sparsity). In such a setting, an adversary can make a small $\ell_q$ perturbation to a fresh test example and make the error of the Bayes optimal classifier close to half. On the other hand, projecting the data onto the robust subspace $\Pi^*$ first and performing classification in the subspace will let us tradeoff a small amount of natural accuracy for significant robustness gains.
We now formally define the Gaussian model.

\noindent \textbf{Gaussian Data Model $\calM(\mu_1, \mu_2, \Sigma)$.} In the Gaussian model an example label pair $(x,y) \in \R^n \times \{-1,+1\}$ is generated as follows. Pick $y=1$ with probability $1/2$ and $y=-1$ with probability $1/2$. Conditioned on $y$ generate $x$ as
\[
x|y \sim \left\{ \begin{array}{lr}
            {N}(\mu_1, \Sigma), \text{ if } y=1\\
            {N}(\mu_2, \Sigma), \text{ if } y=-1\\
             \end{array}\right\}
\]
We will use $(x,y) \sim \calM(\mu_1, \mu_2, \Sigma)$ to denote a labeled example generated via the above process. Given a positive-definite matrix $A$, we will use $A^+$ to denote the Penrose-Moore pseudoinverse of $A$. Given a classifier $f: \R^n \rightarrow \{-1,+1\}$ we define the error of $f$ to be the standard classification error given by
\begin{align}
    err(f) = \Psymb_{(x,y) \sim \calM(\mu_1, \mu_2, \Sigma)} [f(x) \neq y].
\end{align}
We will measure perturbations in $\ell_q$ norm for $q > 2$ and for a given perturbation radius $\delta > 0$, we define the robust error of $f$ to be
\begin{align}
    err_{rob}(f) = \Psymb_{(x,y) \sim \calM(\mu_1, \mu_2, \Sigma)} \Big[\exists z \in B_{\delta}(0): f(x+z) \neq y \Big], ~ \text{ where } B_{\delta}(0)=\{z: \|z\|_q \leq \delta\}.
\end{align}
\anote{4/14: Very nice description here. Moved it to the front.} 
To demonstrate this, we make the following natural assumptions on the structure of the above Gaussian model. \anote{4/14 Added:} We suspect that these results will hold in more general settings as well. For notational convenience, $\norm{\cdot}$ will be used by default to denote the $\ell_2$ norm for vectors, and the spectral norm for matrices, in the rest of the section.  \anote{4/14: Added last line, just in case.}

\noindent \textbf{Assumptions I.}
\begin{enumerate}
    \item {\em Mean Separation.} For a fixed $\epsilon, \epsilon_1 \in (0,1)$ and a constant $c \geq 1$, there exists a rank-$r$ projection $\Pi^*$ with $\norm{\Pi^*}_{\infty \to 2} \leq \kappa$, $\delta > 0$, and $\alpha, \beta \in (0,1)$, such that
    \begin{align}
            \norm{(\mu_1 - \mu_2)} &\geq c \sqrt{\log \big(\frac 1 \epsilon \big)} \sqrt{\norm{\Sigma}}  \label{eq:mean-separation}\\
        \norm{\Pi^*(\mu_1 - \mu_2)} &\geq c \sqrt{\log \big(\frac{1}{\epsilon + \epsilon_1} \big)} \sqrt{\norm{\Sigma}} + \frac{\kappa \delta \beta}{\alpha} \label{eq:projected-mean-separation}\\
        \alpha^2 \norm{\Sigma} &\leq \sigma_r(\Pi^* \Sigma \Pi^*)  \leq \beta^2 \norm{\Sigma}. \label{eq:projected-covariance-bound}
    \end{align}
    \item {\em Spread Condition.} Let $v$ be the vector defined as $v =  \Sigma^+ (\mu_1 - \mu_2)$. Then we have that
    \begin{align}
        \norm{v}_{q^*} \geq n^{0.1 (\frac 1 2 - \frac{1}{q})} \norm{v}_2. \label{eq:spread-out-condition}
    \end{align}
    \item $\delta$ satisfies the following bound
    \begin{align}
        \delta \geq \frac{\sqrt{\norm{\Sigma}}\norm{(\Sigma^+)^{\frac 1 2}(\mu_1 - \mu_2})}{n^{0.1 (\frac 1 2 - \frac{1}{q})}}. \label{eq:bound-on-delta}
    \end{align}
\end{enumerate}
\anote{4/14 Added headings to make the flow clearer.}
The constant $0.1$ in \eqref{eq:spread-out-condition} above is chosen for ease of exposition and in general one can define a similar condition in terms of $n^c$ for a small constant $c > 0$.

\noindent {\bf Bayes Optimal Classifier: } For the Gaussian model above, the Bayes optimal classification for a point $x$ is obtained by comparing the density functions $p(x|y=1)$ and $p(x|y=-1)$. In particular we have that
\begin{align*}
    \frac{p(x|y=1)}{p(x|y=-1)} &= \frac{e^{-(x-\mu_1)^\top \Sigma^+ (x-\mu_1)}}{e^{-(x-\mu_2)^\top \Sigma^+ (x-\mu_2)}}.
\end{align*}
The above corresponds to the classifier
\begin{align}
    f^*(x) &= sgn \Big((x-\mu_2)^\top \Sigma^+ (x-\mu_2) - (x-\mu_1)^\top \Sigma^+ (x-\mu_1) \Big) \nonumber \\
    &= sgn \Big(\norm{(\Sigma^+)^{\frac 1 2} (x-\mu_2)}^2 - \norm{(\Sigma^+)^{\frac 1 2} (x-\mu_1)}^2 \Big). \label{eq:bayes-optimal}
\end{align}

\noindent {\bf Robust Projection-based Classifier:} In a similar manner we define the robust classifier that performs classification after projection onto $\Pi^*$. When projected onto $\Pi^*$, the conditional distribution of $x$ is again Gaussian with means either $\Pi^* \mu_1$ or $\Pi^* \mu_2$ depending on $y$, and the covariance matrix being $\Gamma = \Pi^* \Sigma \Pi^*$. Then we define the robust classifier as
\begin{align}
    f_{\Pi^*}(x) &=     
    sgn \Big(\norm{(\Gamma^+)^{\frac 1 2} \Pi^* (x-\mu_2)}^2 - \norm{(\Gamma^+)^{\frac 1 2} \Pi^* (x-\mu_1)}^2 \Big). \label{eq:bayes-optimal-robust}
\end{align}
The assumptions in \eqref{eq:mean-separation} and \eqref{eq:projected-mean-separation} ensure that the means are separated in the ambient space as well as when projected onto the robust subspace $\Pi^*$. This will ensure that $f^*(x)$ has error at most $\epsilon$ and at the same time $f_{\Pi^*}(x)$ is not too much worse and has error at most $\epsilon + \epsilon_1$. The assumption in \eqref{eq:projected-covariance-bound} will be used to argue that adversarial perturbations do not hurt the robust classifier $f_{\Pi^*}(x)$ and its robust error also remains at most $\epsilon + \epsilon_1$. Finally, the assumptions in \eqref{eq:spread-out-condition} and \eqref{eq:bound-on-delta} will be used to show that orthogonal to $\Pi^*$ the vector $\mu_1 - \mu_2$ is spread out and an adversary can take advantage of this fact to design test-time perturbations that make the robust error of the Bayes optimal classifier close to half. Notice that the assumptions allow for a fairly large range of the perturbation radius $\delta$ as stated in \eqref{eq:bound-on-delta}. We next formalize these arguments.

\subsection{Trading off Natural Accuracy for More Robustness Statistically}
\anote{Made it a proposition from a lemma, and added a name. }

\begin{proposition}[Non-robustness of the Bayes-optimal classifier]
\label{lem:bayes-classifier-gaussian-model}
If the Gaussian model $\calM(\mu_1, \mu_2, \Sigma)$ satisfies Assumptions I then for the Bayes optimal classifier $f^*(x)$ defined in \eqref{eq:bayes-optimal} it holds that
\begin{align*}
    err(f^*) &\leq \epsilon\\
    err_{rob}(f^*) &\geq \frac 1 2.
\end{align*}
\end{proposition}
\begin{proof}
We first analyze the error of $f^*$. From \eqref{eq:bayes-optimal} we have 
\begin{align*}
    err(f^*) &= \frac 1 2 \Psymb_{x \sim {N}(\mu_1, \Sigma)} \Big[\norm{(\Sigma^+)^{\frac 1 2} (x-\mu_2)}^2  \leq \norm{(\Sigma^+)^{\frac 1 2} (x-\mu_1)}^2 \Big]\\
    &+ \frac 1 2 \Psymb_{x \sim {N}(\mu_2, \Sigma)} \Big[\norm{(\Sigma^+)^{\frac 1 2} (x-\mu_1)}^2  \leq \norm{(\Sigma^+)^{\frac 1 2} (x-\mu_2)}^2 \Big] 
\end{align*}
Using the fact that when $x \sim {N}(\mu, \Sigma)$ then we have that $(\Sigma^+)^{\frac 1 2}(x-\mu)$ is distributed as $\Pi g$ where  $g \sim {N}(0,I)$ and $\Pi$ is the orthogonal projection onto the subspace spanned by the singular vectors of $\Sigma$. Then we get have
\begin{align*}
    err(f^*) &= \frac 1 2 \Psymb_{g \sim {N}(0, I)} \Big[\norm{\Pi g + (\Sigma^+)^{\frac 1 2} (\mu_1-\mu_2)}^2  \leq \norm{\Pi g}^2 \Big] \\
    &\quad + \frac 1 2 \Psymb_{g \sim {N}(0, I)} \Big[\norm{\Pi g + (\Sigma^+)^{\frac 1 2} (\mu_2-\mu_1)}^2  \leq \norm{ \Pi g}^2 \Big] \\
    &= \Psymb_{g \sim {N}(0, I)} \Big[\norm{\Pi g + (\Sigma^+)^{\frac 1 2} (\mu_1-\mu_2)}^2\leq \norm{\Pi g}^2 \Big]\\
    &= \Psymb_{g \sim {N}(0, I)} \Big[ \iprod{\Pi g, (\Sigma^+)^{\frac 1 2} (\mu_2 - \mu_1)} \geq \frac 1 2 \norm{(\Sigma^+)^{\frac 1 2} (\mu_2 - \mu_1)}^2 \Big].
\end{align*}
Since $\iprod{\Pi g, (\Sigma^+)^{\frac 1 2} (\mu_2 - \mu_1)}$ is a Gaussian with mean zero and variance $\norm{(\Sigma^+)^{\frac 1 2} (\mu_2 - \mu_1)}^2$, we know from standard facts about Gaussian distribution that with probability at least $1-\epsilon$, it holds that
\begin{align*}
|\iprod{\Pi g, (\Sigma^+)^{\frac 1 2} (\mu_2 - \mu_1)}| \leq c'\sqrt{\log (\frac 1 \epsilon)} \norm{(\Sigma^+)^{\frac 1 2} (\mu_2 - \mu_1)},
\end{align*}
for a universal constant $c' > 0$. Furthermore, \eqref{eq:mean-separation} implies that
\begin{align*}
\norm{(\Sigma^+)^{\frac 1 2} (\mu_2 - \mu_1)} &\geq \frac{1}{\sqrt{\norm{\Sigma}}} \norm{\mu_2 - \mu_1} > c \sqrt{\log (\frac 1 \epsilon)}.
\end{align*}
Setting $c > 2c'$ we get that
\begin{align*}
    err(f^*) &= \Psymb_{g \sim {N}(0, I)} \Big[ \iprod{\Pi g, (\Sigma^+)^{\frac 1 2} (\mu_2 - \mu_1)} \geq \frac 1 2 \norm{(\Sigma^+)^{\frac 1 2} (\mu_2 - \mu_1)}^2 \Big]\\
    &\leq \epsilon.
\end{align*}
Next we analyze the robust error of $f^*$. We have
\begin{align*}
    err_{rob}(f^*) &= \frac 1 2 \Psymb_{x \sim {N}(\mu_1, \Sigma)} \Big[\sup_{z \in B_\delta(0)}\norm{(\Sigma^+)^{\frac 1 2} (x+z-\mu_1)}^2  - \norm{(\Sigma^+)^{\frac 1 2} (x+z-\mu_2)}^2 \geq 0\Big]\\
    &+ \frac 1 2 \Psymb_{x \sim {N}(\mu_2, \Sigma)} \Big[\sup_{z \in B_\delta(0)} \norm{(\Sigma^+)^{\frac 1 2} (x+z-\mu_2)}^2  - \norm{(\Sigma^+)^{\frac 1 2} (x+z-\mu_1)}^2 \geq 0\Big] 
\end{align*}
Again from symmetry we can rewrite this as
\begin{align*}
    err_{rob}(f^*) &= \Psymb_{g \sim {N}(0, I)} \Big[\sup_{z \in B_\delta(0)} \norm{\Pi g + (\Sigma^+)^{\frac 1 2} z}^2 - \norm{\Pi g + (\Sigma^+)^{\frac 1 2} (\mu_1-\mu_2) + \Sigma^{-\frac 1 2}z }^2  \geq 0 \Big]\\
    &= \Psymb_{g \sim {N}(0, I)} \Big[ \iprod{\Pi g, (\Sigma^+)^{\frac 1 2} (\mu_2 - \mu_1)} + \sup_{z \in B_\delta(0)} \iprod{(\Sigma^+)^{\frac 1 2}z, (\Sigma^+)^{\frac 1 2} (\mu_2 - \mu_1)} \geq \frac 1 2 \norm{(\Sigma^+)^{\frac 1 2} (\mu_2 - \mu_1)}^2 \Big]\\
    &= \Psymb_{g \sim {N}(0, I)} \Big[ \iprod{\Pi g, (\Sigma^+)^{\frac 1 2} (\mu_2 - \mu_1)} + \delta \norm{\Sigma^+(\mu_2 - \mu_1)}_{q^*} \geq \frac 1 2 \norm{(\Sigma^+)^{\frac 1 2} (\mu_2 - \mu_1)}^2 \Big].
\end{align*}
We will next show that $\delta \norm{\Sigma^+(\mu_2 - \mu_1)}_{q^*} \geq \frac 1 2 \norm{(\Sigma^+)^{\frac 1 2} (\mu_2 - \mu_1)}^2$, thereby establishing that $err_{rob}(f^*) \geq 1/2$. From \eqref{eq:spread-out-condition} we have that
\begin{align*}
\delta \norm{\Sigma^+(\mu_2 - \mu_1)}_{q^*} &\geq \delta n^{0.1(\frac 1 2 - \frac{1}{q})} \norm{\Sigma^+(\mu_2 - \mu_1)}_2\\
&\geq \sqrt{\norm{\Sigma}}\norm{(\Sigma^+)^{\frac 1 2}(\mu_1-\mu_2)}\norm{\Sigma^+(\mu_2 - \mu_1)}_2, \,\, \text{[ using the lower bound on $\delta$ in \eqref{eq:bound-on-delta}]}\\
&\geq \norm{(\Sigma^+)^{\frac 1 2} (\mu_2-\mu_1)}^2.
\end{align*}
\end{proof}

\anote{Changed it to proposition}
\begin{proposition}[Guarantees for the Robust Projection-based classifier]
\label{lem:robust-classifier-gaussian-model}
If the Gaussian model $\calM(\mu_1, \mu_2, \Sigma)$ satisfies Assumptions I then for the robust classifier $f_{\Pi^*}(x)$ defined in \eqref{eq:bayes-optimal-robust} it holds that
\begin{align*}
    err(f_{\Pi^*}) &\leq \epsilon + \epsilon_1\\
    err_{rob}(f_{\Pi^*}) &\leq  \epsilon + \epsilon_1.
\end{align*}
\end{proposition}
\begin{proof}
Notice that $\Pi^* x$ is distributed as ${N}(\Pi^* \mu, \Gamma)$ when $x \sim {N}(\mu, \Sigma)$, where $\Gamma = \Pi^* \Sigma \Pi^*$. Hence similar to the calculation in Proposition~\ref{lem:bayes-classifier-gaussian-model} we have that $f_{\Pi^*}(x)$ has error at most $\epsilon + \epsilon_1$ provided 
$$
\norm{\Pi^*(\mu_2 - \mu_1)} \geq c \sqrt{\log \big(\frac{1} {\epsilon + \epsilon_1} \big)} \sqrt{\norm{\Gamma}}.
$$
From \eqref{eq:projected-mean-separation} and noticing that $\norm{\Gamma} \leq \norm{\Sigma}$ we get that $err(f_{\Pi^*}) \leq \epsilon + \epsilon_1$. Next we analyze the robust error of the classifier. Again from the calculations in the previous lemma we have $err_{rob}(f_{\Pi^*})$:
\begin{align*}
     &= \Psymb_{g \sim {N}(0, I)} \Big[ \iprod{\Pi^* g, (\Gamma^+)^{\frac 1 2} \Pi^*(\mu_2 - \mu_1)} + \sup_{z \in B_\delta(0)} \iprod{\Pi^* z, \Gamma^+ \Pi^*(\mu_2 - \mu_1)} \geq \frac 1 2 \norm{(\Gamma^+)^{\frac 1 2} \Pi^* (\mu_2 - \mu_1)}^2 \Big].
\end{align*}
Next notice that
\begin{align*}
    \sup_{z \in B_\delta(0)} \iprod{\Pi^* z, \Gamma^+ \Pi^*(\mu_2 - \mu_1)} &\leq \sup_{z \in B_\delta(0)}\norm{\Pi^* z} \norm{\Gamma^+ \Pi^* (\mu_2 - \mu_1)}\\
    &\leq \kappa \delta \frac{\norm{(\Gamma^+)^{\frac 1 2}\Pi^* (\mu_2 - \mu_1)}}{\alpha \sqrt{\norm{\Sigma}}}. \,\, \text{[from \eqref{eq:projected-covariance-bound}]}
\end{align*}
Notice that $\iprod{\Pi^* g, (\Gamma^+)^{\frac 1 2} \Pi^*(\mu_2 - \mu_1)}$ is a Gaussian that with probability at least $1-\epsilon - \epsilon_1$ takes a value at most $c'\norm{(\Gamma^+)^{\frac 1 2} \Pi^*(\mu_2 - \mu_1)}\sqrt{\log \big(\frac{1}{\epsilon+\epsilon_1} \big)}$. Hence to ensure that $err_{rob}(f_{\Pi^*}) \leq \epsilon + \epsilon_1$ it is enough to have
\begin{align*}
    \frac{1}{2} \norm{u}^2 \geq c' \sqrt{\log \big(\frac{1}{\epsilon + \epsilon_1} \big)}\norm{u} + \frac{\kappa \delta}{\alpha \sqrt{\norm{\Sigma}}}\norm{u},
\end{align*}
where $u = (\Gamma^+)^{\frac 1 2} \Pi^*(\mu_2 - \mu_1)$. The bound follows from \eqref{eq:projected-mean-separation} and noticing that \eqref{eq:projected-covariance-bound} implies that
\begin{align*}
\norm{u} &= \norm{(\Gamma^+)^{\frac 1 2} \Pi^*(\mu_2 - \mu_1)}\geq \frac{1}{\beta \sqrt{\norm{\Sigma}}} \norm{\Pi^* (\mu_2 - \mu_1)}.
\end{align*}
\end{proof}

\anote{Made it into a subsection}
\subsection{Efficient Algorithms for Finding a Robust Classifier}
Next we discuss how our techniques from Section~\ref{sec:recovery} can be used to find a robust classifier in the Gaussian model discussed above. Given labeled examples, the first step of the learning algorithm is to use 
\anote{4/14: Removed mention of mixtures of Gaussians' algorithms}
standard estimators for mean, covariance of single Gaussians on both the positive and negative examples separately, to find approximate parameters $\hat{\mu}_1, \hat{\mu}_2, \hat{\Sigma}$. Our recovery guarantee will depend on $\gamma_1, \gamma_2$ such that $\norm{(I-\Pi^*)(\Sigma^+)^{\frac 1 2}}_2 \leq \gamma_1 \norm{(\Sigma^+)^{\frac 1 2}}$ and $\sigma_r(\Pi^* (\Sigma^+)^{\frac 1 2} \Pi^*) \geq \gamma_2 \norm{(\Sigma^+)^{\frac 1 2}}$.
In this case the SDP from \eqref{eq:spectralsdp} when run on $\hat{\Sigma}^+$ implies that we will obtain a rank-$r$ subspace $\hat{\Pi}$ such that $\norm{\hat{\Pi}^{\perp}\Pi^*} \leq O(\frac{\gamma_1}{\gamma_2})$. Using the same analysis as above we can then show that the classifier 
$$
f_{\hat{\Pi}}(x) = sgn \Big(\norm{ ({\hat{\Gamma}}^+)^{\frac 1 2} \hat{\Pi}(x- \hat{\mu}_2)}^2 - \norm{ ({\hat{\Gamma}}^+)^{\frac 1 2} \hat{\Pi}(x- \hat{\mu}_1)}^2 \Big),
$$
where $\hat{\Gamma} = \hat{\Pi} \hat{\Sigma} \hat{\Pi}$. The learning algorithm is described below. To analyze the robust classifier we mildly strengthen Assumptions I below, to account for the error in the estimate of $\Pi^*$. 

\noindent \textbf{Assumptions II.}
\begin{itemize}
    \item For a fixed $\epsilon \in (0,1)$ and a constant $c \geq 1$, there exists a rank-$r$ projection $\Pi^*$ with $\norm{\Pi^*}_{\infty \to 2} \leq \kappa$, $\delta > 0$, and $\alpha, \beta, \gamma_1, \gamma_2 \in (0,1)$, such that
    \begin{align}
        \norm{\Pi^*(\mu_1 - \mu_2)} -\frac{10\gamma_1}{\gamma_2} \norm{\mu_1 - \mu_2} &\geq c \sqrt{\log \big(\frac 1 \epsilon \big)} \sqrt{\norm{\Sigma}} + \frac{\kappa \delta \beta}{\alpha'} \label{eq:projected-mean-separation-2}\\
        \alpha^2 \norm{\Sigma} &\leq \sigma_r(\Pi^* \Sigma \Pi^*) \leq \beta^2 \norm{\Sigma} \label{eq:projected-covariance-bound-2}\\
        \norm{(I-\Pi^*)(\Sigma^+)^{\frac 1 2}} &\leq \gamma_1 \norm{(\Sigma^+)^{\frac 1 2}} \label{eq:projection-of-Sigma-on-Pi}\\
        \sigma_r(\Pi^* (\Sigma^+)^{\frac 1 2} \Pi^*) &\geq \gamma_2 \norm{(\Sigma^+)^{\frac 1 2}}. \label{eq:projection-of-Sigma-on-Pi-lb}       
    \end{align}
\end{itemize}
Here $\alpha' = \sqrt{\alpha^2 - \frac{8\gamma_1}{\gamma_2}}$.

\begin{figure}[htbp]
\begin{center}
\fbox{\parbox{0.98\textwidth}{
{\bf RobustClassification($(x_1, y_1), \dots, (x_m, y_m)$)}

{\bf Input:} Labeled examples $(x_1, y_1), \dots, (x_m, y_m)$.
\begin{enumerate}
\item Use the first $\frac m 2$ examples to get estimates of $\hat{\mu}_1, \hat{\mu}_2, \hat{\Sigma}$ from standard estimators applied to positive and negative examples separately. \anote{Removed KMV mention.}
\item Use the algorithm from Corollary~\ref{corr:recovery:fr} with $A = ({\hat{\Sigma}}^+)^{\frac 1 2}$ to get $\hat{\Pi}$. \anote{4/14: Should this SDP be run on $A=\hat{\Sigma}^{1/2}$ or actually $\hat{\Sigma}^{-1/2}$? Also, maybe you just want to say Algorithm from Theorem/Corollary? because there's rounding involved as well?}
\item Define $\hat{\Gamma} = \hat{\Pi} \hat{\Sigma} \hat{\Pi}$. 
\item Return the classifier
$$
f_{\hat{\Pi}}(x) = sgn \Big(\norm{ ({\hat{\Gamma}}^+)^{\frac 1 2} \hat{\Pi}(x- \hat{\mu}_2)}^2 - \norm{ ({\hat{\Gamma}}^+)^{\frac 1 2} \hat{\Pi}(x- \hat{\mu}_1)}^2 \Big).
$$
\end{enumerate}
}}
\end{center}
\caption{\label{ALG:Robust-Classifier} Adversarially Robust Classification.}
\end{figure}

A natural illustrative example to keep in mind is the case when the two Gaussians are in the {\em parallel pancake} orientation~\citep{brubaker2008isotropic}, where the variance is small along one  direction $u$ e.g., the x-axis (more generally, the variance is small along a robust subspace), and the variance in all the orthogonal directions are very large. If $u$ is sparse (robust), and in addition the vector between the means has a reasonable projection on to $u$, then our algorithmic techniques can approximate $u$ well with another sparse vector $\hat{u}$. Performing classification after projecting onto $\hat{u}$ will lead to a significantly more robust classifier without hurting the natural accuracy a lot.

Next we state and prove our main theorem regarding the Algorithm in Figure~\ref{ALG:Robust-Classifier}.
\begin{theorem}
\label{thm:robust-classifier}
Given $\epsilon \in (0,1)$, perturbation radius $\delta > 0$, and $m = poly(n, 1/\epsilon, 1/\norm{\Sigma^+}, 1/\norm{\mu_1 - \mu_2})$ labeled examples from the Gaussian data model $\calM(\mu_1, \mu_2, \Sigma)$ satisfying Assumptions II, the algorithm in Figure~\ref{ALG:Robust-Classifier} runs in polynomial time and with high probability outputs a classifier $f_{\hat{\Pi}}(x)$ such that $err_{rob}(f_{\hat{\Pi}}) \leq \epsilon$.
\end{theorem}
\begin{proof}
\anote{4/14: Removed mention of KMV. But need to modify the bound for $\epsilon' = \epsilon/poly(n, 1/\norm{\mu_1 - \mu_2}_2, \norm{\Sigma})$. Removed dependence on mean sep. Not sure about covariance..} 
By taking the empirical mean and covariance of the positive and negative examples with $\epsilon'$ set to be  $\epsilon' = \min \big(\epsilon/poly(n, \norm{\Sigma^+}), (\gamma_1/\gamma_2) \norm{\mu_1-\mu_2} \big)$ we get estimates $\hat{\mu}_1, \hat{\mu}_2, \hat{\Sigma}$ such that
\begin{align*}
    \norm{\mu_1 - \hat{\mu}_1} \leq \epsilon', ~~~    \norm{\mu_2 - \hat{\mu}_2} \leq \epsilon', ~ \text{ and }   \norm{\Sigma - \hat{\Sigma}}_2 \leq \epsilon'.
\end{align*}
The above combined with \eqref{eq:projected-covariance-bound-2} and \eqref{eq:projection-of-Sigma-on-Pi} implies that
\begin{align*}
    \sigma_r(\Pi^* ({\hat{\Sigma}}^+)^{\frac 1 2} \Pi^*) \geq (1-o(1)) \gamma_2 \norm{(\Sigma^+)^{\frac 1 2}}\\
    \norm{(I-\Pi^*)({\hat{\Sigma}}^+)^{\frac 1 2}} \leq (1+o(1))\gamma_1 \norm{(\Sigma^+)^{\frac 1 2}}.
\end{align*}
Hence, Corollary~\ref{corr:recovery:fr} implies that step 2 of Algorithm~\ref{ALG:Robust-Classifier} will output $\hat{\Pi}$ such that $\norm{\hat{\Pi}^{\perp} \Pi^*} \leq \frac{4 \gamma_1}{\gamma_2}$. Next, to establish that $f_{\hat{\Pi}}$ has robust error at most $\epsilon$, we need to verify that \eqref{eq:projected-mean-separation}, \eqref{eq:projected-covariance-bound} from Assumptions I hold with $\hat{mu}_1, \hat{\mu}_2, \hat{\Sigma}$ and $\hat{\Pi}$. We have
\begin{align*}
    \norm{\hat{\Pi} \hat{\Sigma} \hat{\Pi}} &\geq \norm{{\Pi^*} \hat{\Sigma} \hat{\Pi}} - \norm{(\hat{\Pi}-\Pi^*) \hat{\Sigma} \hat{\Pi}}\\
    &\geq \norm{{\Pi^*} \hat{\Sigma} {\Pi^*}} - \norm{{\Pi^*} \hat{\Sigma} (\hat{\Pi}-\Pi^*)} - 4\frac{\gamma_1}{\gamma_2} \norm{\Sigma}\\
    &\geq \norm{{\Pi^*} {\Sigma} {\Pi^*}} - \norm{{\Pi^*} (\hat{\Sigma}-\Sigma) {\Pi^*}} - 8\frac{\gamma^2}{\alpha^2} \norm{\Sigma}\\
    &\geq \alpha^2 \norm{\Sigma} - \epsilon' \norm{\Sigma} - 8\frac{\gamma_1}{\gamma_2}\\
    &\geq (1-o(1)) \Big(\alpha^2-8\frac{\gamma_1}{\gamma_2} \Big) \norm{\Sigma} \,\, \text{[from \eqref{eq:projected-covariance-bound}]}.
\end{align*}
Finally, we have
\begin{align*}
    \norm{\hat{\Pi}(\hat{\mu}_1 - \hat{\mu_2})} &\geq \norm{\hat{\Pi}({\mu}_1 - {\mu_2})} - 2\epsilon'\\
    &\geq \norm{{\Pi^*}({\mu}_1 - {\mu_2})} - 2\epsilon' - \norm{(\hat{\Pi} - \Pi^*)({\mu}_1 - {\mu_2})}\\
    &\geq \norm{{\Pi^*}({\mu}_1 - {\mu_2})} - 2\epsilon' - 4\frac{\gamma_1}{\gamma_2}\norm{\mu_1 - \mu_2}\\
    &\geq c \sqrt{\log \big(\frac 1 \epsilon \big)} \sqrt{\norm{\Sigma}} + \frac{\kappa \delta}{\alpha'} \,\, \text{[from \eqref{eq:projected-mean-separation-2}]}.
\end{align*}
\end{proof}
To demonstrate the applicability of Theorem~\ref{thm:robust-classifier} we instantiate it for a special case when $\Sigma = I - \theta \Pi^*$ for a constant $\theta < 1$ and $\Pi^*$ being a rank-$r$ $\kappa$-robust projection matrix. In this case the assumptions simplify to get the following corollary that we state for the case of $q = \infty$. We remark that this particular instantiation is only used to demonstrate the flavour of the condition; in this specialized setting one can of course use our knowledge of the form of covariance matrix to just find $\Pi^*$ (by subtracting off the identity matrix), instead of our algorithm in step 2.  
\anote{4/14: Added the above line as a disclaimer.} \anote{Is $\theta>1$ or $\theta<1$?}
\begin{corollary}
Let $s < 1$ be a fixed constant and $\calM(\mu_1, \mu_2, \Sigma)$ be the Gaussian data model with $\Sigma = I-\theta \Pi^*$ with $\theta \in (0,s)$ such that
\begin{align*}
    \norm{\Pi^* (\mu_1 - \mu_2)} - {10}\sqrt{1-\theta} \norm{\mu_1 - \mu_2} &\geq c\sqrt{\log (\frac 1 \epsilon)} + c'\kappa \delta\\
    \norm{(I-\Pi^*)(\mu_1 - \mu_2)}_1 &\geq (n^{0.1} + \frac{\kappa}{1-\theta}) \norm{\mu_1 - \mu_2},
\end{align*}
where $c'$ is constant that depends on $s$. If $\delta$ satisfies
\begin{align*}
    \frac{\norm{\mu_1 - \mu_2}}{n^{0.1}} \leq \delta \leq \frac{\norm{\Pi^* (\mu_1 - \mu_2)}}{\kappa},
\end{align*}
then we have that the robust error of the Bayes optimal classifier satisfies $err_{rob}(f^*) \geq \frac 1 2$. On the other hand given $m = poly(n, 1/\epsilon, 1/\theta)$ labeled examples from the Gaussian data model $\calM(\mu_1, \mu_2, \Sigma)$ satisfying Assumptions II, the algorithm in Figure~\ref{ALG:Robust-Classifier} runs in polynomial time and with high probability outputs a classifier $f_{\hat{\Pi}}(x)$ such that $err_{rob}(f_{\hat{\Pi}}) \leq \epsilon$.
\end{corollary}
\section{Related Work}
\label{sec:related-work}
\anote{4/13: Do we need to add more related work?}

\paragraph{Adversarial Robustness.} Existing theoretical work on adversarial robustness has almost exclusively focused on supervised learning, and in particular on binary classification. These works include the study of adversarial counterparts of notions such as VC dimension and Rademacher complexity~\citep{cullina2018pac, khim2018adversarial, yin2018rademacher}, evidence of computational barriers~\citep{bubeck2018adversarial, bubeck2018adversarial2, nakkiran2019adversarial, degwekar2019computational} and statistical barriers towards ensuring both low test error and low adversarial test error~\citep{tsipras2018robustness}, and computationally efficient algorithms for adversarially robust learning of restricted classes such as degree-$1$ and degree-$2$ polynomial threshold functions~\citep{awasthi2019robustness}. Furthermore, recent works also provide evidence that adversarially robust supervised learning requires more training data than its non-robust counterpart~\citep{schmidt2018adversarially, montasser2019vc, min2020curious}. 
The closest to our work is the result of~\citet{garg2018spectral} that studies a particular formulation of adversarially robust features. The authors consider computing, given i.i.d. samples from a distribution, a map $f$ such that, with high probability over a new example $x$ drawn from the same distribution, points close to $x$ get a nearby mapping (in $\ell_2$ distance) under $f$. While similar in motivation to our work, the results in~\citet{garg2018spectral} do not aim to minimize the projection error and simply require the projection $f$ to be mean zero and variance one to avoid trivial solutions. Furthermore, the authors look at a specific type of spectral embedding given by the top eigenvectors of the Laplacian of an appropriate graph constructed on the training data. The bounds presented for this embedding depend on the eigenvalue gap present in the Laplacian matrix. Finally, it is not clear how to efficiently use the embedding on new test points, as it involves recomputing the Laplacian by incorporating the new point into the training set. 
\pnote{Need to mention Woodruff stuff and column subset selection.}\pnote{Done.} 

\paragraph{Low Rank Approximations.} There is a large body of work in randomized numerical linear algebra on methods such as column subset selection and CUR decompositions~\citep{kannan2017randomized, boutsidis2009improved, deshpande2010efficient, boutsidis2014near, drineas2008relative, boutsidis2017optimal, song2017low} that aim to approximate a given matrix via a low dimensional subspace spanned by a small number of rows/columns of the matrix. 
However these algorithms do not necessarily yield robust representations; in particular the subspace that is spanned may not be robust in our sense (${q \to 2}$ operator norm).

\vspace{5pt}
\noindent \textbf{Sparse PCA.} The problem of sparse PCA has been studied both in average-case and worst-case settings. \\
\emph{Average-case setting:} In the high dimensional regime where the number of samples is much less than the dimensionality, several works have pointed out inconsistent behavior of PCA~\citep{paul2007asymptotics, nadler2008finite, johnstone2009consistency}. As a result this led to the study of the sparse PCA problem where it is assumed that the leading eigenvector is sparse. This problem is typically studied under an average case model known as the {\em spiked covariance model}~\citep{johnstone2001distribution}\footnote{Such ``spiked'' models (signal plus noise) have a long history in statistics~\citep{Anderson_Multi_stat} (first edition in 1962), viewed as certain types of \textit{factor} models.}. In this model the data is assumed to be generated from a Gaussian with covariance matrix $I + \theta vv^\top$, where the leading eigenvector $v$ is assumed to be a sparse vector and $\theta$ is a parameter characterizing the signal strength. There have been several works that study minimax rates of estimating the leading eigenvector (and eigenspaces) under the spiked covariance model and for various notions of sparsity~\citep{aminiwainwright, ma2013sparse, cai2013sparse, shen2013consistency, VuLei12, VuLei13}. More distantly related works include  linearly transformed spiked
models~\citep{dobriban2020optimal}, where the focus is on deriving the Bayes optimal robust classifiers and recovering the unobserved signals of interest under noisy linear transforms.

\noindent \emph{Worst-case setting:} There has also been work on the worst-case version of the problem in the special case when the rank $r=1$, but for the maximization variant of the sparse PCA objective~\citep{chan2016approximability}. For the maximization objective, the $\ell_0$ and $\ell_1$ versions (for capturing sparsity) are within a factor of $2$ from each other (see Section 10.3.3 of ~\citet{Vershynin}). 
Even when $r=1$ the best known polynomial time algorithm gives a $O(n^{1/3})$ factor approximation in the worst-case (for both the $\ell_1$ and $\ell_0$ versions) ~\citep{chan2016approximability}.  
Moreover no constant factor approximation is possible assuming the SSE conjecture~\citep{chan2016approximability}.  Appendix~\ref{sec:computational-lower-bound} shows how this also immediately implies computational hardness of our minimization version \eqref{intro:obj}. The recent work of~\citep{woodruff2020spca} also studies the maximization version of the $\ell_0$-sparse PCA problem and presents bicriteria algorithms for the sparsity and the quality of the returned solution. However, none of those results translate to multiplicative factor approximation algorithms for the minimization variant of the problem that we study. More importantly, these works are restricted to the rank $r=1$ setting, while we study the more general version of the problem. 

\paragraph{Robustness to Corruptions in the Training Data.} There is large body of work, spanning both the theoretical computer science and the statistical communities, that formulates and studies robustness to training data corruptions in the context of both supervised and unsupervised learning~\citep{valiant1985learning, kearns1993learning, huber2011robust, diakonikolas2019survey}. 
However they do not study the notion of adversarial perturbations to the data, to the best of our knowledge. 

There is also a large body of work on {Robust Optimization}~\citep{ben2009robust}, where the input is uncertain and is assumed to belong to a structured {\em uncertainty} set. In robust optimization one looks for a single solution that is simultaneously good for all inputs in the uncertainty set, leading to a max-min formulation of the problem. In our model of corruption, we are interested in instance wise guarantees - for every input $A$ and its corruption $\tilde{A}$, the algorithm is required to output a solution that is good for $A$ (the solution is not required to be simultaneously good for all possible corruptions $\tilde{A}$). Moreover the resilience of a solution for $A$ to corruptions implies structural properties that can be leveraged algorithmically.  
Moreover we are not aware of any results related to PCA in this context.

\paragraph{\bf Robust variants of PCA. } The problem of robust PCA has received significant attention in recent years~\citep{de2003framework, candes2011robust, chandrasekaran2011rank}. Here one assumes that a given corrupted matrix $\tilde{A}$ is a sum of two matrices, the true matrix $A$ that is low-rank and a sparse corruption matrix $S$ with sparsity pattern being essentially random. The corruptions, although sparse, can be unbounded in magnitude. This necessitates an incoherence type assumption that the ``mass'' or the principal components of $A$ is spread out -- recovery of $A$ is impossible under unbounded sparse corruptions when the signal is localized or sparse. 
On the other hand, the corruptions may not be sparse in our case. In particular, {\em every} data point (in fact every entry of $A$) could be corrupted up to some specified magnitude $\delta$. Here as our results show (particularly Theorem~\ref{ithm:robusttraining:frob}), localization (or sparsity) of the signal is crucial in tolerating adversarial perturbations in the training data (e.g., a spread out signal can be completely overwhelmed by the corruption in each entry of $A$). The very recent works of~\citet{ACV20coltsub, kothari2020pca} study a well-studied average-case model for sparse PCA under the same notion of training-time corruptions that we study in this work. The work of \citet{ACV20coltsub} builds on the current paper and characterizes the recovery error in terms of the $q \to 2$ operator norm, while the authors of \citet{kothari2020pca} focus on $r=1$ and characterize some computational vs statistical tradeoffs for this average-case model. On the other hand, our results apply to higher rank settings and for worst case data.

\paragraph{Huber's Contamination Model.} In statistics, Huber's $\epsilon$-contamination model~\citep{huber2011robust} is the most widely studied. In this model the dataset is assumed to be generated i.i.d. from a mixture namely, $(1-\epsilon)P + \epsilon Q$. Here $P$ is the true distribution and is assumed to be well behaved, for example the Gaussian distribution, and $Q$ is an arbitrary distribution modeling the noise. The study of this model has led to insightful results for a variety of problems. 
 Recently, there have been many exciting developments in designing robust estimators of mean and covariance that are also computationally efficient~\citep{diakonikolas2019robust, lai2016agnostic, charikar2017learning, diakonikolas2019survey}. 
We would like to point out that in these works (and several other recent works), the model of corruption is different than ours. In particular, rather than assuming that the data contains a few outliers~(Huber's model), in our model an adversary can potentially corrupt {\em every} data point up to magnitude $\delta$ (measured in $\ell_q$ norm for $q \ge 2$). 

\paragraph{Clustering.} From the computational point of view, the work of~\citet{dasgupta1999learning} formulated the goal of clustering data generated from a mixture of well-separated Gaussians. There is a large body of work on designing efficient algorithms for clustering in this setting, both for Gaussians and more general distributions~\citep{arora2005learning, vempala2004spectral, achlioptas2005spectral, moitra2010settling, belkin2010polynomial}. See recent works~\citep{regev2017learning,hopkinsli2018,diakonikolas2018list,kothari2017better} for a detailed discussion. The work of~\citet{kumar2010clustering} abstracted out a common property of datasets~(spectral stability as defined in (\ref{eq:spectral-stability})) that captures mixtures of well separated Gaussians, the planted partitioning model, and other well clustered instances. They showed that a single algorithm, namely the popular Lloyd's algorithm, with the right initialization, provably computes optimal solution for such stable instances. The separation factor needed for Lloyd's to work in~\citet{kumar2010clustering} was later improved by~\cite{awasthi2012improved}. 
\anote{Removed Cohen et al.}
\citet{Duttaetal} study Euclidean $k$-means clustering on instances that satisfy a notion of additive perturbation stability or resilience, where the optimal solution is stable even when each point is moved by a small amount. Analogously, in our problem the $\infty \to 2$ norm and sparsity captures the stability of the solution to small perturbations in $\ell_\infty$ norm. However the perturbations in Dutta et al. are measured in $\ell_2$ norm, and the problem flavor and algorithms are very different. 

Building on robust algorithms for mean estimation, there have also been works to perform robust clustering of well separated instances under Huber's contamination model and its variants~\citep{brubaker2009robust, diakonikolas2018list, kothari2017better, kothari2017outlier,hopkinsli2018}. There have also been works in analyzing the EM algorithm for Gaussian mixtures ( see e.g., ~\citep{balakrishnan2017statistical}). While motivated by the study of the phenomenon of robustness, the above results do not provide guarantees in our model of corruption. As in the case of mean estimation, these results are designed to be robust to a small number of outliers~(e.g., a small constant fraction) in the training set. In our corruption model on the other hand, every data point could be potentially corrupted up to magnitude $\delta$ (measured in $\ell_q$ norm for $q \ge 2$).

\section{Auxillary and additional claims}


\subsection{Counterexamples}
\begin{claim}\label{claim:counterexample:monotone}
The matrix norms $\norm{\cdot}_q$ (entry-wise $\ell_q$ norm) and $\norm{\cdot}_{\infty \to \infty}$, $\norm{\cdot}_{1 \to 1}$ are not monotone. 
\end{claim}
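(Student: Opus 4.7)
The plan is to exhibit a single explicit pair of PSD matrices that witnesses non-monotonicity for all three norms simultaneously. I would take $A = 3 I - J \in \R^{3 \times 3}$ and $B = J \in \R^{3 \times 3}$, where $I$ is the identity and $J$ is the all-ones matrix. Both are positive semidefinite: $J$ has eigenvalues $3, 0, 0$, so $B \succeq 0$, and consequently $A = 3I - J$ has eigenvalues $0, 3, 3$. The key feature of this choice is that $A + B = 3 I$: the PSD perturbation $B$ exactly cancels the negative off-diagonal entries of $A$, which is precisely the mechanism by which adding a PSD matrix can decrease a norm that is sensitive to entry magnitudes rather than singular values.

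For the two operator norms, the verification is immediate. Each row of $A$ consists of a $2$ and two $-1$'s, so the maximum absolute row sum $\norm{A}_{\infty \to \infty}$ equals $4$, whereas $\norm{A+B}_{\infty \to \infty} = \norm{3I}_{\infty \to \infty} = 3 < 4$. Since $A$ and $A+B$ are both symmetric, the same values appear for the maximum absolute column sum, giving $\norm{A}_{1 \to 1} = 4 > 3 = \norm{A+B}_{1 \to 1}$.

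For the entry-wise $\ell_q$ norm, a direct count gives $\norm{A}_q^q = 3 \cdot 2^q + 6$ and $\norm{A+B}_q^q = 3 \cdot 3^q$, so $\norm{A+B}_q < \norm{A}_q$ is equivalent to $3^q - 2^q < 2$; this holds at $q = 1$ (yielding $\norm{A+B}_1 = 9 < 12 = \norm{A}_1$) and more generally for every $q$ below a unique threshold $q_0 \in (1,2)$. Since Frobenius ($q = 2$) is unitarily invariant and therefore monotone, any failure of monotonicity for entry-wise $\ell_q$ necessarily lies strictly below $q = 2$; to cover a specific $q \in [1, 2)$ not handled by the $3 \times 3$ instance, the same recipe generalized to $n \times n$ (namely $A = nI - J$, $B = J$, so $A + B = nI$) works whenever $n^q - (n-1)^q < n-1$, which by the mean-value theorem is satisfied for all sufficiently large $n$. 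The only non-routine step in the write-up is certifying the PSD conditions, handled once via the eigenvalues of $J$; I do not foresee any obstacle beyond bookkeeping.
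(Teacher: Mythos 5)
Your counterexample is, up to an overall factor of $n$, identical to the paper's: the paper uses $M = I - \tfrac{1}{n}\vec{1}\vec{1}^\top$ with PSD perturbation $\tfrac{1}{n}\vec{1}\vec{1}^\top$ summing to $I$, shows $\norm{M}_q > \norm{I}_q$ for $q \in [1,2)$, and deduces the operator-norm failures from column symmetry, whereas you take the rescaled pair $A = nI - J$, $B = J$ (with $n=3$ for the operator norms) and verify $\norm{\cdot}_{\infty\to\infty}$ and $\norm{\cdot}_{1\to1}$ directly via row/column sums — the same construction and argument in substance, and your verification is correct. (Only the incidental remark that monotonicity cannot fail for $q > 2$ because the Frobenius norm is monotone is unjustified as stated, but it plays no role in establishing the claim.)
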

\begin{proof}
Let $v=\frac{1}{\sqrt{n}}\vec{1}$, where $\vec{1}=(1,1,\dots,1)$. 
Consider the matrix $M = I - \tfrac{1}{n}\vec{1} \vec{1}^\top$. Note that $\norm{v}_2=1$ and $M \succeq 0$. Clearly $\norm{I}_{1}=\sum_{i,j} |I_{i,j}|=n$. On the other hand, when $q \in [1,2)$, 
$$\norm{M}_q^q = \sum_{i,j} M_{i,j}^q = \sum_{i=1}^n (1-v(i)^2)^q + \sum_{i \ne j} |v(i)|^q |v(j)|^q = \norm{v}_q^{2q} + n(1-\tfrac{1}{n})^q - \frac{n}{n^q} = (n-1)n^{1-q}+ n(1-\tfrac{1}{n})^q > n.$$

Note that this particular instance is symmetric, and each column contributes equally; hence 
$\norm{M}_{1 \to q}^q = \tfrac{1}{n} \norm{M}_q^q$, and similarly for $I$. Hence, the same counterexample also works for $1 \to q$ and $\infty \to q^*$ operator norms where $q \in [1,2)$. 
\end{proof}

\begin{claim}\label{clm:counterexample_infinity_2}
The matrix norms $\norm{\cdot}_{\infty \to 2}$ and $\norm{\cdot}_{2 \to 1}$ are not monotone. 
\end{claim}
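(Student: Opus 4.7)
My plan is to exhibit explicit $2 \times 2$ PSD matrices $A, B \succeq 0$ with $\|A + B\|_{\infty \to 2} < \|A\|_{\infty \to 2}$. By duality of operator norms, $\|M\|_{2 \to 1} = \|M^\top\|_{\infty \to 2}$, and since PSD matrices are symmetric this gives $\|M\|_{2 \to 1} = \|M\|_{\infty \to 2}$ on PSD inputs; so a single counterexample handles both norms.

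First I would set up a useful closed form in dimension $2$. For a symmetric $M = \begin{pmatrix} a & c \\ c & b \end{pmatrix}$, the maximum of $\|Mx\|_2$ over $x \in \{\pm 1\}^2$ is attained by the sign pattern that matches $\mathrm{sign}(c)$, giving $\|M\|_{\infty \to 2}^2 = (a + |c|)^2 + (b + |c|)^2$. This makes the structural obstruction to monotonicity transparent: when we add $B \succeq 0$ the diagonal entries $a$ and $b$ can only grow, but the off-diagonal entry $c$ is not sign-constrained by positive semidefiniteness, so $|A_{12} + B_{12}|$ can be strictly smaller than $|A_{12}|$ when $A_{12}$ and $B_{12}$ have opposite signs. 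The question is whether this decrease in $|c|$ can dominate the unavoidable increase in the diagonal.

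The concrete construction I propose is rank-one: let $u = (2, -1)$ and $w = (1, 3/2)$ and set $A = uu^\top$, $B = ww^\top$, so
\[
A = \begin{pmatrix} 4 & -2 \\ -2 & 1 \end{pmatrix}, \qquad B = \begin{pmatrix} 1 & 3/2 \\ 3/2 & 9/4 \end{pmatrix}, \qquad A + B = \begin{pmatrix} 5 & -1/2 \\ -1/2 & 13/4 \end{pmatrix}.
\]
The closed form above gives $\|A\|_{\infty \to 2}^2 = (4+2)^2 + (1+2)^2 = 45$, attained at $x = (1,-1)$, while $\|A + B\|_{\infty \to 2}^2 = (5 + 1/2)^2 + (13/4 + 1/2)^2 = 121/4 + 225/16 = 709/16 < 720/16 = 45$. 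Both claims follow.

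The ``main obstacle'' is really just finding the example — the intuition that diagonals grow and the analogous $q \to q^*$ norms are monotone (Lemma~\ref{lem:monotone}) can make one believe $\|\cdot\|_{\infty \to 2}$ is monotone too. To motivate the choice of $u, w$, I would do a perturbative calculation: writing $\|A + \epsilon ww^\top\|_{\infty \to 2}^2$ with $A = uu^\top$ and $u_1 u_2 < 0$, and expanding to first order in $\epsilon$ using the closed form, the derivative at $\epsilon = 0$ is proportional to $(w_1 - w_2)(|u_1| w_1 - |u_2| w_2)$, which is negative whenever $w_2/w_1$ lies strictly between $1$ and $|u_1|/|u_2|$. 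So any $u$ with $|u_1| \ne |u_2|$ of opposite signs admits a suitable $w$; $u = (2,-1)$ and $w = (1, 3/2)$ is just the simplest instantiation of this, and the $\epsilon = 1$ computation above already suffices.
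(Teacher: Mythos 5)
Your proposal is correct. The matrices $A=uu^\top$ and $B=ww^\top$ are PSD, your closed form gives $\norm{A}_{\infty\to 2}^2=45$ while $\norm{A+B}_{\infty\to 2}^2=709/16<45$, and since everything in sight is symmetric, the duality $\norm{M}_{2\to 1}=\norm{M^\top}_{\infty\to 2}$ carries the same counterexample over to the $2\to 1$ norm, which is exactly what is needed to contradict Definition~\ref{def:monotonicity}. The paper's proof has the same overall shape---a concrete PSD pair in which adding a rank-one PSD matrix cancels negative off-diagonal entries---but a different instantiation: it works in $\R^{n\times n}$ with $A=\mathrm{diag}(2,\dots,2,1,\dots,1)$ and $M=A-\tfrac{1}{n}\vec{1}\vec{1}^\top\succeq 0$, computes $\norm{A}_{\infty\to 2}$ exactly (diagonal case, attained at every sign vector), and exhibits a single $y\in\{\pm1\}^n$ with $\norm{My}_2>\norm{Ay}_2$, so that adding the PSD matrix $\tfrac{1}{n}\vec{1}\vec{1}^\top$ to $M$ strictly decreases the norm. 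Your route is more minimal and, arguably, more illuminating: it lives in the smallest possible dimension, the $2\times 2$ formula $\norm{M}_{\infty\to 2}^2=(a+|c|)^2+(b+|c|)^2$ reduces verification to one line, and the first-order computation pinpoints exactly when off-diagonal cancellation beats diagonal growth; the paper's high-dimensional version buys nothing extra for this particular claim beyond stylistic consistency with its other counterexample (Claim~\ref{claim:counterexample:monotone}). One small point to make explicit: your closed form uses that the maximizer over the $\ell_\infty$ ball is a sign vector (convexity of $x\mapsto\norm{Mx}_2$) and that the diagonal entries are nonnegative---for a general symmetric $M$ with a negative diagonal entry the optimal sign pattern need not match $\mathrm{sign}(c)$---but since you only apply the formula to PSD matrices this is harmless; just state that hypothesis when you invoke it.
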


\begin{proof}
We consider a similar pair of matrices as the above claim: 
\[ 
A=\text{diag}(u) \text{ where } u=(\underbrace{2,\ldots,2}_{n/3},\underbrace{1,\ldots,1}_{2n/3}) \text{ and } M=A-\frac{1}{n} \vec{1} \vec{1}^{\top}.
\]
It will be easier to reason about $\norm{M}_{\infty \to 2}$ and $\norm{A}_{\infty \to 2}$. Recall $\norm{A}_{\infty \to 2} = \max_{y : \norm{y}_\infty \le 1} \norm{Ay}_2$; since $A$ is a diagonal matrix, the maximum value is $\tau:=\sqrt{4 \cdot (1/3)+2/3} \cdot \sqrt{n}$, and it is attained by  every vector in $\sset{\pm 1}^n$. To establish the claim, we now show that for a specific vector $y \in \sset{\pm 1}^n$, $\norm{My}_2 > \norm{Ay}_2$. 
\begin{align*}
 \text{Consider }   y&=(\underbrace{1,\ldots,1}_{n/3},\underbrace{-1,\ldots,-1}_{2n/3})\\
    \iprod{\vec{1},y}&= -\tfrac{1}{3}n, \text{ and }My= (\underbrace{2+\tfrac{1}{3},\ldots,2+\tfrac{1}{3}}_{n/3},\underbrace{-1+\tfrac{1}{3},\ldots,-1+\tfrac{1}{3}}_{2n/3})\\
   \noindent \text{Hence } \norm{My}_2 &= \sqrt{ \big(\tfrac{1}{3} \big)n \cdot (\tfrac{7}{3} )^2 + \big(\tfrac{2}{3} \big)n \cdot (\tfrac{2}{3} )^2 } \\
    &=\sqrt{\tau^2+ n \big(\frac{1}{9}\big)} > \tau = \norm{Ay}_2,
\end{align*}
as required. Hence $\norm{A}_{\infty \to 2} < \norm{M}_{\infty \to 2}$, which violates the monotonicity property. 
\end{proof}

\subsection{What do robust projection matrices look like?} \label{app:looklike}
Our robustness parameter $\norm{\Pi}_{\infty\to 2}$ ($\norm{\Pi}_{q \to 2}$ for general $q \ge 2$) generalizes analytic notions of sparsity for the subspace associated with the orthogonal projector $\Pi$ (see Lemma~\ref{lem:robustproperties}). For the purposes of this discussion let us restrict our attention to $q=\infty$.  As mentioned earlier, for a $r=1$-dimensional subspace this exactly corresponds to the $\ell_1$ sparsity of the unit vector $v$ in that subspace. 
The $\norm{\Pi}_{\infty \to 2}$ of a projector is the largest $\ell_1$ norm among unit vectors (in $\ell_2$ norm) that belong to the subspace. 
We remark that for higher-dimensional subspaces, there are several other notions of sparsity that have been explored~\citep{VuLei13, Liuetal}, typically measured for a fixed orthonormal basis $V \in \R^{n \times r}$ of the subspace (so $\Pi=VV^\top$). Some of the notions that have been considered include the entry-wise norm $\norm{V}_1$ (the sum of the $\ell_1$ norms of the basis vectors), the maximum $\ell_1$ norm among the columns of $V$, the sparsity of the diagonal of $\Pi$ and the sum of the row $\ell_2$ norms of $V$, among other quantities. Many of these quantities are the same for $r=1$ but may vary by factors of $\sqrt{r}$ or more depending on the quantity. On the other hand, the quantity $\norm{\Pi}_{q \to 2}$ is a basis-independent quantity that only depends on the subspace.

Consider three different subspaces (or projectors) given by the orthonormal basis $V_1, V_2, V_3 \in \R^{n \times r}$ of the following form (think of $\kappa = \sqrt{k}$, $r \ll \kappa$); assume that the signs of the entries are chosen randomly in a way that also satisfies the necessary orthogonality properties (e.g., random Fourier characters over $\sset{\pm 1}^k$).
\begin{equation*}
V_1= 
\begin{pmatrix}
 \tfrac{\pm 1}{\sqrt{k}} &  \tfrac{\pm 1}{\sqrt{k}} & \cdots &  \tfrac{\pm 1}{\sqrt{k}} \\
 \tfrac{\pm 1}{\sqrt{k}} &  \tfrac{\pm 1}{\sqrt{k}} & \cdots &  \tfrac{\pm 1}{\sqrt{k}} \\
\vdots  & \vdots  & \ddots & \vdots  \\
 \frac{\pm 1}{\sqrt{k}} &  \frac{\pm 1}{\sqrt{k}} & \cdots &  \frac{\pm 1}{\sqrt{k}} \\
 0 & 0 & \cdots & 0 \\
\vdots & \vdots & \vdots & \vdots \\
0 & 0 & \cdots & 0 
\end{pmatrix},
~~~~
V_2= 
\begin{pmatrix}
\tfrac{\pm \sqrt{r}}{\sqrt{k}} & 0 & \cdots & 0 \\
\cdot & \cdot & \cdots & \cdot \\
\tfrac{\pm \sqrt{r}}{\sqrt{k}} & 0 & \cdots & 0 \\
0 &  \tfrac{\pm \sqrt{r}}{\sqrt{k}} & \cdots & 0 \\
\cdot & \cdot & \cdots & \cdot \\
0 & \tfrac{\pm \sqrt{r}}{\sqrt{k}} & \cdots & 0 \\
0 &  0 & \cdots & 0 \\
\vdots  & \vdots  & \ddots & \vdots  
\end{pmatrix},
~~~V_3= 
\begin{pmatrix}
\tfrac{\pm 1}{\sqrt{r}} & \tfrac{\pm 1}{\sqrt{r}} & \cdots & \tfrac{\pm 1}{\sqrt{r}}& \tfrac{\pm 1}{\sqrt{k}} \\
\cdot & \cdot & \cdots &\cdot & \cdot \\
\tfrac{\pm 1}{\sqrt{r}} & \tfrac{\pm 1}{\sqrt{r}} & \cdots & \tfrac{\pm 1}{\sqrt{r}}& \tfrac{\pm 1}{\sqrt{k}} \\
0 & 0 & \cdots&0 & \tfrac{\pm 1}{\sqrt{k}} \\
\cdot & \cdot & \cdots&\cdot & \cdot \\
0 & 0 & \cdots & 0& \tfrac{\pm 1}{\sqrt{k}} \\
0 &0 &\cdots & 0&0\\
\vdots  & \vdots  & \ddots & \vdots & \vdots  
\end{pmatrix}
\end{equation*} 
The main difference between $V_1, V_2$ is that in $V_2$ the sparse basis vectors have disjoint support, whereas in $V_1$ they are commonly supported. However, there is an alternate basis for the subspace $V_2$ which looks like $V_1$, but basis dependent quantities get very different values for $V_1, V_2$. In the third example, the first $r-1$ basis vectors are extremely sparse with $\ell_1$ norm $O(\sqrt{r})$, whereas only one of the basis vectors has $\ell_1$ sparsity $\sqrt{k}$. Many aggregate notions of sparsity like $\norm{V}_1$  take very different values for $V_1$ and $V_3$ that differ by a $\sqrt{r}$ factor. Our robustness parameter $\norm{\Pi}_{\infty \to 2} \approx \sqrt{k}$ for all of them; this is because each of these subspaces are supported on at most $k$ co-ordinates (and a spread out vector of this form exists), so the maximum $\ell_1$ length among unit $\ell_2$ norm vector is $\sqrt{k}$. Finally we remark that as mentioned in Figure~\ref{fig:cifar-10} (and shown in \citet{awasthi2020adversarial}), many natural data sets have such robust low-dimensional projections with small error.

\subsection{Computational Lower Bound}
\label{sec:computational-lower-bound}
The main result in this section is to show that it is NP-hard to solve Program~\eqref{int:obj} in Section~\ref{sec:worstcase} exactly for $q=\infty$ under the small set expansion(SSE) hypothesis.
\begin{conjecture}[SSE hypothesis~\citep{raghavendra2010graph}] For any $\eta>0$, there is $\delta>0$ such that it is NP-hard to distinguish between the following two cases given a graph $G=(v,E)$:
\begin{itemize}
\item Yes: Some subset $S\subseteq V$ with $|S| = \delta n$ has expansion $\tfrac{|E(S,V\setminus S|}{|S|}\le \eta$
\item No: Any set $S\subseteq V$ with $|S| \le 2\delta n$ has $\tfrac{|E(S,V\setminus S|}{|S|}\ge 1-\eta$
\end{itemize}
\end{conjecture}
Based on SSE hypothesis, we state the hardness of solving Program~\eqref{int:obj} as follows.
\begin{theorem}\label{thm:hardness_rank1_max}
It is SSE-hard to solve problem~\eqref{int:obj} given $q=\infty,r=1$.
\end{theorem}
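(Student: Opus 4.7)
The plan is to directly reduce from the known SSE-hardness of the $\ell_1$-constrained sparse PCA maximization problem established by Chan, Papailiopoulos, and Rubinstein~\cite{chan2016approximability}. The key observation is that in the special case $r=1$, $q=\infty$, problem~\eqref{int:obj} is \emph{exactly} $\ell_1$-sparse PCA on the PSD matrix $M := A A^\top$, so an exact algorithm for our problem immediately yields an exact algorithm for sparse PCA.

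First I would specialize problem~\eqref{int:obj}. Any rank-$1$ orthogonal projection has the form $\Pi = v v^\top$ with $v \in \mathbb{S}^{n-1}$, and by Lemma~\ref{lem:robustproperties} we have $\norm{\Pi}_{\infty \to 2} = \norm{v}_1$. Using $\norm{\Pi^{\perp} A}_F^2 = \norm{A}_F^2 - v^\top A A^\top v$, the program becomes
\[
\max_{v \in \R^n} ~ v^\top (A A^\top) v \quad \text{s.t.}\quad \norm{v}_2 = 1,~ \norm{v}_1 \le \kappa,
\]
which is the $\ell_1$-version of the sparse PCA maximization problem on $M = A A^\top$. Conversely, every PSD matrix $M$ can be written as $M = A A^\top$ in polynomial time (e.g.\ via $A = M^{1/2}$ from the eigendecomposition), so there is a trivial polynomial-time reduction in both directions between exact sparse PCA on $M$ and exact solutions of~\eqref{int:obj} for $r=1,q=\infty$ on $A$.

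Second, I would invoke the Chan--Papailiopoulos--Rubinstein hardness result, which shows under the SSE hypothesis that sparse PCA is hard to even constant-factor approximate; in particular, computing the \emph{exact} optimum is SSE-hard. The reference is stated for the $\ell_0$ formulation, but as noted in the paper (and in Vershynin, \S 10.3.3), the $\ell_0$ and $\ell_1$ versions differ only by a factor of $2$ in the sparsity parameter, so SSE-hardness transfers to the $\ell_1$ formulation that corresponds to our constraint $\norm{v}_1 \le \kappa$. Composing the Chan et al.\ reduction with the factorization $M = A A^\top$ produces an SSE-hard instance of~\eqref{int:obj}, proving the theorem.

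The proof is essentially bookkeeping on top of prior work, so there is no substantial technical obstacle. The only points requiring care are: (i) confirming that the hardness reduction of~\cite{chan2016approximability} can be stated for PSD input matrices (which is automatic, since their gadget construction produces a Gram matrix), and (ii) matching the sparsity parameter conventions between their $\ell_0$ formulation ($\norm{v}_0 \le k$) and our $\ell_1$ formulation ($\norm{v}_1 \le \kappa$) via the $\kappa^2 \leftrightarrow k$ correspondence.
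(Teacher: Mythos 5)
Your proposal is correct and follows essentially the same route as the paper: reformulate the $r=1$, $q=\infty$ case as $\ell_1$-constrained sparse PCA on $AA^\top$ via \eqref{eq:obj_rank1}, then invoke the SSE-hardness of constant-factor approximation for $\ell_0$ sparse PCA~\cite{chan2016approximability} together with the factor-$2$ relation between the $\ell_0$ and $\ell_1$ formulations (with the $\kappa^2 \leftrightarrow k$ correspondence). The only cosmetic difference is that the cited hardness result is already stated for $\max_{\norm{x}_2=1,\norm{x}_0\le k}\norm{Ax}_2^2$, so your explicit factorization $M=M^{1/2}(M^{1/2})^\top$ is unnecessary (and sidesteps the mild issue of computing $M^{1/2}$ exactly).
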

We note the objective in problem~\eqref{int:obj} is the same as $\norm{A}_F^2-\max_{\Pi} \iprod{AA^\top,\Pi}$ and use the hardness of $\max_{\Pi} \iprod{AA^\top,\Pi}$ to finish the proof.

\begin{theorem}[Theorem 4 in \citet{chan2016approximability}]\label{thm:hardness_SSE}
It is SSE-hard to solve the following program given $k$ and a matrix $A \in \mathbb{R}^{n \times n}$ with any constant approximation ratio: $\underset{\norm{x}_2=1,\norm{x}_0\le k}{\max}  \|A x\|_2^2$.
\end{theorem}

Next we state the relation between the $\ell_0$-sparse and $\ell_1$-sparse programs.
\begin{theorem}[\citep{Vershynin}]\label{thm:relation_ell0_ell1}
Given any matrix $A \in \mathbb{R}^{n \times n}$ and any $k \ge 1$, let $OPT_{\ell_0}=\underset{\norm{x}_2=1,\norm{x}_0\le k}{\max}  \|Ax\|_2^2$ and $OPT_{\ell_1}=\underset{\norm{x}_2=1,\norm{x}_1\le \sqrt{k}}{\max}  \|Ax\|_2^2$. Then we have $
OPT_{\ell_0}\le OPT_{\ell_1}\le 2 \cdot OPT_{\ell_0}.$
\end{theorem}
Finally we finish the proof of Theorem~\ref{thm:hardness_rank1_max}.

\begin{proof}{of Theorem~\ref{thm:hardness_rank1_max}}
For contradiction, suppose there is an algorithm that solves problem~\eqref{int:obj} for $q=\infty$ and any $\kappa$. Given a matrix $A$ and $k$, since \[\underset{\Pi:\|\Pi\|_{\infty \to 2} \le \kappa}{\max} \langle  A A^{\top}, \Pi \rangle = \underset{\norm{x}_2=1,\norm{x}_1\le \sqrt{k}}{\max} \langle  A A^{\top}, x x^{\top} \rangle\] for rank 1 projection matrices, the algorithm for problem~\eqref{int:obj} also solves $\underset{\norm{x}_2=1, \norm{x}_1\le \sqrt{k}}{\max} \|Ax\|_2^2$ by the reformulation~\eqref{eq:obj_rank1}. Because of Theorem~\ref{thm:relation_ell0_ell1}, this gives a $0.5$ approximation of $\underset{\norm{x}_2=1,\norm{x}_0\le k}{\max}  \|A x\|_2^2$, which refutes the SSE hypothesis based on Theorem~\ref{thm:hardness_SSE}.
\end{proof}
Notice that the above proof only establishes computational hardness for exact minimization of \eqref{int:obj} under the small set expansion conjecture. It would be interesting to establish hardness of approximation results for this problem.
\pnote{Check above.}

\subsection{Proof of Lemma~\ref{thm:robust-mean}}
\label{app:robust-mean}
\begin{proof}
By assumption we know that $\|A-C\| \leq \sigma \sqrt{m}$. This implies that
\begin{align*}
    \|A-\Pi^*A\| &\leq \|A-C\| + \|C-\Pi^*A\|\\
    &= \|A-C\| + \|\Pi^*(C-A)\|\\
    &\leq 2\|A-C\| \leq 2\sigma \sqrt{m}.
\end{align*}
Since we set $\tau = 2\sigma\sqrt{m}$, from the guarantee of Theorem~\ref{thm:training:spectralnorm}, we know that if the algorithm outputs \Bad, the data must be poisoned, i.e., $\|A-\tilde{A}\| > 2\sigma \sqrt{m}$. Next suppose that the algorithm outputs a projection matrix $\Pi$. Setting $\hat{\mu} := \Mean(\Pi \tilde{A})$, and $\mathbf{1}$ to be the all-ones vector $(1,1,\dots,1)^\top$ we have that
\begin{align}
    \|\hat{\mu} - {\mu}\|_2 &= \frac{1}{m} \Bignorm{\sum_{j=1}^m (A_j - \Pi \tilde{A}_j) }_2 \nonumber \\    
    &\leq \frac{1}{m} \Bignorm{\sum_{j=1}^m (A_j - \Pi {A}_j) }_2 + \frac{1}{m} \Bignorm{\sum_{j=1}^m (\Pi A_j - \Pi \tilde{A}_j)}_2  \nonumber \\
    &\leq \frac{1}{m} \norm{\mathbf{1}^\top (A-\Pi A)}_2 + \frac{1}{m} \sum_{i=1}^m \norm{\Pi(A_j - \tilde{A}_j)}_2 \nonumber \\
    &\leq \frac{1}{\sqrt{m}} \|A-\Pi A\| + c_q\kappa \delta. \label{eq:proof-robust-mean-1}
\end{align}
Next we make a crucial observation that if $\Pi$ is good for $\tilde{A}$ then it is also good for $A$ and hence $\|A- \Pi A\|$ is small. This is formally established in Lemma~\ref{lem:spectral-norm-closeness-of-subspaces}. Applying the lemma on $A$ and $\tilde{A}$ with $\Pi_1 = \Pi^*$, $\Pi_2 = \Pi$, $\kappa_1 = \kappa$, $\kappa_2 = c_q \kappa$, and $\epsilon = 4\sigma \sqrt{m}/\|A\|$ we get that
\begin{align*}
    \|A - \Pi A\| &\leq (\epsilon + \sqrt{\epsilon})\|A\| + 8\frac{\kappa \delta \sqrt{m}}{\sqrt{\epsilon}}.
\end{align*}
Substituting into (\ref{eq:proof-robust-mean-1}) we get that
\begin{align}
    \|\hat{\mu} - {\mu}\|_2 &\leq (\epsilon + \sqrt{\epsilon})\frac{\|A\|}{\sqrt{m}} + 8\frac{\kappa \delta}{\sqrt{\epsilon}}+ c_q \kappa \delta \nonumber \\
    &\leq (\epsilon + \sqrt{\epsilon})\frac{4\sigma}{\epsilon} + 8\frac{\kappa \delta}{\sqrt{\epsilon}}+ c_q \kappa \delta \,\, (\text{by writing $\|A\|$ in terms of $\epsilon$})\nonumber \\
    &\leq O(c_q)(\sigma + \kappa \delta)\Big(1+\sqrt{\frac{1}{\epsilon}} \Big) \label{eq:proof-robust-mean-3}.
\end{align}
Next, notice that by triangle inequality,
\begin{align*}
    \|A\| &\leq \|A-C\| + \|C\|\\
    &\leq (\sigma + \|\mu\|)\sqrt{m}.
\end{align*}
Hence we get that
\begin{align*}
    \sqrt{\frac{1}{\epsilon}} = \sqrt{\frac{\|A\|}{4\sigma \sqrt{m}}}
    \leq \frac{1}{2} \Big( 1 + \frac{\|\mu\|}{\sigma}\Big)^{\frac{1}{2}}.
\end{align*}
Substituting into (\ref{eq:proof-robust-mean-3}) we get that
\begin{align}
\label{eq:proof-robust-mean-4}
    \|\hat{\mu} - {\mu}\|_2 &\leq O(c_q) (\sigma + \kappa \delta) \Big(1+\Big( 1 + \frac{\|\mu\|}{\sigma}\Big)^{\frac{1}{2}} \Big)\\
    &\leq O(c_q)(1+\frac{\kappa \delta}{\sigma}) \max \Big(\sigma, \sqrt{\sigma \|\mu\|} \Big).
\end{align}
From the above we get the relative error guarantee of
\begin{align}
\label{eq:proof-robust-mean-5}
    \frac{\|\hat{\mu} - {\mu}\|_2}{\|\mu\|} &\leq O(c_q)(1+\frac{\kappa \delta}{\sigma}) \max \Big(\frac{\sigma}{\|\mu\|}, \sqrt{\frac{\sigma}{\|\mu\|}} \Big).
\end{align}

\end{proof}
\noindent \textbf{Note:} We would like to point out that for robust mean estimation, our analysis also shows that in step $2$ of the algorithm above, we can replace $\Mean(\Pi \tilde{A})$ with $\Mean(\tilde{A})$. This is because if the algorithm did not output \Bad then $\|\tilde{A} - \Pi \tilde{A}\|_2/\sqrt{m} \leq 2\sigma$ and hence mean of $\tilde{A}$ and that of $\Pi \tilde{A}$ will be close. However, in this case, the subspace spanned by the output vector, i.e., \Mean($A$) might not be robust and hence susceptible to test-time perturbations.
\begin{lemma}
\label{lem:spectral-norm-closeness-of-subspaces}
Fix $q \geq 2, \delta > 0, \kappa \geq 1$. Let $A$ and $\tilde{A}$ be two $n \times m$ matrices, each representing $m$ data points in $n$ dimensions such that for every $j \in [m]$, columns $A_j$ and $\tilde{A}_j$ are close, i.e., $\|A_j - \tilde{A}_j\|_q \leq \delta$. Furthermore, assume that there exist projection matrices, $\Pi_1 = vv^\top$ and $\Pi_2 = uu^\top$ such that $\|\Pi_1\|_{q \to 2} \leq \kappa_1$ and $\|\Pi_2\|_{q \to 2} \leq \kappa_2$ and that $\|A - \Pi_1 A\| \leq \epsilon_1 \|A\|$ and $\|\tilde{A} - \Pi_2 \tilde{A}\| \leq \epsilon_2 \|A\|$. Then, letting $\epsilon = \epsilon_1 + \epsilon_2$ and $\kappa = \kappa_1 + \kappa_2$, it also holds that
\begin{align}
    \|A - \Pi_2 A\| &\leq O(\epsilon + \sqrt{\epsilon})\|A\| + \frac{8\kappa \delta \sqrt{m}}{\sqrt{\epsilon}} \label{eq:proof-spectral-norm-closeness-of-subspaces-1}\\
        \|\tilde{A} - \Pi_1 \tilde{A}\| &\leq O(\epsilon + \sqrt{\epsilon})\|A\| + \frac{8\kappa \delta \sqrt{m}}{\sqrt{\epsilon}}
\end{align}
\end{lemma}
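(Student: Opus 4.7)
The plan is to adapt the argument in the proof of Lemma~\ref{lem:training:spectralnorm}, but carry out a tighter accounting. Directly applying Lemma~\ref{lem:training:spectralnorm} with absolute errors $\tilde\epsilon_1 = \epsilon_1\|A\|$, $\tilde\epsilon_2 = \epsilon_2 \|A\|$ and optimizing the free parameter $\eta$ in its statement yields at best $O(\epsilon^{1/3})\|A\| + O(\sqrt{m}\kappa\delta/\epsilon^{2/3})$ (the $\sqrt{2\eta}\|A\|$ term and the $(1+1/\eta)\epsilon\|A\|$ term cannot be balanced better). The improvement to the target $O(\sqrt{\epsilon})$ rate will come from using the squared-form intermediate bound together with a separate control on $\|\tilde A\|$ in terms of $\|A\|$.

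First, set $\gamma := \sqrt{m}\kappa\delta$ (with $\kappa = \kappa_1 + \kappa_2$). For $\ell \in \{1,2\}$ and any unit $w \in \mathbb{R}^m$, the bound $\|\Pi_\ell\|_{q\to2} \le \kappa_\ell$ gives column-wise closeness $\|\Pi_\ell(A-\tilde A)w\|_2 \le \|\Pi_\ell(A-\tilde A)\|_F \le \sqrt{m}\kappa_\ell\delta \le \gamma$. Combining with the hypotheses produces $\|Aw - \Pi_1\tilde A w\|_2 \le \epsilon_1\|A\| + \gamma$ and $\|\tilde A w - \Pi_2 A w\|_2 \le \epsilon_2\|A\| + \gamma$. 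Now I apply Pythagoras to both $\|Aw\|^2 = \|\Pi_2 Aw\|^2 + \|\Pi_2^\perp Aw\|^2$ and $\|\tilde A w\|^2 = \|\Pi_1 \tilde A w\|^2 + \|\Pi_1^\perp \tilde A w\|^2$, and combine the resulting two squared inequalities exactly as in Lemma~\ref{lem:training:spectralnorm}. Instead of dropping the $\|\Pi_2^\perp Aw\|^2$ term at the end, I retain it, obtaining for any $\eta \in (0,\tfrac{1}{2}]$,
\[
\|\Pi_2^\perp A w\|^2 \;\le\; O(\eta)\,\|\tilde A w\|^2 \;+\; O(1/\eta)\bigl((\epsilon_1\|A\| + \gamma)^2 + (\epsilon_2\|A\| + \gamma)^2\bigr).
\]
Taking the supremum over unit $w$ yields $\|\Pi_2^\perp A\|^2 \le O(\eta)\|\tilde A\|^2 + O(1/\eta)(\epsilon^2\|A\|^2 + \gamma^2)$.

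The key new step is to bound $\|\tilde A\|$ itself in terms of $\|A\|$ and $\gamma$. Using $\|\tilde A\|^2 \le \|\Pi_2 \tilde A\|^2 + \|\Pi_2^\perp \tilde A\|^2$, the second term is at most $\epsilon_2^2 \|A\|^2$ by hypothesis, while $\|\Pi_2 \tilde A\| \le \|\Pi_2 A\| + \|\Pi_2(\tilde A - A)\|_F \le \|A\| + \gamma$. Hence $\|\tilde A\|^2 = O(\|A\|^2 + \gamma^2)$. Substituting gives
\[
\|\Pi_2^\perp A\|^2 \;\le\; O\!\left(\eta + \tfrac{\epsilon^2}{\eta}\right)\|A\|^2 \;+\; O\!\left(\eta + \tfrac{1}{\eta}\right)\gamma^2.
\]
Choosing $\eta = \epsilon$ balances the first term at $O(\epsilon)\|A\|^2$ and gives $O(1/\epsilon)\gamma^2$ for the second; taking square roots and using $\sqrt{a+b}\le\sqrt{a}+\sqrt{b}$ produces the required bound $\|A - \Pi_2 A\| \le O(\sqrt{\epsilon})\|A\| + O(\gamma/\sqrt{\epsilon})$, which falls inside the stated $O(\epsilon+\sqrt{\epsilon})\|A\| + 8\kappa\delta\sqrt{m}/\sqrt{\epsilon}$. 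The symmetric inequality \eqref{eq:proof-spectral-norm-closeness-of-subspaces-1} follows by swapping $(A,\Pi_1)\leftrightarrow(\tilde A,\Pi_2)$, again using $\|\tilde A\| = O(\|A\| + \gamma)$ to express the final bound in terms of $\|A\|$.

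The main subtlety (and the only place the argument meaningfully departs from Lemma~\ref{lem:training:spectralnorm}) is recognizing that the $\sqrt{2\eta}\|A\|$ term appearing in that lemma's statement is too lossy here: one must keep the squared-form inequality and separately bound $\|\tilde A\|^2$ to extract the $\sqrt{\epsilon}$ rate. Everything else is a direct repetition of the Pythagoras-plus-perturbation computation already carried out in the proof of Lemma~\ref{lem:training:spectralnorm}.
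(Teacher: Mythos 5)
Your proof is correct, but it follows a genuinely different route from the paper's. The paper's argument for Lemma~\ref{lem:spectral-norm-closeness-of-subspaces} leans heavily on the rank-one structure: it passes through the projection $\Pi$ onto the two-dimensional span of $u$ and $v$, splits $\|A-\Pi_2A\|$ by a chain of triangle inequalities, and then does a case analysis on $\|\Pi A-\Pi\tilde A\|$ -- either this term is already at most $\kappa\delta\sqrt{m}/\sqrt{\epsilon}$, or else $\|\Pi\|_{q\to 2}$ must exceed roughly $\kappa/\sqrt{\epsilon}$, which via the bound $\|\Pi\|_{q\to2}\le 2\sqrt{2}\,\kappa/\|u-v\|$ forces $\|u-v\|\le 2\sqrt{2}\sqrt{\epsilon}$, and then $\|(\Pi_1-\Pi_2)A\|\le 2\|u-v\|\,\|A\|$ closes the argument. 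You instead rework the internals of Lemma~\ref{lem:training:spectralnorm}: keep the Pythagorean identities in squared form, retain the $\|\Pi_2^{\perp}Aw\|^2$ term to get $\|\Pi_2^{\perp}Aw\|^2\le O(\eta)\|\tilde Aw\|^2+O(1/\eta)\bigl((\epsilon_1\|A\|+\gamma)^2+(\epsilon_2\|A\|+\gamma)^2\bigr)$, control $\|\tilde A\|$ by $\|A\|+\gamma$ plus the hypothesis on $\|\Pi_2^\perp\tilde A\|$, and set $\eta=\epsilon$; the algebra checks out (for the symmetric bound on $\|\Pi_1^\perp\tilde A\|$ you do not even need the $\|\tilde A\|$ step, since both hypotheses are already stated relative to $\|A\|$). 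Your route never uses that $\Pi_1,\Pi_2$ are one-dimensional, so it actually establishes the $O(\sqrt{\epsilon})\|A\|+O(\kappa\delta\sqrt{m}/\sqrt{\epsilon})$ rate for robust projections of arbitrary rank, which is strictly more general than the paper's lemma and sharper than what the stated form of Lemma~\ref{lem:training:spectralnorm} gives when used as a black box (your $\epsilon^{1/3}$ accounting of that is right, and matches the paper's own remark that the direct application is suboptimal). What the paper's geometric argument buys is only the explicit constant $8$ in front of $\kappa\delta\sqrt{m}/\sqrt{\epsilon}$ and a clean picture of the two cases; your bound carries an unspecified absolute constant, and your choice $\eta=\epsilon$ implicitly needs $\epsilon$ bounded below $1$ (say $\epsilon\le 1/2$) -- both harmless, since for $\epsilon=\Omega(1)$ the claim is trivial from $\|A-\Pi_2A\|\le\|A\|$, and every downstream use (e.g., Theorem~\ref{thm:robust-mean}) immediately absorbs the constant into $O(c_q)$. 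Also note you cite the wrong display label for the "symmetric" inequality, but that is cosmetic.
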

\begin{proof}
We will show the desired bound on $\|A - \Pi_2 A\|$ and by symmetry the same bound will also apply to $\|\tilde{A} - \Pi_1 \tilde{A}\|$. Notice that both $\Pi_1$ and $\Pi_2$ are projections onto one dimensional subspaces and a bound on $\|\|_{q \to 2}$ norm of the projection matrices implies that $\|v\|_{q^*} \leq \kappa_1$ and $\|u\|_{q^*} \leq \kappa_2$, where $q^*$ is such that $1/q + 1/q^*=1$. Next, let $\Pi$ be the projection matrix onto the subspace spanned by $v$ and $u$. By triangle inequality we have that
\begin{align}
    \|A - \Pi_2 A\| &\leq \|A - \Pi A\| + \| \Pi A - \Pi_2 A\| \nonumber \\
    &\leq \|A - \Pi A \| + \| \Pi A - \Pi_2 \tilde{A}\| + \| \Pi_2 \tilde{A} - \Pi_2 {A}\| \nonumber \\
    &\leq \|A - \Pi A\| + \|\Pi A - \Pi \tilde{A}\| + \|\Pi \tilde{A} - \Pi_2 \tilde{A}\| + \| \Pi_2 \tilde{A} - \Pi_2 {A}\| \label{eq:proof-spectral-norm-closeness-of-subspaces-2}.
\end{align}
Recall the standard fact that if $P_1$ and $P_2$ are projection matrices on to subspaces $S_1$ and $S_2$ such that $S_1 \subseteq S_2$, then for any matrix $B$, $\|P_1 B\| \leq \|P_2 B\|$. Using this we to get that
\begin{align}
    \|A - \Pi A\| &\leq \|A - \Pi_1 A\| \leq \epsilon_1 \|A\| \label{eq:proof-spectral-norm-closeness-of-subspaces-3}
\end{align}
and,
\begin{align}
    \|\Pi_2 \tilde{A} - \Pi \tilde{A}\| &\leq \|\Pi_2 \tilde{A} - \tilde{A}\| + \|\tilde{A} - \Pi \tilde{A}\| \nonumber \\
    &\leq 2\|\Pi_2 \tilde{A} - \tilde{A}\| \leq 2\epsilon_2 \|A\| \label{eq:proof-spectral-norm-closeness-of-subspaces-4}.
\end{align}
From the closeness of $A$ and $\tilde{A}$ and the robustness of $\Pi_2$ we also know that
\begin{align}
    \label{eq:proof-spectral-norm-closeness-of-subspaces-5}
    \|\Pi_2 \tilde{A} - \Pi_2 A\| &\leq \|\Pi_2\|_{q \to 2} \delta \sqrt{m} \leq \kappa_2 \delta \sqrt{m}.
\end{align}
\pnote{Equation \ref{eq:proof-spectral-norm-closeness-of-subspaces-5} is used a lot and should be included in prelims.}
Substituting (\ref{eq:proof-spectral-norm-closeness-of-subspaces-3}), (\ref{eq:proof-spectral-norm-closeness-of-subspaces-4}), and (\ref{eq:proof-spectral-norm-closeness-of-subspaces-5}) into (\ref{eq:proof-spectral-norm-closeness-of-subspaces-2}) we get that
\begin{align}
    \|A - \Pi_2 A\| \leq \epsilon_1 \|A\| + 2\epsilon_2 \|A\| + \kappa_2 \delta \sqrt{m} + \|\Pi {A} - \Pi \tilde{A} \| \label{eq:proof-spectral-norm-closeness-of-subspaces-6}.
\end{align}
Note that if $\|\Pi A - \Pi \tilde{A}\| \leq \kappa \delta \sqrt{m}/\sqrt{\epsilon}$, then we have the desired bound on $\|A- \Pi_2 A\|$. We now look at the case when $\|\Pi A - \Pi \tilde{A}\| > \kappa \delta \sqrt{m}/\sqrt{\epsilon}$. Notice that $\Pi$ is the union of robust subspaces and $A-\tilde{A}$ has columns bounded in $q$ norm. Hence, the only way $\|\Pi A - \Pi \tilde{A}\|$ can be very large is if the $\|\|_{q \to 2}$ norm of the projection matrix of a union of two subspaces~($\Pi$) is much larger than the $\|\|_{q \to 2}$ norm of the projection matrices of individual subspaces~($\Pi_1$ and $\Pi_2$). For this to happen the two subspaces must be very close to each other and then we can bound $\|A - \Pi_2  A\|$ in a different way. Formally, we have that
\begin{align}
    \|\Pi A - \Pi \tilde{A}\| &= \max_{z: \|z\|=1} \|\Pi (A-\tilde{A}) \cdot z\| \nonumber \\
    &= \max_{z:\|z\|=1} \|\sum_{j=1}^m z_j \Pi(A_j - \tilde{A}_j)\| \nonumber \\
    &\leq \max_{z:\|z\|=1} \sum_{j=1}^m |z_j| \|\Pi(A_j - \tilde{A}_j)\| \nonumber \\
    &\leq \|\Pi\|_{q \to 2} \delta \sqrt{m} \label{eq:proof-spectral-norm-closeness-of-subspaces-7}.
\end{align}
Next we establish an upper bound on $\|\Pi\|_{q \to 2}$ in terms of the distance between subspaces $\Pi_1 = vv^\top$ and $\Pi_2 = uu^\top$. Suppose $\|u-v\| = \gamma$ and that $u \cdot v \geq 0$ (otherwise we work with $-u$). We also know that $\|v\|_{q^*} \leq \kappa_1$ and $\|u\|_{q^*} \leq \kappa_2$. Now, $\|\Pi\|_{q \to 2}$ is the maximum $q^*$ norm of any unit vector in the span of $v$ and $u$. We can write any such vector $z$ as
\begin{align*}
    z = \alpha_1 v + \alpha_2 v^{\perp}
\end{align*}
where $\alpha^2_1 + \alpha^2_2=1$ and $v^{\perp} = \frac{u- (u \cdot v)v}{\|u- (u \cdot v)v\|}$. Next we have that 
\begin{align*}
    \|u - (u \cdot v)v\|^2 = 1- (u \cdot v)^2\\
    \geq 1 - u \cdot v = \frac{\gamma^2}{2}.\\
\end{align*}
Hence we get that for any $z$ in the span of $v$ and $u$
\begin{align*}
    \|z\|_{q^*} &\leq \|v\|^{q^*} + \|v^{\perp}\|_{q^*}\\
    &\leq \kappa_1 + {\frac{\sqrt{2}}{\gamma}} \Big(\|u\|_{q^*} + \|v\|_{q^*} \Big)\\
    &\leq {\frac{2\sqrt{2}}{\gamma}}\kappa.
\end{align*}
The above also establishes that 
$$
\|\Pi\|_{q \to 2} \leq 2 \frac{\sqrt{2}}{\gamma}\kappa.
$$
Substituting into (\ref{eq:proof-spectral-norm-closeness-of-subspaces-7}) we get that
\begin{align}
    \|\Pi A - \Pi \tilde{A}\| \leq {\frac{2\sqrt{2}}{\gamma}}\kappa \delta \sqrt{m}.
\end{align}
Hence, if $\|\Pi A - \Pi \tilde{A}\| > \kappa \delta \sqrt{m}/\sqrt{\epsilon}$ we must have that
\begin{align}
\label{eq:proof-spectral-norm-closeness-of-subspaces-8}
    \|v-u\| = \gamma \leq 2\sqrt{2} \sqrt{\epsilon}.
\end{align}
In this case we can bound $\|A - \Pi_2 A\|$ as 
\begin{align*}
    \|A - \Pi_2 A\| &\leq \|A - \Pi_1 A\| + \|(\Pi_1 - \Pi_2)A\|\\ 
    &\leq \epsilon_1 \|A\| + \|\Pi_1 - \Pi_2\|\|A\|\\
    &\leq \epsilon \|A\| + \|vv^\top - uu^\top\|\|A\|\\
    &= \epsilon\|A\| + \|\frac{1}{2} (v+u) (v-u)^\top + \frac{1}{2} (v-u) (v+u)^\top\| \|A\|\\
    &\leq \epsilon \|A\| + 2\|v-u\|\|A\| \,\, (\text{by triangle inequality and the fact that $\|v+u\| \leq 2$})\\
    &\leq \epsilon \|A\| + 2\gamma \|A\|\\
    &\leq \epsilon \|A\| + 4\sqrt{2} \sqrt{\epsilon}\|A\|.
\end{align*}
\end{proof}
\noindent \textbf{Tightness of the Guarantee in Theorem~\ref{thm:robust-mean}}. 
We close out this section by showing that the dependence on $\sqrt{\sigma \|\mu\|}$ in our bound on mean estimation is necessary even information theoretically. In what follows $A$ will be an $n \times m$ matrix with $\mu = \Mean(A)$ such that $\Pi^* = \mu \mu^\top/\|\mu\|^2$ has small norm, i.e., $\|\Pi^*\|_{\infty \to 2} = \kappa$. Furthermore, let $C = \mu \mathbf{1}^\top$ and define $\sigma = \|A - C\|/\sqrt{m}$. We will prove the following.
\begin{theorem}
\label{thm:robust-mean-lower-bound}
Fix $q = \infty$. Let $\mathcal{M}$ be the set of $n \times m$ matrices $A$ with mean $\mu$ that satisfies $\|\mu\| \in [1,2]$, variance $\sigma^2$ around the mean that satisfies $\sigma \in (0,1/6]$, and the subspace spanned by $\mu$ being $\kappa$-robust. Call a perturbation $\tilde{A}$ of $A \in \mathcal{M}$ of be valid if $\|A - \tilde{A}\|_{\infty} \leq \delta$.
Then, any algorithm that takes as input a valid perturbation $\tilde{A}$ of a matrix $A \in \mathcal{M}$ and either certifies that the data is poisoned, i.e., $\|A-\tilde{A}\| > 8\sigma \sqrt{m}$ or outputs an estimate $\hat{\mu}$ of $\mu$ must incur an error of
\begin{align*}
    {\|\mu - \hat{\mu}\|} \geq \Omega\Big((1+\frac{\kappa \delta}{\sigma})\max(\sigma, \sqrt{\sigma \|\mu\|})\Big),
\end{align*}
where $\mu = \Mean(A)$.
\end{theorem}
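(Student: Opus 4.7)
The proof follows a standard two-point information-theoretic argument, adapted to the adversarial $\ell_\infty$-perturbation model. The plan is to exhibit two matrices $A_1, A_2 \in \mathcal{M}$ with distinct means $\mu_1, \mu_2$, together with a single perturbed matrix $\tilde A$, such that (i) $\|\tilde A_j - A_{i,j}\|_\infty \le \delta$ for every column $j$ and every $i \in \{1,2\}$, so $\tilde A$ is a legal $\delta$-perturbation of both; (ii) $\|A_i - \tilde A\| \le 8\sigma\sqrt{m}$ for $i=1,2$, so neither can be legitimately certified as poisoned; and (iii) $\|\mu_1 - \mu_2\|_2 = \Omega\bigl((1+\kappa\delta/\sigma)\max(\sigma,\sqrt{\sigma\|\mu\|})\bigr)$. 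Because the algorithm sees only $\tilde A$, it must commit to a single estimate $\hat\mu$, and by the triangle inequality $\max_{i}\|\hat\mu-\mu_i\|_2 \ge \tfrac12\|\mu_1-\mu_2\|_2$, yielding the lower bound for at least one of the two instances.

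To get the $\sqrt{\sigma\|\mu\|}$ summand (which dominates in the regime $\kappa\delta \lesssim \sigma$), I would set $\mu_1 = \|\mu\|\,e_1$ and let $A_1$ consist of columns $\mu_1 + \sigma g_j u$, where $u\perp e_1$ is a fixed unit vector and $g_j\in\{\pm 1\}$ is balanced; then $\Mean(A_1)=\mu_1$ and $\|A_1-C_1\|=\sigma\sqrt{m}$ exactly, and $\mu_1$'s $1$-dimensional subspace is $1$-robust (hence $\kappa$-robust). I would then rotate by an angle $\theta=\Theta(\sqrt{\sigma/\|\mu\|})$ in the $(e_1,u)$-plane to obtain $\mu_2$, and let $A_2$ have the analogous ``noise perpendicular to $\mu_2$'' structure. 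The rank-two perturbation $A_1-A_2$ splits into a rigid rotation of the mean column (spectral contribution $\|\mu\|\theta\sqrt{m}=\sqrt{\sigma\|\mu\|}\sqrt{m}$) plus a rotation of the noise column (contribution $\sigma\theta\sqrt{m}$), both absorbable within the $8\sigma\sqrt{m}$ budget. The matrix $\tilde A$ is taken to be a coordinate-wise rounding of the midpoint $(A_1+A_2)/2$ into the joint $\ell_\infty$-ball around the two instances; the tilt $\theta$ is calibrated so that this ball is nonempty.

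For the $\kappa\delta\sqrt{\|\mu\|/\sigma}$ summand (dominant when $\kappa\delta \gtrsim \sigma$), I would take $\mu = \|\mu\| v$ with $v$ a unit vector supported on $k=\kappa^2$ coordinates of equal magnitude $1/\sqrt{k}$, so that $\|v\|_1=\kappa$ and the subspace spanned by $\mu$ is exactly $(\kappa,\infty)$-robust. The adversary's budget of $\delta$ per coordinate along $\mathrm{sign}(v)$ can then coherently shift the empirical mean by $\delta\kappa$ along $v$. Superposing this with the tilt construction above (now in the $(v,v^\perp)$-plane, with the tilt axis aligned with $\mathrm{sign}(v)$), the column-wise $\ell_\infty$ cost of rotation is $\|\mu\|\theta/\sqrt{k}$, which leaves room for a larger $\theta$; carefully balancing the perturbation direction against the noise direction promotes the mean separation to $\Theta(\kappa\delta\sqrt{\|\mu\|/\sigma})$ while preserving both budgets.

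The hardest step will be the simultaneous verification of (i) and (ii) in the intermediate regime $\kappa\delta \asymp \sigma$: the tilt angle must saturate the $\sigma\sqrt{m}$ spectral budget without violating the column-wise $\ell_\infty$ constraint, and too aggressive a choice causes one of the two matrices to slip out of $\mathcal{M}$ (variance exceeding $\sigma^2$) or makes $\tilde A$ infeasible. Handling this requires aligning the rotation axis with the noise direction, rather than the mean direction, so that the two rank-one contributions to $A_1-A_2$ partially cancel in the spectral norm---this is possible because noise and mean live in orthogonal subspaces by design. Once both constructions are verified, the final lower bound follows by taking whichever construction gives the larger separation for the given $(\sigma,\delta,\kappa,\|\mu\|)$.
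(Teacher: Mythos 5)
Your high-level skeleton is the right one and matches the paper's: exhibit instances in $\mathcal{M}$ that are consistent with a single observed $\tilde A$ (columnwise $\ell_\infty$ closeness), rule out the certification escape because the truth could be an unpoisoned member of $\mathcal{M}$, and conclude by the triangle inequality that any single estimate errs by half the mean separation. (Using a midpoint $\tilde A$ rather than letting the perturbed matrix itself be the second instance, as the paper does, is harmless.) The gap is in the constructions. First, your paragraph-two construction is infeasible as stated: rotating $\mu_1=\|\mu\|e_1$ by $\theta=\Theta(\sqrt{\sigma/\|\mu\|})$ in a fixed two-plane moves some coordinate of the mean (hence of every column) by $\Theta(\sqrt{\sigma\|\mu\|})$, a constant, which vastly exceeds the per-coordinate budget $O(\delta)$; you never say how the mean shift is spread across coordinates, and that spreading is the entire crux. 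Second, and more fundamentally, the regime you assign to this construction ($\kappa\delta\lesssim\sigma$) is one where no two-point argument can deliver separation $\sqrt{\sigma\|\mu\|}$: if $A_1,A_2\in\mathcal{M}$ are both columnwise within $\delta$ of the same $\tilde A$, then $\|\mu_1-\mu_2\|_\infty\le 2\delta$, while $\kappa$-robustness of each span gives $\|\mu_i\|_1\le\kappa\|\mu_i\|_2\le 2\kappa$ (Lemma~\ref{lem:robustproperties}), so by H\"older $\|\mu_1-\mu_2\|_2^2\le\|\mu_1-\mu_2\|_1\,\|\mu_1-\mu_2\|_\infty\le 8\kappa\delta$. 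Hence separation $\Omega(\sqrt{\sigma})$ forces $\kappa\delta=\Omega(\sigma)$; the theorem is established (and, per the surrounding discussion, only claimed) by exhibiting hard instances with $\kappa\delta=\Theta(\sigma)$. Relatedly, the difficulty you flag at $\kappa\delta\asymp\sigma$ and your proposed cure (cancelling rank-one contributions in spectral norm) target the wrong constraint: the spectral/variance budget is not what binds, the per-coordinate $\ell_\infty$ budget and the robustness of the \emph{perturbed} mean are.

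What is missing is the actual mechanism the paper uses. Take $\mu_1$ uniform on $k$ coordinates (so $\|\mu_1\|_1=\sqrt{k}$, $\kappa\approx 2\sqrt{k}$), put the noise entirely off the support of $\mu_1$, and shift \emph{every} coordinate — crucially including the $n-k$ coordinates outside the support — by exactly $\delta$ along $\mathrm{sgn}(\mu_1)$ (with $\mathrm{sgn}(0):=+1$), i.e.\ $\tilde A=A+\delta\,\mathrm{sgn}(\mu_1)\mathbf{1}^\top$. Then $\tilde A$ has the same centered matrix as $A$ (so the variance condition is free), and choosing $\delta\sqrt{k}=3\sigma$ and $\delta n=\sqrt{k}$ keeps $\|\mu_2\|\in[1,2]$ and $\|\mu_2\|_1\le 2\sqrt{k}$, so both matrices lie in $\mathcal{M}$ with $\kappa\delta=O(\sigma)$, while the mean separation is $\delta\sqrt{n}=\sqrt{3\sigma}=\Omega\bigl(\sqrt{\sigma\|\mu\|}\bigr)$. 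Your paragraph-three variant, which shifts along $\mathrm{sign}(v)$ only within the $k$-coordinate support, buys merely $\kappa\delta$ of separation, and the promotion to $\kappa\delta\sqrt{\|\mu\|/\sigma}$ is asserted rather than derived. Without the off-support, per-coordinate-$\delta$ shift and the $\delta n\approx\sqrt{k}$, $\delta\sqrt{k}\approx\sigma$ calibration (which is what keeps the shifted mean $O(\kappa)$-robust while maximizing $\delta\sqrt{n}$), the proposal does not close.
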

\begin{proof}
We will establish the lower bound by constructing two matrices $A$ and $\tilde{A}$, both of which lie in the set $\mathcal{M}$, satisfy $\|A - \tilde{A}\| = \delta$ for $\kappa \delta = O(\sigma)$, but have means separated by $\Omega(\max(\sigma, \sqrt{\sigma \mu_{\max}}))$, where $\mu_{\max}$ is the maximum $\ell_2$ norm among \Mean($A$) and $\Mean(\tilde{A})$. In this case, given either $A$ or $\tilde{A}$ as input, any provably robust certification procedure cannot output \Bad and must output an estimate $\hat{\mu}$, thereby making $\Omega(\max(\sigma, \sqrt{\sigma \mu_{\max}}))$ error on either $A$ or $\tilde{A}$. We next describe our construction.

For a $k$ to be determined later, let $\mu_1 = (1/\sqrt{k}, 1/\sqrt{k}, \dots, 1/\sqrt{k}, 0, 0, \dots, 0)$. Hence, $\mu_1$ is a unit length sparse vector with $\|\mu_1\|_1 = \sqrt{k}$. We define the set of $m$ points in $A$ by generating i.i.d. points of the form $\mu_1 + g$, where $g$ is a mean zero Gaussian with variance $0$ in the first $k$ coordinates and variance $\sigma^2$ in the other coordinates. Then it is a standard fact that with high probability $\|A - \mu_1 \mathbf{1}^\top\| \leq 2\sigma \sqrt{m}$ and that \Mean(A) will be $\sigma \sqrt{d/m} = o(\sigma)$-close to $\mu_1$.\anote{What do you mean by $o(1)$? Do you mean $o(\sigma)?$ or maybe even $\sigma \cdot O(\sqrt{d/m})$?} \pnote{Done.} Next we define the set of points in $\tilde{A}$ to be $\tilde{A}_j = A_j + \delta sgn(\mu_1)$, where $sgn(\mu_1)$ is a $\pm 1$ vector representing the sign of the corresponding coordinate of $\mu_1$. Here we can arbitrarily set $sgn(0)$ to be $+1$. \anote{What is the sign of the $0$ co-ordinates of $\mu$? Is it zero (I think you want it to be arbitrary? } \pnote{Done.} It is easy to see that \Mean($\tilde{A}$) will be $o(\sigma)$-close to $\mu_2 = \mu_1 + \delta sgn(\mu_1)$, and that $\|\tilde{A} - \mu_2 \mathbf{1}^\top\| \leq 2\sigma \sqrt{m}$. Next notice that 
\begin{align*}
    \|\mu_2\|^2 &= 1 + \delta^2 n + \delta \sqrt{k}\\
    \|\mu_2\|_1 &= \sqrt{k} + \delta \sqrt{k}.
\end{align*}
By setting $\delta \sqrt{k} = 3\sigma$ and $\delta n = \sqrt{k}$ we ensure that $\|\mu_2\| \in [1,2]$ and $\|\mu_2\|_1 \leq 2\sqrt{k}$. Hence we get that for the matrix $\Pi = \mu_2 \mu^\top_2/\|\mu_2\|^2$, $\|\Pi\|_{\infty \to 2} \leq 2\sqrt{k}$.
Hence, both $A$ and $\tilde{A}$ lie in the set $\mathcal{M}$ with sparsity bound $\kappa = 2\sqrt{k}$. Furthermore, the fact that $\delta \sqrt{k} = 3\sigma$, ensures that $\kappa \delta \leq 6\sigma$.
Finally, notice that the difference between two means is 
\begin{align*}
    \|\mu_1 - \mu_2\| &= \delta \sqrt{n} = \sqrt{\delta} {k}^{1/4}= \sqrt{3 \sigma}
    = \Omega\Big((1+\frac{\kappa \delta}{\sigma})\max(\sigma,\sqrt{\sigma \mu_{\max}})\Big).
\end{align*}
\end{proof}
We end this section by showing that via an (inefficient) algorithm one can get the same guarantee for mean estimation as in Theorem~\ref{thm:robust-mean} without the need for certification.
\begin{theorem}[Information Theoretic Upper Bound]
\label{thm:robust-mean-stat}
Let $A$ be an $n \times m$ matrix representing $m$ data points in $n$ dimensions and let $\mu$ be the mean of the data points in the matrix $A$ with $C$ representing the $n \times m$ matrix with each column being $\mu$. Let $\Pi^* = \mu \mu^\top/\|\mu\|^2$ be the one dimensional subspace denoting the projection onto $\mu$ and assume that $\|\Pi^*\|_{q \to 2} \leq \kappa$, for some $q \geq 2$. Let $\tilde{A}$ be the given input such that for every column $j \in [m]$ we have $\|A_j-\tilde{A}_j\|_q \leq \delta$. Furthermore, let $\sigma^2 >0$ be a given upper bound on the variance of the data around the mean, i.e., $\|A-C\| \leq \sigma \sqrt{m}$. Then there is an (inefficient) exponential time algorithm that takes $\tilde{A}$ as input
and outputs an estimate $\hat{\mu}$ of the true mean $\mu$ such that
\begin{align*}
    \|\hat{\mu} - {\mu}\|_2 &\leq O(c_q)(1+\frac{\kappa \delta}{\sigma}) \max \Big(\sigma, \sqrt{\sigma \|\mu\|} \Big),
\end{align*}
where $c_q$ is a constant that depends on $q$. In particular, the above implies a relative error guarantee of
\begin{align*}
    \frac{\|\hat{\mu} - {\mu}\|_2}{\|\mu\|} &\leq O(c_q)(1+\frac{\kappa \delta}{\sigma}) \max \Big(\frac{\sigma}{\|\mu\|}, \sqrt{\frac{\sigma}{\|\mu\|}} \Big).
\end{align*}
\end{theorem}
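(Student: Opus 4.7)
The plan is to mimic the proof of Theorem~\ref{thm:robust-mean} but replace the efficient algorithm from Figure~\ref{ALG:training:spectral} (which can output \Bad) by the information-theoretic algorithm underlying Proposition~\ref{prop:training:spectral:stat}, which is guaranteed to produce a robust rank-$1$ projection that is good for the unknown matrix $A$. Concretely, on input $\tilde{A}$ the (exponential time) estimator does the following: by exhaustive search solve
\[
(A',\Pi)\ =\ \argmin_{\substack{B:\ \|B_j-\tilde{A}_j\|_q\le \delta~\forall j\\ \Pi':\ \mathrm{rank}(\Pi')\le 1,\ \|\Pi'\|_{q\to 2}\le \kappa}}\ \|B-\Pi' B\|,
\]
and output $\hat{\mu}=\Mean(\Pi\tilde{A})$. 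This is exactly the inefficient version of step 1 in Figure~\ref{ALG:Robust-Mean}.

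First I would observe that the uncorrupted matrix $A$ is a feasible choice for $B$, and the subspace $\Pi^\star=\mu\mu^\top/\|\mu\|^2$ is a feasible $\Pi'$. Following the triangle-inequality calculation at the start of the proof of Theorem~\ref{thm:robust-mean}, $\|A-\Pi^\star A\|\le 2\sigma\sqrt{m}$, so the minimum value of the joint optimization is at most $2\sigma\sqrt{m}$. Hence the solver returns $(A',\Pi)$ with $\|A'-\Pi A'\|\le 2\sigma\sqrt{m}$, and by the feasibility constraint on $A'$ together with the assumption on $\tilde{A}$ we also have $\|A_j-A'_j\|_q\le 2\delta$ for every column $j$.

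Next I would invoke Lemma~\ref{lem:spectral-norm-closeness-of-subspaces}, but applied to the pair $(A,A')$ instead of $(A,\tilde{A})$, with $\Pi_1=\Pi^\star$, $\Pi_2=\Pi$, $\kappa_1,\kappa_2\le \kappa$, and the column-wise closeness parameter equal to $2\delta$ (everything the lemma uses about $\tilde{A}$ is precisely the $\ell_q$-column closeness and a projection error bound, both of which hold for $A'$). Setting $\epsilon:=4\sigma\sqrt{m}/\|A\|$ so that both $\epsilon_1$ and $\epsilon_2$ are bounded by $\epsilon/2$, the lemma yields
\[
\|A-\Pi A\|\ \le\ O(\epsilon+\sqrt{\epsilon})\,\|A\|\ +\ \tfrac{16\kappa\delta\sqrt{m}}{\sqrt{\epsilon}}.
\]
This reproduces exactly the bound used in the efficient proof of Theorem~\ref{thm:robust-mean}, but now without any certification: the guarantee holds unconditionally because $A'$ is guaranteed to exist and the inefficient search is guaranteed to find it.

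Finally, I would carry over the remainder of the proof of Theorem~\ref{thm:robust-mean} verbatim. By triangle inequality and robustness of $\Pi$,
\[
\|\hat{\mu}-\mu\|_2\ \le\ \tfrac{1}{\sqrt{m}}\|A-\Pi A\|\ +\ c_q\kappa\delta,
\]
and substituting the bound on $\|A-\Pi A\|$, then using $\|A\|\le(\sigma+\|\mu\|)\sqrt{m}$ to eliminate $\|A\|$ in favor of $\sigma$ and $\|\mu\|$, one obtains the claimed
$\|\hat{\mu}-\mu\|_2\le O(c_q)\,(1+\kappa\delta/\sigma)\,\max\{\sigma,\sqrt{\sigma\|\mu\|}\}$; the relative bound follows by dividing by $\|\mu\|$. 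The only nontrivial step is the direct verification that Lemma~\ref{lem:spectral-norm-closeness-of-subspaces} indeed applies to $(A,A')$ rather than $(A,\tilde{A})$; this is the main thing to check but is immediate from reading the lemma's proof, since it only exploits column-wise $\ell_q$ closeness and per-matrix projection error, both of which we have.
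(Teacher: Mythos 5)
Your proposal is correct and follows essentially the same route as the paper's proof: solve the joint optimization \eqref{eq:problem:spectral} (with $r=1$) exactly, use feasibility of $(A,\Pi^*)$ to bound the optimum by $2\sigma\sqrt{m}$, apply Lemma~\ref{lem:spectral-norm-closeness-of-subspaces} to the pair $(A,A')$ to control $\|A-\Pi A\|$, and then repeat the mean-estimation calculation from Theorem~\ref{thm:robust-mean}. The only (harmless) cosmetic difference is that you output $\Mean(\Pi\tilde{A})$ while the paper outputs $\Mean(\Pi A')$; both satisfy the same triangle-inequality bound up to constants since $\Pi$ is $\kappa$-robust and the columns of $\tilde{A}$ (resp.\ $A'$) are $\delta$- (resp.\ $2\delta$-) close to those of $A$ in $\ell_q$.
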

\begin{proof}
In order to establish the theorem above we first optimize (\ref{eq:problem:spectral}) exactly. Let $A'$ be the matrix and $\Pi$ be the projection that achieve the minimum of (\ref{eq:problem:spectral}). Then we have that
\begin{align*}
    \|A'-\Pi A'\| &\leq \|A - \Pi^* A\|
\end{align*}
Furthermore, we also have that 
\begin{align*}
    \|A- \Pi^* A\| &\leq \|A-C\| + \|C - \Pi^* A\|\\
    &\leq 2\|A-C\|\\
    &\leq 2\sigma \sqrt{m}.
\end{align*}
Hence both $A$ and $A'$ have good projections onto rank-$1$ subspaces. Plugging into Lemma~\ref{lem:spectral-norm-closeness-of-subspaces} we get that
\begin{align*}
    \|A - \Pi A\| \leq O(\epsilon + \sqrt{\epsilon})\|A\| + \frac{8 \kappa \delta \sqrt{m}}{\sqrt{\epsilon}},
\end{align*}
where $\epsilon = 2\sigma\sqrt{m}/\|A\|$.
Hence, letting $\hat{\mu} = \Mean(\Pi A')$ we get from (\ref{eq:proof-robust-mean-1}) that
\begin{align*}
    \|\mu - \hat{\mu}\| &\leq \frac{1}{\sqrt{m}} \|A-\Pi A\| + c_q\kappa \delta\\
    &\leq O(\epsilon + \sqrt{\epsilon})\frac{\|A\|}{\sqrt{m}} + \frac{8\kappa \delta \sqrt{m}}{\sqrt{\epsilon}} + c_q \kappa \delta.
\end{align*}
The rest of the argument proceeds exactly as in the Proof of Theorem~\ref{thm:robust-mean} by writing $\|A\|$ in terms of $\epsilon$, as done in (\ref{eq:proof-robust-mean-3}).
\end{proof}

\end{document}